\DeclarePairedDelimiter{\set}{\lbrace}{\rbrace}
\DeclarePairedDelimiter{\abs}{\lvert}{\rvert}
\newcommand{\be}{\begin{equation}}
\newcommand{\ee}{\end{equation}}
\renewcommand{\P}{\mathcal{P}}
\newcommand{\G}{\mathcal{G}}
\renewcommand{\H}{\mathbf{H}}
\newcommand{\K}{\mathbf{K}}
\renewcommand{\vec}[1]{\mathbf{#1}}
\DeclareMathOperator{\spn}{span}
\newcommand{\Thm}[1]{\hyperref[thm:#1]{Theorem~\ref*{thm:#1}}}
\newcommand{\Lem}[1]{\hyperref[lem:#1]{Lemma~\ref*{lem:#1}}}
\newcommand{\Cor}[1]{\hyperref[cor:#1]{Corollary~\ref*{cor:#1}}}
\newcommand{\Def}[1]{\hyperref[def:#1]{Definition~\ref*{def:#1}}}
\newcommand{\Obs}[1]{\hyperref[obs:#1]{Observation~\ref*{obs:#1}}}
\newcommand{\Prop}[1]{\hyperref[prop:#1]{Proposition~\ref*{prop:#1}}}
\newcommand{\Rem}[1]{\hyperref[rem:#1]{Remark~\ref*{rem:#1}}}
\newcommand{\Ex}[1]{\hyperref[ex:#1]{Example~\ref*{ex:#1}}}
\newcommand{\Sec}[1]{\hyperref[sec:#1]{Section~\ref*{sec:#1}}}
\newcommand{\Fig}[1]{\hyperref[fig:#1]{Figure~\ref*{fig:#1}}}
\newcommand{\Tab}[1]{\hyperref[tab:#1]{Table~\ref*{tab:#1}}}
\newcommand{\EqRef}[1]{\hyperref[eq:#1]{(\ref*{eq:#1})}}
\newcommand{\Eq}[1]{Equation~\hyperref[eq:#1]{(\ref*{eq:#1})}}
\newtheorem*{rep@theorem}{\rep@title}
\newcommand{\newreptheorem}[2]{%
\newenvironment{rep#1}[1]{%
 \def\rep@title{#2 \ref{##1}}%
 \begin{rep@theorem}}%
 {\end{rep@theorem}}}
\newtheorem{theorem}{Theorem}[section]
\newtheorem*{theorem*}{Theorem}
\newtheorem{lemma}[theorem]{Lemma}
\newtheorem{cor}[theorem]{Corollary}
\theoremstyle{definition}
\newtheorem{definition}[theorem]{Definition}
\newtheorem{remark}[theorem]{Remark}
\newtheorem{example}[theorem]{Example}
\DeclareMathOperator{\aut}{Aut}
\DeclareMathOperator{\qut}{Qut}
\newcommand{\schur}{\bullet}
\DeclareMathOperator{\rel}{rel}
\title{Quantum isomorphism is equivalent to equality \\ of homomorphism counts from planar graphs}
\author[1]{Laura Man\v{c}inska}
\author[2]{David E.~Roberson} 
\affil[1]{QMATH, Department of Mathematical Sciences, University of Copenhagen, Universitetsparken 5, 2100 Copenhagen \O, Denmark}
\affil[2]{Department of Applied Mathematics and Computer Science, Technical University of Denmark, DK-2800 Lyngby, Denmark}
\begin{document}

\maketitle

\begin{abstract}
Over 50 years ago, Lov\'{a}sz proved that two graphs are isomorphic if and only if they admit the same number of homomorphisms from any graph [Acta Math.~Hungar.~18 (1967), pp.~321--328]. 
In this work we prove that two graphs are \emph{quantum isomorphic} (in the commuting operator framework) if and only if they admit the same number of homomorphisms from any planar graph. As there exist pairs of non-isomorphic graphs that are quantum isomorphic, this implies that homomorphism counts from planar graphs do not determine a graph up to isomorphism. 
Another immediate consequence is that determining whether there exists some planar graph that has a different number of homomorphisms to two given graphs is an undecidable problem, since quantum isomorphism is known to be undecidable. Our characterization of quantum isomorphism is proven via a combinatorial characterization of the intertwiner spaces of the quantum automorphism group of a graph based on counting homomorphisms from planar graphs. This result inspires the definition of \emph{graph categories} which are analogous to, and a generalization of, partition categories that are the basis of the definition of easy quantum groups. Thus we introduce a new class of \emph{graph-theoretic quantum groups} whose intertwiner spaces are spanned by maps associated to (bi-labeled) graphs. Finally, we use our result on quantum isomorphism to prove an interesting reformulation of the Four Color Theorem: that any planar graph is 4-colorable if and only if it has a homomorphism to a specific Cayley graph on the symmetric group $S_4$ which contains a complete subgraph on four vertices but is not 4-colorable.
\end{abstract}


%

\section{Introduction}

In this work we present a surprising connection between the theory of entanglement-assisted strategies for nonlocal games, quantum group theory, and homomorphism counts from combinatorics. The story begins with the \emph{graph isomorphism game}, a nonlocal game introduced in~\cite{qiso1} in which two players attempt to convince a referee that they know an isomorphism between two given graphs $G$ and $H$. Classically, the players can succeed with probability one if and only if the graphs are indeed isomorphic. This motivates the definition of \emph{quantum isomorphic} graphs: those pairs of graphs for which the game can be won perfectly when the players are able to make local quantum measurements on a shared entangled state. It is far from obvious that there exists any pair of quantum isomorphic graphs that are not isomorphic as well. However, in~\cite{qiso1} a reduction from the linear system games of Cleve, Liu, and Slofstra~\cite{cleveslofstra} to the isomorphism game was given which provided an infinite family of quantum isomorphic but non-isomorphic graphs.

Soon after their introduction, deep connections between the notion of quantum isomorphism and the theory of \emph{quantum automorphism groups} of graphs were established in~\cite{morita} and~\cite{qperms}. In particular, in the latter work an analog of a well-known classical result was proven: that connected graphs $G$ and $H$ are quantum isomorphic if and only if the quantum automorphism group of their disjoint union, $\qut(G \cup H)$, has an orbit which intersects both $V(G)$ and $V(H)$ nontrivially. 
We investigate an algebraic generalization of orbits known as \emph{intertwiners} of the quantum automorphism group of a graph. Our first main result is a combinatorial characterization of these intertwiners in terms of \emph{planar bi-labeled graphs} (see \Def{LGraph} and \Def{PBG}):
\begin{reptheorem}{thm:onecat}[Informal]
The intertwiners of the quantum automorphism of a graph $G$ are the span of matrices whose entries count homomorphisms from planar graphs to $G$, partitioned according to the images of certain labeled vertices of the planar graph.
\end{reptheorem}
\noindent As we will see in \Sec{bilabeledgraphs} (see \Rem{drawing}), the above provides a pictorial way of representing the intertwiners of the quantum automorphism group of a graph, analogously to the pictures of partitions used for easy quantum groups. Whether such a description was possible has been an implicit open question in the quantum group community. We remark that some of the ideas used here, though arrived at independently, appear to be similar to ideas from the theory of graph limits~\cite{lovasz2012large}.

\Thm{onecat} is additionally the basis of our combinatorial characterization of quantum isomorphism. We show that the measurement operators from a winning quantum strategy for the isomorphism game can be used to construct a linear map between intertwiners of $\qut(G)$ and $\qut(H)$. Moreover, this linear map takes an intertwiner of $\qut(G)$ associated to a given planar bi-labeled graph to the intertwiner of $\qut(H)$ associated to the same planar bi-labeled graph (\Lem{weakiso}). Combining this linear map with \Thm{onecat} and making use of the aforementioned result of~\cite{qperms} regarding orbits of $\qut(G \cup H)$, we are able to prove the following remarkable result:
\begin{reptheorem}{thm:main}[Abbreviated]
Graphs $G$ and $H$ are quantum isomorphic if and only if they admit the same number of homomorphisms from any planar graph.
\end{reptheorem}
The above supplies a completely combinatorial description of quantum isomorphism, and thus of the advantage provided by entanglement in the setting of the isomorphism game. We hope that Theorems~\ref{thm:onecat} and~\ref{thm:main}, as well as the other results in this work, will help to attract researchers from discrete mathematics to the growing intersection of quantum information, quantum groups, and combinatorics.

%

\subsection{Context and consequences}

Our characterization of quantum isomorphism can be viewed as a quantum analog of a classic theorem of Lov\'{a}sz~\cite{lovasz1967operations}: graphs $G$ and $H$ are isomorphic if and only if they have the same number of homomorphisms from any graph. By restricting your homomorphism counts from all graphs to specific families, one can obtain relaxations of isomorphism. Trivial examples include counting homomorphisms from just the single vertex graph or the two vertex graph with a single edge, which simply test whether $G$ and $H$ have the same number of vertices or edges respectively. Less trivially, counting homomorphisms from all star graphs or from all cycles determines a graph's degree sequence or spectrum respectively, the latter being a classical result of algebraic graph theory. Very recently, a surprising result of this form was proven by Dell, Grohe, and Rattan~\cite{DGR}: graphs $G$ and $H$ are not distinguished by the $k$-dimensional Weisfeiler-Leman algorithm if and only if they admit the same number of homomorphisms from all graphs with treewidth at most $k$. Interestingly, in the same work they asked whether homomorphism counts from planar graphs determine a graph up to isomorphism. Though we were not motivated by this question, having only learned of it later, our results answer it in the negative. As far as we are aware, this is the first example of using tools from quantum groups to solve a problem in graph theory.

We remark that in all of the above examples (excluding our result) there is an at worst quasi-polynomial time algorithm for determining if the two graphs $G$ and $H$ have the same number of homomorphisms from any graph in the given family. For all but Lov\'{a}sz' result there is even a polynomial time algorithm (and of course there may possibly be a polynomial time algorithm for graph isomorphism). This may be unexpected in the case where the families are infinite, as {\it{a priori}} one must check an infinite number of conditions. In stark contrast, the known undecidability of quantum isomorphism~\cite{qiso1} along with our result implies that determining if there exists some planar graph having a different number of homomorphisms to two given graphs is an undecideable problem. This also implies that there is no computable function of two graphs $G$ and $H$ which gives an upper bound on the size of planar graphs that must be checked to determine whether $G$ and $H$ are quantum isomorphic. In the classical case, it always suffices to count homomorphisms from graphs on $|V(G)|$ or fewer vertices.

As a consequence of our characterization of quantum isomorphism, we show that a planar graph has a 4-coloring if and only if 
it has a homomorphism to a particular Cayley graph $H$ for the symmetric group $S_4$. For an arbitrary graph $G$, it is strictly easier to have a homomorphism to $H$ than to have a 4-coloring, i.e., every graph with a 4-coloring has a homomorphism to $H$ but the converse fails for some (in fact infinitely many) graphs. Thus it seems that this gives a nontrivial reformulation of the famous Four Color Theorem.\\


Quantum automorphism groups of graphs belong to the class of \emph{compact matrix quantum groups}, introduced by Woronowicz~\cite{woronowicz}. By a version of Tannaka-Krein duality proven by Woronowicz~\cite{woronowicz1988}, these quantum groups are completely determined by their intertwiner spaces. This helps to underline the importance of our characterization of the intertwiners of $\qut(G)$. Our \Thm{onecat} echos a transformative result of Banica and Speicher from 2009~\cite{noncross} in which they gave combinatorial characterizations of the intertwiners of several groups and their quantum analogs. Though the specific characterizations were important results, the lasting impact of the work is due to the introduction of the so-called \emph{easy quantum groups} (see \Sec{qautogroups} for more details), whose intertwiner spaces are given by maps associated to partitions. The ability to exploit the combinatorial structure of these partitions has led to easy quantum groups being extensively studied (e.g.~\cite{banica2010classification,Banica2011,freslon2016representation,raum2013easy,raum2014combinatorics,WEBER2013500}), with their complete classification (in the orthogonal case) given in~\cite{Raum2016}.

Motivated by our \Thm{onecat}, we introduce the notion of \emph{graph-theoretic quantum group}: those quantum groups whose intertwiners are given by some family of bi-labeled graphs (closed under the appropriate operations). These are not merely similar to the notion of easy quantum groups, we show in \Sec{graphqgroups} that they are a vast generalization: easy quantum groups are the graph-theoretic quantum groups corresponding to families of \emph{edgeless} bi-labeled graphs. We therefore expect graph-theoretic quantum groups to be a much richer class than easy quantum groups, while still retaining the underlying combinatorial structure that made the latter such fruitful ground for research.\\



In quantum information, nonlocal games\footnote{More commonly known as Bell inequalities within the physics community.}, like the isomorphism game, are used to study the capabilities and limitations of entanglement, as well as to elucidate the differences between the various mathematical models of joint measurement scenarios. Recently, incredible progress has been made by Slofstra, who investigated (binary) linear system games. He was able to resolve two longstanding open questions, namely he solved Tsirelson's problem~\cite{slofstra16} and showed that the set of quantum correlations is not closed~\cite{nonclosureslofstra}. The key to this work was the discovery that quantum strategies for linear system games could be characterized in group theoretic language~\cite{cleveslofstra}, thus allowing for the use of powerful tools from combinatorial group theory. Analogously, we make use of the quantum group theoretic characterization of quantum isomorphism from~\cite{qperms} to establish our combinatorial characterization from \Thm{main}. Interestingly, it was shown in~\cite{qiso1} that linear system games can be reduced to isomorphism games. Combining this with our characterization of quantum isomorphism, it follows that the group-theoretic condition from~\cite{cleveslofstra} has a combinatorial reformulation in terms of homomorphism counts from planar graphs. More specifically, the central element $J_\Gamma$ of a solution group $\Gamma$ is equal to the identity if and only if there exists a planar graph which has a different number of homomorphisms to $G_\Gamma$ and $H_\Gamma$ -- the two graphs arising from the reduction of a linear system game to an isomorphism game.

%
%
%
%
%
%
%

\subsection{Outline and preliminaries}\label{sec:outline}

Given a $C^*$-algebra $A$, we denote by $M_n(A)$ the $n \times n$ matrices that have entries from $A$. To distinguish from the usual case of complex valued matrices, we will typically denote elements of $M_n(A)$ by script letters such as $\mathcal{U}$, and we will write $\mathcal{U} = (u_{ij})$ to denote that the $ij$-entry of $\mathcal{U}$ is $u_{ij} \in A$. We will often want to multiply an algebra-valued matrix $\mathcal{U} = (u_{ij})$ by a complex-valued matrix $M = (m_{\ell k})$. As long as the dimensions match, this is a well-defined product given by
\[(M\mathcal{U})_{ij} = \sum_k m_{ik}u_{kj},\]
and similarly for multiplying in the other order. It is straightforward to check that this product satisfies $M(N\mathcal{U}) = (MN)\mathcal{U}$. If $\mathcal{U},\mathcal{V} \in M_n(A)$, then we may also define their product entrywise as $(\mathcal{U}\mathcal{V})_{ij} = \sum_k u_{ik}v_{kj}$. If $\mathcal{U} \in M_n(A),\mathcal{V} \in M_k(A)$, then we let $\mathcal{U}\otimes \mathcal{V}$ be the matrix in $M_{nk}(A)$ whose $i_1i_2,j_1j_2$-entry is $u_{i_1j_1}v_{i_2j_2}$. It is again straightforward to check that $(M \otimes N)(\mathcal{U} \otimes \mathcal{V}) = (M\mathcal{U}) \otimes (N\mathcal{V})$ whenever these products are defined. The transpose $\mathcal{U}^T$ and conjugate transpose $\mathcal{U}^*$ of a matrix $\mathcal{U} = (u_{ij}) \in M_n(A)$ are the matrices whose $ij$-entries are $u_{ji}$ and $u_{ji}^*$ respectively.

We use $[n]$ to denote the set $\{1,2,\ldots,n\}$. We will denote vectors/tuples by boldface characters, such as $\vec{a} = (a_1, \ldots, a_k)$. One other place we will utilize boldface is in denoting the identity element of a unital $C^*$-algebra as $\vec{1}$.\\

The rest of the paper is organized as follows. In \Sec{background} we introduce the relevant background on graph theory, quantum groups, and quantum information needed for our results. In \Sec{bilabeledgraphs} we define two of the central concepts of this work: bi-labeled graphs and homomorphism matrices, and we give several pertinent examples. We also define the operations of composition, tensor product, transpose, and Schur product of bi-labeled graphs in \Sec{blgops}, and we prove that these correspond to the analogous operations on homomorphism matrices in \Sec{correspondence}. In \Sec{planar} we present some of the basic theory of planar graphs that we will need for the proofs in the following sections. \Sec{planarblg} introduces the class of planar bi-labeled graphs and gives several examples. We prove that the class of planar bi-labeled graphs is closed under composition, tensor product, and transposition in \Sec{closureprops}, thus showing that the corresponding homomorphism matrices form a tensor category with duals. Our characterization of the intertwiners of the quantum automorphism group of a graph, \Thm{onecat}, is proven in \Sec{tale}, with a few additional results given in \Sec{further}. In \Sec{qiso}, we prove our characterization of quantum isomorphism in \Thm{main}. We finish up with \Sec{discuss} where we discuss some consequences of our results, including a reformulation of the Four Color Theorem, and possible future directions.

\section{Background}\label{sec:background}

\subsection{Graph theory basics}

The basic structure dealt with in this work is that of a graph. For us, a graph $G$ consists of a vertex set $V(G)$ and edge set $E(G)$. We do not allow multiple edges but we do allow loops (at most one per vertex), thus the elements of $E(G)$ are unordered pairs and singletons. We write $u \sim v$ if $u$ and $v$ are adjacent, which includes the case where $u = v$ and $u$ has a loop. If we need to specify adjacency in a particular graph then we write $u \sim_G v$, for instance. As is common practice we will refer to the edge between \emph{distinct} vertices $u$ and $v$ as $uv$, instead of $\{u,v\}$. Given a graph $G$, we define its \emph{complement}, denoted $\overline{G}$, as the graph on $V(G)$ which has the same loops as $G$ but in which distinct vertices are adjacent if and only if they are not adjacent in $G$. Thus the complement of a graph without loops is the same for us as the usual notion of complement for graphs without loops. We define the \emph{full complement} of $G$, denoted $\overline{\overline{G}}$, to be the graph where two vertices are adjacent if and only if they are not adjacent in $G$ (thus looped vertices become non-looped and vice versa). We will use the term \emph{multigraph} when we allow multiple edges (and multiple loops). In this case the elements of $E(G)$ have names such as $e_1, \ldots, e_m$ and there is a function $\phi_G\colon E(G) \to \binom{V(G)}{2} \cup \binom{V(G)}{1}$ indicating which vertices an edge is incident to. Here $\binom{V(G)}{k}$ is the set of subsets of $V(G)$ of size $k$. However it will often be more convenient to simply refer to an edge $e$ between vertices $u$ and $v$, without referring to any such function $\phi_G$.

A \emph{path of length $k$} in a multigraph $G$ is a sequence of $k$ edges $e_1, \ldots, e_k$ such that there are \emph{distinct} vertices $u_1, \ldots, u_{k+1}$ where $\phi_G(e_i) = \{u_i,u_{i+1}\}$ for all $i \in [k]$. A \emph{cycle of length $k > 1$} is a sequence of $k$ \emph{distinct} edges $e_1, \ldots, e_k$ such that there are $k$ \emph{distinct} vertices where $\phi_G(e_i) = \{u_i,u_{i+1}\}$ for $i \in [k-1]$ and $\phi_G(e_k) = \{u_k,u_1\}$. Note that this allows for cycles of length 2 which consist of two vertices and two distinct edges between them. We also consider a vertex with a loop as a cycle of length 1. In a graph, the edges between consecutive vertices in a path/cycle are uniquely determined, and so in this setting we will usually refer to paths of length $k-1$ and cycles of length $k$ simply by a sequence of $k$ distinct vertices $u_1, \ldots, u_k$. The \emph{complete graph} on $n$ vertices, denoted $K_n$, is the graph with vertex set $[n]$ (or any $n$-element set), having an edge between every pair of distinct vertices, but no loops. An \emph{empty/edgeless} graph is a graph whose edge set is the empty set.

Given a graph $G$, its \emph{adjacency matrix}, denoted $A_G$ is a symmetric $01$-matrix whose $uv$-entry is 1 if $u \sim v$ and is 0 otherwise. Note that this means there are 1's in the diagonal entries corresponding to loops. The adjacency matrices of $\overline{G}$ and $\overline{\overline{G}}$ are given by $A_{\overline{G}} = J - I - A_G$ and $A_{\overline{\overline{G}}} = J - A_G$ where $J$ is the all ones matrix.

Given a multigraph $G$, a subgraph (we will not use the term ``submultigraph") $H$ is a multigraph with $V(H) \subseteq V(G)$,  and $E(H) \subseteq E(G)$ such that $E(H)$ only contains edges that are not incident to any vertices of $V(G) \setminus V(H)$. If $U \subseteq V(G)$, then the subgraph $H$ of $G$ \emph{induced} by $U$ has vertex set $U$ and $E(H)$ is the sub(multi)set of $E(G)$ consisting of all edges that are incident only to vertices in $U$. Given two graphs $G$ and $H$, their \emph{disjoint union} is the graph with vertex set $V(G) \cup V(H)$ and edge set $E(G) \cup E(H)$. Here we are implicitly assuming that $V(G)$ and $V(H)$ are disjoint.

We will make frequent use of edge deletions and contractions in our work. Given a multigraph $G$ and a subset $S \subseteq E(G)$, the multigraph $G\setminus S$ resulting from \emph{deleting} $S$ has vertex set $V(G)$ and edge set $E(G) \setminus S$. Edge contraction is more complicated to explain but we use the usual notion. The multigraph $G/S$ resulting from \emph{contracting} $S$ has as its vertex set the connected components of the multigraph $H$ with vertex set $V(G)$ and edge set $S$. The edge set of $G/S$ can be identified with $E(G)\setminus S$, but if $e \in E(G) \setminus S$ was incident to vertices $u$ and $v$ in $G$, then in $G/S$ the edge $e$ is incident to the connected components of $H$ containing $u$ and $v$ respectively (if these are the same then $e$ becomes a loop). If $e$ was a loop incident to $u$ in $G$, then in $G/S$ it is a loop incident to the connected component of $H$ containing $u$. Note that though we have defined the vertices of $G/S$ as the connected components of the multigraph $H$, in practice we will often simply give the vertices corresponding to components that are not single vertices new names, so that referring to them is simpler. We will also use the phrase ``the vertex of $G/S$ that $u$ became after contraction" to refer to the vertex of $G/S$ corresponding to the connected component of $H$ containing $u$. It is important to point out that even if we begin with a graph $G$, edge contractions can result in a multigraph. To return to the class of graphs, we then perform \emph{simplification}: we replace any multiple edges with single edges, but we keep loops even if they were formed by the contractions (though we keep at most one loop per vertex). We will use $G\setminus e$ and $G/e$ when the set $S$ we are deleting/contracting has only one element.

One of the most important notions from graph theory that we will make use of is that of graph homomorphisms:

\begin{definition}
A \emph{homomorphism} from a graph $H$ to a graph $G$ is a function $\varphi\colon V(H) \to V(G)$ such that $u \sim_H v$ implies that $\varphi(u) \sim_G \varphi(v)$.
\end{definition}

Note that this means that a homomorphism can map vertices with loops only to vertices with loops, and that two distinct adjacent vertices can be mapped to the same vertex only if it has a loop. We will write $\varphi\colon H \to G$ to denote that $\varphi$ is a homomorphism from $H$ to $G$. We can also define homomorphisms of multigraphs. The condition that adjacency is preserved is the same, but if $u$ and $v$ are adjacent in $H$ then for each edge $e$ between them we have a choice of which edge between $\varphi(u)$ and $\varphi(v)$ to map $e$ to. This choice makes a difference when counting homomorphisms, however we will only consider the case when $G$ is a graph, thus all such choices are determined.

\begin{definition}
An \emph{isomorphism} from a graph $H$ to a graph $G$ is a bijection $\varphi\colon V(H) \to V(G)$ such that $u \sim_H v$ if and only if $\varphi(u) \sim_G \varphi(v)$.
\end{definition}

Note that this means that an isomorphism must map loops to loops and non-loops to non-loops. Whenever there exists and isomorphism from $G$ to $H$, we say that they are \emph{isomorphic} and write $G \cong H$.

\begin{definition}
An \emph{automorphism} of a graph $G$ is an isomorphism from $G$ to itself and these form a group under composition known as the \emph{automorphism group of $G$}, denoted $\aut(G)$.
\end{definition}

\subsection{Quantum automorphism groups}\label{sec:qautogroups}

Here we will introduce the necessary background on quantum groups we will need for our results. Our main focus is on quantum automorphism groups of graphs, but these fit into the more general framework of compact matrix quantum groups (CMQGs):

\begin{definition}\label{def:CMQG}
A \emph{compact matrix quantum group} $Q$ is specified by a pair $(A,\mathcal{U})$ where $A$ is a $C^*$-algebra and $\mathcal{U} = (u_{ij}) \in M_n(A)$ is a matrix whose entries generate $A$. Moreover, we require that there exists a unital $*$-homomorphism $\Delta\colon A \to A \otimes A$ satisfying $\Delta(u_{ij}) = \sum_k u_{ik} \otimes u_{kj}$, and that both $u$ and its transpose $\mathcal{U}^T$ are invertible. The $*$-homomorphism $\Delta$ is referred to as the \emph{comultiplication}. The matrix $\mathcal{U}$ is known as the \emph{fundamental representation} of $Q$.
\end{definition}

\begin{remark}\label{rem:qgroupnonsense}
The notion of a compact matrix quantum group is an abstract generalization of a compact matrix group. In the latter case, the algebra $A$ is the algebra of continuous $\mathbb{C}$-valued functions over the group $\Gamma$, denoted $C(\Gamma)$, and $u_{ij}\colon \Gamma \to \mathbb{C}$ is the coordinate function mapping a matrix $(g_{\ell k}) \in \Gamma$ to the entry $g_{ij}$. In analogy, even in the quantum case the algebra $A$ is \emph{referred to} as the algebra of continuous functions over the quantum group $Q$, and is denoted $C(Q)$. Note however, that $A$ is not an algebra of functions in general as any such algebra would be commutative, which we do not require of $A$. We further remark that some authors say that the quantum group $Q$ \emph{is} the pair $(A,\mathcal{U})$~\cite{qsymscalgs}, whereas others say that the quantum group $Q$ is not a concrete object that actually exists yet still refer to it as an actual object in analogy to the group case~\cite{qpermsurvey}.
\end{remark}

\begin{example}\label{ex:qortho}
In~\cite{WangOrtho}, Wang defined the orthogonal quantum group as the compact matrix quantum group $O_n^+$ with where $C(O_n^+)$ is the universal $C^*$-algebra generated by the entries of $\mathcal{U} = (u_{ij})$ which satisfy
\begin{equation}
u_{ij}^* = u_{ij}, \quad \text{\&} \quad \sum_{\ell = 1}^n u_{i\ell}u_{j\ell} = \delta_{ij} \vec{1} = \sum_{k=1}^n u_{ki}u_{kj} \text{ for all } i,j \in [n].
\end{equation}
The first condition above says that the entries of $\mathcal{U}$ are self-adjoint. Furthermore, given the first condition, the second condition can be written as $\mathcal{U}\mathcal{U}^T = I \otimes \vec{1} = \mathcal{U}^T\mathcal{U}$. Thus we see the relation to the usual orthogonal group.
\end{example}

\begin{example}\label{ex:qsym}
Another example of a compact matrix quantum group introduced by Wang~\cite{wang} that will be particularly important for us is the \emph{quantum symmetric group}, denoted $S_n^+$. Here, $C(S_n^+)$ is the universal $C^*$-algebra generated by the entries of $\mathcal{U} = (u_{ij})$ which satisfy the conditions
\begin{equation}\label{eq:qperm}
u_{ij}^2 = u_{ij}^* = u_{ij}, \quad \text{\&} \quad
\sum_{\ell = 1}^n u_{i\ell} = \vec{1} = \sum_{k = 1}^n u_{kj} \text{ for all } i,j \in [n] 
\end{equation}

A matrix $\mathcal{U} = (u_{ij})$ whose entries are from a unital $C^*$-algebra satisfying \Eq{qperm} is known as a \emph{magic unitary} or \emph{quantum permutation matrix}. The latter term is motivated by the fact that if the entries are from $\mathbb{C}$, then the conditions of \Eq{qperm} define permutation matrices. Note that the two conditions from \Eq{qperm} imply that $u_{i\ell}u_{ik} = \delta_{\ell k}u_{i\ell}$ and $u_{\ell j}u_{kj} = \delta_{\ell k}u_{kj}$, i.e., the elements in a row/column are mutually orthogonal. Furthermore, we remark that such a matrix is indeed unitary as
\[(\mathcal{U}\mathcal{U}^*)_{ij} = \sum_k u_{ik}u_{jk}^* = \sum_k u_{ik}u_{jk} = \delta_{ij}\vec{1},\]
and similarly for $\mathcal{U}^*\mathcal{U}$.
\end{example}

Shortly after Wang's introduction of the quantum symmetric group, Banica~\cite{banicahomogeneous} introduced the following definition of the quantum automorphism group of a graph\footnote{It is worth noting that Bichon~\cite{bichon2003quantum} had previously defined a related but different notion of quantum automorphism group of a graph. This is the same as Banica's but with the additional condition that $u_{i\ell}u_{jk} = u_{jk}u_{i\ell}$ for $i \sim j$. However, Banica's version is the one that is related to quantum isomorphism.}:

\begin{definition}\label{def:qaut}
Given a graph $G$ with vertex set $[n]$, its \emph{quantum automorphism group}, denoted $\qut(G)$, is the compact matrix quantum group given by $(C(\qut(G)), \mathcal{U})$ where $C(\qut(G))$ is the universal $C^*$-algebra with generators $u_{ij}, 1 \le i,j \le n$ satisfying the following relations for all $i,j \in [n]$:
\begin{align}
&u_{ij}^* = u_{ij}^2 = u_{ij} \label{eq:qaut1}\\
&\sum_{\ell = 1}^n u_{i\ell} = \vec{1} = \sum_{k = 1}^n u_{kj} \label{eq:qaut2}\\
&\sum_{k \sim j} u_{i k} = \sum_{\ell \sim i} u_{\ell j} \label{eq:qaut3}
\end{align}
\end{definition}

\begin{remark}\label{rem:qautdef}
Note that Relations \ref{eq:qaut1} and \ref{eq:qaut2} are simply the conditions requiring that $\mathcal{U}$ is a quantum permutation matrix, i.e., precisely the relations defining the quantum symmetric group. Further, Relation~\ref{eq:qaut3} can be written more compactly as $\mathcal{U}A_G = A_G\mathcal{U}$. 
Lastly, under the assumption that $\mathcal{U}$ is a quantum permutation matrix, Condition~\ref{eq:qaut3} is equivalent to the orthogonality conditions
\begin{equation}\label{eq:qaut3'}
u_{i\ell}u_{jk} = 0 \text{ if } \rel(i,j) \ne \rel(\ell,k),
\end{equation}
where $\rel$ is a function distinguishing the four possible cases for how two vertices $i$ and $j$ can be \emph{related}: $i = j$ with no loop, $i = j$ with a loop, $i \ne j$ \& $i \sim j$, and $i \ne j$ \& $i \not\sim j$.
\end{remark}

The above definition may seem far removed from the notion of the automorphism group of a graph. However, if we were to add the condition that $u_{ij}u_{\ell k} = u_{\ell k}u_{ij}$ for all $i,j,\ell,k$ (i.e., that the entries of $\mathcal{U}$ commute and thus generate a commutative $C^*$-algebra), then the resulting universal $C^*$-algebra would be isomorphic to the algebra of $\mathbb{C}$-valued functions on $\aut(G)$. Under such an isomorphism the operator $u_{ij}$ is mapped to the characteristic function of the automorphisms of $G$ that map $i$ to $j$. This informs our intuition in the quantum case: we think of $u_{ij}$ as an operator which somehow corresponds to mapping $i$ to $j$.

Sometimes, the commutativity of the entries of $\mathcal{U}$ in the definition of $\qut(G)$ is implied by the other conditions. This happens for instance for the complete graph $K_n$ for $n \le 3$, and for $G$ being any cycle graph except the 4-cycle~\cite{qpermsurvey}. In these cases we say that $G$ \emph{has no quantum symmetry} and that $\qut(G) = \aut(G)$.

Before moving on, let us point out that, like the classical case, the quantum automorphism group of a graph is not changed by taking the complement: $\qut(G) = \qut(\overline{G})$. This is easy to see if Condition~\ref{eq:qaut3'} is used in place of Condition~\ref{eq:qaut3}. However, as noted in \Rem{qautdef}, we can write the latter as $\mathcal{U}A_G = A_G\mathcal{U}$. It then follows that $\mathcal{U}A_{\overline{G}} = A_{\overline{G}}\mathcal{U}$ since $A_{\overline{G}} = J - I - A_G$ and any quantum permutation matrix commutes with both $I$ and $J$. Similarly $\qut(\overline{\overline{G}}) = \qut(G)$.

\paragraph{Quantum subgroups.}

In this work we will mostly deal with quantum automorphism groups of graphs. But we will see that the ideas we present can be applied to study more general quantum groups. However, we will always stay within the framework of compact matrix quantum groups, and more specifically quantum subgroups of the orthogonal quantum group $O_n^+$ introduced in \Ex{qortho}. Thus we should define the notion of a quantum subgroup of a compact matrix quantum group.

Given two quantum groups $Q$ and $Q'$ with fundamental representations $\mathcal{U} = (u_{ij})_{i,j = 1, \ldots, n}$ and $\mathcal{U}' = (u'_{ij})_{i,j = 1, \ldots, n}$ respectively, we say that $Q$ is a \emph{quantum subgroup} of $Q'$, denoted $Q \subseteq Q'$, if there exists a $*$-homomorphism $\psi \colon C(Q') \to C(Q)$ such that $\psi(u'_{ij}) = u_{ij}$. Note that this is a strict definition of quantum subgroup in that the fundamental representations must have the same dimension. Thus we do not have that $S_n^+ \subseteq S_{n+1}^+$ under this definition. Of course it is possible to define a compact quantum group $Q$ which is isomorphic to $S_{n}^+$ and satisfies $Q \subseteq S_{n+1}^+$. 

It is not difficult to see that for a graph $G$ with $n$ vertices, we have that $\qut(G) \subseteq S_n^+ \subseteq O_n^+$. The first inclusion follows from the fact that the relations imposed on the entries of the fundamental representation of $\qut(G)$ imply the relations imposed on $S_n^+$ (in fact the former are a superset of the latter). The second inclusion follows for the same reason, though here one needs a short argument (since one set of relations is not a subset of the other as they are written).

\paragraph{Orbits and intertwiners.}

In~\cite{qperms}, and independently in~\cite{banica2018modeling}, a notion of orbits of $\qut(G)$ on $V(G)$ was defined. Furthermore, in~\cite{qperms} they also defined orbits of $\qut(G)$ on $V(G) \times V(G)$, which are referred to as \emph{orbitals} to distinguish them from the orbits on $V(G)$. Since $\qut(G)$ is not actually a group consisting of maps acting on $V(G)$, these  notions cannot be defined as they usually are for $\aut(G)$. Instead, in~\cite{qperms} two relations are defined on $V(G)$ and $V(G) \times V(G)$ respectively. These relations are then proven to be equivalence relations and the orbits and orbitals of $\qut(G)$ are defined to be the equivalence classes of the relations. We repeat these definitions below:

\begin{definition}\label{def:qorbits}
Let $\mathcal{U} = (u_{ij})$ be the fundamental representation of $\qut(G)$. Define the relations $\sim_1$ and $\sim_2$ on $V(G)$ and $V(G) \times V(G)$ respectively as $i \sim_1 j$ if $u_{ij} \ne 0$ and $(i,\ell) \sim_2 (j,k)$ if $u_{ij}u_{\ell k} \ne 0$. Then both $\sim_1$ and $\sim_2$ are equivalence relations~\cite[Lemmas 3.2 \& 3.4]{qperms}, and the \emph{orbits} and \emph{orbitals} of $\qut(G)$ are defined to be the equivalence classes of these relations respectively.
\end{definition}

Note that in the classical case this corresponds with the usual definition of orbits of $\aut(G)$ on $V(G)$ and $V(G) \times V(G)$. Indeed, in the classical case (i.e., when the entries of $\mathcal{U}$ are required to commute) $u_{ij} \ne 0$ implies that there is some automorphism mapping $i$ to $j$ and thus they must be in the same orbit. Similarly, $u_{ij}u_{\ell k} \ne 0$ means that there is some automorphism that both maps $i$ to $j$ and $\ell$ to $k$, and thus $(i,\ell)$ and $(j,k)$ are in the same orbital.

We say that a vector $w \in \mathbb{C}^{V(G)}$ is \emph{constant on the orbits of} $\qut(G)$ if $w_{i} = w_j$ whenever $i$ and $j$ are in the same orbit. Similarly, a matrix $T \in \mathbb{C}^{V(G) \times V(G)}$ is \emph{constant on the orbitals of} $\qut(G)$ if $T_{i\ell} = T_{jk}$ whenever $(i,\ell)$ and $(j,k)$ are in the same orbital. Using these terms another characterization of the orbits and orbitals of $\qut(G)$ was given in~\cite{qperms}.

\begin{lemma}\label{lem:orbitsasintertwiners}
Let $G$ be a graph and let $\mathcal{U}$ be the fundamental representation of $\qut(G)$. Then $\mathcal{U}w = w (\vec{1})$ if and only if $w$ is constant on the orbits of $\qut(G)$, where $(\vec{1})$ is the $1 \times 1$ matrix with entry $\vec{1}$. Similarly, $\mathcal{U}T = T\mathcal{U}$ if and only if $T$ is constant on the orbitals of $\qut(G)$.
\end{lemma}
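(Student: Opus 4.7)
The plan is to handle both parts of the lemma with the same trick: rewrite each equation as a single sum of the form $\sum(\alpha-\beta)\,u$ inside the algebra, and then exploit the row/column orthogonality $u_{i\ell}u_{ik} = \delta_{\ell k}u_{i\ell}$ and $u_{\ell j}u_{kj} = \delta_{\ell k}u_{kj}$ of the magic unitary (recorded in \Ex{qsym}) to peel off one projection or one product of two projections at a time.

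For the vector case, I would start by writing the $i$-th component of $\mathcal{U}w = w(\vec{1})$ as $\sum_j u_{ij}w_j = w_i\vec{1}$ and use the row-sum relation $\sum_j u_{ij} = \vec{1}$ from \EqRef{qaut2} to rewrite the right-hand side as $\sum_j w_i u_{ij}$. The equation then becomes $\sum_j (w_j - w_i)\,u_{ij} = 0$ for every $i$. For the forward direction, right-multiplying by $u_{ik}$ collapses the sum via row orthogonality to $(w_k - w_i)u_{ik} = 0$, so $u_{ik}\ne 0$ forces $w_i = w_k$, which is exactly constancy on the orbits of \Def{qorbits}. The reverse direction is immediate: orbit constancy makes every summand $(w_j - w_i)u_{ij}$ vanish individually.

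For the matrix case, I would work via the twisted expression $\mathcal{U}T\mathcal{U}^*$. Because $\mathcal{U}$ is unitary with $\mathcal{U}^*\mathcal{U} = I\otimes\vec{1}$, the identity $(\mathcal{U}T\mathcal{U}^*)_{ij} = T_{ij}\vec{1}$ for all $i,j$ is equivalent to $\mathcal{U}T = T\mathcal{U}$: one passes from the twisted form to the commutation form by right-multiplying by $\mathcal{U}$ and back by right-multiplying by $\mathcal{U}^*$. Expanding the left-hand side and rewriting $T_{ij}\vec{1}$ as $T_{ij}\bigl(\sum_\ell u_{i\ell}\bigr)\bigl(\sum_k u_{jk}\bigr)$ reduces the condition to $\sum_{\ell,k}(T_{\ell k} - T_{ij})\,u_{i\ell}u_{jk} = 0$ for every $i,j$. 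Now the key isolation step: left-multiplying by $u_{ip}$ kills all terms with $\ell\ne p$ via row orthogonality, and then right-multiplying by $u_{jq}$ kills all terms with $k\ne q$ via column orthogonality, collapsing the double sum to the single identity $(T_{pq} - T_{ij})u_{ip}u_{jq} = 0$. This says precisely that $u_{ip}u_{jq}\ne 0\Rightarrow T_{ij} = T_{pq}$, i.e.\ $T$ is constant on the orbitals. Conversely, orbital constancy makes each term $(T_{\ell k} - T_{ij})u_{i\ell}u_{jk}$ vanish individually, so the sum is zero and we recover $\mathcal{U}T\mathcal{U}^* = (T_{ij}\vec{1})_{i,j}$, hence $\mathcal{U}T = T\mathcal{U}$.

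The only real subtlety is noncommutativity of $A$: I must multiply by $u_{ip}$ or $u_{jq}$ on the side on which the appropriate orthogonality relation fires, and I should not silently commute algebra elements past each other. I do not anticipate a genuine obstacle; the argument is essentially a noncommutative transcription of the classical fact that a vector (resp.\ matrix) is invariant under a group action if and only if it is constant on the orbits (resp.\ orbitals) of that action.
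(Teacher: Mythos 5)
Your proof is correct. Note that the paper does not prove this lemma itself; it is cited from \cite{qperms}, so there is no in-paper argument to compare against. Your two reductions are sound: in each case the identity is rewritten, via the row/column sums of the magic unitary, as $\sum (\alpha-\beta)\,u = 0$, and then row/column orthogonality ($u_{i\ell}u_{ik}=\delta_{\ell k}u_{i\ell}$, $u_{\ell j}u_{kj}=\delta_{\ell k}u_{kj}$) isolates a single coefficient. The twisted form $\mathcal{U}T\mathcal{U}^*=(T_{ij}\vec{1})$ is correctly shown equivalent to $\mathcal{U}T=T\mathcal{U}$ using $\mathcal{U}^*\mathcal{U}=\mathcal{U}\mathcal{U}^*=I\otimes\vec{1}$, and the two-sided multiplication by $u_{ip}$ (on the left) and $u_{jq}$ (on the right) fires the correct orthogonality relations without any illegal commutation: in $u_{ip}\bigl(\sum_{\ell,k}(T_{\ell k}-T_{ij})u_{i\ell}u_{jk}\bigr)u_{jq}$, the product $u_{ip}u_{i\ell}$ collapses on the left and $u_{jk}u_{jq}$ collapses on the right, leaving $(T_{pq}-T_{ij})u_{ip}u_{jq}=0$. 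Matching this against \Def{qorbits} (where $(i,j)\sim_2(p,q)$ iff $u_{ip}u_{jq}\neq 0$) gives exactly constancy on orbitals, and the converse is a term-by-term vanishing. The vector case is the $(1,0)$ shadow of the same argument. This is the standard argument and I see no gap.
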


As usual, the above is analogous to the classical case, where $Pw = w$ for all $P \in \aut(G)$ (where we think of $\aut(G)$ as being represented by permutation matrices) if and only if $w$ is constant on the orbits of $\aut(G)$, and $PT = TP$ for all $P \in \aut(G)$ if and only if $T$ is constant on the orbitals of $\aut(G)$.

The above description of orbits and orbitals of $\qut(G)$ is in fact a special case of the more general notion of intertwiners which will be the focus of much of this paper. These are defined as follows:

\begin{definition}\label{def:intertwiners}
Consider a compact matrix quantum group $Q \subseteq O_n^+$ with fundamental representation $\mathcal{U} = (u_{ij})_{i,j = 1, \ldots, n}$. For $\ell \ge 0$, define $\mathcal{U}^{\otimes \ell}$ to be the $n^\ell \times n^\ell$ matrix whose $(i_1 \ldots i_\ell,j_1 \ldots j_\ell)$-entry is the product $u_{i_1j_1} \ldots u_{i_\ell j_\ell}$, thus $\mathcal{U}^{\otimes 0}$ is defined to be the matrix $(\vec{1})$. Then for $\ell,k \ge 0$, an $(\ell,k)$-intertwiner of $Q$ is a $n^\ell \times n^k$ $\mathbb{C}$-valued matrix $T$ such that $\mathcal{U}^{\otimes \ell}T = T\mathcal{U}^{\otimes k}$.
\end{definition}

For a given CMQG $Q \subseteq O_n^+$, we will use $C_Q(\ell,k)$ to denote its $(\ell,k)$-intertwiners, and let $C_Q := \bigcup_{\ell,k = 0}^\infty C_Q(\ell,k)$. It is well known~\cite{noncross} that the intertwiners of a compact matrix quantum group $Q \subseteq O_n^+$ form a \emph{tensor category with duals}, meaning that they have the following properties:
\begin{enumerate}
\item if $T,T' \in C_Q(\ell,k)$ and $\alpha,\beta \in \mathbb{C}$, then $\alpha T + \beta T' \in C_Q(\ell,k)$, i.e., $C_Q(\ell,k)$ is a vector space;
\item if $T \in C_Q(\ell,k)$ and $T' \in C_Q(r,s)$, then $T \otimes T' \in C_q(\ell + r,k+s)$;
\item if $T \in C_Q(\ell,k)$ and $T' \in C_Q(k,r)$, then $TT' \in C_Q(\ell,r)$;
\item if $T \in C_Q(\ell,k)$, then $T^* \in C_Q(k,\ell)$;
\item $I \in C_Q(1,1)$, where $I$ is the $n \times n$ identity matrix;
\item $\xi = \sum_{i=1}^n e_i \otimes e_i \in C_Q(2,0)$, where $e_i \in \mathbb{C}^n$ is the $i^\text{th}$ standard basis vector.
\end{enumerate}

Most of these are straightforward. For the last, note that it is only required to show that $\xi \in C_{O_n^+}(2,0)$ since if $Q \subseteq O_n^+$, then any relations satisfied by the entries of the fundamental representation of $O_n^+$ must also be satisfied by those of $Q$. For $O_n^+$, we have that
\[(\mathcal{U}^{\otimes 2} \xi)_{i_1i_2} = \sum_{j_1,j_2 = 1}^n u_{i_1 j_1}u_{i_2 j_2}\xi_{j_1 j_2} = \sum_{j=1}^n u_{i_1j}u_{i_2j} = \delta_{i_1 i_2}\vec{1} = (\xi\mathcal{U}^{\otimes 0})_{i_1 i_2},\]
by definition.

The importance of intertwiners is reflected in a Tannaka-Krein type result by Woronowicz~\cite{woronowicz1988}, which says that there is a one-to-one correspondence between tensor categories and compact matrix quantum groups. This is given below where we use $C_n(\ell,k)$ to denote all $n^\ell \times n^k$ $\mathbb{C}$-valued matrices and $C_n = \bigcup_{\ell,k = 0}^\infty C_n(\ell,k)$.

\begin{theorem}\label{thm:tannakakrein}
The construction $Q \mapsto C_Q$ induces a one-to-one correspondence between compact matrix quantum groups $Q \subseteq O_n^+$ and tensor categories with duals $C_Q \subseteq C_n$.
\end{theorem}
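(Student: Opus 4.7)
The plan is to establish this correspondence by constructing an explicit inverse to the assignment $Q \mapsto C_Q$. The injectivity of this assignment is comparatively easy: if $Q_1, Q_2 \subseteq O_n^+$ have $C_{Q_1} = C_{Q_2}$, then every defining relation on the generators of either $C(Q_i)$ can be recast in the form $\mathcal{U}^{\otimes \ell} T = T \mathcal{U}^{\otimes k}$ for some $T$, so both coordinate algebras are the same quotient of the free $*$-algebra on $n^2$ generators (subject to the orthogonality relations inherited from $O_n^+$). The substantive content of the theorem is in the essentially surjective direction.

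Given a tensor category with duals $C \subseteq C_n$, I would construct $A_0$ as the universal unital $*$-algebra on generators $u_{ij}$, $i,j \in [n]$, modulo all relations of the form $\mathcal{U}^{\otimes \ell} T = T \mathcal{U}^{\otimes k}$ with $T \in C(\ell,k)$. Because $I \in C(1,1)$, $\xi \in C(2,0)$, and $\xi^* \in C(0,2)$ belong to $C$, the resulting $\mathcal{U}$ is automatically a unitary with self-adjoint entries, so $A_0$ is a quotient of the coordinate $*$-algebra of $O_n^+$. I would then verify that the formula $u_{ij} \mapsto \sum_k u_{ik} \otimes u_{kj}$ extends to a coassociative $*$-homomorphism $\Delta$ on $A_0$: the key point is that under this formula $\mathcal{U}^{\otimes \ell}$ still intertwines with each $T \in C(\ell,k)$, which follows from closure of $C$ under tensor product and composition. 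Completing $A_0$ in its universal $C^*$-seminorm produces the desired CMQG $Q$, and by construction $C \subseteq C_Q$.

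The main obstacle, and the technically deepest step, is showing the reverse inclusion $C_Q \subseteq C$. My plan is to invoke the Haar state $h$ on $Q$, whose existence is guaranteed by Woronowicz's general theory of CMQGs, and to use it to build an averaging projection $P_{\ell,k}\colon C_n(\ell,k) \to C_Q(\ell,k)$ whose matrix entries are obtained by applying $h$ to products of matrix coefficients of $\mathcal{U}^{\otimes \ell}$ and $\mathcal{U}^{\otimes k}$. The core step is then to decompose each $\mathcal{U}^{\otimes \ell}$ into irreducible subrepresentations using idempotents that already live in $C$, invoking the semisimplicity forced by the Haar state, and using the duality morphisms $\xi$ and $\xi^*$ to obtain a Frobenius reciprocity identifying multiplicity spaces. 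A block-by-block application of Schur's lemma then shows that every intertwiner in $C_Q(\ell,k)$ is a $\mathbb{C}$-linear combination of compositions and tensor products of morphisms already in $C$, yielding $C_Q = C$. The genuine difficulty lies in producing the semisimple decomposition purely from categorical data rather than assuming it externally; this is where Woronowicz's argument really bites, and where a plan must defer to the full machinery of the orthogonality relations for matrix coefficients on a CMQG.
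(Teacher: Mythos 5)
The paper does not supply a proof of this statement; it cites Woronowicz~\cite{woronowicz1988} and uses the theorem as a black box. So there is no in-paper argument to compare yours against, and what follows is an assessment of the internal soundness of your sketch.

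The constructive (existence) half of your plan is the standard one and is structurally correct: build the universal $*$-algebra $A_0$ on the $u_{ij}$ modulo the intertwiner relations prescribed by $C$, use $I \in C(1,1)$ and $\xi, \xi^* \in C$ to force the $O_n^+$ relations, check that $\Delta(u_{ij}) = \sum_k u_{ik}\otimes u_{kj}$ descends to $A_0$, $C^*$-complete, and then establish $C_Q \subseteq C$. You correctly identify that the last containment is where the theorem's depth lies, and your proposed route through the Haar state, semisimple decomposition of $\mathcal{U}^{\otimes \ell}$ using idempotents already in $C$, Frobenius reciprocity via $\xi$ and $\xi^*$, and Schur's lemma is the right shape. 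But as you acknowledge, this is a plan that defers to Woronowicz's orthogonality machinery rather than reproducing it, so it should not be read as a complete proof.

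The injectivity paragraph, however, contains a genuine gap. You claim it is ``comparatively easy'' because ``every defining relation on the generators of either $C(Q_i)$ can be recast in the form $\mathcal{U}^{\otimes \ell} T = T \mathcal{U}^{\otimes k}$.'' That claim is not obvious; it is, essentially, the theorem restated. A CMQG $Q \subseteq O_n^+$ is a priori an arbitrary quotient of $C(O_n^+)$ that admits a compatible comultiplication — nothing in the definition says the kernel is generated by intertwiner relations. Establishing that it is requires the same Haar-state/semisimplicity argument you invoke for $C_Q \subseteq C$, and it cannot be waved away as the easy direction. There is also an unaddressed subtlety: $Q \mapsto C_Q$ is not injective on the nose at the $C^*$-level, since a CMQG and its reduced version have the same representation category. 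The ``one-to-one'' statement implicitly fixes a convention (universal $C^*$-completions, or working with the dense Hopf $*$-algebra of matrix coefficients), and a careful proof needs to say so.
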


The above theorem means that finding new CMQGs $Q \subseteq O_n^+$ is equivalent to finding new tensor categories with duals.

\begin{remark}\label{rem:orbitsasintertwiners}
In the language of intertwiners, \Lem{orbitsasintertwiners} says that $i,j \in V(G)$ are in the same orbit of $\qut(G)$ if and only if $T_i = T_j$ for all $T \in C_{\qut(G)}(1,0)$. Similarly, $(i,\ell)$ and $(j,k)$ are in the same orbital of $\qut(G)$ if and only if $T_{i,\ell} = T_{j,k}$ for all $T \in C_{\qut(G)}(1,1)$.
\end{remark}

\paragraph{Intertwiners of the quantum automorphism group of a graph.}

The first main result of this paper is a combinatorial characterization of the intertwiners of $\qut(G)$ for an arbitrary graph $G$. To establish this characterization, we will make use of a result of Chassaniol which states that $C_{\qut(G)}$ is generated by three intertwiners using the operations of matrix product, tensor product, conjugate transposition, and linear combinations. Two of these three intertwiners are known as the \emph{multiplication} and \emph{unit} maps, denoted $M \in C_n(1,2)$ and $U \in C_n(1,0)$ respectively. These are defined as the linear maps satisfying the following:
\begin{equation}\label{eq:multunit}
M(e_i \otimes e_j) = \delta_{ij} e_i \quad \text{\&} \quad U(1) = \sum_{i = 1}^n e_i.
\end{equation}
As matrices $U$ is simply the all ones column vector and $M$ is an $n \times n^2$ matrix whose $i,j_1j_2$-entry is 1 if $i = j_1 = j_2$ and is 0 otherwise. Note that strictly speaking we should indicate the dimension by writing $M_n$ or $U_n$, but it will almost always be clear what $n$ is from context (usually the number of vertices of a graph), and so we will routinely omit this.

\begin{remark}\label{rem:Snintertwiners}
It is well known that the maps $M$ and $U$ are intertwiners for $S_n^+$. Indeed, if $\mathcal{U}$ is any quantum permutation matrix, then it is a straightforward computation to show that $\mathcal{U}M = M\mathcal{U}^{\otimes 2}$ and $\mathcal{U}U = U\mathcal{U}^{\otimes 0}$. It is further known~\cite{banicaintegration} that $M$ and $U$ actually generate all of the intertwiners of $S_n^+$, i.e., that $C_{S_n^+} = \langle U,M\rangle_{+,\circ, \otimes, *}$ where the righthand side denotes the set of matrices that can be obtained from $M$, $U$, and the operations of matrix product, tensor product, conjugate transposition, and linear combinations (we use `$\circ$' to denote matrix product here even though we usually write it as juxtaposition, and we use `$+$' to denote linear combinations). Equivalently, $\langle U,M\rangle_{+,\circ, \otimes, *}$ is the intersection of all sets of matrices closed under these operations that contain $M$ and $U$.
\end{remark}

It will be useful for us to consider, and name, an infinite family of maps/matrices that can be seen as a generalization of the multiplication and unit maps. We denote by $M^{\ell,k} \in C_n(\ell,k)$ the $(\ell,k)$-\emph{generalized multiplication map/matrix} which for $\ell + k > 0$ is defined entrywise as
\begin{align}\label{eq:genmultmap}
M^{\ell,k}_{i_1\ldots i_\ell,j_1\ldots j_k} &= \begin{cases} 1 & \text{if } i_1 = \ldots = i_\ell = j_1 = \ldots = j_k \\ 0 & \text{o.w.}
\end{cases}
\end{align}
We also define $M^{0,0}$ to be the $1 \times 1$ matrix whose only entry is $n$. For $\ell,k > 0$, the map $M^{\ell,k}$ takes $e_{j_1} \otimes \ldots \otimes e_{j_k}$ to $e_{i}^{\otimes \ell}$ (the $\ell$-fold tensor product of $e_i$) if $j_1 = \ldots = j_k = i$, and maps it to the zero vector otherwise. On the other hand for $\ell > 0$ the map $M^{\ell,0}$ takes $(1)$ to $\sum_{i} e_i^{\otimes \ell}$. Note that $M^{k,\ell}$ is the (conjugate) transpose/adjoint of $M^{\ell,k}$. It is known, and we will see later, that $M^{\ell,k}$ is an intertwiner for $S_n^+$ for all $\ell,k \ge 0$, i.e., we can construct any $M^{\ell,k}$ from $M$ and $U$.

\begin{remark}\label{rem:Mnotation}
It is straightforward to see that $M$ and $U$ are equal to $M^{1,2}$ and $M^{1,0}$ respectively. From here on we will mainly use the latter notation. We also remark that $M^{1,1} = I$ and $M^{2,0} = \xi$.
\end{remark}

Another important map that we need is the \emph{swap map} $S \in C_n(2,2)$. This is defined as $S(e_i \otimes e_j) = e_j \otimes e_i$. Given a matrix $\mathcal{U}$, it is easy to see that its entries commute if and only if $\mathcal{U}S = S\mathcal{U}$. Thus a CMQG $Q \subseteq O_n^+$ is a classical group (i.e., $C(Q)$ is commutative) if and only if $S \in C_Q(2,2)$.

Recall that the relations imposed on the entries of the fundamental representation of $S_n^+$ are also imposed for $\qut(G)$ for $n = |V(G)|$. Additionally, for $\qut(G)$, Condition~\ref{eq:qaut3} is imposed which can be written as $\mathcal{U}A_G = A_G\mathcal{U}$ (recall \Rem{qautdef}). In other words, $\qut(G)$ is obtained from $S_n^+$ by adding the extra condition that $A_G$ is a $(1,1)$-intertwiner. This is formally captured in the result of Chassaniol below. Here and henceforth we will use $C^G$ and $C_q^G$ to denote $C_{\aut(G)}$ and $C_{\qut(G)}$ respectively (and we similarly use $C^G(\ell,k)$ and $C_q^G(\ell,k)$).

\begin{theorem}[Chassaniol~\cite{chassaniol2019study}]\label{thm:chassaniol}
Let $G$ be a graph with adjacency matrix $A_G$. Then we have that $C_q^G = \langle M^{1,0},M^{1,2},A_G\rangle_{+,\circ, \otimes, *}$, and $C^G = \langle M^{1,0},M^{1,2},A_G, S\rangle_{+,\circ, \otimes, *}$.
\end{theorem}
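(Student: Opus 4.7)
The plan is to invoke the Tannaka--Krein duality of \Thm{tannakakrein}, which gives a bijection between CMQGs $Q \subseteq O_n^+$ and tensor categories with duals in $C_n$. Set $\mathcal{C} := \langle M^{1,0}, M^{1,2}, A_G\rangle_{+,\circ,\otimes,*}$ and let $Q$ denote the CMQG with $C_Q = \mathcal{C}$. Since $\xi = M^{2,0} = (M^{1,2})^* M^{1,0}$ lies in $\mathcal{C}$ (and the identity $I = M^{1,2}(M^{1,2})^*$ does as well), the category contains all of $C_{O_n^+}$, so $Q \subseteq O_n^+$ and Tannaka--Krein applies. The theorem thus reduces to showing $Q = \qut(G)$.

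One inclusion, $\mathcal{C} \subseteq C_q^G$, is almost immediate. The maps $M^{1,0}$ and $M^{1,2}$ are intertwiners of $S_n^+$ by \Rem{Snintertwiners}, hence of the subgroup $\qut(G) \subseteq S_n^+$. The matrix $A_G$ is a $(1,1)$-intertwiner by the defining Relation~\ref{eq:qaut3}, as noted in \Rem{qautdef}. Since $C_q^G$ is itself a tensor category with duals, it contains the closure $\mathcal{C}$, which by the inclusion-reversing correspondence of Tannaka--Krein translates to $\qut(G) \subseteq Q$.

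For the reverse inclusion I would appeal to the universal property of $C(\qut(G))$. Unpacking the intertwiner relations for the generators of $\mathcal{C}$ entrywise, $\mathcal{U}_Q M^{1,0} = M^{1,0}$ and its adjoint yield the row and column sum conditions $\sum_j u_{ij} = \vec{1} = \sum_i u_{ij}$, while $\mathcal{U}_Q M^{1,2} = M^{1,2}\mathcal{U}_Q^{\otimes 2}$ and its adjoint yield the row and column orthogonality $u_{ij}u_{ik} = \delta_{jk} u_{ij}$ and $u_{ji}u_{ki} = \delta_{jk} u_{ji}$. Combined with the self-adjointness $u_{ij}^* = u_{ij}$ inherited from $Q \subseteq O_n^+$, these recover Relations~\ref{eq:qaut1} and~\ref{eq:qaut2}. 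The condition $A_G \in C_Q$ then supplies Relation~\ref{eq:qaut3}. The generators of $C(Q)$ thus satisfy every defining relation of $C(\qut(G))$, so universality produces a $*$-homomorphism $C(\qut(G)) \to C(Q)$ sending generators to generators, i.e., $Q \subseteq \qut(G)$. Inclusion-reversal of Tannaka--Krein then yields $C_q^G \subseteq C_Q = \mathcal{C}$, completing the first identity. For the second, adjoining $S$ to the generators imposes the extra relation $\mathcal{U}_{Q'} S = S \mathcal{U}_{Q'}$, which the paragraph preceding the theorem identifies with commutativity of the entries of the fundamental representation; the corresponding CMQG is therefore classical, and repeating the above argument in the commutative setting identifies it with $\aut(G)$.

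The main hurdle I anticipate is the converse direction above: one must verify that the seemingly compact matrix identities $\mathcal{U}_Q T = T\mathcal{U}_Q^{\otimes k}$ for $T \in \{M^{1,0}, M^{1,2}\}$ actually encode every piece of the magic unitary relations. The crucial observation making this work is that $\mathcal{C}$ automatically contains the adjoints $M^{0,1}$ and $M^{2,1}$ by closure under $*$, which supply the column-side sum and orthogonality conditions that a purely one-sided unpacking would miss.
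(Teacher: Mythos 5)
The paper does not prove this theorem at all: it is quoted verbatim from Chassaniol~\cite{chassaniol2019study}, so there is no in-paper argument to compare against. Your reconstruction via Tannaka--Krein duality is correct and is the standard way one would prove such a statement. The key points are all in place: $\mathcal{C}$ contains $I = M^{1,2}M^{2,1}$ and $\xi = M^{2,1}M^{1,0}$, so it is a genuine tensor category with duals and Tannaka--Krein produces a CMQG $Q \subseteq O_n^+$ with $C_Q = \mathcal{C}$; the easy inclusion $\mathcal{C} \subseteq C_q^G$ follows since the three generators are $\qut(G)$-intertwiners and $C_q^G$ is closed under the operations; and the reverse inclusion comes from the universal property of $C(\qut(G))$ once you unpack the intertwiner relations for $M^{1,0}$, $M^{1,2}$, their adjoints $M^{0,1}$, $M^{2,1}$, self-adjointness from $Q \subseteq O_n^+$, and $A_G \in C_Q(1,1)$ into the defining relations~\eqref{eq:qaut1}--\eqref{eq:qaut3}.

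Two small points worth being explicit about. First, you repeatedly invoke that $Q \mapsto C_Q$ is inclusion-reversing, but \Thm{tannakakrein} as stated only asserts a bijection; the order-reversal is a standard consequence (a quantum subgroup inclusion $Q_1 \subseteq Q_2$ gives a $*$-homomorphism $C(Q_2) \to C(Q_1)$ through which every $Q_2$-intertwiner relation descends, yielding $C_{Q_2} \subseteq C_{Q_1}$; the converse is exactly the universal-property argument you give). Second, for the $C^G$ identity you need that the universal \emph{commutative} $C^*$-algebra with the $\qut(G)$-relations is $C(\aut(G))$; this is a Gelfand-duality fact that the paper asserts in the paragraph following \Def{qaut}, so citing it is legitimate, but it is an extra ingredient beyond the purely algebraic unpacking that suffices in the quantum case.
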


\begin{remark}\label{rem:chassaniol}
Note that if $T \in \langle T_1, \ldots, T_r \rangle_{+,\circ, \otimes, *}$, then any use of linear combinations in the expression for $T$ can be ``moved to the end". In other words,
\[\langle T_1, \ldots, T_r\rangle_{+,\circ, \otimes, *} = \bigcup_{\ell,k = 0}^\infty \spn\{T \in C_n(\ell,k) : T \in \langle T_1, \ldots, T_r\rangle_{\circ, \otimes, *}\}.\]
We will actually abuse notation somewhat and write the righthand side above as simply $\spn\{T : T \in \langle T_1, \ldots, T_r\rangle_{\circ, \otimes, *}\}$. Thus we have that $C_q^G = \spn\{T : T \in \langle M^{1,0}, M^{1,2},A_G \rangle_{\circ, \otimes, *}\}$.
\end{remark}

\paragraph{Easy quantum groups and partition categories.}

We saw earlier that the intertwiners of $S_n^+$ are generated by the multiplication and unit maps $M^{1,2}$ and $M^{1,0}$. However, there is a much more complete description of $C_{S_n^+}$ given by Banica and Speicher in~\cite{noncross}. This description is based on the combinatorics of partitions. The idea is to associate to any partition a linear map in such a way that mutliplication, tensor product, and (conjugate) transposition of these maps correspond to natural operations on the underlying partitions.

Fix $n \ge 1$ and $\ell,k \ge 0$ and consider a partition $\mathbb{P} = \{P_1, \ldots, P_r\}$ of the set $\{1_L, \ldots, \ell_L, 1_U, \ldots, k_U\}$, where we allow $\mathbb{P}$ to have empty parts. We think of $1_L, \ldots, \ell_L$ as the \emph{lower points} of the set and of $1_U, \ldots, k_U$ of the \emph{upper points}. We can then draw such a partition graphically by joining points that are in the same part, as shown in \Fig{drawpartitions}. Given two tuples $\vec{i} = (i_1, \ldots, i_\ell) \in [n]^\ell$ and $\vec{j} = (j_1, \ldots, j_k) \in [n]^k$, we define $\delta_{\mathbb{P}}(\vec{i},\vec{j})$ to be 1 if when putting the indices of $\vec{i}$ and $\vec{j}$ on the points of $\mathbb{P}$ in the obvious way, the partition $\mathbb{P}$ only joins indices corresponding to equal entries. Otherwise $\delta_{\mathbb{P}}(\vec{i},\vec{j})$ is defined to be 0. More formally, $\delta_{\mathbb{P}}(\vec{i},\vec{j}) = 0$ if there exists $a,b,t$ such that $i_a \ne i_b$ and $a_L,b_L \in P_t$, or $j_a \ne j_b$ and $a_U,b_U \in P_t$, or $i_a \ne j_b$ and $a_L,b_U \in P_t$. The matrix $T_{\mathbb{P}}$ is then defined entrywise as $(T_{\mathbb{P}})_{\vec{i},\vec{j}} = n^{e(\mathbb{P})}\delta_{\mathbb{P}}(\vec{i},\vec{j})$, where $e(\mathbb{P})$ is the number of empty parts of $\mathbb{P}$\footnote{It is more standard to not allow empty parts in $\mathbb{P}$ and thus to not have any factor of $n^{e(\mathbb{P})}$ here. However, for us it is more natural to allow empty parts and so we must include this factor.}. Note that we are leaving off the $n$ from the notation $T_{\mathbb{P}}$ since it will usually be implicit.

\begin{figure}[h!]
\begin{subfigure}{.5\textwidth}
  \centering
  \[\mathbb{P}_1=
  \BigPartition{
  \Pblock 0 to 0.25:2,3
  \Pblock 1 to 0.75:1,2,3
  \Psingletons 0 to 0.25:1,4
  \Pline (2.5,0.25) (2.5,0.75)
  }\]
  \caption{\phantom{a} $\mathbb{P}_1 = \{\{1_U, 2_U, 3_U,2_L,3_L\},\{1_L\},\{4_L\}\}$}
  \label{fig:part1}
\end{subfigure}
\begin{subfigure}{.5\textwidth}
  \centering
  \[\mathbb{P}_2=
  \BigPartition{
  \Psingletons 0 to 0.25:3
  \Psingletons 1 to 0.75:3
  \Pline (2,0) (1,1)
  \Pline (3,0) (3,1)
  \Pline (1,0) (2,1)
  }\]
  \caption{\phantom{a} $\mathbb{P}_2 = \{\{1_U, 2_L\}, \{1_L,2_U\},\{3_L,3_U\}\}$}
  \label{fig:part2}
\end{subfigure}
\caption{How to draw partitions}\label{fig:drawpartitions}
\end{figure}

We denote by $\mathbb{P}(\ell,k)$ the set of partitions of $\{1_L, \ldots, \ell_L, 1_U, \ldots, k_U\}$. Given $\mathbb{P} \in \mathbb{P}(\ell,k)$ and $\mathbb{P}' \in \mathbb{P}(\ell',k')$ the tensor product $\mathbb{P} \otimes \mathbb{P}'$ is equal to the partition in $\mathbb{P}(\ell +\ell',k + k')$ obtained by drawing the partitions $\mathbb{P}$ and $\mathbb{P}'$ horizontally next to each other as in \Fig{partops1}. If $k = \ell'$ then the composition $\mathbb{P} \circ \mathbb{P}' \in \mathbb{P}(\ell,k')$ is obtained by drawing $\mathbb{P}$ below $\mathbb{P}'$ and joining the points $1_U, \ldots, k_U$ with the points $1_L, \ldots, \ell'_L$ (see \Fig{partops2}). Note that this may create some closed blocks, each of which is taken to be an empty part of the resulting partition\footnote{This also differs from the standard of removing these closed blocks. This change is also reflected in the formula for composition in \Eq{partitioncorr} which contains no $n^{e(\mathbb{P} \circ \mathbb{P}')}$ factor.}. Finally, the transposition $\mathbb{P}^* \in \mathbb{P}(k,\ell)$ is obtained by reflecting $\mathbb{P}$ over the horizontal axis, as shown in \Fig{partops3}. In~\cite{noncross} Banica and Speicher showed that these operations on partitions coincide with the analogous operations on matrices, i.e., that
\begin{equation}\label{eq:partitioncorr}
T_{\mathbb{P}} \otimes T_{\mathbb{P}'} = T_{\mathbb{P} \otimes \mathbb{P}'}, \quad T_{\mathbb{P}}T_{\mathbb{P}'} = T_{\mathbb{P} \circ \mathbb{P}'}, \quad T_{\mathbb{P}}^* = T_{\mathbb{P}^*}.
\end{equation}

\begin{figure}[h!]
\begin{subfigure}{.345\textwidth}
  \centering
  \[
  \BigPartition{
  \Pblock 0 to 0.25:1,2
  \Psingletons 1 to 0.75:1
  \Pline (1,0) (1,1)
  } \otimes 
  \BigPartition{
  \Pblock 1 to 0.75:1,2
  \Psingletons 0 to 0.25:1
  \Pline (1,0) (1,1)
  } = 
  \BigPartition{
  \Pblock 0 to 0.25:1,2
  \Psingletons 1 to 0.75:1
  \Pblock 1 to 0.75:2,3
  \Psingletons 0 to 0.25:3
  \Pline (1,0) (1,1)
  \Pline (3,0) (3,1)
  }
  \]
  \vspace{-.1in}
    \caption[]{
  \begin{minipage}{\linewidth}
     \begin{align*}
        &\{\{1_U,1_L,2_L\}\} \otimes \{\{1_U,2_U,1_L\}\} \\
        &= \{\{1_U,1_L,2_L\},\{2_U,3_U,3_L\}\}
     \end{align*}
  \end{minipage}
}
  \label{fig:partops1}
\end{subfigure}
\hspace{.1\textwidth}
\begin{subfigure}{.52\textwidth}
  \centering
  \[
  \BigPartition{
  \Pblock 1 to 0.75:2,3
  \Psingletons 0 to 0.25:1
  \Psingletons 1 to 0.25:1
  \Pline (2,0) (2,0.75)
  } \circ 
  \BigPartition{
  \Pblock 0 to 0.25:1,2
  \Pblock 1 to 0.75:1,2
  \Psingletons 0 to 0.25:3
  \Pline (1.5,0.25) (1.5,0.75) }= 
  \BigPartition{
  \Pblock 0 to 0.25:1,2
  \Pblock 1 to 0.75:1,2
  \Pline (1.5,0.25) (1.5,0.75)
  }
  \]
  \vspace{-.1in}
  \caption[]{
  \begin{minipage}{\linewidth}
     \begin{align*}
        &\{\{1_U,1_L\},\{2_U, 3_U,2_L\}\}  \circ  \{\{1_U,2_U,1_L,2_L\},\{3_L\}\} \\
        &= \{\{1_U,2_U,1_L,2_L\}\}
     \end{align*}
  \end{minipage}
}
  \label{fig:partops2}
\end{subfigure}
\begin{center}
\begin{subfigure}{.8\textwidth}
  \centering
  \[
  \left(\BigPartition{
  \Pblock 0 to 0.25:1,2
  \Pblock 1 to .75:2,3
  \Psingletons 1 to 0.75:1
  \Pline (1,0) (1,1)
  }\right)^*  = 
  \BigPartition{
  \Pblock 1 to 0.75:1,2
  \Pblock 0 to .25:2,3
  \Psingletons 0 to 0.25:1
  \Pline (1,0) (1,1)
  }
  \]
  \caption{\phantom{a} $\{\{1_U,1_L,2_L\},\{2_U,3_U\}\}^* = \{\{1_L,1_U,2_U\},\{2_L,3_L\}\}$}
  \label{fig:partops3}
\end{subfigure}
\end{center}
\caption{Operations on partitions}\label{fig:partitionops}
\end{figure}

Banica and Speicher showed that the maps $T_\mathbb{P}$ for $\mathbb{P} \in \mathbb{P}(\ell,k)$ span the intertwiner space $C_{S_n}(\ell,k)$. In the quantum case, i.e., for $S_n^+$, they considered \emph{non-crossing partitions}:

\begin{definition}\label{def:noncrossing}
A partition $\mathbb{P} = \{P_1, \ldots, P_r\}$ of a totally ordered set $V$ is said to be \emph{non-crossing} if whenever $a < b < c < d$, and $a,c$ are in the same part and $b,d$ are in the same part, then the two parts coincide.
\end{definition}

For example, the partition $\{\{1,2\},\{3,4\}\}$ is non-crossing, but $\{\{1,3\},\{2,4\}\}$ is not. Note that empty parts do not have any effect on whether a partition is non-crossing. For the set $\{1_L, \ldots, \ell_L, 1_U, \ldots, k_U\}$ used in the definition of $\mathbb{P}(\ell,k)$, we impose the total order $1_L < 2_L < \ldots < \ell_L < k_U < \ldots < 1_U$ in order to define when an element of $\mathbb{P}(\ell,k)$ is non-crossing. As one might expect, a partition $\mathbb{P} \in \mathbb{P}(\ell,k)$ is non-crossing if it can be drawn in the manner of \Fig{drawpartitions} without any crossings. For example, the partition in \Fig{part1} is non-crossing, but the partition in \Fig{part2} is not. We use $NC(\ell,k)$ to denote the non-crossing partitions of $\mathbb{P}(\ell,k)$. Banica and Speicher showed that $C_{S_n^+}(\ell,k)$ is spanned by the maps $T_{\mathbb{P}}$ for $\mathbb{P} \in NC(\ell,k)$ (note that the maps $M^{\ell,k}$ defined above are precisely the maps $T_\mathbb{P} \in C_n(\ell,k)$ where $\mathbb{P}$ has a single part). Similarly, for $O_n$ and $O_n^+$, they showed that the intertwiner spaces are spanned by the pairings (partitions with every part having size two) and non-crossing pairings. More generally, Banica and Speicher define a \emph{partition category} as any set of partitions closed under the operations of composition, tensor product, transposition, and containing the partitions $\{\{1_L,1_U\}\}$ and $\{\{1_L,2_L\}\}$ (whose associated maps are the identity and $\xi$). Given any partition category $\mathcal{PC}$ and integer $n$, the span of the maps $T_\mathbb{P} \in C_n$ for $\mathbb{P} \in \mathcal{PC}$ is a tensor category with duals, and thus is the intertwiner space of some compact matrix quantum group. The CMQGs arising in this way are known as \emph{easy quantum groups} (or sometimes \emph{partition quantum groups}), and these have been extensively studied in the literature, with their full classification being achieved only recently~\cite{Raum2016}.

The spirit of the first part of this paper is similar to that of Banica and Speicher's work. We show that the intertwiners spaces of $\qut(G)$ are spanned by maps associated to bi-labeled graphs (see \Def{LGraph}), and we introduce the notion of \emph{graph categories} (see \Sec{graphqgroups}) which are classes of bi-labeled graphs closed under operations that correspond to product, tensor product, and conjugate transposition. For any fixed graph $G$, any graph category corresponds to a tensor category with duals and thus to a compact matrix quantum group. Like the easy quantum groups, these \emph{graph-theoretic quantum groups} have an underlying combinatorial structure that can be exploited for their study. In fact, we show that partition categories are precisely the graph categories consisting of bi-labeled graphs having no edges. Thus graph-theoretic quantum groups promise to be a much richer class than easy quantum groups, analogously to the difference between the class of graphs and the class of sets. Furthermore, a single parititon category gives rise to a quantum group for every positive integer $n$, whereas a single graph category gives rise to a quantum group for every graph $G$.



%


\subsection{Quantum isomorphism}

In~\cite{qiso1}, a nonlocal game was introduced which captures the notion of graph isomorphism. In turn, by allowing entangled strategies, this allows one to define a type of quantum isomorphism in a natural way. We will give a brief description of this game and its classical/quantum strategies, but for a more thorough explanation we refer the reader to~\cite{qiso1}.

Given graphs $G$ and $H$, the $(G,H)$-isomorphism game is played as follows: a referee/verifier sends each of two players (Alice and Bob) a vertex of $G$ or $H$ (not necessarily the same vertex to both). 
Each of Alice and Bob must respond to the referee with a vertex of $G$ or $H$. Alice and Bob win if they meet two conditions, the first of which is:

\begin{enumerate}
\item If a player (Alice or Bob) receives a vertex from $G$, they must respond with a vertex from $H$ and vice versa\footnote{It is implicitly assumed that the vertex sets of $G$ and $H$ are disjoint, and thus the players know which graph the vertex they receive is from. Alternatively, we can simply require that the referee tells them which graph the vertex is from.}.
\end{enumerate}

Assuming this condition is met, Alice either receives or responds with a vertex of $G$, which we will call $g_A$, and either responds with or receives a vertex of $H$, which we will call $h_A$. We can similarly define $g_B$ and $h_B$ for Bob. The second condition they must meet in order to win is then given by

\begin{enumerate}
\item[{2.}] $\rel(g_A, g_B) = \rel(h_A, h_B)$,
\end{enumerate}

\begin{remark}
In~\cite{qiso1}, graphs with loops were not considered, and thus the function $\rel$ took only three values, indicating whether the vertices were equal, adjacent, or distinct non-adjacent. When loops are allowed in the graphs $G$ and $H$, we must refine the function $\rel$ in order to distinguish the vertices with loops from those without, thus obtaining the definition presented in \Sec{qautogroups}.  This more general setting does not fundamentally change any of the previous results on quantum isomorphisms, and it is completely straightforward how to adapt these results to the case where loops are allowed.
\end{remark}

The players know the graphs $G$ and $H$ beforehand and can agree on any strategy they like, but they are not allowed to communicate during the game. For simplicity we may assume that the referee sends the vertices to the players uniformly at random. We only require that Alice and Bob play one round of the game (each receive and respond with a single vertex), but we require that their strategy guarantees that they win with probability 1. We say that such a strategy is a \emph{perfect} or \emph{winning} strategy.

If $\varphi: V(G) \to V(H)$ is an isomorphism, then it is not difficult to see that responding with $\varphi(g)$ for $g \in V(G)$ and $\varphi^{-1}(h)$ for $h \in V(H)$ is a perfect strategy for the $(G,H)$-isomorphism game. Conversely, any perfect deterministic classical strategy can be shown to have this form. This is what was shown in~\cite{qiso1} and it follows that there exists a perfect classical strategy for the $(G,H)$-isomorphism game if and only if $G \cong H$. In general, classical players could use shared randomness but it is not hard to see that this would not allow them to win if they were not already able to succeed perfectly using a deterministic strategy. This motivates the definition of \emph{quantum isomorphic} graphs $G$ and $H$: those for which the $(G,H)$-isomorphism game can be won with a ``quantum strategy"

In a quantum strategy, Alice and Bob have access to a shared entangled state which they are allowed to perform local quantum measurements on. This does not allow them to communicate, but may allow them to correlate their actions/responses in ways not possible for classical players. In~\cite{qiso1}, two different models for performing joint measurements on a shared state were considered: the tensor product framework and the commuting operator framework. In this work we will only consider the commuting operator framework and thus we will not go into detail about the tensor product framework. In the commuting operator framework, Alice and Bob share a Hilbert space in which their shared state lives. They are each allowed to perform measurements on the shared state but all of Alice's measurement operators must commute with all of Bob's. The Hilbert space, and thus the measurement operators, are in general allowed to be infinite-dimensional, and it is known that there are graphs $G$ and $H$ such that the $(G,H)$-isomorphism game can be won by infinite-dimensional strategies but not finite dimensional ones.

The precise mathematical model of a quantum strategy for the $(G,H)$-isomorphism game is as follows: the players share a quantum system modelled by a Hilbert space $\mathcal{H}$, which is in a quantum state modelled by a unit vector $\psi \in \mathcal{H}$. Upon receiving input $x \in V(G) \cup V(H)$ from the referee, Alice performs a quantum measurement, modeled as a positive-operator valued measurement (POVM) $\mathcal{E}_x = \{E_{xy} \in B(\mathcal{H}): y \in V(G) \cup V(H)\}$, whose outcomes are indexed by her possible responses. Here $B(\mathcal{H})$ denotes the bounded linear operators on $\mathcal{H}$, and a POVM is a collection of positive operators\footnote{An operator $M \in B(\mathcal{H})$ is \emph{positive} if $\phi^*M\phi \ge 0$ for all $\phi \in \mathcal{H}$.} whose sum is the identity. Similarly, Bob has POVMs $\mathcal{F}_x = \{F_{xy} \in B(\mathcal{H}) : y \in V(G) \cup V(H)\}$ for each $x \in V(G) \cup V(H)$. Lastly, it is required that $E_{xy}F_{x'y'} = F_{x'y'}E_{xy}$ for all $x,x',y,y' \in V(G) \cup V(H)$. Any such quantum strategy results in a \emph{correlation}: a joint conditional probability distribution indicating the probability of the players responding with a given pair of vertices conditioned on them having received a given pair. These probabilities are given by the following formula:
\[p(y,y'|x,x') = \psi^*E_{xy}F_{x'y'}\psi,\]
where $p(y,y'|x,x')$ is the probability of Alice and Bob responding with vertices $y,y'$ assuming they received $x,x'$ respectively. Recall that a perfect/winning strategy is one which allows the players to win with probability 1, or equivalently, causes them to lose with probability 0. The latter is equivalent to the requirement that $\psi^*E_{xy}F_{x'y'}\psi = 0$ whenever responding with $y,y'$ upon receiving $x,x'$ would result in the players not meeting Conditions (1) and (2) above. As defined in~\cite{qiso1}, we say that graphs $G$ and $H$ are \emph{quantum isomorphic}, and write $G \cong_{qc} H$, if there exists a perfect quantum strategy for the $(G,H)$-isomorphism game. Note that in~\cite{qiso1} and in most other sources ``quantum isomorphism" refers the existence of a perfect quantum strategy for the isomorphism game \emph{in the tensor product framework}, and ``quantum commuting isomorphism" is used to for the commuting operator framework which we consider here. As we will only discuss the latter henceforth, we will simply refer to is as quantum isomorphism. However, we keep the notation $G \cong_{qc} H$ which is the standard for this notion.

\begin{remark}\label{rem:complements}
It is well known and obvious that two graphs are isomorphic if and only if their complements are isomorphic (and similarly for their full complements). The same is true for quantum isomorphism. Indeed, the Conditions (1) and (2) required of the players are not affected by taking complements (or full complements). Thus any strategy, quantum or otherwise, that wins the $(G,H)$-isomorphism game also wins the $(\overline{G},\overline{H})$-isomorphism game and $(\overline{\overline{G}},\overline{\overline{H}})$-isomorphism game.
\end{remark}

One of the difficulties in analyzing quantum strategies is that they have many components: one must choose a shared state and measurement operators for both Alice and Bob. In the case of the isomorphism game, some of this difficulty has been alleviated by the following result proven in~\cite{qperms}:

\begin{theorem}\label{thm:qisoMU}
Let $G$ and $H$ be graphs with adjacency matrices $A_G$ and $A_H$ respectively. Then $G \cong_{qc} H$ if and only if there exists a quantum permutation matrix $\mathcal{U} = (u_{gh})_{g \in V(G), h \in V(H)}$ such that $A_G\mathcal{U} = \mathcal{U}A_H$.
\end{theorem}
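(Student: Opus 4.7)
The plan is to prove the two implications separately. A key preliminary observation is that the $(G,H)$-isomorphism game is \emph{synchronous} on same-graph inputs: if both players receive the same vertex $g \in V(G)$, then Condition~(2) forces $\rel(g,g) = \rel(h_A,h_B)$, which in turn forces $h_A = h_B$ with matching loop status (a constraint that is automatic here, since a vertex has a unique relation with itself). This synchronicity is the main technical handle in both directions.

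For the direction $(\Rightarrow)$, I would start from a perfect commuting-operator strategy with state $\psi$ and POVMs $\{E_{xy}\}$, $\{F_{xy}\}$. Condition~(1) immediately lets us discard outputs in the ``wrong'' vertex set, so for $g \in V(G)$ we focus on the sub-POVM $\{E_{gh}\}_{h \in V(H)}$, and analogously for Bob and for inputs from $V(H)$. A standard synchronous-game analysis in the $C^*$-algebraic setting then shows that, without loss of generality, each $E_{gh}$ may be taken to be a projection, that $E_{gh}\psi = F_{gh}\psi$, and that the vector state $a \mapsto \psi^* a \psi$ restricts to a faithful tracial state on the $C^*$-algebra generated by Alice's operators. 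Setting $u_{gh} := E_{gh}$, POVM completeness yields $\sum_h u_{gh} = \vec{1}$, while applying synchronicity to inputs from $V(H)$ yields $\sum_g u_{gh} = \vec{1}$, so $\mathcal{U} = (u_{gh})$ is a magic unitary. Applying Condition~(2) to pairs with $\rel(g,g') \ne \rel(h,h')$ gives $\psi^* E_{gh} F_{g'h'} \psi = 0$; since $E_{gh}F_{g'h'}$ is positive and $\psi$ is separating via the traciality just established, this upgrades to $u_{gh}u_{g'h'} = 0$ for all such pairs. As observed in \Rem{qautdef}, this family of vanishings is equivalent to the single matrix identity $A_G\mathcal{U} = \mathcal{U}A_H$.

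For the direction $(\Leftarrow)$, let $\mathcal{U} = (u_{gh})$ be a magic unitary in a unital $C^*$-algebra $A$ satisfying $A_G\mathcal{U} = \mathcal{U}A_H$; without loss of generality take $A$ to be the universal $C^*$-algebra generated by these relations. Produce a tracial state $\tau$ on $A$ (for instance via the Haar state of the ambient compact quantum group, which exists because $\mathcal{U}$ generates a compact matrix quantum group analogous to $\qut(G)$) and perform the GNS construction to obtain a representation $\pi \colon A \to B(\mathcal{H})$ with cyclic vector $\psi$ satisfying $\tau(a) = \psi^* \pi(a) \psi$. Let Alice measure with $E_{gh} := \pi(u_{gh})$ on input $g \in V(G)$ and with the analogous transposed assignment on inputs from $V(H)$, and let Bob use the corresponding right-multiplication operators $F_{gh}$ on $\mathcal{H}$, which lie in $\pi(A)'$ by the traciality of $\tau$. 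Then $E_{xy}$ commutes with every $F_{x'y'}$, and a direct computation using $\sum_h u_{gh} = \vec{1}$ together with $u_{gh}u_{g'h'} = 0$ whenever $\rel(g,g') \ne \rel(h,h')$ verifies that both winning conditions hold with probability~$1$.

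The hard part is the synchronicity step in the $(\Rightarrow)$ direction: in the commuting-operator framework there is no tensor-product structure, and the Hilbert space may be infinite-dimensional, so refining the POVMs to projections satisfying $E_{gh}\psi = F_{gh}\psi$ and exhibiting the tracial vector state on Alice's algebra requires genuine $C^*$-algebraic technology rather than any matrix-level manipulation. The $(\Leftarrow)$ direction, by contrast, is essentially a packaging of the given algebraic data into a commuting-operator strategy via GNS and the existence of a tracial state.
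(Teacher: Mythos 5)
Before evaluating the proposal itself, note that the paper does not supply a proof of \Thm{qisoMU} — it cites the result from~\cite{qperms}. So there is no ``paper proof'' to compare against line by line; I will assess your sketch on its own terms, keeping in mind what the proof in~\cite{qperms} actually relies on.

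Your overall architecture is the right one and matches~\cite{qperms}: exploit synchronicity to reduce a perfect commuting-operator strategy to a tracial state on a game $*$-algebra, and in the reverse direction build a commuting-operator strategy from a magic unitary via GNS with Alice acting by left multiplication and Bob by right multiplication. The $(\Rightarrow)$ direction is essentially sound; the one imprecision there is calling the vector state ``faithful'' on the $C^*$-algebra generated by Alice's operators — in general it is only a trace, and faithfulness is obtained after passing to the GNS/quotient of the trace ideal. This is a standard and repairable gloss.

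The genuine gap is in $(\Leftarrow)$, and it is exactly the point you flagged as easy. You produce the needed tracial state by asserting that ``$\mathcal{U}$ generates a compact matrix quantum group analogous to $\qut(G)$'' and then invoking its Haar state. This is false: the universal $C^*$-algebra generated by a magic unitary $(u_{gh})_{g\in V(G),\, h\in V(H)}$ with $A_G\mathcal{U}=\mathcal{U}A_H$ carries no comultiplication. The candidate $\Delta(u_{gh})=\sum_k u_{gk}\otimes u_{kh}$ does not even typecheck, because the middle index $k$ would have to lie simultaneously in $V(H)$ (for $u_{gk}$) and in $V(G)$ (for $u_{kh}$), and these index sets are disjoint. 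This algebra is a bi-comodule (a bi-Galois object) over $C(\qut(G))$ and $C(\qut(H))$, not a quantum group, so there is no Haar state to appeal to. And the trace really is needed in your construction: with only a state, the right-multiplication operators $F_{gh}$ fail to be self-adjoint, so Bob's $\{F_{gh}\}_h$ need not be a POVM. The existence of a tracial state on the nonzero game algebra is therefore a substantive lemma that your sketch leaves unproved. One way to supply it, and the route taken in~\cite{qperms}, is to package $\mathcal{U}$ into a magic unitary $\mathcal{W}$ indexed by $V(G)\sqcup V(H)$ (with the off-diagonal blocks given by $\mathcal{U}$ and $\mathcal{U}^T$ and zero diagonal blocks), observe that $\mathcal{W}$ satisfies the relations of $\qut(G\cup H)$, and then use the Haar state of $\qut(G\cup H)$ — which \emph{is} a Kac-type compact matrix quantum group and hence has a tracial Haar state — carefully, dealing with the fact that the induced map $C(\qut(G\cup H))\to C^*(\{u_{gh}\})$ need not make the Haar state descend directly. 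The details are not routine, and your proposal should not present this step as a straightforward application of a Haar state for a quantum group that does not exist.
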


The entries of the quantum permutation matrix $\mathcal{U}$ above correspond to the measurement operators of Alice (or Bob) used in winning strategy for the $(G,H)$-isomorphism game. The above theorem is an analog of the fact that graphs $G$ and $H$ are isomorphic if and only if there exists a permutation matrix $P$ such that $A_GP = PA_H$, though this is usually written as $P^TA_GP = A_H$. \Thm{qisoMU} illustrates that there is a close connection between quantum isomorphisms and quantum automorphism groups of graphs. The following theorem of~\cite{qperms} make this connection even more concrete:

\begin{theorem}\label{thm:qisoorbits}
Let $G$ and $H$ be connected graphs. Then $G \cong_{qc} H$ if and only if there exists $g \in V(G)$ and $h \in V(H)$ that are in the same orbit of $\qut(G \cup H)$.
\end{theorem}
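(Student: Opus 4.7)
The plan is to prove each direction separately, using \Thm{qisoMU} as the bridge between quantum isomorphism and the algebraic setting of magic unitaries intertwining adjacency matrices.

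\textbf{Forward direction.} Assuming $G \cong_{qc} H$, \Thm{qisoMU} supplies a magic unitary $\mathcal{U} = (u_{gh})$ with entries in some $C^*$-algebra $\mathcal{B}$ satisfying $A_G\mathcal{U} = \mathcal{U}A_H$. I would assemble the ``swap'' matrix
\[
\mathcal{V} = \begin{pmatrix} 0 & \mathcal{U} \\ \mathcal{U}^T & 0 \end{pmatrix}
\]
indexed by $V(G) \cup V(H)$, and verify that $\mathcal{V}$ is itself a magic unitary: its entries are projections (those of $\mathcal{U}$ are, and projections are self-adjoint), and its row/column sums are $\vec{1}$ because those of $\mathcal{U}$ are. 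Because $A_{G \cup H}$ is block-diagonal with blocks $A_G$ and $A_H$, the commutation $\mathcal{V}A_{G \cup H} = A_{G \cup H}\mathcal{V}$ reduces block by block to $A_G\mathcal{U} = \mathcal{U}A_H$ and its transpose $A_H\mathcal{U}^T = \mathcal{U}^T A_G$. Thus $\mathcal{V}$ satisfies the defining relations of $\qut(G \cup H)$, so by the universal property of $C(\qut(G \cup H))$ there is a $*$-homomorphism $\pi$ sending the generator $w_{xy}$ of the fundamental representation to $\mathcal{V}_{xy}$. Since $\mathcal{U}$ is a magic unitary, at least one entry $u_{gh}$ is nonzero in $\mathcal{B}$, forcing $w_{gh} \ne 0$ in $C(\qut(G \cup H))$. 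By the definition of $\sim_1$, the vertices $g$ and $h$ then lie in the same orbit.

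\textbf{Backward direction.} Conversely, let $\mathcal{V} = (v_{xy})$ be the fundamental representation of $\qut(G \cup H)$, blocked as
\[
\mathcal{V} = \begin{pmatrix} \mathcal{V}_{GG} & \mathcal{V}_{GH} \\ \mathcal{V}_{HG} & \mathcal{V}_{HH} \end{pmatrix},
\]
and observe that block-diagonality of $A_{G \cup H}$ turns $\mathcal{V}A_{G \cup H} = A_{G \cup H}\mathcal{V}$ into four block identities; the off-diagonal ones give $A_G\mathcal{V}_{GH} = \mathcal{V}_{GH}A_H$. The entries of $\mathcal{V}_{GH}$ are projections, but the block is not yet a magic unitary: the row sums only equal $\vec{1} - \sum_{g''} v_{g'g''}$. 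I would therefore pass to the quotient $\mathcal{A}/\mathcal{I}$, where $\mathcal{I}$ is the closed two-sided ideal of $\mathcal{A} := C(\qut(G \cup H))$ generated by all $v_{xy}$ with $x,y$ in the same component of $G \cup H$. Modulo $\mathcal{I}$, the diagonal blocks vanish and the image of $\mathcal{V}_{GH}$ becomes a genuine magic unitary still satisfying $A_G\mathcal{V}_{GH} = \mathcal{V}_{GH}A_H$; then \Thm{qisoMU} immediately delivers $G \cong_{qc} H$, provided $\mathcal{A}/\mathcal{I}$ is nontrivial.

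\textbf{The main obstacle.} Establishing this nontriviality is the heart of the argument, and it is precisely where the hypothesis $v_{gh} \ne 0$ and connectedness of $G$ and $H$ must be exploited together. My plan is to pick a state $\phi$ on $\mathcal{A}$ with $\phi(v_{gh}) > 0$ (which exists because $v_{gh}$ is a nonzero positive element) and work inside the resulting GNS representation $\pi_\phi$. I would then show that for every $g' \in V(G)$ the projection $R_{g'} := \pi_\phi\bigl(\sum_{g'' \in V(G)} v_{g'g''}\bigr)$ vanishes, propagating this vanishing from the ``base case'' supplied by $v_{gh} \ne 0$ along walks in the connected graph $G$ using the block identity $\mathcal{V}_{HG}A_G = A_H\mathcal{V}_{HG}$ together with the orthogonality relation~\EqRef{qaut3'}; the analogous statement for $H$ is handled symmetrically. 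Connectedness of $G$ and $H$ enters precisely in the requirement that every vertex be reachable by such a walk from $g$ (respectively $h$). I expect this propagation step to be the most technically delicate: one must choose the walk-by-walk elimination carefully so as not to force $\vec{1} \in \mathcal{I}$. Once it is in hand, the image of $\mathcal{V}_{GH}$ in $\pi_\phi(\mathcal{A})$ is a magic unitary intertwining $A_G$ and $A_H$, and \Thm{qisoMU} closes the argument.
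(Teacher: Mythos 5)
The paper does not prove this theorem itself; it cites it from~\cite{qperms}, so only the correctness of your argument can be assessed. Your forward direction is sound: the block matrix $\mathcal{V}$ is indeed a magic unitary satisfying $\mathcal{V}A_{G\cup H}=A_{G\cup H}\mathcal{V}$, the universal property gives a $*$-homomorphism sending $w_{gh}\mapsto u_{gh}\neq 0$, and so $w_{gh}\neq 0$ (note this direction never uses connectedness, which is fine).

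The backward direction has the right skeleton (quotient by the ideal $\mathcal{I}$ generated by the diagonal-block generators, then invoke \Thm{qisoMU}), but the nontriviality argument as you have set it up does not work. The asserted ``base case'' is false: a state $\phi$ with $\phi(v_{gh})>0$ gives $\pi_\phi(p_g^H)\neq 0$, hence $\pi_\phi(p_g^G)\neq\vec 1$, but certainly not $\pi_\phi(p_g^G)=0$ — in general $p_g^G\neq 0$ (e.g.\ whenever $G\cong H$ classically, the diagonal entries $v_{g'g'}$ survive in every faithful representation). The propagation you want is to \emph{constancy}, not to zero: using \EqRef{qaut3'}, if $x\sim y$ then $p_x^G\,p_y^H=0$ (because $g'\in V(G)$, $h'\in V(H)$ are always distinct non-adjacent while $x,y$ are adjacent), whence $p_x^G=p_y^G$, and connectedness of $G$ and $H$ gives $p_x^G\equiv q$ on $V(G)$ and $p_x^G\equiv q'$ on $V(H)$. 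The crucial missing ingredient is that $q$ is a \emph{central} projection: the column-sum relation forces $|V(G)|(\vec 1-q)=|V(H)|q'$, hence $q'=\vec 1-q$, so every generator $v_{xy}$ is dominated either by $q$ (same-component) or by $\vec 1-q$ (cross-component), and therefore commutes with $q$. Centrality immediately gives $\mathcal{I}=q\mathcal{A}$, which is proper precisely because $v_{gh}\leq\vec 1-q$ is nonzero, and the image of $\mathcal{V}_{GH}$ in $(\vec 1-q)\mathcal{A}\cong\mathcal{A}/\mathcal{I}$ is a magic unitary. The GNS detour is unnecessary once centrality is in hand; without it, a state with $\phi(v_{gh})>0$ is not guaranteed to kill the diagonal blocks, and the ``careful walk-by-walk elimination'' you flag as delicate is in fact the spot where the argument, as written, breaks.
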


We remark that the connectedness condition above is not really a restriction as either a graph or its complement must be connected, and a connected graph cannot be quantum isomorphic to a disconnected one.

\Thm{qisoorbits} will allow us to use our characterization of the intertwiners of $\qut(G)$ (\Thm{onecat}) in order to prove that two graphs are quantum isomorphic if and only if they admit the same number of homomorphisms from any planar graph (\Thm{main}).

\section{Bi-labeled graphs and homomorphism matrices}\label{sec:bilabeledgraphs}


Here we introduce the two central notions of our combinatorial characterization of the intertwiner space of $\qut(G)$: bi-labeled graphs and their corresponding homomorphism matrices. The name ``bi-labeled" graph comes from the work of Lov\'{a}sz on graph limits~\cite{lovasz2012large}. Our notion is equivalent, but we formulate it slightly differently:

\begin{definition}[Bi-labeled graphs]
A \emph{bi-labeled graph} $\vec{K}$ is a triple $(K,\vec{a},\vec{b})$ where $K$ is a graph and $\vec{a} = (a_1, \ldots, a_\ell) \in V(K)^\ell$, $\vec{b} = (b_1, \ldots, b_k) \in V(K)^k$ are tuples/vectors of vertices of $K$ with $\ell,k \ge 0$. We refer to $\vec{a}$ as the \emph{output tuple/vector}, and the $a_i$'s as the \emph{output vertices}. Similarly, $\vec{b}$ is the \emph{input tuple/vector} and the $b_j$'s are the \emph{input vertices}. We refer to $K$ as the \emph{underlying graph} of the bi-labeled graph $\K$. Finally, we use $\G(\ell,k)$ to denote the set of all bi-labeled graphs with $\ell$ output vertices and $k$ input vertices, and use $\G = \cup_{\ell=0,k=0}^\infty \G(\ell,k)$.
\label{def:LGraph}
\end{definition}

\begin{remark}\label{rem:isoclasses}
Strictly speaking we should let $\G(\ell,k)$ be the set of all \emph{isomorphism classes} of bi-labeled graphs with $\ell$ outputs and $k$ inputs since otherwise $\G(\ell,k)$ would not even be a set. Here an isomorphism of bi-labeled graphs $\vec{K} = (K,\vec{a},\vec{b}) \in \G(\ell,k)$ and $\vec{K'} = (K',\vec{a'},\vec{b'}) \in \G(\ell,k)$ is an isomorphism $\varphi$ from $K$ to $K'$ such that $\varphi(\vec{a}) = \vec{a'}$ and $\varphi(\vec{b}) = \vec{b'}$, where $\varphi(\vec{a}) = \vec{a'}$ denotes $\varphi(a_i) = a'_i$ for all $i = 1, \ldots, \ell$, and similarly for $\varphi(\vec{b}) = \vec{b'}$. However we will be informal and write $\vec{K} \in \G(\ell,k)$ to denote that the isomorphism class of $\vec{K}$ belongs to $\G(\ell,k)$.
\end{remark}

Note that we will use $\varnothing$ to denote the \emph{empty tuple/vector}, i.e, the tuple/vector of length zero. In~\cite{lovasz2012large}, Lov\'{a}sz described bi-labeled graphs as graphs in which the left labels $1, \ldots, \ell$ are assigned to some vertices, and the right labels $1, \ldots, k$ are assigned to some vertices. The left and right label $i$ are distinguished, and vertices can be assigned more than one left and/or right label. Intuitively, one thinks of the graph being drawn with the left labeled vertices on the left and the right labeled vertices on the right, though of course there are problems when a vertex has both a left and right label. The correspondence to our notion is straightforward: $a_i$ is equal to the vertex with left label $i$ and $b_j$ is equal to the vertex with right label $j$. We will always refer to bi-labeled graphs by a capital boldface letter, and unless otherwise specified its underlying graph will be denoted by the same capital letter but without boldface.

\begin{remark}[Drawing a bi-labeled graph]\label{rem:drawing}
It is well known that graphs are often thought of and represented diagrammatically. The vertices are represented as points and the edges as curves between the appropriate pair of points. A bi-labeled graph is a graph plus some additional information, and thus to represent them we will need to add something to the usual picture of a graph. This is done by the addition of input/output ``wires" that are attached to the input and output vertices of the bi-labeled graph. Specifically, to draw a bi-labeled $\K = (K,\vec{a},\vec{b})$ we draw the underlying graph $K$, and we attach the $i^\text{th}$ output wire to $a_i$ and the $j^\text{th}$ input wire to $b_j$. Note that this means that vertices can have multiple input/output wires attached to them. The input and output wires extend to the far right and far left of the picture respectively. Finally, in order to indicate which input/output wire is which, we draw them so that they occur in numerical order (first at the top) at the edges of the picture. We have given examples of how to draw bi-labeled graphs in \Fig{draw}. In \Fig{drawabstract} we illustrate an example whose underlying graph is ``generic" and thus represented by a gray blob. In \Fig{drawconcrete} we give a concrete example whose underlying graph is $K_4$. Note that we do not (yet) impose any sort of planarity or non-crossing condition on wires or edges. The wires differ from the edges in that they only have a vertex at one end. To help distinguish them we will draw the wires thinner and lighter than the edges. Note that these drawings differ from drawings of graphs in a key aspect: a homeomorphism of the plane can change the bi-labeled graph represented by a drawing. This is because such a homeomorphism can change the order in which the input and output wires are drawn which corresponds to changing the order of the input and output vectors. Of course, these drawings are mainly meant to help visualize and build intuition for bi-labeled graphs. Later, when we define operations such as composition and tensor product for bi-labeled graphs, we will see how useful this is.
\end{remark}

\begin{figure}[h!]
\begin{subfigure}{.5\textwidth}
  \centering
  \includegraphics[scale=1]{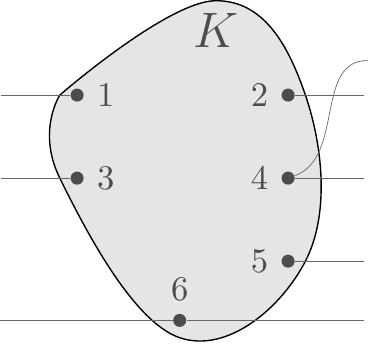}
  \caption{\phantom{a} $\K = (K,(1,3,6),(4,2,4,5,6))$.}
  \label{fig:drawabstract}
\end{subfigure}
\begin{subfigure}{.5\textwidth}
  \centering
  \vspace{.1in}
  \includegraphics[scale=1.2]{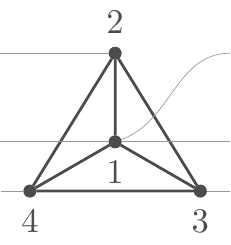}
  \vspace{.1in}
  \caption{\phantom{a} $\K = (K_4,(2,1,4),(1,1,3))$.}
  \label{fig:drawconcrete}
\end{subfigure}
\caption{How to draw bi-labeled graphs.}\label{fig:draw}
\end{figure}

We will not keep the reader in suspense with regards to how the combinatorial notion of bi-labeled graphs relates to the algebraic notion of intertwiners. The connection is provided by homomorphism matrices, whose entries count the number of homomorphisms from a bi-labeled graph to some fixed graph, partitioned according to where the input/output vertices are mapped:

\begin{definition}[$G$-homomorphism matrix]
Given a graph $G$ and a bi-labeled graph $\K = \left(K, \vec{a}, \vec{b} \right) \in \G(\ell,k)$, the \emph{$G$-homomorphism matrix of $\K$}, denoted $T^{\K \to G}$, is the $V(G)^\ell \times V(G)^k$ matrix defined entrywise as
\[\left(T^{\vec{K} \to G}\right)_{\vec{u}\vec{v}} = \# \text{ homomorphisms } \varphi\colon K \to G \text{ s.t. } \varphi(\vec{a}) = \vec{u} \ \& \ \varphi(\vec{b}) = \vec{v}.\]

\label{def:HomMatrix}
\end{definition}

\begin{remark}\label{rem:lovaszknows}
Based on remarks in~\cite{lovasz2012large} and other works, we suspect that the notion of $G$-homomorphism matrices is known to Lov\'{a}sz and others in the field of graph limits. However, we were unable to find an explicit definition these objects in the literature, despite our efforts~\cite{mathoverflow}.
\end{remark}

To gain some intuition, let us discuss some simple examples of the above definitions.

\begin{example}\label{ex:00}
If $\K\in \G(0,0)$, then $T^{\K \to G}$ is simply a $1 \times 1$ matrix (which we may interpret as a scalar at will) whose single entry is the number of homomorphisms $\varphi \colon K \to G$.
\end{example}

\begin{example}\label{ex:01}
If $\K = (K,\varnothing,(b))\in \G(0,1)$, i.e, $\K$ has a single input vertex $b \in V(K)$ and no outputs, then $T^{\K \to G}$ is a row vector whose $u$-entry counts the number of homomorphisms $\varphi\colon K\to G$ with $\varphi(b) = u$. For example, if $\K = (K_3,\emptyset,(1))$ then $T^{\K \to K_3} = (2,2,2)$ as there are two ways to map a $K_3$ to itself when the image of one vertex is fixed. Analogously, if $\K = (K,(b),\varnothing)\in \G(1,0)$, then $T^{\K \to G}$ is a column vector. In fact it is the transpose of the row vector described in the previous case.
\end{example}

\begin{example}\label{ex:A}
If $\K = (K,(a),(b)) \in \G(1,1)$, then $T^{\K\to G}$ is a $V(G) \times V(G)$ matrix. An important case arises when $\vec{K} = (K_2,(1),(2))$. Then it is easy to see that
\[
\left(T^{\K \to G}\right)_{uv} = \begin{cases} 1 & \text{if } u \sim v \\ 0 & \text{o.w.}
\end{cases}
\]
In other words, $T^{\K \to G}$ is equal to $A_G$, the adjacency matrix of $G$. Due to this, we will refer to this bi-labeled graph as $\vec{A}$ (see \Fig{gengraphs}).
\end{example}

\begin{figure}[h!]
  \centering
  \includegraphics[scale=1.5]{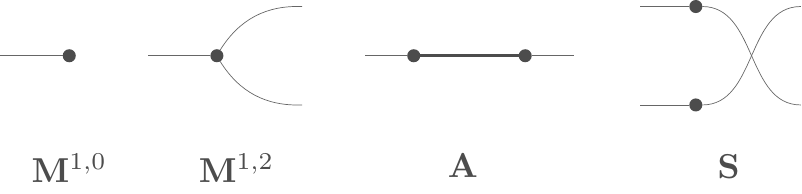}
\caption{Some important bi-labeled graphs.}\label{fig:gengraphs}
\end{figure}

In the above example, note that every vertex of the underlying graph $K$ appeared as an input or output. In this case, every entry of $T^{\K \to G}$ will be either $0$ or $1$. This is because the $\vec{u},\vec{v}$ entry is counting the homomorphisms $\varphi\colon K \to G$ such that $\varphi(\vec{a}) = \vec{u}$ and $\varphi(\vec{b}) = \vec{v}$ which completely determines the function $\varphi$. There are two reasons the corresponding entry could be zero: 1) No such function $\varphi$ exists, i.e., there is some $a_i = a_j$ (or $b_i = b_j$, or $a_i = b_j$) but $u_i \ne u_j$ (or $v_i \ne v_j$, or $u_i \ne v_j$), or 2) the function $\varphi$ exists but is not a homomorphism from $K$ to $G$. If neither $(1)$ or $(2)$ hold, then the corresponding entry will be 1. In the previous example, the 0's in $T^{\K \to G}$ were due to $(2)$. In the following example, they are due to $(1)$. We will make use of the notation $(a^k)$ to denote a tuple of length $k$ in which every entry is $a$. We will also sometimes use this notation in the middle of a tuple, e.g., $(a_1, a_2, a^k, a_3)$ is a tuple of length $k+3$ in which entries $3$ through $k+2$ are $a$.

\begin{example}\label{ex:M}
Let $K$ be the graph with a single vertex $a$ and no edges, and define $\vec{M}^{\ell,k} = (K,(a^\ell),(a^k))$. 
Note that since $K$ has no edges, the matrix $T^{\vec{M}^{\ell,k} \to G}$ depends only on $|V(G)|$. In particular, if $\ell + k > 0$, then
\[\left(T^{\vec{M}^{\ell,k} \to G}\right)_{\vec{u},\vec{v}} = \begin{cases} 1 & \text{if } u_1 = \ldots = u_\ell = v_1 = \ldots = v_k \\ 0 & \text{o.w.}
\end{cases}
\]
On the other hand, $T^{\vec{M}^{0,0} \to G} = (|V(G)|)$. In either case, $T^{\vec{M}^{\ell,k} \to G} = M^{\ell,k}$, the generalized multiplication map from \Sec{qautogroups}. Recall that $M^{1,1} = I$, and thus we will sometimes use $\vec{I}$ to refer to $\vec{M}^{1,1}$. We illustrate the bi-labeled graphs $\vec{M}^{1,0}$ and $\vec{M}^{1,2}$ in \Fig{gengraphs}.
\end{example}

\begin{example}\label{ex:Mloop}
Let $K$ be the graph with a single vertex $a$ with a loop, and define $\mathring{\vec{M}}^{\ell,k} = (K,(a^\ell),(a^k))$. In this case, $T^{\mathring{\vec{M}}^{\ell,k} \to G}$ depends on the graph $G$, specifically it depends on which vertices have loops. If $\ell + k \ge 0$, then 
\[\left(T^{\mathring{\vec{M}}^{\ell,k} \to G}\right)_{\vec{u},\vec{v}} = \begin{cases} 1 & \text{if } u_1 = \ldots = u_\ell = v_1 = \ldots = v_k \text{ and this vertex has a loop}\\ 0 & \text{o.w.}
\end{cases}
\]
The matrix $T^{\mathring{\vec{M}}^{0,0} \to G}$ is the $1 \times 1$ matrix whose entry is equal to the number of loops of $G$.
\end{example}

\begin{example}\label{ex:swap}
Let $K$ be the edgeless graph on vertex set $\{a,b\}$, and let $\vec{S} = (K,(a,b),(b,a))$ (see \Fig{gengraphs}). Then for any graph $G$,
\[\left(T^{\vec{S} \to G}\right)_{(u_1,u_2),(v_1,v_2)} = \begin{cases} 1 & \text{if } u_1 = v_2 \ \& \ u_2 = v_1 \\ 0 & \text{o.w.}
\end{cases}
\]
Thus $T^{\vec{S} \to G}$ is the swap map from \Sec{qautogroups} that is contained in $C_q^G$ if and only if the entries of the fundamental representation of $\qut(G)$ commute, i.e.~if $\qut(G) = \aut(G)$.
\end{example}

\begin{remark}\label{rem:bilabeledmultigraphs}
We have chosen to not allow multiple edges in our bi-labeled graphs. We could consider this more general case, but not doing so simplifies the presentation somewhat (see \Sec{colorededges}). Furthermore, allowing multiple edges in our bi-labeled graphs would not affect the possible $G$-homomorphism matrices because we will only ever consider the case where $G$ is a graph. On the other hand, loops will arise naturally through the operations we apply to bi-labeled graphs and ignoring these would result in different (incorrect) homomorphism matrices, even if we restrict to $G$-homomorphism matrices for $G$ not having loops.
\end{remark}


\subsection{Operations on bi-labeled graphs}\label{sec:blgops}

We have seen how to associate a matrix to a bi-labeled graph. It is thus natural to ask if/how typical matrix operations correspond to operations on bi-labeled graphs. In this section we introduce the graph operations corresponding to matrix product, tensor product, entrywise product, and transposition. Some of these operations are described by Lov\'{a}sz in~\cite{lovasz2012large} and, as mentioned in \Rem{lovaszknows}, we suspect that their correspondence to matrix operations may be previously known, though we could not find this in the literature. On the other hand, the operations of multiplication, tensor product, and transposition have been considered for the \emph{matchgates} introduced by Valiant~\cite{valiant}, and the correspondence to matrix operations (for matrices that count matchings rather than homomorphisms) is known in that case~\cite{xia}.\\


The first operation we introduce corresponds to matrix product/composition, and thus we refer to it as the \emph{composition} of bi-labeled graphs. In~\cite{lovasz2012large} it is referred to as ``concatenation".

\begin{definition}[Composition of bi-labeled graphs]\label{def:comp}
Let $\H_1 = (H_1, \vec{a},\vec{b}) \in \G(\ell,t)$ and $\H_2= (H_2, \vec{a'},\vec{b'}) \in \G(t,k)$ be bi-labeled graphs. 
We define their composition, denoted by $\H_1 \circ \H_2$, to be the bi-labeled graph $\H = (H, \vec{c},\vec{d}) \in \G(\ell,k)$, where $H$ is obtained from the disjoint union $H_1 \cup H_2$ by first adding the edges $e_i = b_i a'_i$ for all $i\in[t]$, then contracting all of the edges $e_i$, and finally removing any resulting multiple edges but keeping the loops. We let $c_i$ be the vertex of $H$ that $a_i$ became after contraction, and similarly let $d_i$ be the vertex $b'_i$ became after contraction.
\end{definition}

Note that in the above definition we could have phrased the construction of $\H_1 \circ \H_2$ as taking the disjoint union of $H_1$ and $H_1$ and then identifying the vertices $b_i$ and $a'_i$ for all $i$. This is equivalent to the above, but it is sometimes useful to have the formalism of edge contractions to work with. Also note that the composition of bi-labeled graphs is easy to visualize in terms of the drawings of bi-labeled graphs described in \Rem{drawing}. We simply draw $\H_1$ on the left and $\H_2$ on the right so that the input wires of the former align with the output wires of the latter. We then join the $i^\text{th}$ input wire of $\H_1$ to the $i^\text{th}$ output wire of $\H_2$ which creates the edge $e_i$ described in \Def{comp}, which we then contract. 


\begin{figure}
\begin{subfigure}{\textwidth}
  \centering
  \includegraphics[scale=1]{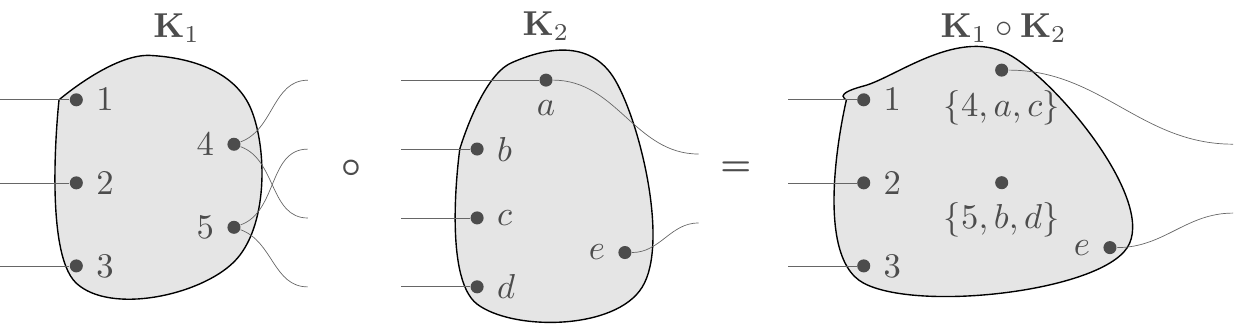}
  \caption{\phantom{a} Generic example of composition.}
  \label{fig:compex1}
\end{subfigure}

\vspace{.2in}

\begin{subfigure}{\textwidth}
  \centering
  \includegraphics[scale=1]{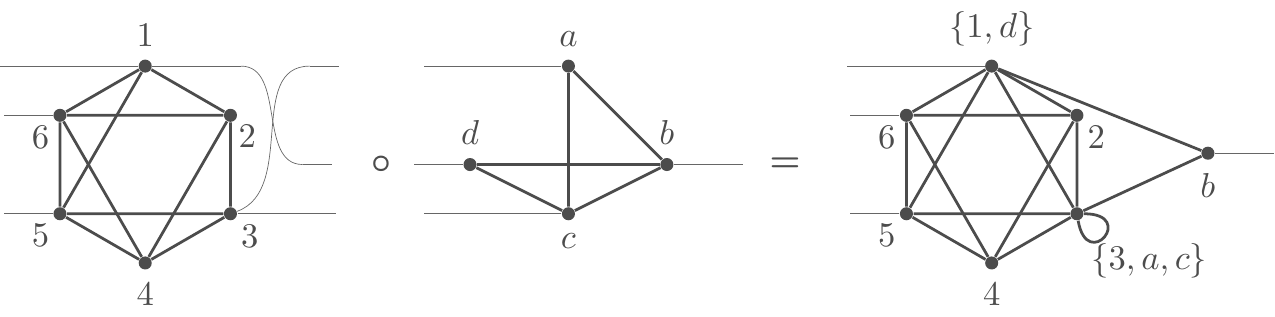}
  \caption{\phantom{a} Concrete example of composition.}
  \label{fig:compex2}
\end{subfigure}
\caption{Illustration of the composition of bi-labeled graphs.}\label{fig:compex}
\end{figure}

We illustrate the above definition with an example shown in \Fig{compex}. In \Fig{compex1} we show the composition of two bi-labeled graphs with generic underlying graphs, and in \Fig{compex2} we give a concrete example. In the latter we are considering the composition of two bi-labeled graphs $\H_1 = (H_1,(1,6,5),(3,1,3))$ and $\H_2 = (H_2,(a,d,c),(b))$. The result is a bi-labeled graph $\H = (H,(\{1,d\},6,5),(b))$ where we have named the vertices of $H$ by the set of vertices from $H_1$ and $H_2$ that were merged to create it (unless it is a singleton in which case it retains its original name). We see that it is possible to create a loop via composition even if the factors do not contain a loop.


Next we define the tensor product and transpose of bi-labeled graphs, which of course will correspond to the analogous operations on homomorphism matrices.

\begin{definition}(Tensor product and transpose of bi-labeled graphs).
Let $\H_1 = (H_1, \vec{a},\vec{b}) \in \G(\ell,k)$, $\H_2= (H_2, \vec{a'},\vec{b'}) \in \G(\ell',k')$ be bi-labeled graphs. We define the \emph{tensor product}, $\H_1 \otimes \H_2$, to be the bi-labeled graph $\H = (H_1 \cup H_2, \vec{aa'},\vec{bb'}) \in \G(\ell+\ell',k+k')$, where $\vec{cc'}$ is the tuple obtained by concatenating $\vec{c}$ with $\vec{c'}$. We define the \emph{transpose} of $\H_1$, to be the bi-labeled graph $\H_1^* = (H_1,\vec{b},\vec{a}) \in \G(k,\ell)$.
\end{definition}
 
Note that we use $^*$ to denote the conjugate transpose of a matrix, but we refer to $\vec{K}^*$ as simply the transpose. This is mainly just to keep the terminology shorter, but also note that $T^{\vec{K} \to G}$ is a real matrix for any graph $G$, and thus transpose and conjugate transpose are the same.

In our drawings of bi-labeled graphs, the tensor product $\H_1 \otimes \H_2$ is drawn by simply drawing $\H_1$ above $\H_2$. Similarly, a drawing of $\H^*$ is obtained from a drawing of $\H$ by simply reflecting about the vertical axis.

Finally, we define the Schur product of two bi-labeled graphs, which will correspond to the Schur, or entrywise, product of matrices.

\begin{definition}(Schur product of bi-labeled graphs)
Let $\vec{H_1} = (H,\vec{a},\vec{b}), \vec{H_2} = (H,\vec{a'},\vec{b'}) \in \G(\ell,k)$ be bi-labeled graphs. We define their Schur product, denoted by $\H_1 \schur \H_2$, to be a bi-labeled graph $\H = (H, \vec{c},\vec{d}) \in \G(\ell,k)$, where $H$ is obtained from the disjoint union $H_1 \cup H_2$ by first, adding the edges $e_i = a_i a'_i$ for all $i\in[\ell]$ and $f_j = b_jb'_j$ for all $j \in [k]$, then contracting all of the edges $e_i$ and $f_j$, and finally removing any resulting multiple edges but keeping the loops. We let $c_i$ be the vertex of $H$ that $a_i$ (and $a'_i$) became after contraction, and similarly let $d_j$ be the vertex $b_j$ (and $b'_j$) became after contraction.
\end{definition}

We remark that it is easy to see that the Schur product of bi-labeled graphs is commutative, and that the composition is not. However, the composition of bi-labeled graphs is associative, and $(\H_1\circ \H_2)^* = \H_2^* \circ \H_1^*$.


It turns out that we can construct the Schur product of $\vec{H}_1,\vec{H}_2 \in \G(\ell,k)$ for certain small values of $\ell, k$ using composition, tensor product, transposition, and $\vec{M}^{1,2}$:

\begin{lemma}\label{lem:buildingschur}
If $\vec{H}_1, \vec{H}_2 \in \G(1,1)$, then $\vec{H}_1 \schur \vec{H}_2 = \vec{M}^{1,2} \circ (\vec{H}_1 \otimes \vec{H}_2)\circ (\vec{M}^{1,2})^*$. If $\vec{H}_1, \vec{H}_2 \in \G(1,0)$, then $\vec{H}_1 \schur \vec{H}_2 = \vec{M}^{1,2} \circ (\vec{H}_1 \otimes \vec{H}_2)$, and similarly for $\G(0,1)$ by taking the transpose. Lastly, if $\vec{H}_1, \vec{H}_2 \in \G(0,2)$, then
\[\H_1 \schur \H_2 = \left(\vec{M}^{0,2} \otimes \vec{M}^{0,2}\right) \circ \left(\vec{I} \otimes \H_1^* \otimes \H_2 \otimes \vec{I}\right) \circ \left(\vec{M}^{2,1} \otimes \vec{M}^{2,1}\right),\]
and similarly for $\G(2,0)$ by taking transpose.
\end{lemma}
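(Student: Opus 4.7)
The proof is a direct verification in each case, obtained by tracking how vertices get identified as we perform the indicated compositions. The geometric intuition driving everything is that $\vec{M}^{1,2}$ is a single isolated vertex equipped with one output wire and two input wires. Composing with it on the left of a bi-labeled graph having two output wires forces the two output vertices to be identified (along with the sole vertex of $\vec{M}^{1,2}$), yielding a bi-labeled graph with a single output wire. Dually, composing with its transpose $\vec{M}^{2,1}$ on the right identifies two input vertices into one. More generally, the single vertex of $\vec{M}^{\ell,k}$ acts as a ``glue point" that fuses the $\ell+k$ wires attached to it whenever it is composed with another bi-labeled graph.

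For the $\G(1,1)$ case, the tensor product $\vec{H}_1 \otimes \vec{H}_2$ has underlying graph $H_1 \cup H_2$, outputs $(a_1,a_1')$, and inputs $(b_1,b_1')$. Composing with $\vec{M}^{1,2}$ on the left adds an auxiliary isolated vertex and, via the ``add-then-contract" mechanism of \Def{comp}, merges it with both $a_1$ and $a_1'$; this identifies $a_1$ with $a_1'$. Composing with $(\vec{M}^{1,2})^*$ on the right symmetrically identifies $b_1$ with $b_1'$. The resulting underlying graph is exactly the quotient of $H_1 \cup H_2$ by the relations $a_1 \sim a_1'$ and $b_1 \sim b_1'$, and the merged vertices become the new output and input respectively, which matches the definition of $\vec{H}_1 \schur \vec{H}_2$. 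The $\G(1,0)$ case is a strictly easier version of the same argument (no input side to deal with), and the $\G(0,1)$ case follows by taking transposes of both sides.

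The main obstacle is the $\G(0,2)$ case, where there are no outputs to which $\vec{M}^{1,2}$ could be directly attached to merge wires on the input side. The idea is to transpose $\vec{H}_1$ (so that its inputs become outputs) and use the identity maps $\vec{I}$ together with $\vec{M}^{2,1}$ and $\vec{M}^{0,2}$ to shuttle wires into the right configuration. Concretely, I will compute from the inside out. First, the composition $\left(\vec{I} \otimes \vec{H}_1^* \otimes \vec{H}_2 \otimes \vec{I}\right) \circ \left(\vec{M}^{2,1} \otimes \vec{M}^{2,1}\right)$ produces a bi-labeled graph in $\G(4,2)$; tracking the input wires of the middle factor $(v_1, b_1', b_2', v_2)$ (where $v_1, v_2$ are the vertices of the two $\vec{I}$'s) against the output wires $(u_1, u_1, u_2, u_2)$ of the rightmost factor, the composition identifies $v_1 = b_1' = u_1$ and $v_2 = b_2' = u_2$. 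Then composing on the left with $\vec{M}^{0,2} \otimes \vec{M}^{0,2}$ identifies the four output wires $(v_1, b_1, b_2, v_2)$ of the previous result with the four input wires $(w_1, w_1, w_2, w_2)$ of this factor, forcing $b_1 = v_1 = w_1$ and $b_2 = v_2 = w_2$. Chaining all identifications yields $b_1 = b_1'$ and $b_2 = b_2'$, and the inputs of the final bi-labeled graph (inherited from $\vec{M}^{2,1} \otimes \vec{M}^{2,1}$) are exactly the resulting merged vertices, which is precisely $\vec{H}_1 \schur \vec{H}_2$. The routine-but-delicate point is that none of the auxiliary vertices introduced by the $\vec{I}$'s and $\vec{M}$'s contribute spurious edges or loops; this holds because each such auxiliary vertex is isolated in its original factor of the tensor product, so merging it into another vertex only transfers existing edges. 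The $\G(2,0)$ case then follows by taking transposes of both sides.
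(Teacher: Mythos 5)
Your proof is correct and takes essentially the same approach as the paper: a direct case-by-case verification that the indicated compositions perform exactly the vertex identifications prescribed by the Schur product. In fact you supply more detail than the paper does, which only writes out the $\G(1,1)$ case explicitly and defers the $\G(0,2)$ case to the reader as ``similar but more difficult''; your careful tracking of the identifications $b_1 \equiv v_1 \equiv u_1 \equiv w_1 \equiv b_1'$ (and symmetrically for the second coordinate), together with the observation that the auxiliary vertices from $\vec{I}$, $\vec{M}^{2,1}$, and $\vec{M}^{0,2}$ are isolated and hence contribute no spurious edges, is precisely the argument the paper intends.
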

\begin{proof}
Let $\vec{H}_1 = (H_1,(a),(b)), \vec{H}_2 = (H_2,(a'),(b')) \in \G(1,1)$. Then $\vec{H}_1 \otimes \vec{H}_2 = (H_1 \cup H_2, (a,a'),(b,b'))$. Let $K$ be the underlying graph of $\vec{M}^{1,2}$ which consists of a single vertex $u$, so $\vec{M}^{1,2} = (K,(u),(u,u))$. Then the underlying graph of $\vec{M}^{1,2} \circ (\vec{H}_1 \otimes \vec{H}_2)$ is formed from $K \cup H_1 \cup H_2$ by adding the edges $ua$ and $ua'$ and contracting them.  Of course this is equivalent to adding the edge $aa'$ to $H_1 \cup H_2$ and contracting it. Moreover, the new output vector will consist of the vertex resulting from this contraction. Similarly, multiplying on the right by $(\vec{M}^{1,2})^*$ will have the same effect as adding the edge $bb'$ and contracting it, with the new input vector consisting of the vertex which results from this contraction. This is precisely the construction of $\vec{H}_1 \schur \vec{H}_2$, and thus we are done. The proof for $\G(1,0)$ and $\G(0,1)$ is similar but easier. The proof for $\G(0,2)$ and $\G(2,0)$ is also similar but more difficult.
\end{proof}

Before we move on to the correspondence between the above bi-labeled graph operations and matrix operations, we will show that the bi-labeled graphs $\vec{M}^{1,2}$ and $\vec{M}^{1,0}$ generate all of the bi-labeled graphs $\vec{M}^{\ell,k}$. First we prove the following:

\begin{lemma}\label{lem:buildmultgraphs}
If $k \ge 1$, then
\begin{align}
\vec{M}^{\ell,k+1} &= \vec{M}^{\ell,k} \circ (\vec{M}^{1,2} \otimes \vec{I}^{\otimes k-1}) \label{eq:Mup}\\
\vec{M}^{\ell,k} &= \vec{M}^{\ell,k+1} \circ (\vec{M}^{2,1} \otimes \vec{I}^{\otimes k-1}) \label{eq:Mdown}
\end{align}
\end{lemma}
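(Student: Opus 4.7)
The plan is to verify both identities directly from the definitions of composition and tensor product, exploiting the fact that $\vec{M}^{\ell,k}$ has a single (isolated) underlying vertex with all inputs and outputs labeled by that vertex.

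First I would spell out the right-hand factor of \eqref{eq:Mup}. By definition of tensor product, $\vec{M}^{1,2}\otimes \vec{I}^{\otimes k-1}$ has $k$ underlying vertices, say $v_0$ coming from $\vec{M}^{1,2}$ and $v_1,\ldots,v_{k-1}$ coming from the $k-1$ copies of $\vec{I}$, no edges, output tuple $(v_0,v_1,\ldots,v_{k-1})$ (length $k$), and input tuple $(v_0,v_0,v_1,\ldots,v_{k-1})$ (length $k+1$). The left-hand factor $\vec{M}^{\ell,k}$ is the bi-labeled graph whose underlying graph is a single vertex $a$ (no edges), with output tuple $(a^\ell)$ and input tuple $(a^k)$. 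Now I apply \Def{comp}: we take the disjoint union, add an edge between the $i$-th input vertex of $\vec{M}^{\ell,k}$ and the $i$-th output vertex of the right factor (i.e., edges from $a$ to each $v_{i-1}$), and then contract these edges. Since every input of $\vec{M}^{\ell,k}$ is the same vertex $a$, all of $a, v_0, v_1, \ldots, v_{k-1}$ collapse to a single vertex; no multi-edges or loops are created because there were no edges to begin with. The new output tuple is inherited from $\vec{M}^{\ell,k}$ (length $\ell$, all equal to this merged vertex) and the new input tuple is inherited from the right factor (length $k+1$, all equal to this merged vertex). This is exactly $\vec{M}^{\ell,k+1}$.

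For \eqref{eq:Mdown} the argument is the same modulo transposition. The factor $\vec{M}^{2,1}\otimes \vec{I}^{\otimes k-1}$ has $k$ isolated vertices $v_0,v_1,\ldots,v_{k-1}$, output tuple $(v_0,v_0,v_1,\ldots,v_{k-1})$ (length $k+1$), and input tuple $(v_0,v_1,\ldots,v_{k-1})$ (length $k$). Composing with $\vec{M}^{\ell,k+1}$ identifies its $k+1$ output vertices with the $k+1$ input vertices of $\vec{M}^{\ell,k+1}$, which are all a single vertex $a$; once again all vertices collapse to one and we read off $\ell$ outputs from $\vec{M}^{\ell,k+1}$ and $k$ inputs from the right-hand factor, yielding $\vec{M}^{\ell,k}$. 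Alternatively one could simply apply $^*$ to \eqref{eq:Mup} and use $(\vec{K}_1\circ\vec{K}_2)^*=\vec{K}_2^*\circ\vec{K}_1^*$ together with $(\vec{M}^{\ell,k})^*=\vec{M}^{k,\ell}$ and $\vec{I}^*=\vec{I}$.

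There is no real obstacle here: both identities reduce to a bookkeeping check that the edge-contractions prescribed by \Def{comp} merge every vertex in sight into a single isolated vertex, with the correct numbers of input and output wires attached. The only thing one has to be careful about is the hypothesis $k\ge 1$, which ensures that there is at least one input/output to ``absorb'' the inserted $\vec{M}^{1,2}$ or $\vec{M}^{2,1}$; if $k=0$ the tensor factor $\vec{I}^{\otimes k-1}$ would not even be defined.
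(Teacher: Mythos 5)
Your proof is correct and follows essentially the same route as the paper's: both unwind $\vec{M}^{1,2}\otimes\vec{I}^{\otimes k-1}$ into an edgeless $k$-vertex bi-labeled graph, observe that the composition's edge-contractions collapse everything to a single loopless vertex, and read off the input/output tuples. Your closing observation that \eqref{eq:Mdown} also follows from \eqref{eq:Mup} by applying $^*$ is a nice alternative, but the substance matches the paper's argument.
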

\begin{proof}
We will prove the first equation, the second is similar. Note that $\vec{M}^{1,2} \otimes \vec{I}^{\otimes k-1} \in \G(k,k+1)$ and so the composition on the right hand side above is defined. Each of the bi-labeled graphs in this tensor product consist of a single vertex. Let us refer to the vertex corresponding to the $i^\text{th}$ tensor as $u_i$. Then the underlying graph of $\vec{M}^{1,2} \otimes \vec{I}^{\otimes k-1} \in \G(k,k+1)$ is an edgeless graph on vertex set $\{u_1, \ldots, u_k\}$ and its output vector is $(u_1, \ldots, u_k)$. Let $u$ be the single vertex of the underlying graph of $\vec{M}^{\ell,k}$, which has input vector $(u^k)$. The underlying graph of $\vec{M}^{\ell,k} \circ (\vec{M}^{1,2} \otimes \vec{I}^{\otimes k-1})$ is obtained from the graph on vertex set $\{u,u_1. \ldots, u_k\}$ with edges $uu_i$ for $i \in [k]$ by contracting all of these edges. It is easy to see that this results in a graph $K$ consisting of a single vertex, call it $v$, with no loops. Since the product is an element of $\G(\ell,k+1)$ and has only one vertex $v$ with no loop, it must be equal to the bi-labeled graph $(K,(v^\ell),(v^{k+1})) = \vec{M}^{\ell,k+1}$.
\end{proof}

We will write $\langle \vec{H_1}, \ldots, \vec{H}_m\rangle_{\circ,\otimes,*}$ to denote the bi-labeled graphs that can be constructed from $\H_1, \ldots, \H_m$ using finitely many applications of the operations of composition, tensor product and transpose. We can use the above to show $\vec{M}^{\ell,k} \in \langle \vec{M}^{1,0}, \vec{M}^{1,2} \rangle_{\circ,\otimes,*}$ for all $\ell,k \ge 0$.

\begin{lemma}\label{lem:Mgraphs}
For all $\ell,k \ge 0$, we have that $\vec{M}^{\ell,k} \in \langle \vec{M}^{1,0}, \vec{M}^{1,2} \rangle_{\circ,\otimes,*}$.
\end{lemma}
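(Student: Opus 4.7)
Write $S := \langle \vec{M}^{1,0}, \vec{M}^{1,2}\rangle_{\circ,\otimes,*}$. The plan is to first produce the identity $\vec{I}$ inside $S$, then induct on $k$ (using \Lem{buildmultgraphs}) to show $\vec{M}^{1,k}\in S$ for every $k\ge 0$, and finally assemble an arbitrary $\vec{M}^{\ell,k}$ from two such ``single-vertex'' generators via one composition plus (possibly) a transposition. Everything reduces to repeatedly using the fact that the composition of two bi-labeled graphs each having a single vertex simply merges those vertices into one, leaving the output/input tuples determined in the obvious way.

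Producing the identity is immediate: $S$ is closed under transposition, so $\vec{M}^{2,1}=(\vec{M}^{1,2})^*\in S$, and tracking vertices in the composition shows
\[
  \vec{M}^{1,1} \;=\; \vec{M}^{1,2}\circ \vec{M}^{2,1} \;\in\; S,
\]
hence $\vec{I}^{\otimes m}\in S$ for all $m\ge 0$. I would then establish $\vec{M}^{1,k}\in S$ by induction on $k$. The base cases $k=0$ and $k=2$ are given, and $k=1$ is the identity just produced. For $k\ge 3$, \Eq{Mup} with $\ell=1$ gives
\[
  \vec{M}^{1,k} \;=\; \vec{M}^{1,k-1}\circ \bigl(\vec{M}^{1,2}\otimes \vec{I}^{\otimes k-2}\bigr),
\]
which lies in $S$ by the inductive hypothesis combined with what was just established.

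To conclude, I would verify directly, by tracking the single-vertex contractions, the identities
\[
  \vec{M}^{\ell,k} \;=\; \vec{M}^{\ell,1}\circ \vec{M}^{1,k} \quad (\ell,k\ge 1),\qquad
  \vec{M}^{0,k} \;=\; (\vec{M}^{1,0})^*\circ \vec{M}^{1,k} \quad (k\ge 0),
\]
together with $\vec{M}^{\ell,1}=(\vec{M}^{1,\ell})^*$ and $\vec{M}^{\ell,0}=(\vec{M}^{0,\ell})^*$. Each of these is a one-line check using the definition of composition, and together they express every $\vec{M}^{\ell,k}$ as an element of $S$ using only $\vec{M}^{1,k}$'s (which are already in $S$), $\vec{M}^{1,0}$, and the closure operations.

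No step is a serious obstacle here: the only mild care required is respecting the $k\ge 1$ hypothesis of \Lem{buildmultgraphs} by peeling off the low cases $k\in\{0,1,2\}$ by hand, and adopting the convention that $\vec{I}^{\otimes 0}$ is the empty tensor factor so that \Eq{Mup} reads correctly in the small-$k$ boundary cases. Everything else is routine bookkeeping of single-vertex compositions.
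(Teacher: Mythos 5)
Your proof is correct and follows essentially the same route as the paper: produce $\vec{I}=\vec{M}^{1,1}=\vec{M}^{1,2}\circ\vec{M}^{2,1}$, use \Eq{Mup} to build the $\vec{M}^{1,k}$, and then transpose and compose to reach general $\vec{M}^{\ell,k}$. The one small variation is that where the paper re-applies \Eq{Mup} to grow the second index from $\vec{M}^{\ell,1}$, you assemble $\vec{M}^{\ell,k}$ in a single step via the direct composition $\vec{M}^{\ell,1}\circ\vec{M}^{1,k}$ (and $\vec{M}^{0,k}=\vec{M}^{0,1}\circ\vec{M}^{1,k}$), which is a slight shortcut but not a different argument.
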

\begin{proof}
It is straightforward to check that $(\vec{M}^{\ell,k})^* = \vec{M}^{k,\ell}$ for all $\ell,k \ge 0$. Also, $\vec{M}^{1,2} \circ \vec{M}^{2,1} = \vec{M}^{1,1} = \vec{I}$ and so we are able to make use of \Lem{buildmultgraphs}. Starting from $\vec{M}^{1,1}$, we can use \Eq{Mup} to construct $\vec{M}^{1,\ell}$ for all $\ell \ge 1$. Using transpose we obtain $\vec{M}^{\ell,1}$ from which we can obtain $\vec{M}^{\ell,k}$ for all $k \ge 1$ by again using \Eq{Mup}. Similarly starting from $\vec{M}^{0,1}$ and applying \Eq{Mup} we can obtain $\vec{M}^{0,\ell}$ for all $\ell \ge 1$ and by transpose we obtain $\vec{M}^{\ell,0}$. Finally, it is easy to check that $\vec{M}^{0,1} \circ \vec{M}^{1,0} = \vec{M}^{0,0}$.
\end{proof}

\begin{lemma}\label{lem:Mloopgraphs}
For all $\ell,k \ge 0$, we have that $\mathring{\vec{M}}^{\ell,k} \in \langle \vec{M}^{1,0}, \vec{M}^{1,2}, \vec{A} \rangle_{\circ,\otimes,*}$.
\end{lemma}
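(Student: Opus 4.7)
The plan is to reduce to the single-vertex-with-loop case $\mathring{\vec{M}}^{1,1}$, since once we have that, we can inflate it to any $\mathring{\vec{M}}^{\ell,k}$ by composing with loopless multiplication maps, which are already available by \Lem{Mgraphs}.

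The key observation is that $\mathring{\vec{M}}^{1,1} = \vec{A} \schur \vec{I}$. To verify this combinatorially, recall that $\vec{A}$ has underlying graph $K_2$ with distinct output vertex $1$ and input vertex $2$, while $\vec{I} = \vec{M}^{1,1}$ has a single vertex $a$ with no loop as both its output and input. Forming the Schur product adds the edges $1a$ and $2a$ to the disjoint union and contracts them, so all three vertices merge into one; the original edge $12$ of $\vec{A}$ then becomes a loop on this merged vertex, and both the input and output labels point to this single vertex. That is exactly $\mathring{\vec{M}}^{1,1}$. By \Lem{buildingschur},
\[
\mathring{\vec{M}}^{1,1} \;=\; \vec{A} \schur \vec{I} \;=\; \vec{M}^{1,2} \circ (\vec{A} \otimes \vec{I}) \circ (\vec{M}^{1,2})^{*},
\]
and since $\vec{I} \in \langle \vec{M}^{1,0}, \vec{M}^{1,2}\rangle_{\circ,\otimes,*}$ by \Lem{Mgraphs}, this shows $\mathring{\vec{M}}^{1,1} \in \langle \vec{M}^{1,0}, \vec{M}^{1,2}, \vec{A}\rangle_{\circ,\otimes,*}$.

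Next, I would prove the identity
\[
\mathring{\vec{M}}^{\ell,k} \;=\; \vec{M}^{\ell,1} \circ \mathring{\vec{M}}^{1,1} \circ \vec{M}^{1,k}
\]
for all $\ell, k \geq 0$. This is a direct check using \Def{comp}: composing $\mathring{\vec{M}}^{1,1}$ with $\vec{M}^{1,k}$ on the right merges the looped vertex of $\mathring{\vec{M}}^{1,1}$ with the single vertex of $\vec{M}^{1,k}$, yielding one vertex with a loop whose input label list is $k$ copies of that vertex, i.e.\ $\mathring{\vec{M}}^{1,k}$. Composing on the left by $\vec{M}^{\ell,1}$ then fattens the output side in the same way, producing $\mathring{\vec{M}}^{\ell,k}$. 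The edge cases $\ell = 0$ or $k = 0$ require no modification because $\vec{M}^{1,0}$ and $\vec{M}^{0,1}$ are already in our generating set by \Lem{Mgraphs} and compose correctly to drop an empty label list.

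Combining these two steps gives the conclusion, since $\vec{M}^{\ell,1}$ and $\vec{M}^{1,k}$ lie in $\langle \vec{M}^{1,0}, \vec{M}^{1,2}\rangle_{\circ,\otimes,*}$ by \Lem{Mgraphs} and $\mathring{\vec{M}}^{1,1}$ lies in $\langle \vec{M}^{1,0}, \vec{M}^{1,2}, \vec{A}\rangle_{\circ,\otimes,*}$ by the first step. There is no real obstacle here; the only thing requiring care is the bookkeeping in the Schur-product verification of $\mathring{\vec{M}}^{1,1} = \vec{A} \schur \vec{I}$, specifically tracking that the original edge of $\vec{A}$ correctly becomes the desired loop once all contraction edges are identified, and that the convention of keeping loops (rather than discarding them during simplification) is essential for the construction to work.
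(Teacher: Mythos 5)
Your proof is correct and follows the same route as the paper's: both obtain $\mathring{\vec{M}}^{1,1}$ as $\vec{A}\schur\vec{I}$ via \Lem{buildingschur} (your $(\vec{M}^{1,2})^*$ is the paper's $\vec{M}^{2,1}$) and then inflate via the identity $\mathring{\vec{M}}^{\ell,k} = \vec{M}^{\ell,1}\circ\mathring{\vec{M}}^{1,1}\circ\vec{M}^{1,k}$ together with \Lem{Mgraphs}. The only difference is that you spell out the Schur-product bookkeeping in more detail, which is harmless.
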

\begin{proof}
Consider $\vec{M}^{1,2} \circ \left(\vec{A} \otimes \vec{I}\right) \circ \vec{M}^{2,1}$. By \Lem{buildingschur}, this is equal to the Schur product $\vec{A} \schur \vec{I}$. In this Schur product the two adjacent vertices of $\vec{A}$ both get identified with the single vertex of $\vec{I}$, thus resulting in a vertex with a loop. Therefore $\vec{A} \schur \vec{I}$ is an element of $\G(1,1)$ whose underlying graph consists of a single vertex with a loop. This must be $\mathring{\vec{M}}^{1,1}$. It is straightforward to see that $\mathring{\vec{M}}^{\ell,k} = \vec{M}^{\ell,1} \circ \mathring{\vec{M}}^{1,1} \circ \vec{M}^{1,k}$ for all $\ell,k \ge 0$. Thus, using \Lem{Mgraphs}, we are done.
\end{proof}

\begin{remark}\label{quantumgraphs}
We note that Lov\'{a}sz considered taking formal linear combinations of (bi-labeled) graphs, referring to these as \emph{quantum graphs}. We will not need to do this, since it will always suffice for us to take linear combinations of homomorphism matrices instead. But it is interesting that we run into yet another use of the word ``quantum" after ``quantum groups" and ``quantum strategies". Note however that Lov\'{a}sz' quantum graphs are not related to other notions of quantum/non-commutative graphs that have been studied in the quantum isomorphism literature~\cite{qfuncs,morita,bigalois}.
\end{remark}

\subsection{Correspondence to matrix operations}\label{sec:correspondence}

Now that we have introduced the operations on bi-labeled graphs that correspond to matrix operations, we will prove this correspondence. We begin with composition.


\begin{lemma}\label{lem:compcorr}
Let $\H_1 = (H_1, \vec{a},\vec{b}) \in \G(\ell,t)$, $\H_2= (H_2, \vec{a'},\vec{b'}) \in \G(t,k)$ be bi-labeled graphs and $G$ be a graph. Then 
\be
	T^{\H_1 \to G} T^{\H_2 \to G} = T^{\H_1 \circ \H_2 \to G}.
\label{eq:Mult}
\ee
\end{lemma}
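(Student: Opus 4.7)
The plan is to unpack both sides as counts of homomorphisms and establish a bijection. Writing $\vec{c}, \vec{d}$ for the output/input tuples of $\H_1 \circ \H_2$, and letting $H$ denote its underlying graph, I first expand the left-hand side using the definition of matrix multiplication:
\[
\bigl(T^{\H_1 \to G} T^{\H_2 \to G}\bigr)_{\vec{u}, \vec{v}} = \sum_{\vec{w} \in V(G)^{t}} \bigl(T^{\H_1 \to G}\bigr)_{\vec{u}, \vec{w}} \bigl(T^{\H_2 \to G}\bigr)_{\vec{w}, \vec{v}}.
\]
By definition, this counts pairs $(\varphi_1, \varphi_2)$ of homomorphisms $\varphi_1 \colon H_1 \to G$, $\varphi_2 \colon H_2 \to G$ with $\varphi_1(\vec{a}) = \vec{u}$, $\varphi_2(\vec{b'}) = \vec{v}$, and $\varphi_1(\vec{b}) = \vec{w} = \varphi_2(\vec{a'})$ for some $\vec{w}$. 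Equivalently, pairs $(\varphi_1, \varphi_2)$ satisfying $\varphi_1(b_i) = \varphi_2(a'_i)$ for every $i \in [t]$, together with the conditions on $\vec{u}, \vec{v}$.

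Next I would produce a bijection between such pairs and homomorphisms $\varphi \colon H \to G$ with $\varphi(\vec{c}) = \vec{u}$ and $\varphi(\vec{d}) = \vec{v}$. Recall that $V(H)$ is the set of connected components of the auxiliary graph on $V(H_1) \cup V(H_2)$ whose edges are $\{b_i a'_i : i \in [t]\}$. The compatibility condition $\varphi_1(b_i) = \varphi_2(a'_i)$ propagates along the edges of this auxiliary graph, so $\varphi_1$ and $\varphi_2$ agree on every such component and hence jointly define a well-defined map $\varphi \colon V(H) \to V(G)$. Conversely, composing a map $\varphi \colon V(H) \to V(G)$ with the natural quotient maps $V(H_i) \to V(H)$ recovers $\varphi_i$, and the two constructions are mutually inverse at the level of set functions.

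The substantive step is checking that $\varphi$ is a homomorphism iff both $\varphi_1$ and $\varphi_2$ are. Every edge/loop of $H$ is the image of at least one edge/loop in $H_1 \cup H_2$ under the contraction. If an edge $xy \in E(H_i)$ survives between distinct components, then $\varphi$ preserving it is equivalent to $\varphi_i(x) \sim_G \varphi_i(y)$. If $x$ and $y$ lie in the same component (so that $xy$ contracts to a loop in $H$, which is retained by the simplification), then the homomorphism condition for $\varphi_i$ forces $\varphi_i(x) = \varphi_i(y)$ to have a loop in $G$, which is exactly what is needed. The fact that simplification deletes \emph{multiple} edges between distinct vertices but keeps (a single) loop is harmless: since $G$ is a graph, a function preserves adjacency in $H$ iff it preserves adjacency in the multigraph before simplification, and collapsing multiple edges to a single one preserves this notion. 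Loops originally present in $H_1$ or $H_2$ are likewise preserved by $\varphi$ since $\varphi_1, \varphi_2$ are homomorphisms.

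Finally, the vector conditions transfer automatically: by the definition of $\vec{c}, \vec{d}$ in \Def{comp}, $c_i$ is the component containing $a_i$, so $\varphi(c_i) = \varphi_1(a_i) = u_i$, and analogously $\varphi(d_j) = \varphi_2(b'_j) = v_j$. Summing the bijection over all choices of $\vec{w}$ then yields equality of the matrix entries. The main delicate point is the interplay between edge contraction and simplification described above; once that is handled carefully, the rest is bookkeeping.
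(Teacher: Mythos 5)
Your proof follows essentially the same route as the paper's: expand the matrix product as a count of compatible pairs $(\varphi_1,\varphi_2)$, construct $\varphi$ on components of the auxiliary graph, and verify bijectivity and the homomorphism property. If anything you are slightly more explicit about the interaction of contraction with loops and simplification, while the paper spells out injectivity/surjectivity in more detail; the arguments are otherwise the same.
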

\begin{proof}
Let $\H_1,\H_2$, and $G$ be as in the lemma statement and let $\H = (H,\vec{c},\vec{d}) = \H_1 \circ \H_2$. Consider the $\vec{u}\vec{v}$-entry of the left-hand side of \Eq{Mult}: 
\begin{align*}
	&\left( T^{\H_1 \to G} T^{\H_2 \to G} \right)_{\vec{u}\vec{v}} = 
	\sum_{\vec{w} \in V(G)^t} \left( T^{\H_1 \to G} \right)_{\vec{u}\vec{w}}  \left( T^{\H_2 \to G} \right)_{\vec{w}\vec{v}} \\
	=& \sum_{\vec{w} \in V(G)^t} \abs[\Big]{\set[\big]{\varphi_1\colon H_1 \to G \mid \varphi_1(\vec{a}) = \vec{u}, \varphi_1(\vec{b}) = \vec{w} }} \cdot 
	\abs[\Big]{\set[\big]{\varphi_2\colon H_2 \to G \mid \varphi_2(\vec{a'}) = \vec{w}, \varphi_2(\vec{b'}) = \vec{v} }} 
\end{align*}
Now fix $\vec{u},\vec{w}$, and $\vec{v}$. If there exists $i,j$ such that $a_i = a_j$ and $u_i \ne u_j$, or $b_i = b_j$ and $w_i \ne w_j$ or $a'_i = a'_j$ and $w_i \ne w_j$, or $b'_i = b'_j$ and $v_i \ne v_j$, then the corresponding term in the above summation is necessarily zero. Otherwise let $\varphi_1\colon H_1 \to G$ and $\varphi_2\colon H_2 \to G$ be a pair of homomorphisms such that $\varphi_1(\vec{a}) = \vec{u}$, $\varphi_1(\vec{b}) = \vec{w}$, $\varphi_2(\vec{a'}) = \vec{w}$, and $\varphi_2(\vec{b'}) = \vec{v}$. We can construct a homomorphism $\varphi\colon H \to G$ such that $\varphi(\vec{c}) = \vec{u}$ and $\varphi(\vec{d}) = \vec{v}$ from $\varphi_1$ and $\varphi_2$ as follows. Let $e_i = b_ia'_i$ for $i \in [t]$, and recall that $H$ is constructed from $H_1 \cup H_2$ by adding these edges and then contracting them. Thus, each vertex $x \in V(H)$, corresponds to a connected component $H_x$ in the (multi)graph on vertex set $V(H_1)\cup V(H_2)$ with edges $e_i$ for $i \in [t]$. If $H_x$ consists of a single isolated vertex, say $x'$, then $x'$ does not appear in either $\vec{b}$ or $\vec{a'}$. In this case, we let $\varphi(x)$ be equal to either $\varphi_1(x')$ or $\varphi_2(x')$ depending on whether $x' \in V(H_1)$ or $x' \in V(H_2)$. Otherwise, $H_x$ consists of edges $e_i = b_ia'_i$ for $i \in S$ for some $S \subseteq [t]$. As $\varphi_1(b_i) = w_i = \varphi_2(a'_i)$ for all $i$, and since $H_x$ is connected, either there is some common value $w \in V(G)$ such that $w_i = w$ for all $i \in S$, or there exists $i,j \in S$ such that $a_i = a_j$ or $b_i = b_j$ but $w_i \ne w_j$. The latter case is a contradiction to our assumption, thus we must be in the former case and we can let $\varphi(x) = w$. Note that whether $H_x$ is an isolated vertex or contains edges, we have that every vertex $x'$ of $H_x$ is mapped to the same vertex $y$ of $G$ by $\varphi_1$ or $\varphi_2$ depending on whether $x' \in V(H_1)$ or $x' \in V(H_2)$, and we define $\varphi(x)$ to be equal to $y$. Consider now $c_i \in V(H)$. By definition of $\H_1 \circ \H_2$, the component $H_{c_i}$ must contain $a_i$, and therefore $\varphi(c_i) = \varphi_1(a_i) = u_i$. Similarly we can show that $\varphi(d_j) = v_j$ as desired. Finally, suppose that $x \sim y$ in $H$. Then by the definition of contraction there exist $x' \in V(H_x)$ and $y' \in V(H_y)$ such that $x'$ and $y'$ are adjacent in the graph obtained from the disjoint union of $H_1$ and $H_2$ by adding the edges $e_i$. Moreover, $x'$ and $y'$ must be adjacent via an edge that was not contracted, i.e., not one of the $e_i$ edges. Therefore, we have that $x'$ and $y'$ are adjacent vertices in $H_1$ or $H_2$. In the former case, it follows that $\varphi(x) = \varphi_1(x') \sim \varphi_1(y') = \varphi(y)$, and similarly in the latter case. Thus $\varphi$ is indeed a homomorphism from $H$ to $G$ that is counted in the $\vec{u}\vec{v}$-entry of $T^{\H \to G}$.


We claim that the above construction of $\varphi$ from $\varphi_1$ and $\varphi_2$ is a bijection between the pairs of homomorphisms being counted in the $\vec{u}\vec{v}$-entries of $T^{\H \to G}$ and $T^{\H_1 \to G} T^{\H_2 \to G}$ respectively. To see that it is injective, suppose that $(\varphi_1,\varphi_2)$ and $(\varphi'_1,\varphi'_2)$ are two distinct pairs of homomorphisms contributing to $(T^{\H_1 \to G} T^{\H_2 \to G})_{\vec{u}\vec{v}}$, and let $\varphi$ and $\varphi'$ be the homomorphisms from $H$ to $G$ arising from the construction above. Without loss of generality, assume that $\varphi_1 \ne \varphi'_1$, i.e., that there exists $x' \in V(H_1)$ such that $\varphi_1(x') \ne \varphi'_1(x')$. Let $x \in V(H)$ be such that $x' \in V(H_x)$. Then $\varphi(x) = \varphi_1(x') \ne \varphi'_1(x') = \varphi'(x)$. For surjectivity, suppose that $\varphi$ is a homomorphism from $H$ to $G$ such that $\varphi(\vec{c}) = \vec{u}$ and $\varphi(\vec{d}) = \vec{v}$. For each $x' \in V(H_1)$, define $\varphi_1(x')$ as $\varphi(x)$ for $x \in V(H)$ such that $x' \in V(H_x)$ (note that this exists and is unique). 
Similarly define $\varphi_2$. Since $b_i$ and $a'_i$ get contracted to the same vertex of $H$, we will have that $\varphi_1(b_i) = \varphi_2(a'_i)$. Thus we can define $w_i = \varphi_1(b_i)$ for each $i \in [t]$. It is easy to see that $(\varphi_1,\varphi_2)$ is a pair of homomorphisms counted by the $\vec{u},\vec{v}$-entry of $T^{\H_1 \to G} T^{\H_2 \to G}$.

%
%
%
%
\end{proof}

The proof for Schur product is similar, and so we will omit it.

\begin{lemma}\label{lem:schurcorr}
Let $\H_1 = (H_1, \vec{a},\vec{b}), \H_2= (H_2, \vec{a'},\vec{b'}) \in \G(\ell,k)$ be bi-labeled graphs and $G$ be a graph. Then 
\[T^{\H_1 \to G} \schur T^{\H_2 \to G} = T^{\H_1\schur\H_2 \to G}.\]
\end{lemma}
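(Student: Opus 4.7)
The plan is to mirror the proof of \Lem{compcorr} by establishing an entrywise bijection. Fix $\vec{u} \in V(G)^\ell$ and $\vec{v} \in V(G)^k$, and write $\H_1 \schur \H_2 = (H, \vec{c}, \vec{d})$. The $\vec{u}\vec{v}$-entry of the left-hand side is the product of the number of homomorphisms $\varphi_1 \colon H_1 \to G$ with $\varphi_1(\vec{a}) = \vec{u}, \varphi_1(\vec{b}) = \vec{v}$ and the number of homomorphisms $\varphi_2\colon H_2 \to G$ with $\varphi_2(\vec{a'}) = \vec{u}, \varphi_2(\vec{b'}) = \vec{v}$; the right-hand side counts homomorphisms $\varphi\colon H \to G$ with $\varphi(\vec{c}) = \vec{u}$ and $\varphi(\vec{d}) = \vec{v}$. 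I would exhibit a bijection between the pairs $(\varphi_1, \varphi_2)$ and these homomorphisms $\varphi$.

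Recall that $H$ is obtained from $H_1 \cup H_2$ by adding the edges $e_i = a_i a'_i$ for $i \in [\ell]$ and $f_j = b_j b'_j$ for $j \in [k]$, contracting all of them, and then simplifying while preserving loops. Let $J$ be the auxiliary graph on $V(H_1) \cup V(H_2)$ whose edge set is $\{e_i\} \cup \{f_j\}$; the vertices of $H$ correspond to the connected components of $J$. Given $(\varphi_1, \varphi_2)$ as above, I would define $\varphi(x)$, for $x \in V(H)$ corresponding to a component $H_x$ of $J$, by choosing any $x' \in H_x$ and setting $\varphi(x) := \varphi_r(x')$ where $r \in \{1,2\}$ is determined by whether $x' \in V(H_1)$ or $V(H_2)$. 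Along each edge $e_i$ of $J$, consistency holds because $\varphi_1(a_i) = u_i = \varphi_2(a'_i)$; along each $f_j$, because $\varphi_1(b_j) = v_j = \varphi_2(b'_j)$. Thus the combined map is constant on every component of $J$, so $\varphi$ is well-defined.

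Next I would verify that $\varphi$ is indeed a homomorphism with the desired labels: every edge of $H$ descends from an edge of $H_1$ or $H_2$ (the edges $e_i$ and $f_j$ are all contracted away, and simplification of multiple edges does not affect adjacency since $G$ is a graph), so adjacencies are preserved via $\varphi_1$ or $\varphi_2$. The vertex $c_i$ is by definition the component containing $a_i$, which maps to $\varphi_1(a_i) = u_i$; similarly $\varphi(d_j) = v_j$. For the inverse direction, any homomorphism $\varphi$ with the prescribed labels restricts to $\varphi_1, \varphi_2$ via $\varphi_r(x') := \varphi(x)$, where $x$ is the unique vertex of $H$ whose component contains $x'$; the label conditions transfer directly, and homomorphism preservation transfers back because every edge of $H_1$ or $H_2$ remains (possibly as a loop, possibly after merging with a parallel edge) in $H$.

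The two constructions are manifestly inverse to each other, so the bijection is established and the lemma follows. The only point where the argument differs meaningfully from the composition proof is that here both the output and input tuples are identified simultaneously, so the components of $J$ can be larger and more intertwined; however, well-definedness of $\varphi$ on each component still rests on exactly the same two families of relations ($\varphi_1(a_i) = u_i = \varphi_2(a'_i)$ and $\varphi_1(b_j) = v_j = \varphi_2(b'_j)$), so there is no real new technical obstacle.
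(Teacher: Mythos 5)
Your proof is correct and is precisely the analogue of the paper's proof of \Lem{compcorr}, which is exactly what the authors intend when they write that the Schur-product proof is ``similar'' and omit it. You also correctly observe the one simplification: since $\vec{u}$ and $\vec{v}$ are fixed on both factors, each edge of the auxiliary graph $J$ has both endpoints already mapped to the same value, so well-definedness on components follows immediately without introducing an intermediate summation variable as in the composition case.
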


For tensor product we have the following:

\begin{lemma}\label{lem:tensorcorr}
Let $\H_1 = (H_1, \vec{a},\vec{b}) \in \G(\ell,k)$, $\H_2= (H_2, \vec{a'},\vec{b'}) \in \G(\ell',k')$ be bi-labeled graphs and $G$ be a graph. Then 
\be
	T^{\H_1 \to G} \otimes T^{\H_2 \to G} = T^{\H_1\otimes\H_2 \to G}
\label{eq:Tensor}
\ee
\end{lemma}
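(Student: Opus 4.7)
The plan is to prove the equation \EqRef{Tensor} by expanding both sides entrywise and using the fact that $H_1 \otimes \H_2$ has underlying graph equal to the disjoint union $H_1 \cup H_2$, so that homomorphisms from $H_1 \cup H_2$ to $G$ are in natural bijection with pairs of homomorphisms $(\varphi_1,\varphi_2)$ from $H_1$ and $H_2$ respectively. This is by far the easiest of the three correspondence lemmas, so I expect no real obstacle, only careful bookkeeping of indices.

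More concretely, fix $\vec{u} \in V(G)^\ell$, $\vec{u}' \in V(G)^{\ell'}$, $\vec{v} \in V(G)^k$, $\vec{v}' \in V(G)^{k'}$. By definition of the tensor product of matrices,
\[
\bigl(T^{\H_1 \to G} \otimes T^{\H_2 \to G}\bigr)_{\vec{u}\vec{u}',\vec{v}\vec{v}'} = \bigl(T^{\H_1 \to G}\bigr)_{\vec{u},\vec{v}} \cdot \bigl(T^{\H_2 \to G}\bigr)_{\vec{u}',\vec{v}'},
\]
which, by \Def{HomMatrix}, equals the product of the number of homomorphisms $\varphi_1 \colon H_1 \to G$ with $\varphi_1(\vec{a})=\vec{u}$, $\varphi_1(\vec{b})=\vec{v}$ and the number of homomorphisms $\varphi_2 \colon H_2 \to G$ with $\varphi_2(\vec{a'})=\vec{u}'$, $\varphi_2(\vec{b'})=\vec{v}'$. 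This product counts ordered pairs $(\varphi_1,\varphi_2)$ satisfying these four constraints.

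On the other side, the $(\vec{u}\vec{u}',\vec{v}\vec{v}')$-entry of $T^{\H_1 \otimes \H_2 \to G}$ counts homomorphisms $\varphi \colon H_1 \cup H_2 \to G$ such that $\varphi(\vec{a}\vec{a'}) = \vec{u}\vec{u}'$ and $\varphi(\vec{b}\vec{b'}) = \vec{v}\vec{v}'$. Since $H_1$ and $H_2$ are disjoint in $H_1 \cup H_2$ and share no edges, specifying such a $\varphi$ is equivalent to independently specifying its restrictions $\varphi_1 := \varphi|_{V(H_1)}$ and $\varphi_2 := \varphi|_{V(H_2)}$, each of which must be a homomorphism on its respective subgraph. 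Moreover, by the way the tuples $\vec{a}\vec{a'}$ and $\vec{b}\vec{b'}$ are obtained by concatenation, the constraints $\varphi(\vec{a}\vec{a'}) = \vec{u}\vec{u}'$ and $\varphi(\vec{b}\vec{b'}) = \vec{v}\vec{v}'$ decompose as exactly the four constraints $\varphi_1(\vec{a})=\vec{u}$, $\varphi_1(\vec{b})=\vec{v}$, $\varphi_2(\vec{a'})=\vec{u}'$, $\varphi_2(\vec{b'})=\vec{v}'$. This produces the desired bijection between the objects counted by the two sides, completing the proof.
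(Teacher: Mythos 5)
Your proof is correct and follows essentially the same route as the paper: expand both sides entrywise, interpret the product of counts as counting ordered pairs $(\varphi_1,\varphi_2)$, and observe that these are in bijection with homomorphisms $\varphi\colon H_1 \cup H_2 \to G$ via restriction/combination. No issues.
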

\begin{proof}
Let $\H_1$, $\H_2$, and $G$ be as in the lemma statement and let $\H = (H,\vec{c},\vec{d}) = \H_1 \otimes \H_2$. Recall that $H = H_1 \cup H_2$, $\vec{c} = \vec{a}\vec{a'}$, and $\vec{d} = \vec{b}\vec{b'}$. The rows of the matrices in \Eq{Tensor} are indexed by elements of $V(G)^{\ell + \ell'}$ which can be written as $\vec{u}\vec{u'}$ for $\vec{u} \in V(G)^{\ell}$, $\vec{u'} \in V(G)^{\ell'}$. Similarly, the columns can be indexed by $\vec{v}\vec{v'}$ for $\vec{v} \in V(G)^{k}$, $\vec{v'} \in V(G)^{k'}$. The $\vec{u}\vec{u'},\vec{v}\vec{v'}$-entry of the left-hand side of \Eq{Tensor} is
\begin{align*}
&\left(T^{\H_1 \to G} \otimes T^{\H_2 \to G}\right)_{\vec{u}\vec{u'},\vec{v}\vec{v'}} = \left(T^{\H_1 \to G}\right)_{\vec{u},\vec{v}} \left(T^{\H_2 \to G}\right)_{\vec{u'},\vec{v'}} \\
=& \abs[\Big]{\set[\big]{\varphi_1\colon H_1 \to G \mid \varphi_1(\vec{a}) = \vec{u}, \varphi_1(\vec{b}) = \vec{v} }} \cdot 
	\abs[\Big]{\set[\big]{\varphi_2\colon H_2 \to G \mid \varphi_2(\vec{a'}) = \vec{u'}, \varphi_2(\vec{b'}) = \vec{v'} }} \\
	=& \abs[\Big]{\set[\big]{\varphi \colon H_1 \cup H_2 \to G \mid \varphi(\vec{a}) = \vec{u}, \varphi(\vec{a'}) = \vec{u'}, \varphi(\vec{b}) = \vec{v}, \varphi(\vec{b'}) = \vec{v'} }} \\
	=& \left(T^{\H_1 \otimes \H_2 \to G}\right)_{\vec{u}\vec{u'},\vec{v}\vec{v'}}
\end{align*}
where the third equality comes from the fact that any two such homomorphisms $\varphi_1$ and $\varphi_2$ can be used to construct a homomorphism $\varphi$ from $H_1 \cup H_2$ to $G$ by simply defining $\varphi$ according to $\varphi_1$ on $V(H_1)$ and according to $\varphi_2$ on $V(H_2)$, and moreover this construction is clearly bijective.
\end{proof}

The proof for the transpose operation is straightforward and so we omit it:

\begin{lemma}\label{lem:transposecorr}
Let $\H = (H, \vec{a},\vec{b}) \in \G(\ell,k)$ be a bi-labeled graph and $G$ be a graph. Then 
\[\left(T^{\H \to G}\right)^* = T^{\H^* \to G}.\]\qed
\end{lemma}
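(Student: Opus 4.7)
The plan is to unfold both sides entrywise and observe that they count the very same set of homomorphisms. Recall that $\H^* = (H, \vec{b}, \vec{a}) \in \G(k,\ell)$ is defined by swapping the input and output tuples of $\H$, while the underlying graph $H$ remains unchanged. Both $(T^{\H \to G})^*$ and $T^{\H^* \to G}$ are therefore matrices in $\mathbb{C}^{V(G)^k \times V(G)^\ell}$, so the dimensions agree.

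First I would note that, by \Def{HomMatrix}, every entry of $T^{\H \to G}$ is a nonnegative integer (in particular, a real number), so the complex conjugation implicit in the operation $^*$ is trivial, and $(T^{\H \to G})^*$ is simply the ordinary transpose of $T^{\H \to G}$. Then for any $\vec{v} \in V(G)^k$ and $\vec{u} \in V(G)^\ell$, I would compute
\[
\bigl((T^{\H \to G})^*\bigr)_{\vec{v},\vec{u}} = (T^{\H \to G})_{\vec{u},\vec{v}} = \#\{\varphi\colon H \to G \mid \varphi(\vec{a}) = \vec{u},\, \varphi(\vec{b}) = \vec{v}\}.
\]
On the other hand, directly applying \Def{HomMatrix} to $\H^*$ yields
\[
(T^{\H^* \to G})_{\vec{v},\vec{u}} = \#\{\varphi\colon H \to G \mid \varphi(\vec{b}) = \vec{v},\, \varphi(\vec{a}) = \vec{u}\}.
\]
The two sets on the right-hand sides are literally the same set of homomorphisms, since the conditions $\varphi(\vec{a}) = \vec{u}$ and $\varphi(\vec{b}) = \vec{v}$ are unordered constraints. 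Therefore the two entries agree for all $\vec{u}, \vec{v}$, which gives the matrix identity.

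There is no real obstacle here; the statement is essentially a definitional compatibility check. The only thing worth being careful about is to remark explicitly that the transpose operation $^*$ agrees with ordinary transpose in this setting (so the lemma works as stated even though $^*$ nominally means conjugate transpose), and to keep track that the underlying graph $H$ is unchanged by the operation $\H \mapsto \H^*$, so that ``the homomorphisms being counted'' on both sides come from the same set of maps $V(H) \to V(G)$.
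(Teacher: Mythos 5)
Your proof is correct and is exactly the direct entrywise verification the authors had in mind when they wrote that the proof ``is straightforward and so we omit it'': both sides count the same set of homomorphisms $\varphi\colon H \to G$ with $\varphi(\vec{a}) = \vec{u}$ and $\varphi(\vec{b}) = \vec{v}$, and the remark that $^*$ reduces to ordinary transpose because the entries are nonnegative integers is the right detail to flag.
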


\noindent Recall from \Thm{chassaniol} and \Rem{chassaniol} that for a graph $G$,
\[C_q^G = \langle M^{1,0}, M^{1,2}, A_G\rangle_{+,\circ, \otimes, *} = \spn\left\{T : T \in \langle M^{1,0}, M^{1,2}, A_G\rangle_{\circ, \otimes, *}\right\}.\]
Also, from \Ex{A} and \Ex{M} we have that $T^{\vec{A} \to G} = A_G$ and $T^{\vec{M}^{\ell,k} \to G} = M^{\ell,k}$. Thus $C_q^G = \spn\{T : T \in \langle T^{\vec{M}^{1,0} \to G}, T^{\vec{M}^{1,2} \to G}, T^{\vec{A} \to G}\rangle_{\circ,\otimes,*}\}$. Finally, from the correspondence between matrix and bi-labeled graph operations proven above, we have the following:

\begin{theorem}\label{thm:reformulation}
For any graph $G$,
\[C_q^G = \spn\left\{T^{\vec{K} \to G} : \vec{K} \in \langle\vec{M}^{1,0},\vec{M}^{1,2},\vec{A}\rangle_{\circ,\otimes,*}\right\}.\]
\end{theorem}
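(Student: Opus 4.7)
The plan is to combine Chassaniol's theorem with the correspondence between bi-labeled graph operations and matrix operations established in \Sec{correspondence}. Indeed, \Thm{chassaniol} together with \Rem{chassaniol} already gives
\[C_q^G = \spn\left\{T : T \in \langle M^{1,0}, M^{1,2}, A_G\rangle_{\circ,\otimes,*}\right\},\]
so it suffices to establish the set equality
\[\langle M^{1,0}, M^{1,2}, A_G\rangle_{\circ,\otimes,*} = \left\{T^{\vec{K} \to G} : \vec{K} \in \langle\vec{M}^{1,0},\vec{M}^{1,2},\vec{A}\rangle_{\circ,\otimes,*}\right\}.\]

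The argument is a straightforward induction on the length of an expression building an element of the respective set. For the base case, \Ex{A} gives $T^{\vec{A} \to G} = A_G$, and \Ex{M} (specialized to $(\ell,k) \in \{(1,0),(1,2)\}$) gives $T^{\vec{M}^{1,0} \to G} = M^{1,0}$ and $T^{\vec{M}^{1,2} \to G} = M^{1,2}$, so the three generators on the right of the claimed equality produce exactly the three generators on the left.

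For the inductive step, suppose $\vec{K}_1, \vec{K}_2 \in \langle\vec{M}^{1,0},\vec{M}^{1,2},\vec{A}\rangle_{\circ,\otimes,*}$ and their $G$-homomorphism matrices $T^{\vec{K}_i \to G}$ are already known to lie in $\langle M^{1,0}, M^{1,2}, A_G\rangle_{\circ,\otimes,*}$. Then \Lem{compcorr}, \Lem{tensorcorr}, and \Lem{transposecorr} give respectively
\begin{align*}
T^{\vec{K}_1 \circ \vec{K}_2 \to G} &= T^{\vec{K}_1 \to G}\, T^{\vec{K}_2 \to G}, \\
T^{\vec{K}_1 \otimes \vec{K}_2 \to G} &= T^{\vec{K}_1 \to G} \otimes T^{\vec{K}_2 \to G}, \\
T^{\vec{K}_1^* \to G} &= \left(T^{\vec{K}_1 \to G}\right)^*,
\end{align*}
whenever the relevant compositions are defined, so each new bi-labeled graph still produces a matrix in $\langle M^{1,0}, M^{1,2}, A_G\rangle_{\circ,\otimes,*}$. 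Reading the same three identities in reverse shows that any matrix built from $M^{1,0}, M^{1,2}, A_G$ by a finite sequence of matrix products, tensor products, and conjugate transpositions is the $G$-homomorphism matrix of the bi-labeled graph obtained by performing the analogous sequence of operations on $\vec{M}^{1,0}, \vec{M}^{1,2}, \vec{A}$. Taking $\spn$ of both sides yields the theorem.

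There is no serious obstacle here: the hard work has already been done in \Thm{chassaniol} and in the three correspondence lemmas of \Sec{correspondence}. The present theorem is essentially a translation of Chassaniol's algebraic description of $C_q^G$ into the combinatorial language of bi-labeled graphs, relying on the fact that the assignment $\vec{K} \mapsto T^{\vec{K} \to G}$ is a morphism with respect to the three operations $\circ$, $\otimes$, and $*$.
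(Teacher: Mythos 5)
Your proposal is correct and follows exactly the paper's own argument: identify the three generators via \Ex{A} and \Ex{M}, invoke \Thm{chassaniol} together with \Rem{chassaniol}, and then push the generating operations through the correspondence lemmas \Lem{compcorr}, \Lem{tensorcorr}, and \Lem{transposecorr}. The paper leaves the inductive bookkeeping implicit where you spell it out, but the substance is identical.
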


\section{Planar Graphs}\label{sec:planar}


In the following section we will present the class of planar bi-labelled graphs, which we will show is closed under composition, tensor product, and transposition. To do this, we will need some tools for dealing with and manipulating planar graphs, which we will present here.

\begin{definition}\label{def:planarity}
Informally, a \emph{planar embedding} of a (multi)graph $G$ is a drawing of $G$ in the plane $\mathbb{R}^2$ such that no edges cross. Formally, a planar embedding consist of an \emph{injective} function $f\colon V(G) \to \mathbb{R}^2$ and \emph{continuous} functions $h_{e}\colon [0,1] \to \mathbb{R}^2$ for each edge 
$e \in E(G)$ between vertices $u,v$ such that $h_{e}(0) = f(u)$, $h_{e}(1) = f(v)$, and for any two distinct edges $e, e'$, we have $h_{e}(s) \ne h_{e'}(t)$ unless $s,t \in \{0,1\}$. Moreover we require that $h_e(s) \ne h_e(t)$ for distinct $s,t \in [0,1]$ unless $e$ is a loop and $\{s,t\} = \{0,1\}$. A (multi)graph $G$ is \emph{planar} if there exists some planar embedding of $G$.

\begin{remark}\label{rem:planarmultigraphs}
It is well known that a multigraph is planar if and only if the graph obtained by removing all loops and replacing all multiple edges with a single edge is planar. Thus for the most part it suffices to discuss planarity of graphs rather than the more general multigraphs.
\end{remark}

\begin{remark}\label{rem:fary}
A result of Fary~\cite{fary} states that given any planar embedding of a graph, there is a homeomorphism of the plane that maps the embedding to a straight-line embedding, i.e., a planar embedding in which each edge is embedded as a single line segment. Thus we can alway assume that our planar embeddings are straight-line embeddings if we desire. Actually it suffices to assume that they are polygonal embeddings, i.e., each edge is embedded as finitely many straight line segments (this also allows for the embedding of multiple edges between two vertices, whereas a straight line embedding does not). We will implicitly assume this throughout the rest of the paper.
\end{remark}

There are some basic operations of graphs that are known to preserve planarity. The most well-known are of course edge contraction and vertex/edge deletion. Any graph $H$ obtained from a graph $G$ via these operations is known as a \emph{minor} of $G$. In this language planar graphs are said to be a \emph{minor-closed} family of graphs. Another well-known planarity preserving operation is \emph{subdivision}. Given a graph $G$ with an edge $e = uv$, the operation of replacing $e$ with a path $u,w,v$ is known as \emph{subdividing} $e$. We can also ``subdivide an edge $k$ times" by replacing the edge with a path of length $k+1$. Note that when subdividing an edge we will often specify the names of the vertices on the resulting path. A graph $H$ is said to be a \emph{subdivision} of $G$ if $H$ can be obtained from $G$ by repeatedly subdividing edges. Further, a graph $G$ is said to be a \emph{topological minor} of $H$ (or that $H$ contains $G$ as a subdivision) if $H$ has a subgraph isomorphic to a subdivision of $G$. The famous theorems of Kuratowski~\cite{kuratowski} and Wagner~\cite{wagner} characterize planar graphs in terms of subdivisions and minors:

\begin{theorem}[Kuratowski]
A graph $G$ is planar if and only if it does not contain $K_{3,3}$ or $K_5$ as a topological minor.
\end{theorem}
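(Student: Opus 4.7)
The plan is to prove both directions of Kuratowski's theorem, with the forward (necessity) direction being the short one and the reverse (sufficiency) direction forming the bulk of the work.

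For the forward direction, I would first establish that $K_5$ and $K_{3,3}$ are themselves nonplanar via Euler's formula: any planar simple graph on $n \ge 3$ vertices has at most $3n-6$ edges, which rules out $K_5$, and any planar bipartite simple graph on $n \ge 3$ vertices has at most $2n-4$ edges, which rules out $K_{3,3}$. Next I would observe that planarity is closed under taking subgraphs (just restrict the embedding), and that $H$ is planar if and only if every subdivision of $H$ is planar (adding a degree-2 vertex in the middle of an edge embeds trivially, and conversely, a planar embedding of a subdivision induces one of $H$ by ignoring the interior vertices of each subdivided edge, using \Rem{fary} to straighten). Hence, if $G$ contains $K_5$ or $K_{3,3}$ as a topological minor, then $G$ contains a nonplanar subgraph and so is nonplanar.

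For the reverse direction, I would proceed by contradiction: let $G$ be a counterexample with the smallest number of edges, i.e., $G$ is nonplanar but contains no subdivision of $K_5$ or $K_{3,3}$. The first reduction is to show that $G$ must be 3-connected. If $G$ has connectivity at most $2$, we can decompose $G$ along a vertex cut of size $\le 2$ into strictly smaller pieces $G_1, G_2$ (each obtained by adding at most one ``virtual'' edge between the two cut vertices). By minimality, each $G_i$ is planar, and one can glue their embeddings along the cut so as to produce a planar embedding of $G$, contradicting nonplanarity. This uses that a subdivision of $K_5$ or $K_{3,3}$ in $G$ would, by a short argument on how it interacts with the cut, yield such a subdivision in one of the $G_i$.

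The main obstacle is handling the 3-connected case: showing that a 3-connected graph with no $K_5$ or $K_{3,3}$ subdivision is planar. My approach would follow Thomassen's argument using contractible edges. The key lemma is that every 3-connected graph on at least five vertices has an edge $e = uv$ such that $G/e$ is again 3-connected. Apply induction: $G/e$ is 3-connected and still contains no subdivision of $K_5$ or $K_{3,3}$ (one must verify this carefully — if $G/e$ had such a subdivision, then by splitting the contracted vertex back into $u$ and $v$ and tracing the branch vertices and paths, we would produce a subdivision of $K_5$ or $K_{3,3}$ in $G$, which is the most delicate case analysis of the proof). Hence, $G/e$ has a planar embedding, and even more, by a classical result of Tutte every 3-connected planar graph has a convex embedding in which every face is bounded by a convex polygon and the outer face is the boundary of the unbounded region. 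Given such a convex embedding of $G/e$, we then ``split'' the merged vertex back into $u$ and $v$ and attempt to extend the embedding. The convexity of the face containing the merged vertex, together with the nonexistence of $K_5$ or $K_{3,3}$ subdivisions, allows one to position $u$ and $v$ inside that face and route the edges incident to them without crossings — and the only obstructions to this routing would be configurations that themselves produce a $K_5$ or $K_{3,3}$ subdivision in $G$, the desired contradiction. The hardest step is precisely this final extension argument and the accompanying case analysis ensuring that any failure to extend the embedding forces one of the forbidden topological minors.
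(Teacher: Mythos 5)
The paper does not prove Kuratowski's theorem; it is quoted as a classical result with a citation to~\cite{kuratowski} and used as background, so there is no in-paper proof to compare your argument against. That said, your outline is a correct sketch of Thomassen's proof: the forward direction via Euler's bounds plus closure of planarity under subgraphs and subdivision is fine, and the reverse direction correctly reduces a minimal nonplanar counterexample to the 3-connected case, applies Thomassen's contractible-edge lemma, and re-expands a contracted vertex inside a (convex or Tutte-style) face of the induction-given embedding.

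One remark on the step you flag as delicate: unlike the minor relation, ``has no $K_5$ or $K_{3,3}$ as a \emph{topological} minor'' is not automatically preserved under edge contraction, and the cleanest standard treatment (Diestel) sidesteps this by proving the Wagner form for 3-connected graphs and separately establishing the equivalence of the minor and topological-minor formulations. Your direct route does work, but it commits you to the case analysis you allude to: when the contracted vertex $v$ is a branch vertex of degree $4$ in a $K_5$ subdivision of $G/e$ and its incident branches split $2$--$2$ between $u$ and $v$ in $G$, what you recover in $G$ is a $K_{3,3}$ subdivision (not a $K_5$ one), and similarly the $K_{3,3}$ branch-vertex case must be checked. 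Those verifications are exactly where the work is, and your sketch is honest about that, so the outline holds up.
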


\begin{theorem}[Wagner]
A graph $G$ is planar if and only if it does not contain $K_{3,3}$ or $K_5$ as a minor.
\end{theorem}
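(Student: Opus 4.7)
The plan is to derive Wagner's theorem from Kuratowski's theorem, which is already at our disposal. The two statements differ only in whether one uses topological minors or minors, and a short argument relates these two notions in the relevant cases.

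First I would handle the easy direction: if $G$ contains $K_{3,3}$ or $K_5$ as a minor, then $G$ is not planar. For this I would verify that planarity is preserved under the three minor operations (vertex deletion, edge deletion, edge contraction). Vertex and edge deletion trivially preserve planarity since a planar embedding restricts. For edge contraction, given a planar embedding and an edge $e = uv$, one can reroute edges incident to $v$ along a small neighborhood of the curve representing $e$ to produce a planar embedding of $G/e$; using the straight-line (or polygonal) assumption from \Rem{fary} makes this routine. Combined with the classical fact that $K_5$ and $K_{3,3}$ are themselves non-planar (via Euler's formula $|E| \le 3|V|-6$ for $K_5$ and $|E| \le 2|V|-4$ for triangle-free planar graphs for $K_{3,3}$), any graph having one of these as a minor cannot be planar.

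For the reverse direction, I would argue by contrapositive: suppose $G$ is non-planar. By Kuratowski's theorem (already stated), $G$ contains a subgraph $H$ isomorphic to a subdivision of $K_{3,3}$ or $K_5$. I would then show that any subdivision of a graph $F$ contains $F$ as a minor: if $H$ is obtained from $F$ by repeatedly replacing edges with paths, then contracting each subdivided path back to a single edge (equivalently, contracting all but one edge of each subdivision path) recovers $F$. Thus $H$, and therefore $G$, has $K_{3,3}$ or $K_5$ as a minor.

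The main obstacle is the verification that edge contraction preserves planarity, which is intuitive but requires some care to state precisely in terms of the continuous functions $h_e$ from \Def{planarity}. The cleanest way I see to handle this is to appeal to \Rem{fary} and work with polygonal embeddings: given a polygonal planar embedding and an edge to be contracted, one deforms a small disk neighborhood of that edge to a point while keeping all other edges disjoint, yielding a polygonal planar embedding of the contracted graph. The other steps — non-planarity of $K_5$ and $K_{3,3}$, and the passage from subdivision to minor by contracting subdivision paths — are essentially formal once the minor operations are understood.
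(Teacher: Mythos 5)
The paper states Wagner's theorem as a classical background fact, citing~\cite{wagner}, and does not supply a proof — so there is no in-paper argument to compare against. Your derivation from Kuratowski's theorem is correct and standard: planarity is closed under deletion and contraction (your disk-collapse argument for contraction, made precise via the polygonal embeddings of \Rem{fary}, is the usual one); $K_5$ and $K_{3,3}$ are non-planar by the Euler-formula bounds $|E|\le 3|V|-6$ and $|E|\le 2|V|-4$ respectively; and any subdivision of a graph $F$ contains $F$ as a minor, obtained by contracting all but one edge of each subdivision path. Combining these with Kuratowski's theorem gives Wagner's. One remark worth making explicit: the genuinely delicate point when passing between the two theorems is the reverse implication from a $K_5$-minor to a $K_5$-topological minor, which fails in general since $K_5$ has vertices of degree four; your argument avoids this entirely because you only need the easy passage from topological minor to minor, so the proof is sound as written.
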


We will also sometimes make use of the inverse operation of subdividing: \emph{unsubdividing}. Given a path $P = u_1, \ldots, u_m$ in a graph $G$ in which all of the interior vertices of $P$ (i.e., $u_2, \ldots, u_{m-1}$) have degree two, we refer to the operation of replacing $P$ with a single edge between $u_1$ and $u_m$ as unsubdividing $P$. It is easy to see that unsubdividing also preserves planarity (it is equivalent to contracting all but one edge of the path $P$).


The \emph{faces} of a planar embedding are the connected regions of $\mathbb{R}^2 \setminus \left(\cup_{e \in E(G)} \mathrm{Im}(h_{e}) \right)$, where $\mathrm{Im}(h_{e})$ is the image of $h_{e}$. We say that a vertex/edge is incident to a face if the embedding of the vertex/edge is contained in the closure of that face. The (graph-theoretical) \emph{boundary} of a face is the subgraph of $G$ consisting of the vertices and edges incident to that face.
\end{definition}

We remark that it is well-known that in any planar embedding of a graph $G$, precisely one of the faces will be unbounded. We refer to this as the \emph{outer face}. We will frequently need to make use of the notion of facial cycles:

\begin{definition}
For a planar graph $G$ with cycle $C$, we say that $C$ is a \emph{facial cycle}\footnote{It is more standard for facial cycles to only be defined for a fixed planar embedding of a graph, but the given definition is better suited for our needs.} of $G$ if there exists a planar embedding of $G$ such that $C$ is the boundary of a face.
\end{definition}

\begin{remark}\label{rem:facial}
It is well-known that if $C$ is a facial cycle of a planar graph $G$, then there is a planar embedding of $G$ in which $C$ is the boundary of the outer face.
\end{remark}

We will also need the notion of cyclic order of edges incident to a vertex for our results:

\begin{definition}\label{def:cyclicorder}
Let $G$ be a planar graph and fix a planar embedding of $G$. For any vertex $v$ of $G$, there exists a radius $r > 0$ such that the circle $C_r$ of radius $r$ centered at $v$ intersects each edge incident to $v$ exactly once~\footnote{Here is one place where our assumption that our planar embeddings are polygonal is needed.}. We say that the edges incident to $v$ occur in cyclic order $e_1, \ldots, e_k$ in this embedding if, starting from the intersection of $C_r$ and $e_1$ and moving around $C_r$ (in some direction), we intersect the edges $e_1, \ldots, e_k, e_1$ in that order.
\end{definition}

We now present several lemmas that we will make use of in \Sec{planarblg} and \Sec{tale}. They are basic results about planar (multi)graphs whose proofs are straightforward, thus we present them without proof. The first lemma offers a useful characterization of facial cycles. 

\begin{lemma}\label{lem:facialcycle}
Suppose $G$ is a planar multigraph with cycle $C = u_1, \ldots, u_k$. Let $G'$ be the multigraph obtained from $G$ by adding a new vertex $u$ adjacent to each $u_i$. Then $C$ is a facial cycle of $G$ if and only if $G'$ is planar. Moreover, any planar embedding of $G$ in which $C$ is bounding some face $F$ can be extended to a planar embedding of $G'$ where $u$ is embedded in $F$ and then connected to the vertices $u_1, \ldots, u_k$.
\end{lemma}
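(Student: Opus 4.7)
The plan is to prove both directions using standard topological arguments about planar embeddings, with Whitney's uniqueness theorem for $3$-connected planar graphs as the key tool.

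For the forward direction, together with the ``moreover'' statement, I would start with a planar embedding of $G$ in which $C$ bounds a face $F$. Because $\partial F = C$ traversed cyclically, the vertices $u_1, \ldots, u_k$ appear on the boundary of $F$ in the cyclic order prescribed by $C$. I place $u$ at an interior point of $F$ and draw $k$ non-crossing polygonal curves from $u$ to each $u_i$ inside $F$. This is possible because $F$ has empty interior and because the radial order of the curves around $u$ can be chosen to match the cyclic order of the $u_i$'s on $\partial F$. The result is a planar embedding of $G'$ extending the given embedding of $G$.

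For the converse, let $\phi'$ be a planar embedding of $G'$. The subgraph $W \subseteq G'$ consisting of $C$ together with $u$ and the edges $uu_i$ is isomorphic to the wheel $W_k$, which is $3$-connected for $k \geq 3$. By Whitney's theorem on the uniqueness of planar embeddings of $3$-connected graphs, the cyclic order of $uu_1, \ldots, uu_k$ around $u$ in $\phi'|_W$ must coincide (up to reflection) with the cyclic order of $u_1, \ldots, u_k$ on $C$, and hence $C$ bounds a face of $\phi'|_W$. Inverting the sphere if needed, I may assume that $u$ lies inside $C$ in $\phi'$, so the edges $uu_i$ partition the closed disk bounded by $C$ into $k$ closed triangular regions $T_i$ with boundary consisting of the edges $uu_i$, $u_iu_{i+1}$, and $u_{i+1}u$.

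The interior of each $T_i$ may contain a subgraph $X_i$ of $G$, and since $u \notin V(G)$ the only vertices of $G$ on $\partial T_i$ are $u_i$ and $u_{i+1}$, so $X_i$ attaches to $G - X_i$ only at these two vertices. To build an embedding $\phi$ of $G$ in which $C$ bounds a face, I delete $u$ and the edges $uu_i$ from $\phi'$ and then, for each nonempty $X_i$, re-embed it inside a thin neighborhood of the edge $u_iu_{i+1}$ in the face of $\phi'|_G$ on the opposite side of $C$ that is incident to that edge. Because $X_i$ is planar, attached along the single edge $u_iu_{i+1}$, and the target face is free of $G$-structure in its interior, this can be carried out without introducing crossings. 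In the resulting embedding the disk originally containing $u$ is empty, so $C$ is the boundary of a face.

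The main subtlety I expect is verifying that these $k$ re-embeddings do not interfere with one another at a shared vertex $u_i$, which lies in both $X_{i-1}$ and $X_i$; this I plan to handle by choosing the thin neighborhoods of the edges $u_iu_{i+1}$ to be pairwise disjoint outside arbitrarily small disks around each $u_j$, so that the cyclic orders of edges at each $u_j$ remain consistent. The degenerate cases $k = 1$ (loop) and $k = 2$ (digon), where $W_k$ is not $3$-connected, require separate short arguments: for $k = 1$ a loop can always be redrawn as a small circle at $u_1$ with empty interior, and for $k = 2$ the same flipping idea applies directly using the two parallel edges of $C$.
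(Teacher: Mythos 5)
The paper states this lemma without proof: it appears in a block (Lemmas~5.6--5.9) prefaced by the remark that ``they are basic results about planar (multi)graphs whose proofs are straightforward, thus we present them without proof.'' So there is no official argument to compare against, and I'll evaluate your proof on its own terms.

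Your argument is correct. The forward direction plus the ``moreover'' clause is exactly the standard construction. For the converse, though, Whitney's $3$-connectedness theorem is heavier machinery than you need, and it is what forces you to split off the degenerate cases $k\le 2$ where the wheel is not $3$-connected. The fact you actually want --- that in the induced embedding of the wheel subgraph $W$ the cycle $C$ bounds a face --- is forced directly by planarity without any connectivity hypothesis: the $k$ internally disjoint arcs $uu_1,\ldots,uu_k$ all lie on one side of the simple closed curve $C$ and terminate on it at the points $u_1,\ldots,u_k$, so non-crossing forces the cyclic order of the spokes around $u$ to agree (up to reversal) with the cyclic order of the $u_i$ on $C$. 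This elementary ``arcs ending on a circle'' observation handles every $k\ge 1$ uniformly and produces the triangular regions $T_i$ at once. Your flipping step --- moving each $X_i$ out of $T_i$ across $u_iu_{i+1}$ into the face of $\phi'\vert_G$ on the far side --- is the right idea, and you have correctly identified the one place requiring care: $X_i$ meets the rest of $G$ only at $u_i$ and $u_{i+1}$ because the third boundary vertex of $T_i$ is $u\notin V(G)$, and the flipped edges of $X_i$ at $u_i$ should be inserted into the rotation immediately next to $u_iu_{i+1}$ on the outside of $C$, while those of $X_{i-1}$ go immediately next to $u_{i-1}u_i$; these are disjoint slots in the rotation at $u_i$, so the re-embeddings do not conflict. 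With the Whitney step replaced by the arc-ordering observation, the case split on $k$ disappears and the proof becomes cleaner, but as written your argument is sound.
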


Somewhat similarly we have the following:

\begin{lemma}\label{lem:commonface}
Let $G$ be a planar multigraph and let $S \subseteq V(G)$. Then $G$ has a planar embedding which has a face incident to every vertex of $S$ if and only if the multigraph $G'$ obtained from $G$ by adding a new vertex $s$ adjacent to every element of $S$ is planar. Moreover, any planar embedding of $G$ in which the vertices of $S$ are incident to a common face $F$ can be extended to a planar embedding of $G'$ where $s$ is embedded in $F$ and then connected to every vertex of $S$.
\end{lemma}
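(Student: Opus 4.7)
The plan is to prove both implications of the biconditional separately, following essentially the same pattern as the preceding lemma on facial cycles (of which this is a natural generalization where we drop the requirement that $S$ be traversed by a cycle).

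First I would dispatch the easy direction: suppose $G'$ has a planar embedding. Delete the vertex $s$ together with its incident edges. The resulting drawing is clearly a planar embedding of $G$. The region previously occupied by $s$ and the arcs embedding the edges from $s$ to $S$ merges with finitely many faces of $G'$ into a single face $F$ of $G$. Since every $v \in S$ was adjacent to $s$ via an edge whose embedded arc terminates at $v$, every vertex of $S$ lies on the boundary of $F$, so $G$ has a planar embedding with a face incident to every vertex of $S$.

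For the harder ``only if'' direction, suppose $G$ has a planar embedding in which some face $F$ is incident to every vertex in $S$. I would construct an embedding of $G'$ by placing $s$ at an arbitrary interior point of $F$ and then drawing non-crossing arcs from $s$ to each $v \in S$, with each arc lying in $F$ except for the endpoint at $v$. The key topological ingredient, which is also what makes the previous facial-cycle lemma work, is that inside a connected open region of the plane one can route any finite collection of arcs from a common interior point to finitely many prescribed boundary points without mutual crossings. One proves this by induction on $|S|$: having drawn pairwise non-crossing arcs from $s$ to $v_1,\ldots,v_{k-1}$, the complement in $F$ of those arcs together with $\{s\}$ is still path-connected to every boundary point of $F$ (since the arcs form a tree with $s$ at the root, removing them does not disconnect $F$ near any boundary vertex other than the $v_i$), so an arc from $s$ to $v_k$ can be added. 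Applying this inside $F$ with the prescribed boundary points being the embedded images of the vertices of $S$ gives the desired planar embedding of $G'$, and by construction $s$ is embedded in $F$, which verifies the ``moreover'' clause.

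The main obstacle is the topological lemma about routing non-crossing arcs; in the cycle case of the previous lemma, the face is bounded by a simple closed curve and so is a topological disk, making the routing essentially trivial. Here, because $G$ may be disconnected or the boundary walk of $F$ may revisit vertices, $F$ need not be a topological disk, and the routing argument has to be carried out via the general polygonal-embedding framework (Remark on Fáry's theorem) rather than by appeal to the Jordan curve theorem alone.
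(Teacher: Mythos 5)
The paper gives no proof of this lemma; like \Lem{facialcycle}, \Lem{addcycle}, and \Lem{cyclejoin}, it is stated after the sentence ``They are basic results about planar (multi)graphs whose proofs are straightforward, thus we present them without proof.'' So there is nothing of the authors' to compare against, and your proof must be judged on its own merits.

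It essentially holds up. The deletion argument for the ``if'' direction is fine: removing $s$ and its incident arcs from a planar embedding of $G'$ merges the faces around $s$ into a single face of $G$ whose closure contains every vertex of $S$. For the ``only if'' direction the inductive routing is the right approach, but one sentence is too loose. After drawing arcs from $s$ to $v_1,\ldots,v_{k-1}$, the set $F$ minus those arcs is generally \emph{not} connected --- already two arcs to distinct boundary points split a disk $F$ into two pieces --- so the assertion that the complement ``is still path-connected to every boundary point of $F$'' is not quite the statement you want. What is actually needed is: (a) $v_k$ is accessible from some component $C$ of $F$ minus the drawn arcs, which holds because those arcs terminate on $\partial F$ only at $v_1,\ldots,v_{k-1}$ and hence avoid a small sector of $F$ at $v_k$; and (b) $s$ is accessible from \emph{every} such component, in particular from $C$. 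Fact (b) is the real content: since $F$ is connected, each component's frontier inside $F$ must meet one of the arcs, and since the arcs pairwise intersect only at $s$, tracing that arc back to $s$ shows a sector at $s$ lies inside the component. With (a) and (b) the new arc to $v_k$ is routed inside $C$ and the induction closes. Your closing observation --- that $F$ need not be a topological disk when $G$ is disconnected or the face's boundary walk repeats vertices, so the ``radiate rays from the centre of a disk'' picture that works for a facial cycle does not transfer literally, and one should instead work in the polygonal-embedding framework of \Rem{fary} --- is exactly the right caveat.
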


The next lemma allows us to add cycles around a vertex:

\begin{lemma}\label{lem:addcycle}
Let $G$ be a planar multigraph and $v \in V(G)$. Suppose that there is a planar embedding of $G$ in which the edges incident to $v$ occur in cyclic order $e_1, \ldots, e_k$. Let $G'$ be the multigraph obtained from $G$ by subdividing each of $e_1, \ldots, e_k$ with vertices $v_1,\ldots, v_k$ and adding edges $v_iv_{i+1}$ for $i=1,\ldots,k$ where indices are taken modulo $k$. Then $G'$ is planar.
\end{lemma}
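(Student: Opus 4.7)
The plan is to take a given planar embedding of $G$ in which the edges around $v$ realize the cyclic order $e_1,\dots,e_k$ and enlarge it to a planar embedding of $G'$ by drawing the new subdivision vertices and the new ``encircling'' edges inside a small disk around $v$.

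First, fix a polygonal planar embedding of $G$ (cf.\ \Rem{fary}) realizing the cyclic order $e_1,\dots,e_k$. By the definition of cyclic order (\Def{cyclicorder}), there is a radius $r>0$ such that the circle $C_r$ of radius $r$ centered at the embedded image of $v$ meets each edge-end incident to $v$ in exactly one point and meets no other vertex or edge of the embedding; shrinking $r$ further if necessary, I may also assume the closed disk $D_r$ bounded by $C_r$ contains no part of $G$ other than $v$ and the initial segments of the edges incident to $v$. Label these $k$ intersection points $p_1,\dots,p_k$ in the cyclic order along $C_r$ so that $p_i$ lies on $e_i$; this is possible precisely by the hypothesis on the cyclic order.

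Next, I define the embedding of $G'$ as follows. For each $i$, place the subdivision vertex $v_i$ at the point $p_i$, keeping the portion of $e_i$ from $v_i$ to the other endpoint unchanged and replacing the portion from $v$ to $v_i$ by a straight (or polygonal) arc inside $D_r$ from $v$ to $p_i$; this gives a planar embedding of the multigraph obtained from $G$ by subdividing each $e_i$ at $v_i$ (and, in the case where some $e_i$ is a loop at $v$ and therefore appears twice in the cyclic list, subdividing that loop at two points, which is exactly what is required). Finally, for each $i$ modulo $k$, draw the new edge $v_iv_{i+1}$ as the arc of $C_r$ from $p_i$ to $p_{i+1}$ that does not contain any other $p_j$. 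These $k$ arcs have pairwise disjoint interiors (consecutive arcs share only the common endpoint $p_{i+1}$), and by construction each arc lies on $C_r$, which meets the rest of the embedding of $G$ only at the points $p_1,\dots,p_k$, so no new arc crosses any edge of $G$ or any of the modified segments from $v$ to the $v_i$.

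Therefore the resulting drawing is a planar embedding of $G'$, and $G'$ is planar. The main thing to check carefully is the loop case: if some $e_i$ is a loop at $v$, it contributes two intersection points with $C_r$ and hence two subdivision vertices, which matches the cyclic-order accounting used in the statement; the argument above is unchanged because each edge-end at $v$ is treated independently when constructing the segments inside $D_r$.
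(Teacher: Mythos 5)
The paper deliberately omits a proof of this lemma, stating that it (along with Lemmas \ref{lem:facialcycle}, \ref{lem:commonface}, and \ref{lem:cyclejoin}) is a basic fact about planar multigraphs with a straightforward proof, so there is no official argument to compare against. Your argument is correct and is exactly the standard one: shrink to a small disk $D_r$ around $v$ in a polygonal embedding so that $C_r$ meets only the edge-ends at $v$, put the subdivision vertices $v_i$ at the intersection points $p_i$, run the inner segments of the subdivided edges radially inside $D_r$, and realize the new cycle as the $k$ circular arcs of $C_r$ between consecutive $p_i$. The key observations you state — that $C_r$ meets the rest of the drawing only at $p_1,\dots,p_k$, that the radial segments are pairwise disjoint except at $v$, and that consecutive circular arcs share only an endpoint — are precisely what make the resulting drawing a planar embedding of $G'$.

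One small remark on the loop case: \Def{cyclicorder} in the paper says that $C_r$ meets each edge incident to $v$ exactly once, which tacitly excludes loops at $v$ from the cyclic list (a loop would hit $C_r$ twice). Your reading, where a loop at $v$ contributes two edge-ends, two points $p_i$, and two subdivision vertices, is the interpretation under which the lemma remains true and useful; in the paper's applications of this lemma the relevant vertex has no loop, so the issue does not arise. Either way, your construction handles both readings correctly, since you treat edge-ends at $v$ independently.
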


Finally, the following lemma allows us to join up two planar graphs with given facial cycles in a controlled manner:

\begin{lemma}\label{lem:cyclejoin}
Suppose that $G$ and $H$ are planar graphs with facial cycles $C = u_1, \ldots, u_\ell$ and $D = v_1, \ldots, v_k$ respectively. Let $X$ be the graph obtained from $G \cup H$ by adding edges $u_1v_1, \ldots, u_m v_m$ for some $m \le \min\{\ell,k\}$. Then $X$ is planar and $C' = u_m,u_{m+1}, \ldots, u_\ell, u_1, v_1,v_k,v_{k-1},\ldots, v_m$ is a facial cycle.
\end{lemma}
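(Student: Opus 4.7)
My plan is to build an explicit planar embedding of $X$ in which $C'$ bounds a face. By \Rem{facial}, I choose planar embeddings of $G$ and $H$ in which $C$ bounds the outer face $F_G$ and $D$ bounds the outer face $F_H$, respectively. Reflecting each embedding if necessary (which preserves planarity), I arrange that traversing $C$ clockwise (equivalently, traversing $\partial F_G$ with $F_G$ on the left) yields the cyclic order $u_1, u_2, \ldots, u_\ell$, and similarly that clockwise traversal of $D$ yields $v_1, v_2, \ldots, v_k$.

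Next, I shrink the embedding of $H$ and place it deep inside $F_G$, producing a planar embedding of $G \cup H$ whose unbounded face $F$ has two boundary components: $C$ on the outside and $D$ on the inside (so $F$ is, on the sphere, an annulus). Because shrinking and translation are orientation-preserving, the clockwise orders on $C$ and $D$ are unchanged and agree with traversing $\partial F$ with $F$ on the left on both components. This is precisely the compatibility needed to draw $m$ pairwise disjoint simple arcs $\alpha_1, \ldots, \alpha_m$ in $F$, where $\alpha_i$ joins $u_i$ to $v_i$; here I am using the standard fact that disjoint boundary-to-boundary arcs in an annulus with prescribed endpoints exist if and only if the cyclic orders of endpoints on the two boundary components match. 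Interpreting these arcs as the new edges $u_iv_i$ gives a planar embedding of $X$, establishing planarity.

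Finally, the $m$ arcs subdivide $F$ into $m$ subfaces (by Euler's formula: the first arc is a bridge that merges the two connected components of $G \cup H$ without creating a face, while each of the remaining $m - 1$ arcs splits one face into two). For $i = 1, \ldots, m - 1$, the subface between $\alpha_i$ and $\alpha_{i+1}$ is the quadrilateral with boundary cycle $u_i, u_{i+1}, v_{i+1}, v_i$, using the edges $u_iu_{i+1} \in E(C)$, $v_iv_{i+1} \in E(D)$, and the two arcs. The remaining big subface is bounded by the long path from $u_m$ to $u_1$ along $C$ through $u_{m+1}, \ldots, u_\ell$, then by $\alpha_1$, then by the long path from $v_1$ to $v_m$ along $D$ through $v_k, v_{k-1}, \ldots, v_{m+1}$, then by $\alpha_m$; this is precisely the cycle $C'$. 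The main obstacle in the argument is verifying that the orientation conventions can be set up so the $m$ arcs exist pairwise disjoint in $F$; once this compatibility is arranged, identifying the big subface with $C'$ is a matter of bookkeeping.
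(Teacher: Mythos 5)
The paper states \Lem{cyclejoin} without proof, so there is no argument in the text to compare against; judging the proposal on its own merits, the construction is a natural one but there is a genuine gap at the orientation step. You place $G$ and $H$ in the plane as disjoint blobs with $C$ and $D$ both bounding the outer face, correctly observe that traversing the boundary of $F$ with $F$ on the left is clockwise on both $C$ and $D$, and then assert that having both of these clockwise traversals read $u_1, \ldots, u_\ell$ and $v_1, \ldots, v_k$ is the compatibility needed to draw the $m$ disjoint arcs. This is backwards. For disjoint arcs $u_iv_i$ to exist in the annulus $F$, the two $F$-on-the-left cyclic orders of the arc endpoints must be mutually \emph{reversed}, not equal: the induced boundary orientations of an oriented annulus run in opposite senses around its core, so the ``matching cyclic orders'' criterion you invoke is correct only when one boundary is read with $F$ on the left and the other with $F$ on the right.

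For a concrete failure of the construction as written, take $G$ and $H$ to both be triangles with $m = \ell = k = 3$, placed as disjoint disks with $u_1, u_2, u_3$ and $v_1, v_2, v_3$ both clockwise. The arc $u_1v_1$ can run over the top, but $u_2v_2$ and $u_3v_3$ must both pass underneath and are forced to cross: $u_2$ faces toward $H$ while $v_2$ faces away, and $u_3$ faces away from $H$ while $v_3$ faces toward $G$, so one arc must detour around $H$ and the other around $G$ in the same region below. (The resulting $X$ is the triangular prism, which is of course planar, but this particular drawing fails to exhibit it.) The repair is minor: reflect the embedding of $H$ so that clockwise traversal of $D$ reads $v_1, v_k, v_{k-1}, \ldots, v_2$, i.e.\ $v_1, \ldots, v_k$ is counterclockwise. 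Then the two $F$-on-the-left orders are reversed, the arcs exist, and the rest of your argument --- the Euler count giving $m$ subfaces, the $m-1$ quadrilateral faces $u_i, u_{i+1}, v_{i+1}, v_i$, and the identification of the remaining face's boundary with $C'$ --- goes through unchanged.
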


\section{Planar bi-labeled graphs}\label{sec:planarblg}

At the end of \Sec{correspondence}, we saw in \Thm{reformulation} that for any graph $G$, we have that $C_q^G = \spn\{T^{\vec{K}\to G} : \vec{K} \in \langle\vec{M}^{1,0},\vec{M}^{1,2},\vec{A}\rangle_{\circ,\otimes,*}\}$. This perhaps falls short of what one might deem a characterization of $C_q^G$, since it is not a priori obvious what bi-labeled graphs can be generated by $\vec{M}^{1,2},\vec{M}^{1,0},\vec{A}$ and the operations of composition, tensor product, and transposition. Rather, \Thm{reformulation} is a translation of the algebraic problem of determining $C_q^G$ (or, equivalently $\langle M^{1,0}, M^{1,2}, A_G\rangle_{\circ,\otimes,*}$) into the combinatorial problem of characterizing $\langle\vec{M}^{1,0},\vec{M}^{1,2},\vec{A}\rangle_{\circ,\otimes,*}$. In this section, we will introduce a family of \emph{planar bi-labeled graphs} $\P$, which we will prove is equal to $\langle\vec{M}^{1,0},\vec{M}^{1,2},\vec{A}\rangle_{\circ,\otimes,*}$ in Section~\ref{sec:tale}.

To define $\P$, we first need the following:

\begin{definition}
Given a bi-labeled graph $\vec{K} = (K,\vec{a},\vec{b}) \in \G(\ell,k)$, define the graph $K^\circ := K^\circ(\vec{a},\vec{b})$ as the graph obtained from $K$ by adding the cycle $C = \alpha_1, \ldots, \alpha_\ell, \beta_k, \ldots, \beta_1$ of new vertices, and edges $a_i\alpha_i$, $b_j \beta_j$ for all $i \in [\ell], j \in [k]$. We refer to the cycle $C$ as the \emph{enveloping cycle} of $K^\circ$. We further define $K^\odot := K^\odot(\vec{a},\vec{b})$ as the graph obtained from $K^\circ$ by adding an additional vertex adjacent to every vertex of the enveloping cycle.
\end{definition}

Note that although the two graphs defined above depend on the input and output vectors of $\vec{K} = (K,\vec{a},\vec{b})$, we will typically refer to them as simply $K^\circ$ and $K^\odot$ when there should be no confusion. 

\begin{remark}
In the case $\ell = k = 0$, we simply let $K^\circ = K^\odot = K$. If $\ell + k = 1$, then the enveloping cycle of $K^\circ$ consists of a single vertex with a loop. If $\ell+k = 2$, then the enveloping cycle is two vertices with two edges between them. Thus in this case we should actually refer to $K^\circ$ and $K^\odot$ as multigraphs, but we will mostly overlook this. If we really wanted to define $K^\circ$ and $K^\odot$ as graphs in this case we could simply replace the double edge with a single edge, and this would not affect \Def{PBG} of $\P$. However it is convenient for us to have the double edge so we can treat things in a uniform manner.
\end{remark}

For intuitions sake, we think of drawing $K^\circ$ with the enveloping cycle surrounding the graph $K$, and with the vertices $\alpha_1, \ldots, \alpha_\ell$ on the left-hand side with $\alpha_1$ at the top left, and $\beta_1, \ldots, \beta_k$ on the right-hand side with $\beta_1$ at the top right. The idea is to take a drawing of a bi-labeled graph $\K$, as described in \Rem{drawing}, and create $K^\circ$ by drawing a border around the drawing so that the vertex-less ends of the wires just touch the border and then adding vertices at these intersections.

Now we are able to define our class $\P$ of planar bi-labeled graphs:

\begin{definition}[Planar bi-labeled graphs]\label{def:PBG}
For any $\ell,k \in \mathbb{N}$,
\[\P(\ell,k) = \{\vec{K} \in \G(\ell,k) : K^\odot \text{ is planar.}\},\]
and we let $\P = \cup_{\ell,k = 0}^\infty \P(\ell,k)$.
\end{definition}

By \Lem{facialcycle}, we also have the following equivalent definition:

\begin{definition}[Planar bi-labeled graphs]\label{def:PBG2}
For any $\ell,k \in \mathbb{N}$,
\[\P(\ell,k) = \{\vec{K} \in \G(\ell,k) : K^\circ \text{ is planar and its enveloping cycle is facial.}\},\]
and we let $\P = \cup_{\ell,k = 0}^\infty \P(\ell,k)$.
\end{definition}

Again we can gain intuition by thinking in terms of drawings of bi-labeled graphs. Membership in $\P$ corresponds to being able to draw a bi-labeled graph, as described in \Rem{drawing}, in a planar manner, i.e, without any wires or edges crossing.

In the case of small values of $\ell + k$, the condition of membership in $\P(\ell,k)$ can be rephrased somewhat:

\begin{lemma}\label{lem:Psmalllk}
Let $\vec{K} = (K,\vec{a},\vec{b}) \in \G(\ell,k)$ be a bi-labeled graph, and let $S \subseteq V(K)$ be the set of vertices appearing in $\vec{a}$ or $\vec{b}$. If $\K \in \P(\ell,k)$, then $K$ has a planar embedding in which every vertex of $S$ is incident to a common face. If $\ell + k \le 3$, then this condition is also sufficient for $\K \in \P(\ell,k)$.
\end{lemma}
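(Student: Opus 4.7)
The forward implication is essentially a minor argument and holds for all $\ell,k$. Assuming $\vec{K}\in\P(\ell,k)$, so that $K^\odot$ is planar, the plan is to exhibit the graph $K^+$ obtained from $K$ by adding a new vertex $s$ adjacent to every vertex of $S$ as a minor of $K^\odot$. Starting from $K^\odot$, first delete all edges of the enveloping cycle, keeping the spoke edges $a_i\alpha_i$ and $b_j\beta_j$ together with the central vertex of $K^\odot$ and its edges to each $\alpha_i,\beta_j$. Then contract every spoke edge and simplify. Each $\alpha_i$ is merged with $a_i$ and each $\beta_j$ with $b_j$, so the former central vertex becomes adjacent in the resulting simple graph to precisely the vertices of $S$; that graph is $K^+$. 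Since minors of planar graphs are planar, $K^+$ is planar, and \Lem{commonface} then supplies a planar embedding of $K$ in which every vertex of $S$ is incident to a common face.

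For the converse under the assumption $\ell+k\le 3$, fix a planar embedding of $K$ in which some face $F$ is incident to every vertex of $S$; without loss of generality (by choosing the outer face) we may take $F$ to be any face we like. The plan is to extend this embedding to a planar embedding of $K^\odot$ by inserting all the new vertices and edges inside $F$. Concretely: place each $\alpha_i$ and $\beta_j$ in $F$ close to its corresponding vertex of $S$, draw each spoke $a_i\alpha_i$, $b_j\beta_j$ as a short arc in $F$, draw the enveloping cycle $\alpha_1,\ldots,\alpha_\ell,\beta_k,\ldots,\beta_1$ as a closed curve in $F$ running through these new vertices, and finally drop the central vertex of $K^\odot$ into the bounded region carved out by the cycle and join it to each of $\alpha_i,\beta_j$. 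The only possible obstruction is that the cyclic order of the $\alpha_i,\beta_j$ around the enveloping cycle must match the cyclic order in which the corresponding elements of $S$ appear along $\partial F$. For $\ell+k\le 3$, however, the enveloping cycle has at most three vertices, and any cyclic arrangement of at most three points on a circle coincides with any other up to reflection, so this obstruction never arises.

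The degenerate small cases can be dispatched quickly. For $\ell+k=0$ we have $K^\odot=K$, and for $\ell+k=1$ the enveloping cycle is a single looped vertex joined by one spoke to the unique element of $S$, which can always be drawn inside $F$ together with the central vertex. The case $\ell+k=2$ requires a brief remark since the enveloping cycle is then a digon (so $K^\odot$ is genuinely a multigraph); but a digon has two incident faces in any planar embedding and the central vertex can be placed inside either of them. The substantive case is $\ell+k=3$, handled by the argument above. The main (mild) obstacle I anticipate is keeping the picture rigorous rather than purely pictorial: making precise, e.g.\ via the polygonal/straight-line reduction of \Rem{fary} together with \Lem{commonface}, that ``place the new vertices and edges inside $F$'' yields a genuine planar embedding of $K^\odot$. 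Once this has been pinned down, the proof is complete.
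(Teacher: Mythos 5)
Your proof is correct and follows essentially the same route as the paper: the forward direction exhibits $K$ together with an apex adjacent to $S$ as a minor of $K^\odot$ and then invokes \Lem{commonface}, and the converse identifies the cyclic-order obstruction and observes it vanishes when $\ell+k\le3$ since all cyclic arrangements of at most three points coincide up to reflection. The paper packages the converse embedding via \Lem{commonface} followed by \Lem{addcycle} instead of drawing directly inside $F$, but the underlying argument is the same.
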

\begin{proof}
Let $\K = (K,\vec{a},\vec{b}) \in \P(\ell,k)$ and consider $K^\odot$ (which must be planar), which has enveloping cycle $C = \alpha_1, \ldots, \alpha_\ell,\beta_k, \ldots, \beta_1$ where $\alpha_i \sim a_i$, $\beta_j \sim b_j$, and there is a vertex $z$ adjacent to every vertex of $C$. Contract all edges incident to $z$ and all edges of $C$ to obtain a new graph $K'$ and let the vertex corresponding to the nontrivial contracted component be $z'$. Since $K^\odot$ was planar, we have that $K'$ is planar, and it is easy to see that $K'$ is simply $K$ plus the vertex $z'$ that is adjacent to every vertex appearing in $\vec{a}$ or $\vec{b}$. Thus, by \Lem{commonface} $K$ has a planar embedding in which every vertex appearing in $\vec{a}$ or $\vec{b}$ is incident to a common face.



Now let $\ell+k \le 3$ and let $S$ be the set vertices in the input/output vectors. Suppose that $K$ has a planar embedding which has a face incident to every vertex in $S$. By \Lem{commonface}, the (multi)graph obtained from $K$ by adding a vertex $z$ adjacent to every element of $S$ is planar (if $s \in S$ appears more than once in $\vec{a}$ or $\vec{b}$, then we add a number of edges between $s$ and $z$ equal to the number of times $s$ appears). Furthermore, by \Lem{addcycle}, we can subdivide the edges incident to $z$ with vertices $\alpha_1, \ldots, \alpha_{\ell + k}$ and add a cycle through these vertices while remaining planar. Note that since $\ell + k \le 3$, we do not need to worry about the cyclic order of the edges incident to $z$. It is straightforward to see that the graph we have constructed is $K^\odot$. 


\end{proof}

\begin{remark}\label{rem:Psmalllk}
Note that for $\ell + k \le 1$, the above implies that $\K = (K,\vec{a},\vec{b}) \in G(\ell,k)$ is an element of $\P(\ell,k)$ if and only if $K$ is planar. In the case where $\ell + k = 2$, it is easy to see that $\K \in \P(\ell,k)$ is equivalent to the planarity of the graph obtained from $K$ by adding an edge between the two vertices occurring in $\vec{a}$, $\vec{b}$.
\end{remark}


To show that the second claim of \Lem{Psmalllk} does not hold for $\ell + k > 3$, we have the following example:

\begin{example}\label{ex:swapgraph}
Let $K$ be an empty graph on vertex set $\{a,b\}$, and define $\K = (K,(a,b),(b,a))$. Recall that this is the bi-labeled graph from \Ex{swap}, whose corresponding homomorphism matrix is the swap map. As $K$ has no edges, any planar embedding has only one face, and thus the vertices in the input/output vectors are always incident to a common face. However, it is easy to see that $K^\circ$ is isomorphic to a complete graph on four vertices in which two non-incident edges have been subdivided by the vertices $a$ and $b$. 
Therefore, $K^\odot$ is a subdivision of $K_5$ and thus non-planar. So we see that $\K \notin \P(2,2)$, which is desirable since the swap map is not contained in $C_q^G$ generically. We remark that this example also illustrates that it is not enough to simply require that $K^\circ$ be planar.
\end{example}

The purpose of introducing the enveloping cycle and graphs $K^\circ$, $K^\odot$ is that we need some conditions on the input/output vectors of the bi-labeled graphs in $\P$ so that when we compose them (or perform other operations on them), they remain planar. To illustrate such a condition is necessary, we give the following example:

\begin{example}\label{ex:notinP}
Let $K$ be 5-cycle on vertices $b_1, b_2, b_3, b_4, b_5$ in that order, and let $H$ be a 5-cycle on vertices $a_1, a_3, a_5, a_2, a_4$ in that order. Obviously, $K$ and $H$ are both planar. However, if $\vec{K} = (K,\varnothing,(b_1, b_2,b_3,b_4,b_5))$ and $\vec{H} = (H,(a_1,a_2,a_3,a_4,a_5),\varnothing)$, then it is not hard to see that the underlying graph of $\vec{K} \circ \vec{H}$ is the complete graph $K_5$ which is of course not planar.

One can check that $K^\circ$ is the 5-prism graph, with the enveloping cycle being one of the two 5-cycles, and thus $\vec{K} \in \P(0,5)$. On the other hand, $H^\circ$ is the Petersen graph, which is known to be non-planar. Thus $\vec{H} \notin \P$.
\end{example}


Before moving on, we give some examples of bi-labeled graphs that are contained in $\P$.

\begin{example}\label{ex:MinP}
Recall that $\vec{M}^{\ell,k} = (K,(v^\ell),(v^k))$ where $v$ is the only vertex of $K$ which has no loops. In this case the graph $K^\odot$ consists of a cycle of length $\ell + k$ and two vertices adjacent to every vertex on this cycle but not to each other. This is sometimes called the $\ell + k$ bi-wheel, and it is easy to see that it is planar. Thus we have that $\vec{M}^{\ell,k} \in \P(\ell,k)$. Similarly we have that $\mathring{\vec{M}}^{\ell,k} \in \P(\ell,k)$.
\end{example}

\begin{remark}\label{rem:xi}
Recall that $M^{1,1} = I$ and $M^{2,0} = \xi = \sum_{i} e_i \otimes e_i$ from the definition of tensor categories with duals. Thus by \Ex{M} and \Ex{MinP} we have that $\spn\{T^{\K \to G} : \K \in \P\}$ contains both $I$ and $\xi$.
\end{remark}

\begin{example}\label{ex:AinP}
Recall that $\vec{A} = (K,(u),(v))$ where $K$ is the complete graph on vertex set $\{u,v\}$. In this case one can obtain $K^\odot$ from a complete graph on three vertices by adding two additional edges between one pair of vertices (so that there are three edges in total between them), and then subdividing one of these edges twice. This is clearly planar and thus $\vec{A} \in \P(1,1)$.
\end{example}

\subsection{Closure properties}\label{sec:closureprops}

In this section we will show that the class $\P$ of planar bi-labeled graphs is closed under composition, tensor product, and transposition. We begin with composition, which is the most difficult case.

\begin{lemma}\label{lem:productclosed}
If $\mathbf{H}_1 \in \P(\ell,k)$ and $\mathbf{H}_2 \in \P(k,m)$, then $\mathbf{H}_1 \circ \mathbf{H}_2 \in \P(\ell,m)$.
\end{lemma}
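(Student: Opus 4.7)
The plan is to construct a planar embedding of $H^\odot$ (where $H$ is the underlying graph of $\mathbf{H}_1 \circ \mathbf{H}_2$) by suitably combining planar embeddings of $H_1^\odot$ and $H_2^\odot$. Write $\mathbf{H}_1 = (H_1, \vec{a}, \vec{b})$ with $\vec{b} = (b_1,\ldots,b_k)$, and $\mathbf{H}_2 = (H_2, \vec{a'}, \vec{b'})$ with $\vec{a'} = (a'_1,\ldots,a'_k)$. Recall that $H$ is obtained from $H_1 \cup H_2$ by identifying $b_i$ with $a'_i$ for each $i \in [k]$, and that $H^\odot$ requires a fresh enveloping cycle through $\alpha_1,\ldots,\alpha_\ell,\beta'_m,\ldots,\beta'_1$ together with a central vertex adjacent to every vertex of that cycle.

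First, I would take planar embeddings of $H_1^\circ$ and $H_2^\circ$ in which their enveloping cycles are facial (these exist by \Def{PBG2}). Parametrize $C_1$ by $u_i = \beta_i$ ($1 \le i \le k$), $u_{k+j} = \alpha_{\ell-j+1}$, and $C_2$ by $v_i = \alpha'_i$ ($1 \le i \le k$), $v_{k+j} = \beta'_{m-j+1}$; then apply \Lem{cyclejoin} with the $k$ join edges $\beta_i \alpha'_i$. This yields a planar graph $Y$ with a facial cycle
\[
C_Y = \beta_k, \alpha_\ell, \alpha_{\ell-1}, \ldots, \alpha_1, \beta_1, \alpha'_1, \beta'_1, \beta'_2, \ldots, \beta'_m, \alpha'_k,
\]
which is incident to every vertex that should lie on the new enveloping cycle of $H^\odot$.

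Next, I would contract the three edges $b_i \beta_i$, $\beta_i \alpha'_i$, $\alpha'_i a'_i$ for each $i$, collapsing the quadruple $\{b_i, \beta_i, \alpha'_i, a'_i\}$ to a single vertex $w_i$; this precisely realizes the identification $b_i = a'_i$ that defines $H$. Contraction preserves planarity, and the face previously bounded by $C_Y$ is now bounded by
\[
C_X = w_k, \alpha_\ell, \ldots, \alpha_1, w_1, \beta'_1, \ldots, \beta'_m, w_k,
\]
which is still incident to all of $\alpha_1,\ldots,\alpha_\ell,\beta'_1,\ldots,\beta'_m$. By \Lem{commonface} I can then add a central vertex $z$ adjacent to exactly these vertices (and not to $w_1$ or $w_k$), preserving planarity. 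In the resulting embedding, the subdivision of the face bounded by $C_X$ around $z$ creates two quadrilateral faces whose boundaries are $z,\alpha_\ell,w_k,\beta'_m$ and $z,\alpha_1,w_1,\beta'_1$; inside each I would draw the missing enveloping-cycle chord $\alpha_\ell \beta'_m$ and $\alpha_1 \beta'_1$ respectively, which introduces no crossings. Finally, delete all remaining extraneous structure incident to the $w_i$'s — the duplicate/multi-edges $w_i w_{i+1}$ inherited from the two old enveloping cycles, any contraction-induced loops, and the four edges $\alpha_1 w_1, \alpha_\ell w_k, w_1 \beta'_1, w_k \beta'_m$. What remains is exactly $H^\odot$, so $\mathbf{H}_1 \circ \mathbf{H}_2 \in \P(\ell,m)$.

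The main obstacle is the combinatorial bookkeeping required to verify that the face structure after contraction behaves as claimed — in particular, that the two missing enveloping-cycle edges $\alpha_\ell\beta'_m$ and $\alpha_1\beta'_1$ land in quadrilateral faces of the resulting planar embedding and can therefore be added planarly. Degenerate subcases (notably $k=0$, where there are no join edges and one must instead join $H_1^\odot$ and $H_2^\odot$ as disjoint embeddings via \Lem{commonface}, or cases where entries of the input/output tuples coincide) require only minor adaptations of the same strategy.
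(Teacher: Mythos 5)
Your strategy is the same in spirit as the paper's — build a planar embedding of $H^\odot$ from embeddings of $H_1^\circ$ and $H_2^\circ$ using subdivision, \Lem{cyclejoin}, contraction, and \Lem{commonface}/\Lem{facialcycle} — but you reorder the steps: you apply \Lem{cyclejoin} with the $k$ bare join edges $\beta_i\alpha'_i$ and then immediately contract the chains $b_i\beta_i\alpha'_ia'_i$, after which you insert $z$ and the two missing enveloping-cycle edges. The paper instead does all identifications at the very end, after a longer sequence of subdivisions, edge deletions, and unsubdivisions, so that the face boundary is an honest simple cycle at every intermediate step. Your ordering is viable in the generic case $k\ge 2$, although the claim about the two quadrilateral faces $z,\alpha_\ell,w_k,\beta'_m$ and $z,\alpha_1,w_1,\beta'_1$ rests on an unproven assertion about the cyclic incidence order of $\partial F$ with $z$'s new spokes (neither \Lem{commonface} nor \Lem{facialcycle} gives this for you; it follows but needs to be argued).

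The genuine gap is the cases $k=0$ and $k=1$, which you dismiss as minor adaptations. \Lem{cyclejoin} with $m$ join edges produces a \emph{simple} facial cycle only when $m\ge 2$: with $m=1$ the claimed $C'$ traverses $u_1$ and $v_1$ twice and with $m=0$ the lemma says nothing at all, so your step that produces $C_Y$ fails outright for $k\le 1$. Your proposed fix for $k=0$ (add a common vertex via \Lem{commonface}) does not by itself produce $H^\odot$: after adding $z$ adjacent to all of $C_1\cup C_2$ you still have two disjoint enveloping cycles and must splice them into one by adding $\alpha_\ell\beta'_m,\alpha_1\beta'_1$ and deleting $\alpha_\ell\alpha_1,\beta'_m\beta'_1$, and whether these new edges land in a common face of your embedding is exactly the nontrivial part. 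The paper sidesteps all of this with one technical device you have not replicated: before applying \Lem{cyclejoin} it subdivides the two enveloping cycles to create extra vertices $x_1,y_1$ and $x_2,y_2$ and joins with the two extra edges $x_1x_2,y_1y_2$ in addition to the $k$ edges $\beta_i\gamma_i$, so that \Lem{cyclejoin} is always invoked with at least two join edges, uniformly for every $k\ge 0$ (and handles $\ell+k=1$ or $\ell+k=2$ where the enveloping cycle is a loop or a digon). That subdivide-then-unsubdivide trick is the load-bearing idea that makes the proof go through in all cases; without it your $k\le 1$ cases require a structurally different argument, not a minor adaptation.
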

\begin{proof}
Let $\H = (H,\vec{a'},\vec{d'})$ be the bi-labeled graph such that $\H = \H_1 \circ \H_2$ where $\H_1 = (H_1,\vec{a},\vec{b}) \in \P(\ell,k)$ and $\H_2 = (H_2, \vec{c},\vec{d})  \in \P(k,m)$. To show that $\H \in \P(\ell,m)$, we must show that $H^\odot$ is planar. We will show this by constructing $H^\odot$ from $H_1^\circ$ and $H_2^\circ$ in a careful manner. The idea of the construction is illustrated in \Fig{compproof}, though there we only show $H^\circ$ for aesthetic purposes.

First we can consider the case where $\ell = k = 0$. In this case, we may simply take the planar embedding of $H_2^\odot$ and embed $H_1^\odot = H_1$ inside one of its faces. Similarly the lemma holds if $k = m = 0$.


Now suppose that $\ell + k > 0$ and $k + m > 0$. Let $C_1 = \alpha_1, \ldots, \alpha_\ell, \beta_k, \ldots, \beta_1$ be the enveloping cycle of $H_1^\circ$ so that $a_i \sim \alpha_i$ and $b_j \sim \beta_j$ for all $i,j$. Similarly let $C_2 = \gamma_1,\ldots, \gamma_k, \delta_m, \ldots, \delta_1$ be the enveloping cycle of $H_2^\circ$ so that $c_i \sim \gamma_i$ and $d_j \sim \delta_j$. 

We begin by taking the disjoint union of $H_1^\circ$ and $H_2^\circ$. Now let us subdivide the edges $\alpha_1\beta_1$ and $\alpha _k\beta_\ell$, adding vertices $x_1$ and $y_1$ respectively. If $k=0$, then we instead subdivide the edge $\alpha_1\alpha_\ell$ twice such that $x_1 \sim \alpha_1$ and $y_1 \sim \alpha_\ell$, and we perform an analogous double subdivision if $\ell = 0$. If $\ell = k = 1$, then each edge between $\alpha_1$ and $\beta_1$ is subdivided. We similarly subdivide $\gamma_1\delta_1$ and $\gamma_k\delta_m$, adding vertices $x_2$ and $y_2$ respectively (with the same exceptional cases as above). We now add edges $\beta_i\gamma_i$ for $i = 1,\ldots,k$ as well as the edges $x_1x_2$ and $y_1y_2$. By \Lem{cyclejoin}, the graph obtained after adding these edges is still planar and has facial cycle $C' = x_1,x_2,\delta_1, \ldots, \delta_m, y_2,y_1,\alpha_\ell, \ldots, \alpha_1$. Thus by \Lem{facialcycle} we can add a vertex $z$ adjacent to every vertex of $C'$ other than $x_1,x_2,y_1,y_2$ while remaining planar. Next we delete the edges $\beta_i\beta_{i+1}$ and $\gamma_i\gamma_{i+1}$ for $i = 1,\ldots,k-1$ as well as $x_1\beta_1$, $\beta_k y_1$, $x_2\gamma_1$, and $\gamma_k y_2$. In the case where $k = 0$ we instead delete the edges $x_iy_i$ for $i = 1,2$. For each $i = 1,\ldots, k$, we now have paths $b_i,\beta_i,\gamma_i,c_i$ where the internal vertices (i.e., $\beta_i$ and $\gamma_i$) have degree two. Thus we can unsubdivide these paths, replacing them with edges $b_ic_i$ for each $i = 1,\ldots, k$. Now we can contract the edges $b_ic_i$ for each $i$. At this point, if we removed the cycle $C'$ and the vertex $z$, we would have the graph $H$. To finish, we note that $x_1,x_2$ and $y_1,y_2$ have degree two, and thus we can unsubdivide the respective paths they lie on, replacing them with edges $\alpha_1\delta_1$ and $\delta_m\alpha_\ell$. Our, still planar, graph now contains the cycle $C = \alpha_1, \ldots, \alpha_\ell, \delta_m, \ldots, \delta_1$ and a vertex $z$ adjacent to every vertex of $C$. Furthermore, the vertex $\alpha_i$, respectively $\delta_j$, is adjacent to the vertex of $H$ that $a_i$, respectively $d_j$, became after contracting the edges $b_ic_i$. Of course, this graph is exactly $H^\odot$, and thus we have that $\H = \H_1 \circ \H_2 \in \P(\ell,m)$.
\end{proof}

\begin{figure}[h!]
\begin{subfigure}{\textwidth}
  \centering
  \includegraphics[scale=1.2]{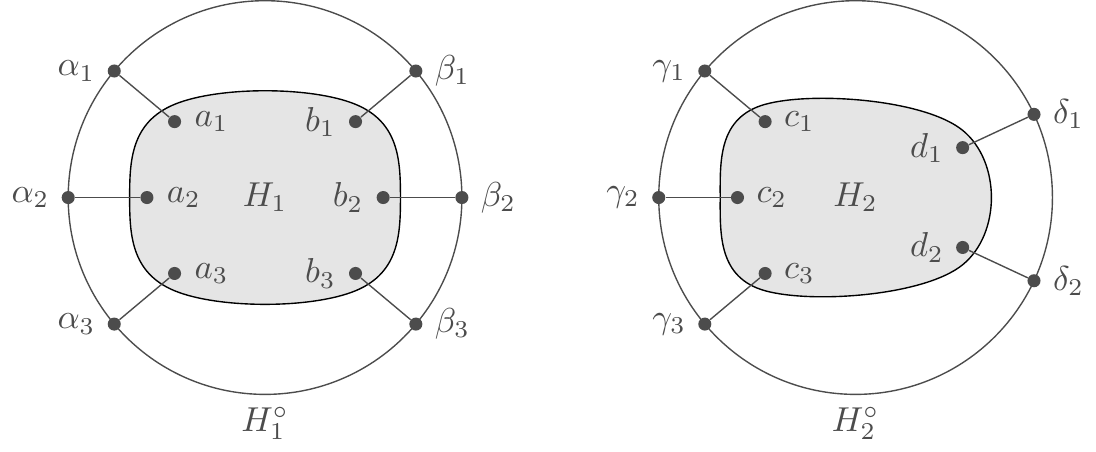}
  \caption{\phantom{a}$H_1^\circ \cup H_2^\circ$}
  \label{fig:compproof1}
\end{subfigure}

\vspace{.4in}

\begin{subfigure}{\textwidth}
  \centering
  \includegraphics[scale=1.2]{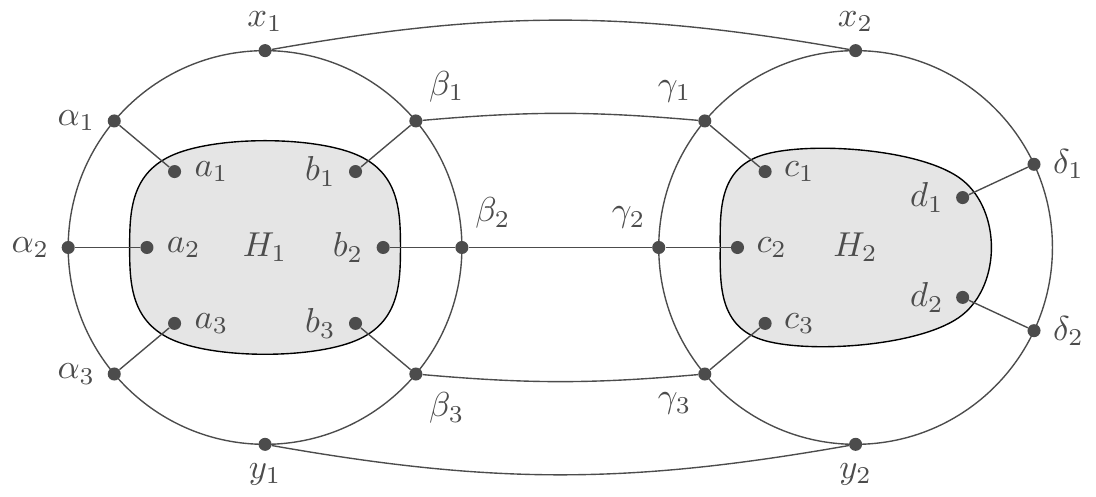}
  \caption{\phantom{a} Subdivide with $x_1,x_2,y_1,y_2$ and add edges $\beta_i\gamma_i$ for $i = 1,2,3$ and $x_1x_2$, $y_1,y_2$.}
  \label{fig:compproof2}
\end{subfigure}

\vspace{.4in}

\begin{subfigure}{\textwidth}
  \centering
  \includegraphics[scale=1.2]{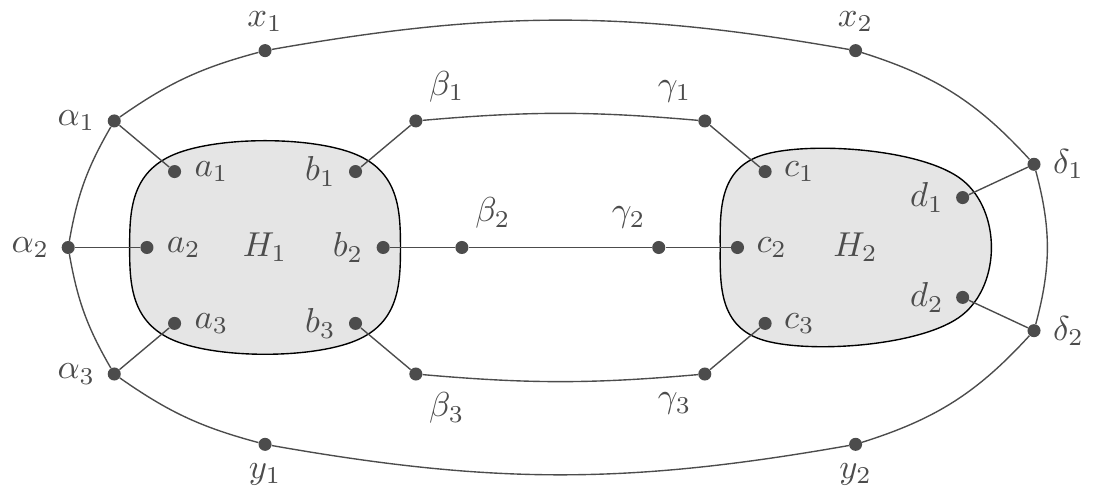}
  \caption{\phantom{a} Remove edges $x_1\beta_1$, $\beta_3y_1$, $x_2\gamma_1$, $\gamma_3y_2$, and $\beta_i\beta_{i+1}$ and $\gamma_i\gamma_{i+1}$ for $i = 1,2$.}
  \label{fig:compproof3}
\end{subfigure}
\end{figure}
\begin{figure}\ContinuedFloat
\begin{subfigure}{.5\textwidth}
  \centering
  \includegraphics[scale=1]{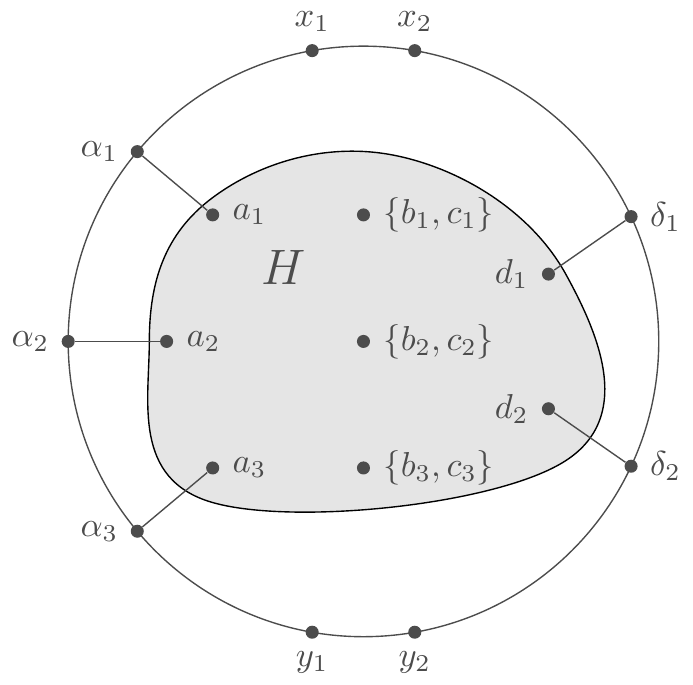}
  \caption{\phantom{a} Unsubdivide the paths $b_i,\beta_i,\gamma_i,c_i$ and then \\ \phantom{iasdf}  contract the edges $b_ic_i$ for $i = 1,2,3$.}
  \label{fig:compproof4}
\end{subfigure}
\begin{subfigure}{.5\textwidth}
\vspace{-.028in}
  \centering
  \includegraphics[scale=1]{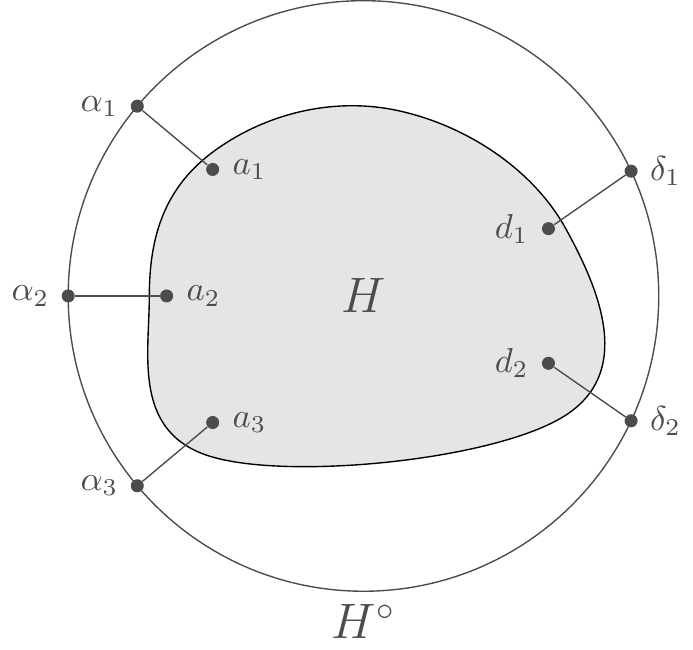}
  \caption{\phantom{a} Unsubdivide $\alpha_1,x_1,x_2,\delta_1$ and $\alpha_3,y_1,y_2,\delta_2$.}
  \label{fig:compproof5}
\end{subfigure}
\caption{Illustration of the proof that $\P$ is closed under composition.}\label{fig:compproof}
\end{figure}

Next we prove that $\P$ is closed under tensor products.

\begin{lemma}\label{lem:tensorclosed}
If $\mathbf{H}_1 \in \P(\ell,k)$ and $\mathbf{H}_2 \in \P(s,t)$, then $\mathbf{H}_1 \otimes \mathbf{H}_2 \in \P(k+s,\ell+t)$.
\end{lemma}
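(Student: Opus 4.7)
The plan is a simpler version of the construction from \Lem{productclosed}: diagrammatically, $\H_1 \otimes \H_2$ is obtained by drawing $\H_2$ below $\H_1$ so that output wires stack on the left and input wires stack on the right. Writing $\H_1 = (H_1,\vec{a},\vec{b})$ and $\H_2 = (H_2,\vec{a'},\vec{b'})$, the task is to show that $(H_1 \cup H_2)^\odot$ associated with the concatenated tuples $\vec{a}\vec{a'}$, $\vec{b}\vec{b'}$ is planar, or equivalently (by \Lem{facialcycle}) that $(H_1 \cup H_2)^\circ$ is planar and its enveloping cycle
\[
C = \alpha_1,\ldots,\alpha_\ell,\alpha'_1,\ldots,\alpha'_s,\beta'_t,\ldots,\beta'_1,\beta_k,\ldots,\beta_1
\]
is facial, where the primed envelope vertices come from $H_2^\circ$ and the unprimed from $H_1^\circ$.

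Assume first that $\ell+k \geq 2$ and $s+t \geq 2$, so the enveloping cycles $C_1$ of $H_1^\circ$ and $C_2$ of $H_2^\circ$ are genuine cycles. Using \Def{PBG2}, fix planar embeddings of $H_1^\circ$ and $H_2^\circ$ in which $C_1$ and $C_2$ are facial. I would relabel each cycle so that $u_1 = \alpha_\ell,\ u_2 = \beta_k$ on $C_1$ and $v_1 = \alpha'_1,\ v_2 = \beta'_1$ on $C_2$, and then apply \Lem{cyclejoin} with $m=2$. This adds exactly the two ``bridge'' edges $\alpha_\ell \alpha'_1$ and $\beta_k \beta'_1$, keeps the graph planar, and produces a new facial cycle which, reading off from the lemma, is precisely $C$ up to rotation.

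The only difference between the graph produced by \Lem{cyclejoin} and $(H_1 \cup H_2)^\circ$ is the presence of two chord edges inside the region bounded by $C$: the edge $\alpha_\ell \beta_k$ inherited from $C_1$ and the edge $\alpha'_1 \beta'_1$ inherited from $C_2$. Deleting these edges preserves planarity and leaves $C$ as a facial cycle, so we obtain a planar embedding of $(H_1 \cup H_2)^\circ$ in which $C$ bounds a face. By \Lem{facialcycle}, $(H_1 \cup H_2)^\odot$ is planar, and therefore $\H_1 \otimes \H_2 \in \P$.

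The degenerate cases $\ell+k \leq 1$ or $s+t \leq 1$ do not need the cycle-joining machinery and can be handled separately: for instance, if $\ell=k=0$ then $H_1 = H_1^\odot$ is planar and $(H_1 \cup H_2)^\odot$ is simply $H_2^\odot$ with the disjoint planar component $H_1$ adjoined, while the other small cases admit direct embeddings (using \Lem{facialcycle} and the fact that the enveloping ``cycle'' is a single looped vertex or a digon). The main obstacle in the proof is not conceptual but purely combinatorial bookkeeping: ensuring that the new facial cycle traverses the outputs of $\H_1$, then the outputs of $\H_2$, then the inputs of $\H_2$ in reverse, then the inputs of $\H_1$ in reverse, in the right cyclic order. \Lem{cyclejoin} is tailor-made for exactly this kind of gluing, which is why the proof is substantially shorter than the composition case.
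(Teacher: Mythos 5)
You take essentially the same route as the paper: fix embeddings of $H_1^\circ$ and $H_2^\circ$ with facial enveloping cycles, glue them with \Lem{cyclejoin}, then delete the two edges that have become chords of the new facial cycle. Where you diverge is in skipping the subdivide/unsubdivide step that the paper performs before gluing, and that shortcut is the source of a genuine gap.

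You set $u_1 = \alpha_\ell$, $u_2 = \beta_k$, $v_1 = \alpha'_1$, $v_2 = \beta'_1$, which tacitly assumes $\ell, k, s, t \ge 1$, but the only cases you set aside as degenerate are $\ell + k \le 1$ and $s + t \le 1$. The cases where exactly one of each pair vanishes --- say $\ell = 0$ with $k \ge 2$ --- satisfy your side condition $\ell + k \ge 2$ but leave $\alpha_\ell$ undefined, so the gluing as written does not exist. In that case the enveloping cycle of $(H_1 \cup H_2)^\circ$ closes through $\beta_1 \alpha'_1$ rather than $\alpha_\ell \alpha'_1$, so the anchor to use is $\beta_1$, not a nonexistent $\alpha_\ell$; this is exactly what the paper's definitions $x_1 = \alpha_\ell$ if $\ell \ne 0$ else $\beta_1$ (and symmetrically $y_1, x_2, y_2$) are for. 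A second, smaller issue arises when $\ell + k = 2$ or $s + t = 2$: the enveloping cycle $C_i$ is a digon, so "the edge $\alpha_\ell \beta_k$" names two parallel edges and you must specify that exactly one is deleted. The paper's subdivision step --- pick one edge, subdivide it with two new vertices, glue along those, delete the new chord, and unsubdivide --- handles the length-one, length-two, and zero-parameter cases uniformly, which is presumably why the authors pay for the extra lines. Your proof can be repaired by adopting that case analysis for the anchoring vertices and being explicit about which parallel edge is removed, but as written it leaves a concrete parameter range unhandled despite claiming to cover it.
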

\begin{proof}
Let $\H_1 = (H_1,\vec{a},\vec{b}) \in \P(\ell,k)$ and $\H_2 = (H_2, \vec{c},\vec{d})  \in \P(s,t)$, and let $H = H_1 \cup H_2$ be the disjoint union of $H_1$ and $H_2$. Then $\H := \H_1 \otimes \H_2 = (H,\vec{a}\vec{c},\vec{b}\vec{d})$. We will show that $\H \in \P(\ell+s,k+t)$ by showing that $H^\circ$ has a planar embedding in which its enveloping cycle is the boundary of the outer face. This is illustrated in \Fig{tensorproof}.

In the exceptional case $\ell = k = 0$, we can take a planar embedding of $H_2^\odot$ and then simply embed $H_1^\odot = H_1$ in one of its faces. This shows that $H^\odot$ is planar in this case, and thus $\H \in \P$. Similarly, the lemma holds if $s = t = 0$.

Otherwise, let $C_1 = \alpha_1, \ldots, \alpha_\ell, \beta_k, \ldots, \beta_1$ be the enveloping cycle of $H_1^\circ$ so that $a_i \sim \alpha_i$ and $b_j \sim \beta_j$ for all $i,j$. Similarly let $C_2 = \gamma_1,\ldots, \gamma_s, \delta_t, \ldots, \delta_1$ be the enveloping cycle of $H_2^\circ$ so that $c_i \sim \gamma_i$ and $d_j \sim \delta_j$. Define
\[x_1 = \begin{cases}\alpha_\ell & \text{if } \ell \ne 0 \\ \beta_1 & \text{o.w.}\end{cases}, \quad y_1 = \begin{cases}\beta_k & \text{if } k \ne 0 \\ \alpha_1 & \text{o.w.}\end{cases}, \quad x_2 = \begin{cases}\gamma_1 & \text{if } s \ne 0 \\ \delta_t & \text{o.w.}\end{cases}, \quad y_2 = \begin{cases}\delta_1 & \text{if } t \ne 0 \\ \gamma_s & \text{o.w.}\end{cases}.\]
Let $e_i$ be the edge of $C_i$ between $x_i$ and $y_i$ (if there are two edges, pick either). Now subdivide $e_i$ with two vertices $x'_i$ and $y'_i$ such that $x'_i \sim x_i$ and $y'_i \sim y_i$. This creates new graphs $H'_1$ and $H'_2$ with facial cycles $C'_1 = \alpha_1, \ldots, \alpha_\ell,x'_1,y'_1,\beta_k,\ldots, \beta_1$ and $C'_2 = \gamma_1, \ldots, \gamma_s,\delta_t,\ldots, \delta_1,y'_2,x'_2$ respectively. Note that $C'_i$ has length at least three for $i = 1,2$. Let $H'$ be the graph obtained from $H'_1 \cup H'_2$ by adding edges $x'_1x'_2$ and $y'_1y'_2$. By \Lem{cyclejoin}, $H'$ is planar and the cycle $C' = \alpha_1, \ldots, \alpha_\ell, x'_1,x'_2, \gamma_1, \ldots, \gamma_s, \delta_t, \ldots, \delta_1, y'_2,y'_1,\beta_k, \ldots, \beta_1$ is facial. Now delete the edges $x'_1y'_1$ and $x'_2y'_2$, which results in these four vertices having degree two. We can then unsubdivide the paths $x_1x'_1x'_2x_2$ and $y_1y'_1y'_2y_2$, replacing them with edges $x_1x_2$ and $y_1y_2$ to obtain a, still planar, graph $H''$ with facial cycle $C'' = \alpha_1, \ldots, \alpha_\ell, \gamma_1, \ldots, \gamma_s, \delta_t,\ldots,\delta_1, \beta_k,\ldots, \beta_1$. It is easy to see that $H'' = H^\circ$ and $C''$ is its enveloping cycle. Thus we have proven that $\H \in \P(\ell+s,k+t)$.

\end{proof}

\begin{figure}
\begin{subfigure}{.5\textwidth}
  \centering
  \includegraphics[scale=1]{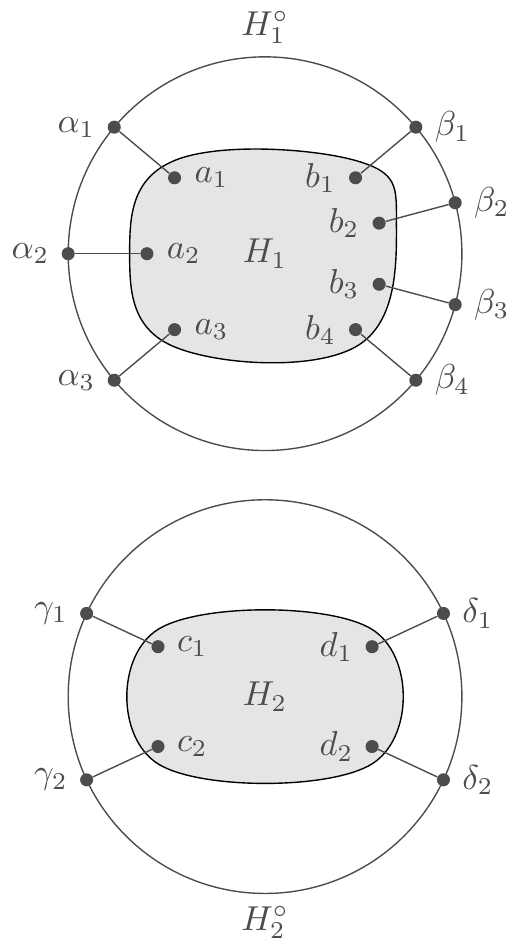}
  \caption{\phantom{a} $H_1^\circ \cup H_2^\circ$.}
  \label{fig:tensorproof1}
\end{subfigure}
\begin{subfigure}{.5\textwidth}
\vspace{-.028in}
  \centering
  \includegraphics[scale=1]{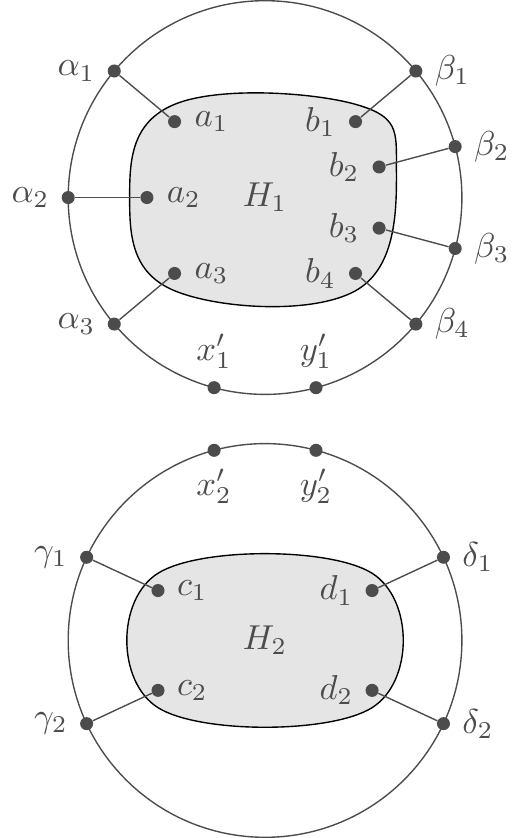}
  \caption{\phantom{a} Subdivide $\alpha_3\beta_4$ with $x'_1$, $y'_2$, and subdivide \\ \phantom{iasdi} $\gamma_1\delta_1$ with $x_2',y_2'$.}
  \label{fig:tensorproof12}
\end{subfigure}

\vspace{.3in}

\begin{subfigure}{.5\textwidth}
\vspace{-.028in}
  \centering
  \includegraphics[scale=1]{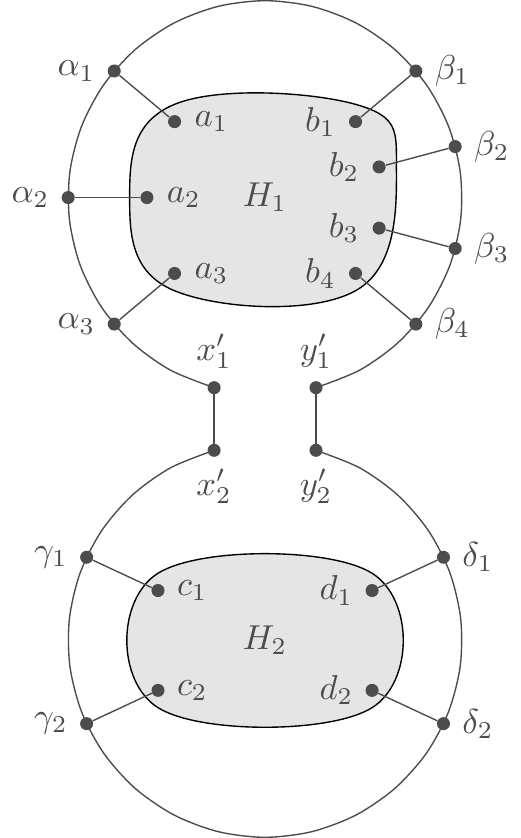}
  \caption{\phantom{a} Add edges $x'_1x'_2$ and $y'_1y'_2$, and delete edges \\ \phantom{iasdi} $x'_1y'_1$ and $x'_2y'_2$.}
  \label{fig:tensorproof2}
\end{subfigure}
\begin{subfigure}{.5\textwidth}
\vspace{-.028in}
  \centering
  \includegraphics[scale=1]{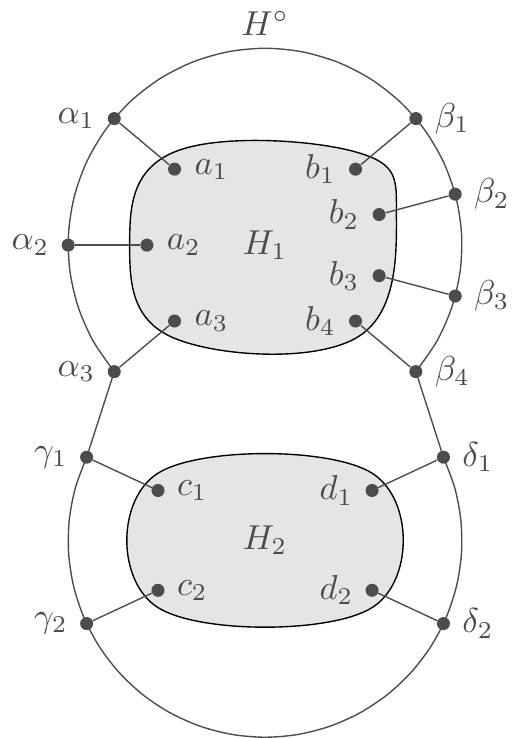}
  \caption{\phantom{a} Unsubdivide $\alpha_3,x'_1,x'_2,\gamma_1$ and $\beta_4,y'_1,y'_2,\delta_1$.}
  \label{fig:tensorproof3}
\end{subfigure}

\caption{Illustration of the proof that $\P$ is closed under tensor product.}\label{fig:tensorproof}
\end{figure}

Finally, we prove that $\P$ is closed under transposition, which is straightforward.

\begin{lemma}\label{lem:starclosed}
If $\H \in \P(\ell,k)$, then $\H^* \in \P(k,\ell)$.
\end{lemma}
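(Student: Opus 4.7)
The plan is to observe that the underlying graphs $H^\odot(\vec{a},\vec{b})$ and $H^\odot(\vec{b},\vec{a})$ are actually isomorphic as graphs, from which the lemma will follow immediately since $\P$-membership is defined purely by planarity of $H^\odot$ (see \Def{PBG}).

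First I would unpack the definitions. Writing $\H = (H,\vec{a},\vec{b}) \in \P(\ell,k)$, the graph $H^\circ(\vec{a},\vec{b})$ is built from $H$ by adjoining the enveloping cycle on new vertices $\alpha_1,\ldots,\alpha_\ell,\beta_k,\ldots,\beta_1$ (traversed in that cyclic order) together with the edges $a_i\alpha_i$ and $b_j\beta_j$. On the other hand, $\H^* = (H,\vec{b},\vec{a}) \in \G(k,\ell)$, so $H^\circ(\vec{b},\vec{a})$ is built from $H$ by adjoining the enveloping cycle on new vertices $\beta'_1,\ldots,\beta'_k,\alpha'_\ell,\ldots,\alpha'_1$ together with the edges $b_j\beta'_j$ and $a_i\alpha'_i$.

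Next I would exhibit the isomorphism $\varphi\colon H^\circ(\vec{a},\vec{b}) \to H^\circ(\vec{b},\vec{a})$ that is the identity on $V(H)$ and sends $\alpha_i \mapsto \alpha'_i$, $\beta_j \mapsto \beta'_j$. The connector edges $a_i\alpha_i$ and $b_j\beta_j$ are clearly preserved. For the enveloping cycle, reading the cycle of $H^\circ(\vec{a},\vec{b})$ in reverse gives the cyclic sequence $\beta_1,\ldots,\beta_k,\alpha_\ell,\ldots,\alpha_1$, which under $\varphi$ is exactly the enveloping cycle of $H^\circ(\vec{b},\vec{a})$ (also read in reverse). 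So the cycle edges match up, and $\varphi$ is a graph isomorphism. Extending $\varphi$ by sending the additional ``apex'' vertex of $H^\odot(\vec{a},\vec{b})$ (adjacent to every vertex of the enveloping cycle) to the corresponding apex vertex of $H^\odot(\vec{b},\vec{a})$ gives an isomorphism $H^\odot(\vec{a},\vec{b}) \cong H^\odot(\vec{b},\vec{a})$.

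Finally, since planarity is an invariant of the isomorphism class of a graph, $H^\odot(\vec{a},\vec{b})$ is planar if and only if $H^\odot(\vec{b},\vec{a})$ is planar. Thus $\H \in \P(\ell,k)$ implies $\H^* \in \P(k,\ell)$. There is essentially no obstacle here: the proof amounts to the observation that reversing a cyclic sequence yields the same cycle, so swapping input and output vectors just traverses the enveloping cycle in the opposite direction. Pictorially, this corresponds exactly to the description in \Rem{drawing} that $\H^*$ is drawn by reflecting a drawing of $\H$ across the vertical axis, which preserves planarity of the resulting picture.
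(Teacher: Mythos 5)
Your proof is correct and follows exactly the same approach as the paper, which simply notes that $H^\odot(\vec{b},\vec{a})$ is isomorphic to $H^\odot(\vec{a},\vec{b})$; you have just spelled out the isomorphism explicitly.
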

\begin{proof}
Let $\H = (H,\vec{a},\vec{b})$. It is easy to see that $H^\odot(\vec{b},\vec{a})$ is isomorphic to $H^\odot(\vec{a},\vec{b})$.
\end{proof}

We define $C^G_\P(\ell,k) = \spn\{T^{\K \to G} : \K \in \P(\ell,k)\}$ and $C_\P^G = \bigcup_{\ell,k = 0}
^\infty C_\P^G(\ell,k)$. The three closure properties of $\P$ proven above along with the correspondence between matrix and bi-labeled graph operations proven in \Sec{correspondence}, and \Rem{xi} imply the following:

\begin{theorem}\label{thm:Ptensorcat}
For any graph $G$, the set $C_\P^G$ is a tensor category with duals.
\end{theorem}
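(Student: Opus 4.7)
The plan is to verify directly each of the six defining properties of a tensor category with duals (as listed after \Def{intertwiners}), using the bi-labeled graph analogues of these operations together with the three closure lemmas just proved and the correspondence lemmas from \Sec{correspondence}.

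First, closure of each $C_\P^G(\ell,k)$ under linear combinations is immediate since we defined $C_\P^G(\ell,k)$ as a span. For closure under composition, take $T \in C_\P^G(\ell,k)$ and $T' \in C_\P^G(k,r)$ and expand them as $T = \sum_i \alpha_i T^{\K_i \to G}$ and $T' = \sum_j \beta_j T^{\K'_j \to G}$ with each $\K_i \in \P(\ell,k)$ and $\K'_j \in \P(k,r)$. By bilinearity of matrix product and \Lem{compcorr},
\[TT' = \sum_{i,j} \alpha_i\beta_j\, T^{\K_i \to G}T^{\K'_j \to G} = \sum_{i,j} \alpha_i\beta_j\, T^{\K_i \circ \K'_j \to G},\]
and by \Lem{productclosed} each $\K_i \circ \K'_j$ lies in $\P(\ell,r)$, so $TT' \in C_\P^G(\ell,r)$. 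The tensor product and adjoint cases are identical in structure: expand in the same way, use \Lem{tensorcorr} (respectively \Lem{transposecorr}) to turn the matrix operation into the corresponding bi-labeled graph operation, and then invoke \Lem{tensorclosed} (respectively \Lem{starclosed}) to conclude that the resulting bi-labeled graphs still belong to $\P$.

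It remains to produce the identity $I \in C_\P^G(1,1)$ and $\xi \in C_\P^G(2,0)$ as explicit elements. Both were already identified combinatorially: by \Rem{Mnotation} we have $M^{1,1} = I$ and $M^{2,0} = \xi$, and by \Ex{M} these equal $T^{\vec{M}^{1,1} \to G}$ and $T^{\vec{M}^{2,0} \to G}$ respectively. \Ex{MinP} (and \Rem{xi}) show $\vec{M}^{1,1}, \vec{M}^{2,0} \in \P$, so indeed $I, \xi \in C_\P^G$.

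There is no real obstacle here: the entire content of the theorem has already been packaged into the three closure lemmas for $\P$ and the three correspondence lemmas for homomorphism matrices, so the proof is essentially a bookkeeping exercise that stitches these together and checks the two distinguished elements by name. The only thing to be slightly careful about is extending each closure property from generators $T^{\K \to G}$ to arbitrary linear combinations, which is handled uniformly by the bilinearity of matrix product, tensor product, and the conjugate-linearity of adjoint (noting that $T^{\K \to G}$ is real, so the adjoint is just the transpose, matching \Lem{transposecorr}).
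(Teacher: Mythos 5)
Your proof is correct and follows essentially the same route the paper does: the paper's proof of this theorem is a one-line invocation of the three closure lemmas (\Lem{productclosed}, \Lem{tensorclosed}, \Lem{starclosed}), the correspondence lemmas from \Sec{correspondence}, and \Rem{xi}, and you have simply filled in the bookkeeping that the paper leaves implicit. Your care about the conjugate in the adjoint case (absorbed by the fact that each $T^{\K \to G}$ is real) is the right observation to make the span argument close cleanly.
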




\subsection{Building blocks}\label{sec:buildingblocks}

In \Sec{tale}, we will prove that $C_q^G = C_\P^G$ for any graph $G$ by showing that $\langle \vec{M}^{1,0},\vec{M}^{1,2},\vec{A} \rangle_{\circ,\otimes,*} = \P$. To prove the latter, we will need to show that we can build any element of $\P$ using some basic elements of $\langle \vec{M}^{1,0},\vec{M}^{1,2},\vec{A} \rangle_{\circ,\otimes,*}$. In this section we introduce these basic elements of $\langle \vec{M}^{1,0},\vec{M}^{1,2},\vec{A}\rangle_{\circ,\otimes,*}$.

\begin{definition}
For any integer $d \ge 0$, define $S^d$ to be the graph with vertex set $V(S^d) = \{v,v_1,\ldots,v_d\}$ and edges $\{vv_i :  i = 1, \ldots, d\}$. Thus $S^d$ is a star graph on $d+1$ vertices. Also, for any integers $m,d \ge 0$ define $\vec{S}^{m,d} = (S^d,(v^m),(v_1,\ldots, v_d))$, $\vec{S}_L^{m,d} = (S^d,(v^m),(v,v_1,\ldots, v_d))$, and $\vec{S}_R^{m,d} = (S^d,(v^m),(v_1,\ldots, v_d,v))$. We similarly define $\mathring{\vec{S}}^{m,d}$, $\mathring{\vec{S}}_L^{m,d}$, and $\mathring{\vec{S}}_R^{m,d}$ which are the same as above except that the vertex $v$ has a loop. 
\end{definition}

\begin{remark}
Note that $\vec{S}_L^{1,0} = \vec{S}_R^{1,0} = \vec{I}$, and $\vec{S}^{1,1} = \vec{A}$.

\end{remark}

\begin{lemma}\label{lem:starmaps}
For any integers $m,d \ge 0$, we have the following identities:
\begin{align*}
&\vec{S}^{m,d} = \vec{M}^{m,d} \circ \vec{A}^{\otimes d}, \quad
\vec{S}_R^{m,d} = \vec{M}^{m,d+1} \circ (\vec{A}^{\otimes d} \otimes \vec{I}), \quad
\vec{S}_L^{m,d} = \vec{M}^{m,d+1} \circ (\vec{I} \otimes \vec{A}^{\otimes d}), \\
&\mathring{\vec{S}}^{m,d} = \mathring{\vec{M}}^{m,d} \circ \vec{A}^{\otimes d}, \quad
\mathring{\vec{S}}_R^{m,d} = \mathring{\vec{M}}^{m,d+1} \circ (\vec{A}^{\otimes d} \otimes \vec{I}), \quad
\mathring{\vec{S}}_L^{m,d} = \mathring{\vec{M}}^{m,d+1} \circ (\vec{I} \otimes \vec{A}^{\otimes d}).
\end{align*}
In particular this implies that $\vec{S}^{m,d}, \vec{S}_L^{m,d}, \vec{S}_R^{m,d}, \mathring{\vec{S}}^{m,d}, \mathring{\vec{S}}_L^{m,d}, \mathring{\vec{S}}_R^{m,d} \in \langle \vec{M}^{1,0},\vec{M}^{1,2},\vec{A} \rangle_{\circ,\otimes,*}$.
\end{lemma}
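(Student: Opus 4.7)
The plan is to verify each of the six identities directly from the definition of composition of bi-labeled graphs, and then invoke Lemmas~\ref{lem:Mgraphs} and~\ref{lem:Mloopgraphs} to conclude membership in $\langle \vec{M}^{1,0}, \vec{M}^{1,2}, \vec{A}\rangle_{\circ,\otimes,*}$.

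I will begin with the first identity $\vec{S}^{m,d} = \vec{M}^{m,d} \circ \vec{A}^{\otimes d}$. Recall that $\vec{A}^{\otimes d}$ has as its underlying graph the disjoint union of $d$ copies of $K_2$; let its $i$-th copy have vertices $u_i \sim w_i$, so that its output vector is $(u_1,\ldots,u_d)$ and its input vector is $(w_1,\ldots,w_d)$. The bi-labeled graph $\vec{M}^{m,d}$ has a single underlying vertex $v$ (no loop), with output $(v^m)$ and input $(v^d)$. By \Def{comp}, the composition $\vec{M}^{m,d} \circ \vec{A}^{\otimes d}$ is formed from the disjoint union of these two graphs by adding edges $e_i = v u_i$ for each $i \in [d]$ and then contracting all of them. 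Since all the $e_i$ share the endpoint $v$, contraction identifies $v$ with all of $u_1,\ldots,u_d$ into one single vertex, call it $v^*$, which is adjacent to $w_1,\ldots,w_d$ and carries no loop. The output vector becomes $(v^{*m})$ and the input vector is $(w_1,\ldots,w_d)$, which is precisely $\vec{S}^{m,d}$.

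The identities for $\vec{S}_R^{m,d}$ and $\vec{S}_L^{m,d}$ follow along the same lines. For $\vec{S}_R^{m,d}$, the second factor $\vec{A}^{\otimes d} \otimes \vec{I}$ has an additional isolated vertex $z$ contributed by $\vec{I}$, appended at the bottom of both its output and input vectors. Composing with $\vec{M}^{m,d+1}$ glues $v$ to each of $u_1,\ldots,u_d$ and to $z$, so again all these vertices collapse to a single loopless vertex $v^*$ adjacent to $w_1,\ldots,w_d$, and the resulting input vector is $(w_1,\ldots,w_d,v^*)$ — exactly $\vec{S}_R^{m,d}$. The case $\vec{S}_L^{m,d}$ is identical except that $\vec{I}$ is tensored on the left, placing $v^*$ at the start of the input vector. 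The three looped identities $\mathring{\vec{S}}^{m,d}, \mathring{\vec{S}}_L^{m,d}, \mathring{\vec{S}}_R^{m,d}$ are proved word-for-word the same way, the only difference being that the sole vertex of $\mathring{\vec{M}}^{m,d}$ (respectively $\mathring{\vec{M}}^{m,d+1}$) carries a loop, and contraction preserves it, so the resulting center vertex $v^*$ is looped.

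Finally, to conclude the ``in particular'' statement, note that by \Lem{Mgraphs} we have $\vec{M}^{m,d}, \vec{M}^{m,d+1} \in \langle \vec{M}^{1,0},\vec{M}^{1,2}\rangle_{\circ,\otimes,*}$, and by \Lem{Mloopgraphs} we have $\mathring{\vec{M}}^{m,d}, \mathring{\vec{M}}^{m,d+1} \in \langle \vec{M}^{1,0},\vec{M}^{1,2},\vec{A}\rangle_{\circ,\otimes,*}$. Since each of the six bi-labeled graphs in the lemma is expressed as a composition of one of these with a tensor product of copies of $\vec{A}$ and $\vec{I}$ (and $\vec{I} = \vec{M}^{1,1}$ is already in $\langle \vec{M}^{1,0},\vec{M}^{1,2}\rangle_{\circ,\otimes,*}$ by \Lem{Mgraphs}), the conclusion follows. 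No step here is a real obstacle; the whole argument is a mechanical unwinding of definitions, and the only thing to be careful about is keeping track of the order of vertices in the output/input tuples after contraction so that one recovers the correct bi-labeled graph on the nose rather than just up to isomorphism.
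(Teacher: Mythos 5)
Your proof is correct and follows essentially the same approach as the paper's: a direct unwinding of the definition of composition, verifying that identifying the single center vertex of $\vec{M}^{m,d}$ (resp.~$\vec{M}^{m,d+1}$, or the looped variants) with the outputs of $\vec{A}^{\otimes d}$ (and the isolated vertex of $\vec{I}$ when present) yields the desired star, followed by an appeal to Lemmas~\ref{lem:Mgraphs} and~\ref{lem:Mloopgraphs} for the ``in particular'' claim. Your write-up is somewhat more explicit than the paper's, which only spells out the first case and leaves the rest as ``similar,'' but the substance is identical.
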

\begin{proof}
We first show that $\vec{S}^{m,d} = \vec{M}^{m,d} \circ \vec{A}^{\otimes d}$. We have that $\vec{M}^{m,d} = (K_1,  (v^m),(v^d))$ and $\vec{A}^{\otimes d} = (dK_2, (u_1, \ldots, u_d),(v_1,\ldots,v_d))$ where $v$ is the only vertex of $K_1$ and $u_i,v_i$ are the vertices in the $i^\text{th}$ copy of $K_2$. Thus in $\vec{M}^{m,d} \circ \vec{A}^{\otimes d}$ all of the $u_i$'s are identified with the vertex $v$ and we obtain $(S^d,(v^m),(v_1, \ldots, v_d))$ as desired. 
Similarly, we can show that $\vec{S}_R^{m,d} = \vec{M}^{m,d+1} \circ (\vec{A}^{\otimes d} \otimes \vec{I})$ and $\vec{S}_L^{m,d} = \vec{M}^{m,d+1} \circ (\vec{I} \otimes \vec{A}^{\otimes d})$, and the analogous equalities for the looped varieties. Since $\vec{M}^{\ell,k}, \mathring{\vec{M}}^{\ell,k} \in \langle \vec{M}^{1,0},\vec{M}^{1,2}, \vec{A}\rangle_{\circ,\otimes,*}$ for all $\ell,k \ge 0$ by \Lem{Mgraphs} and \Lem{Mloopgraphs}, it follows that $\vec{S}^{m,d}, \vec{S}_L^{m,d}, \vec{S}_R^{m,d}, \mathring{\vec{S}}^{m,d}, \mathring{\vec{S}}_L^{m,d}, \mathring{\vec{S}}_R^{m,d} \in \langle \vec{M}^{1,0},\vec{M}^{1,2},\vec{A} \rangle_{\circ,\otimes,*}$ as desired.

\end{proof}




\section{A Tale of Two Tensor Categories}\label{sec:tale}

At the end of \Sec{closureprops}, we saw that $C_\P^G = \spn\{T^{\K \to G} : \K \in \P\}$ forms a tensor category with duals. In this section we will prove that $C_q^G = C_\P^G$ for any graph $G$. To do this, we will use \Thm{reformulation} which states that $C_q^G = \spn\{T^{\K \to G} : \K \in \langle \vec{M}^{1,0},\vec{M}^{1,2},\vec{A}\rangle_{\circ,\otimes,*}\}$. Thus, in order to prove $C_q^G = C_\P^G$, we will prove that $\P = \langle \vec{M}^{1,0},\vec{M}^{1,2},\vec{A}\rangle_{\circ,\otimes,*}$. One direction is immediate from the results we have already proven:

\begin{lemma}\label{lem:qinP}
We have that $\langle \vec{M}^{1,0},\vec{M}^{1,2},\vec{A}\rangle_{\circ,\otimes,*} \subseteq \P$. As a consequence, we have that $C_q^G \subseteq C_\P^G$ for any graph $G$.
\end{lemma}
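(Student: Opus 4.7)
The plan is to observe that both assertions follow directly by assembling results that have already been proven in the excerpt, so the proof is essentially a bookkeeping argument with no new content.

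First I would verify that the three generators lie in $\P$. From \Ex{MinP} we have $\vec{M}^{\ell,k} \in \P(\ell,k)$ for all $\ell,k \ge 0$; specializing to $(\ell,k) = (1,0)$ and $(1,2)$ gives $\vec{M}^{1,0}, \vec{M}^{1,2} \in \P$. From \Ex{AinP} we have $\vec{A} \in \P(1,1)$.

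Second, I would invoke the three closure properties of $\P$ established in \Sec{closureprops}: closure under composition (\Lem{productclosed}), closure under tensor product (\Lem{tensorclosed}), and closure under transposition (\Lem{starclosed}). Since $\langle \vec{M}^{1,0},\vec{M}^{1,2},\vec{A}\rangle_{\circ,\otimes,*}$ is by definition the smallest class containing these three bi-labeled graphs and closed under composition, tensor product, and transposition, a straightforward induction on the construction of an element $\K \in \langle \vec{M}^{1,0},\vec{M}^{1,2},\vec{A}\rangle_{\circ,\otimes,*}$ shows $\K \in \P$. This establishes the first containment.

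For the consequence, I would apply \Thm{reformulation}, which states that $C_q^G = \spn\{T^{\K \to G} : \K \in \langle\vec{M}^{1,0},\vec{M}^{1,2},\vec{A}\rangle_{\circ,\otimes,*}\}$. The first part of the lemma implies that every such $\K$ belongs to $\P$, so every matrix $T^{\K \to G}$ appearing in the generating set of $C_q^G$ also appears in the generating set of $C_\P^G = \spn\{T^{\K \to G} : \K \in \P\}$. Taking spans yields $C_q^G \subseteq C_\P^G$. There is no main obstacle here; the difficulty has been entirely absorbed into the earlier lemmas (especially \Lem{productclosed} and \Lem{tensorclosed}), and this lemma serves merely to record the inclusion that will be complemented by the reverse inclusion in the remainder of \Sec{tale}.
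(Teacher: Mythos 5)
Your proof is correct and follows the same route as the paper: verify the generators lie in $\P$ via \Ex{MinP} and \Ex{AinP}, invoke the closure properties from \Sec{closureprops}, and then derive the containment of intertwiner spaces from \Thm{reformulation}. The only difference is cosmetic — you spell out the span argument for the second part, whereas the paper leaves it implicit.
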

\begin{proof}
In \Ex{MinP}, we saw that $\vec{M}^{\ell,k} \in \P$ for all $\ell,k$, thus in particular $\vec{M}^{1,2},\vec{M}^{1,0} \in \P$. In \Ex{AinP} we saw that $\vec{A} \in \P$. Since, as proved in \Sec{closureprops}, $\P$ is closed under composition, tensor product, and transposition, the claim immediately follows.
\end{proof}

To prove the other containment, we will show that any bi-labeled graph $\vec{K}  = (K,\vec{a},\vec{b}) \in {\P}$ with $|V(K)| \ge 2$ can be written in terms of $\vec{I}$, $\vec{S}^{m,d}, \vec{S}_L^{m,d}, \vec{S}_R^{m,d}, \mathring{\vec{S}}^{m,d}, \mathring{\vec{S}}_L^{m,d}, \mathring{\vec{S}}_R^{m,d}$, and a $\vec{K'} = (K',\vec{a'},\vec{b'}) \in \P$ with $|V(K')| = |V(K)| - 1$. This will allow us to use induction to prove that $\P \subseteq \langle \vec{M}^{1,0},\vec{M}^{1,2},\vec{A}\rangle_{\circ,\otimes,*}$. First we will need to prove that the input and output vectors of a bi-labeled graph in $\P$ must satisfy certain conditions. For this, we introduce the partition associated to a bi-labeled graph:

\begin{definition}\label{def:graphpartition}
Given a set $V$ and a tuple $\vec{c} = (c_1, \ldots, c_n) \in V^n$ of elements of $V$, we define $\mathbb{P}_\vec{c}$ as the partition of $[n]$ with parts $P_v = \{i \in [n] : c_i = v\}$ for each $v \in V$. Note that we allow empty parts in our partitions, which occurs precisely when there is some element of $V$ not occurring in $\vec{c}$.

Let $\vec{K}  = (K,\vec{a},\vec{b}) \in \G(\ell,k)$ be a bi-labeled graph, and let $\vec{c} = (c_1, \ldots, c_{\ell + k}) = (a_1, \ldots, a_\ell,b_k,\ldots, b_1)$. We define the \emph{partition associated to $\vec{K}$}, denoted $\mathbb{P}_{\vec{K}}$, as the partition $\mathbb{P}_\vec{c}$ (with underlying set $V(K)$).
\end{definition}

We remark that the partition $\mathbb{P}_\vec{c}$ depends also on the set $V$, not just on $\vec{c}$. But $V$ only affects the number of empty parts of $\mathbb{P}_\vec{c}$, and for the most part it will be clear what $V$ is from context.


Recall that a partition $\mathbb{P}$ of $[n]$ is \emph{non-crossing} if whenever $a < b < c < d$, and $a,c$ are in the same part  and $b,d$ are in the same part, then the two parts coincide.

\begin{lemma}\label{lem:noncross}
For any bi-labeled graph $\K \in \P$, the partition $\mathbb{P}_\K$ is non-crossing.
\end{lemma}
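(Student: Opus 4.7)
The plan is to fix, via \Def{PBG2} and \Rem{facial}, a planar embedding of $K^\circ$ in which the enveloping cycle $C = \alpha_1, \ldots, \alpha_\ell, \beta_k, \ldots, \beta_1$ bounds the outer face, so that $K$ is embedded inside the Jordan region bounded by $C$. Labelling the cyclic order of $C$ as $w_1, \ldots, w_{\ell+k}$ with $w_i = \alpha_i$ for $i \le \ell$ and $w_{\ell+j} = \beta_{k+1-j}$ for $j \in [k]$, the construction of $K^\circ$ guarantees that the $i$-th entry $c_i$ of the tuple $(a_1, \ldots, a_\ell, b_k, \ldots, b_1)$ defining $\mathbb{P}_{\vec{K}}$ is precisely the unique neighbour of $w_i$ in $V(K)$. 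A crossing of $\mathbb{P}_{\vec{K}}$ therefore translates directly into an assertion about $K^\circ$: namely, the existence of indices $a<b<c<d$ in $[\ell+k]$ and distinct $u,v \in V(K)$ such that the edges $uw_a, uw_c, vw_b, vw_d$ are all present.

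Assuming such a crossing, I would then consider the simple closed curve $J$ obtained by concatenating the embedded edges $uw_a$ and $uw_c$ with the arc $A_1$ of $C$ from $w_a$ through $w_b$ to $w_c$; writing $A_2 := C \setminus A_1$, the cyclic order $a<b<c<d$ forces $w_b \in A_1 \setminus \{w_a,w_c\}$ and $w_d \in A_2 \setminus \{w_a,w_c\}$. The curve $J$ sits inside the closed disk $\overline{D}$ bounded by $C$ and, by the Jordan curve theorem, splits the open disk $D$ into two open regions $B_J$ (the bounded side of $J$) and $R_2 := D \setminus \overline{B_J}$, with $w_b \in \overline{B_J} \setminus \overline{R_2}$ and $w_d \in \overline{R_2} \setminus \overline{B_J}$. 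Since $v \ne u$ and cycle vertices lie outside $V(K)$, the embedded point of $v$ is not on $J$, hence lies in $B_J \cup R_2$; by the obvious symmetry between $(B_J, w_d)$ and $(R_2, w_b)$, I may restrict to the case $v \in B_J$ and focus on the edge $vw_d$.

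The main obstacle is ruling out the embedding of $vw_d$ in this configuration. I would first argue that $vw_d$ must stay inside $\overline{D}$: by planarity it cannot meet any edge of $C$ at an interior point, and by \Def{planarity} it cannot pass through any vertex other than its own endpoints $v$ and $w_d$, so its only contact with $C$ is at $w_d$ itself. Inside $\overline{D}$, however, the common boundary of $B_J$ and $R_2$ is exactly $\{u\} \cup (uw_a) \cup (uw_c)$; since $w_d \notin \overline{B_J}$, the curve realising $vw_d$ must meet this common boundary at an interior point of $vw_d$. But \Def{planarity} forbids $vw_d$ from sharing an interior point with the distinct edges $uw_a$ or $uw_c$, and also from passing through the vertex $u$ (since $u \notin \{v, w_d\}$). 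This yields the required contradiction and hence that $\mathbb{P}_{\vec{K}}$ is non-crossing.
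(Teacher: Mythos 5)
Your proof is correct, but it takes a genuinely different route from the paper's. The paper argues combinatorially: given a crossing with $c_{i_1}=c_{i_3}=u$ and $c_{i_2}=c_{i_4}=v$, it exhibits six internally disjoint paths in $K^\circ$ witnessing a $K_4$-subdivision on the four cycle vertices $\gamma_{i_1},\dots,\gamma_{i_4}$; since those four all lie on the enveloping cycle, the apex vertex of $K^\odot$ upgrades this to a $K_5$-subdivision, contradicting planarity via Kuratowski's theorem. You instead work directly with a planar embedding of $K^\circ$ in which $C$ bounds the outer face (via \Def{PBG2} and \Rem{facial}), build the Jordan curve $J$ from the two edges $uw_a, uw_c$ and the arc $A_1$ of $C$, and derive a contradiction from the fact that the edge $vw_d$ (or symmetrically $vw_b$) would have to cross $J$ while planarity and the embedding rule out every potential crossing point. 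The two approaches buy slightly different things: the paper's version stays purely graph-theoretic, needs no chosen embedding, and fits the paper's general reliance on $K^\odot$-planarity and Kuratowski/Wagner; yours is more elementary in the sense of not invoking Kuratowski, at the price of some careful bookkeeping about the embedding (that the arc $vw_d$ stays in $\overline D$, that $v$ is not on $J$, that edges cannot pass through vertices, and so on), most of which you handle correctly even if a few steps are stated tersely. In short: a correct, alternative topological proof.
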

\begin{proof}
Let $\K = (K,\vec{a},\vec{b}) \in \P(\ell,k)$ and let $(c_1, \ldots, c_{\ell+k}) = (a_1, \ldots, a_\ell,b_k,\ldots,b_1)$. Suppose that the partition $\mathbb{P}_\K$ is not non-crossing. Then there exist $i_1 < i_2 < i_3 < i_4 \in [\ell+k]$ and $u \ne v \in V(K)$ such that $c_{i_1} = c_{i_3} = u$ and $c_{i_2} = c_{i_4} = v$. Let $C = \alpha_1, \ldots, \alpha_\ell, \beta_k, \ldots, \beta_1$ be the enveloping cycle of $K^\circ$, and let $(\gamma_1, \ldots, \gamma_{\ell + k}) = (\alpha_1, \ldots, \alpha_\ell, \beta_k, \ldots, \beta_1)$. The graph $K^\circ$ contains the paths
\begin{align*}
P_1 &= \gamma_{i_1}, \gamma_{i_1 + 1}, \ldots, \gamma_{i_2} \\
P_2 &= \gamma_{i_2}, \gamma_{i_2 + 1}, \ldots, \gamma_{i_3} \\
P_3 &= \gamma_{i_3}, \gamma_{i_3 + 1}, \ldots, \gamma_{i_4} \\
P_4 &= \gamma_{i_4}, \gamma_{i_4 + 1}, \ldots, \gamma_{i_1} \\
P_5 &= \gamma_{i_1}, u,\gamma_{i_3} \\
P_6 &= \gamma_{i_2}, v,\gamma_{i_4}
\end{align*}
which are all internally disjoint (i.e., the only vertices they share are their end vertices). Thus $K^\circ$ contains a $K_4$ subdivision where $\gamma_{i_1}, \gamma_{i_2}, \gamma_{i_3}$, and $\gamma_{i_4}$ correspond to the four vertices of the $K_4$. Since all four of these vertices are on the enveloping cycle $C$, the graph $K^\odot$ contains a $K_5$ subdivision and is therefore not planar, a contradiction.
\end{proof}

Using the above we can show that for any $\K \in \P$ there is always a vertex of $K$ whose neighbors on the enveloping cycle form a consecutive sequence. Formally, we say that an element $v$ \emph{occurs consecutively} in a tuple $\vec{c} = (c_1, \ldots, c_n)$ if there exists $i \in [n]$ and $r \in \{0, \ldots, n-1\}$ such that $c_{i+j} = v$ for all $j \in \{0,\ldots,r\}$ and $c_{i+j} \ne v$ for $j \in \{r+1, \ldots, n-1\}$, where all indices are taken modulo $n$. We will actually need the following stronger lemma, which implies that there is alway some vertex whose neighbors on the enveloping cycle are consecutive. In the lemma and corollary below the indices are always to be taken modulo $n$. 

\begin{lemma}\label{lem:consecutive}

Let $V$ be a set and let $\vec{c} = (c_1, \ldots, c_n) \in V^n$ be such that $\mathbb{P}_\vec{c}$ is non-crossing. If there exists $v \in V$ and some $1 < r \le n$ such that $c_i = v = c_{i+r}$ and $c_{i+s} \ne v$ for all $s = 1, \ldots, r-1$, then there exists $u \in V$ that occurs consecutively in $(c_1, \ldots, c_{n})$ and only occurs among $c_{i+1}, \ldots, c_{i+r-1}$.
\end{lemma}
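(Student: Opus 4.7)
The plan is to reduce to a linear (non-cyclic) statement by rotation, then use non-crossing to trap all occurrences of the relevant elements inside a short interval, and finally invoke a standard fact about non-crossing partitions of a linearly ordered set. First, I would use that the non-crossing property of a partition of $\{1, \ldots, n\}$ is invariant under cyclic rotation of $\{1, \ldots, n\}$: this is classical, and is transparent from the picture of drawing each part as a convex region inside a disk with $1, \ldots, n$ placed on its boundary. Cyclically rotating $\vec{c}$ so that position $i$ is moved to position $n$ and position $i+r$ to position $r$, I may assume without loss of generality that $c_n = c_r = v$, that $c_1, \ldots, c_{r-1} \ne v$, and that $\mathbb{P}_{\vec{c}}$ is non-crossing with respect to the usual linear order on $\{1, \ldots, n\}$.

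Next, I would establish a confinement claim: every element $u$ appearing at some position $j \in \{1, \ldots, r-1\}$ has all of its occurrences inside $\{1, \ldots, r-1\}$. Since $c_j \ne v$ we have $u \ne v$, so $u$ cannot occur at position $r$ or at position $n$; and if $u$ also occurred at some $t \in \{r+1, \ldots, n-1\}$, then the chain $j < r < t < n$ would place $j, t$ in the $u$-part and $r, n$ in the $v$-part, so these two distinct parts would be interleaved, contradicting non-crossing. Consequently, restricting $\mathbb{P}_{\vec{c}}$ to $\{1, \ldots, r-1\}$ yields a non-crossing partition of $\{1, \ldots, r-1\}$ whose nonempty parts are precisely the full parts of $\mathbb{P}_{\vec{c}}$ corresponding to elements appearing in $c_1, \ldots, c_{r-1}$.

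Finally, I would invoke the standard fact that every non-crossing partition of a finite linearly ordered set contains at least one part that is a consecutive range of elements. A quick proof: among all nonempty parts choose one $Q$ of minimum spread $\max Q - \min Q$; if $Q$ were not a consecutive range, then some $b \in \{\min Q + 1, \ldots, \max Q - 1\}$ would lie in a different part $Q'$, and non-crossing would force $Q' \subseteq \{\min Q + 1, \ldots, \max Q - 1\}$, strictly decreasing the spread, a contradiction. Applied to the restriction on $\{1, \ldots, r-1\}$, this produces an element $u$ whose occurrences form a consecutive range $\{a, \ldots, b\} \subseteq \{1, \ldots, r-1\}$. Undoing the rotation, $u$ occurs consecutively in the cyclic sequence $(c_1, \ldots, c_n)$ and appears only among $c_{i+1}, \ldots, c_{i+r-1}$, as required. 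The only conceptually nontrivial point is the rotation-invariance of non-crossing at the outset; the rest of the argument is a short direct consequence of the definition.
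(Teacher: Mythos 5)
Your proof is correct and takes essentially the same route as the paper: both establish the confinement of elements appearing in positions $i+1,\ldots,i+r-1$ (using the two $v$-occurrences as a separating pair), and both then select a minimum-spread element and derive a contradiction if it is not consecutive. The only difference is presentational — you first rotate to a linear picture and invoke rotation-invariance of non-crossing explicitly, whereas the paper applies the same minimum-spread argument directly in terms of the cyclically indexed $s$-values.
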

\begin{proof}
Let $\vec{c}$, $v$, $i$, and $r$ be as in the lemma statement. Define $U = \{c_{i+s} : 0 < s < r\}$. Since $\mathbb{P}_\vec{c}$ is non-crossing, $c_j \not\in U$ for $j \not\in \{i+1, \ldots, i+r-1\}$. Pick $u \in U$ such that $\max\{s : c_{i+s} = u\} - \min\{s : c_{i+s} = u\}$ is minimized. Suppose that $u$ does not occur consecutively in $\vec{c}$. Then there exists $u' \in U$ such that $c_{i+t} = u'$ where $\min\{s : c_{i+s} = u\} < t < \max\{s : c_{i+s} = u\}$. However, again by the non-crossing property of $\mathbb{P}_\vec{c}$, it follows that $\max\{s : c_{i+s} = u'\} < \max\{s : c_{i+s} = u\}$ and $\min\{s : c_{i+s} = u'\} > \min\{s : c_{i+s} = u\}$. Therefore,
\[\max\{s : c_{i+s} = u'\} - \min\{s : c_{i+s} = u'\} < \max\{s : c_{i+s} = u\} - \min\{s : c_{i+s} = u\},\]
a contradiction to our choice of $u$.
\end{proof}


\begin{cor}\label{cor:consecutive}
Suppose that $\vec{c} \in V^n$ for $n \ge 1$ is such that $\mathbb{P}_\vec{c}$ is a non-crossing partition. Then there exists an element $v \in V$ that occurs consecutively in $\vec{c}$. In particular, if $\mathbb{P}_\vec{c} = \mathbb{P}_\vec{K}$ for some $\vec{K} = (K,\vec{a},\vec{b}) \in \P(\ell,k)$ for $\ell + k > 0$, then there exists a vertex $v \in V(K)$ that occurs consecutively in $(a_1,\ldots,a_\ell,b_k,\ldots,b_1)$. 
\end{cor}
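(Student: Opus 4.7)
The plan is to derive \Cor{consecutive} almost immediately from \Lem{consecutive}, with the ``in particular'' statement following by combining with \Lem{noncross}.

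For the second (``in particular'') claim, note that if $\K \in \P(\ell,k)$, then by \Lem{noncross} the partition $\mathbb{P}_\K = \mathbb{P}_\vec{c}$ with $\vec{c} = (a_1,\ldots,a_\ell,b_k,\ldots,b_1)$ is non-crossing. Since $\ell+k \ge 1$, the tuple $\vec{c}$ has length $n \ge 1$, so the first statement of the corollary applied to $\vec{c}$ yields the desired vertex of $V(K)$ occurring consecutively. Thus all the work is in the first statement.

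For the first statement, I would reason as follows. Since $n \ge 1$, the tuple $\vec{c}$ is nonempty, so choose any $v \in V$ that appears in $\vec{c}$ (for instance, $v = c_1$). There are two cases. If $v$ occurs only once in $\vec{c}$, then it occurs consecutively (taking $r = 0$ in the definition) and we are done. Otherwise $v$ appears at least twice; if $v$ appears at every position of $\vec{c}$, then $v$ also trivially occurs consecutively (taking $r = n-1$). In the remaining case, $v$ appears at some but not all positions, so cyclically scanning $\vec{c}$ there must exist an index $i$ with $c_i = v$ and $c_{i+1} \neq v$, and then a smallest $r > 1$ with $c_{i+r} = v$. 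This gives the hypothesis of \Lem{consecutive}: $c_i = v = c_{i+r}$ and $c_{i+s} \neq v$ for $0 < s < r$. Applying \Lem{consecutive} produces an element $u \in V$ which occurs consecutively in $\vec{c}$, completing the proof.

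The main obstacle is essentially nonexistent here because the substantive combinatorial content has already been proven in \Lem{consecutive}; the only care required is in verifying the degenerate cases (single occurrence of $v$, or $v$ occupying every position) and making sure indices are handled cyclically modulo $n$. So this really amounts to a short deduction rather than any new argument.
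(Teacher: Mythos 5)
Your proof is correct and follows essentially the same route as the paper's: pick any $v$ occurring in $\vec{c}$, dispose of the degenerate case where $v$ fills every position, and otherwise set up the hypotheses of \Lem{consecutive}; the ``in particular'' clause is then immediate from \Lem{noncross}. The only cosmetic difference is that you split off the case where $v$ occurs exactly once, whereas the paper absorbs it by allowing $r = n$ in the application of \Lem{consecutive}.
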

\begin{proof}
Consider any element $v \in V$ that occurs in $\vec{c}$. Either every entry of $\vec{c}$ is equal to $v$ and then $v$ occurs consecutively and we are done, or there are $i,r \in [\ell+k]$ with $1 < r \le \ell + k$ such that $c_i = c_{i+r} = v$  and $c_{i+s} \ne v$ for all $s = 1, \ldots, r-1$. In this case we can apply \Lem{consecutive} and we are done.

\end{proof}

We are now able to prove the main lemma which serves as the inductive step of our proof that $\P \subseteq \langle\vec{M}^{1,0}, \vec{M}^{1,2}, \vec{A}\rangle_{\circ,\otimes,*}$.  
The idea is to take a bi-labeled graph $\vec{K} = (K,\vec{a},\vec{b}) \in \P$ and carefully choose a vertex $v \in V(K)$ that we can ``pluck" out of $\vec{K}$ in a way that lets us write $\K$ in terms of our basic building blocks from \Lem{starmaps} and a smaller $\vec{K'} \in \P$. As it may be of future use outside the proof of \Thm{onegraphcat}, we prove it as its own lemma.

\begin{lemma}\label{lem:plucking}
Suppose that $\vec{K} = (K,\vec{a},\vec{b}) \in \P$ with $|V(K)| \ge 2$. Then there exists $\vec{K'} = (K',\vec{a'},\vec{b'}) \in \P$ with $|V(K')| = |V(K)|-1$, and $m,d,r,t \ge 0$ such that one of the following hold:
\begin{align}
\vec{K} &= (\vec{I}^{\otimes r} \otimes \vec{S}^{m,d} \otimes \vec{I}^{\otimes t}) \circ \vec{K'} \label{decomp1}\\
\vec{K} &= (\vec{I}^{\otimes r} \otimes \mathring{\vec{S}}^{m,d} \otimes \vec{I}^{\otimes t}) \circ \vec{K'} \label{decomp1loop}\\
\vec{K} &= \vec{K'} \circ (\vec{I}^{\otimes r} \otimes \vec{S}^{m,d} \otimes \vec{I}^{\otimes t})^* \label{decomp2}\\
\vec{K} &= \vec{K'} \circ (\vec{I}^{\otimes r} \otimes \mathring{\vec{S}}^{m,d} \otimes \vec{I}^{\otimes t})^* \label{decomp2loop}\\
\vec{K} &= (\vec{S}_L^{m,d} \otimes \vec{I}^{\otimes t}) \circ (\vec{M}^{1,r} \otimes \vec{K'}) \label{decomp3}\\
\vec{K} &= (\mathring{\vec{S}}_L^{m,d} \otimes \vec{I}^{\otimes t}) \circ (\vec{M}^{1,r} \otimes \vec{K'}) \label{decomp3loop}\\
\vec{K} &= (\vec{I}^{\otimes r} \otimes \vec{S}_R^{m,d}) \circ (\vec{K'} \otimes \vec{M}^{1,r}) \label{decomp4}\\
\vec{K} &= (\vec{I}^{\otimes r} \otimes \mathring{\vec{S}}_R^{m,d}) \circ (\vec{K'} \otimes \vec{M}^{1,r}) \label{decomp4loop}
\end{align}
\end{lemma}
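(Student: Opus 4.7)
The plan is to pluck a carefully chosen vertex $v$ from $K$ and express $\vec{K}$ as a composition, with $\vec{K}' = (K \setminus v, \vec{a}', \vec{b}')$ obtained by replacing the occurrences of $v$ in $\vec{a}, \vec{b}$ by the non-self neighbors $v_1, \ldots, v_d$ of $v$ arranged in the cyclic order around $v$ inherited from a planar embedding of $K^\odot$. The choice of $v$ splits into two cases: if some vertex of $K$ does not appear in $\vec{a}$ or $\vec{b}$, pick it (this covers in particular the case $\ell = k = 0$); otherwise $|V(K)| \ge 2$ forces $\ell + k \ge 2$, so \Cor{consecutive} produces a vertex $v$ whose occurrences in the cyclic tuple $(a_1, \ldots, a_\ell, b_k, \ldots, b_1)$ form a single consecutive block. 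The position of this block dictates the decomposition to use: a block entirely within $a_1, \ldots, a_\ell$ uses \eqref{decomp1}; entirely within $b_k, \ldots, b_1$ uses \eqref{decomp2}; one straddling the wraparound between $b_1$ and $a_1$ uses \eqref{decomp3}; and one straddling the wraparound between $a_\ell$ and $b_k$ uses \eqref{decomp4}. The looped variants \eqref{decomp1loop}--\eqref{decomp4loop} are selected precisely when $v$ carries a loop, and the first case (internal $v$) is subsumed by \eqref{decomp1} with $m = 0$ (loop version if applicable).

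To construct $\vec{K}'$, fix a planar embedding of $K^\odot$ with its enveloping cycle bounding the outer face (\Rem{facial}). The consecutiveness of $v$'s block on the enveloping cycle forces the corresponding edges from $v$ to enveloping-cycle vertices $\alpha_i, \beta_j$ to appear consecutively in the cyclic order of edges around $v$ in the embedding. List the remaining neighbors $v_1, \ldots, v_d$ in that cyclic order starting immediately after the block, and insert them in order into $\vec{a}'$ or $\vec{b}'$ at the slot dictated by the case. The algebraic identity expressing $\vec{K}$ as a composition is then verified directly from the definitions of composition and tensor product (cf.\ \Lem{compcorr}, \Lem{tensorcorr}); the looped star graphs $\mathring{\vec{S}}$ absorb the self-loop on $v$ in the loop cases.

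The main technical content is showing $\vec{K}' \in \P$. Starting from the embedding of $K^\odot$, delete $v$ together with its incident edges and (in the non-internal-vertex cases) the enveloping-cycle vertices attached only to $v$, opening a gap in the enveloping cycle. Insert $d$ new enveloping-cycle vertices $\alpha'_j$ into this gap, each connected to $v_j$, routing the new edges through the region previously occupied by $v$; the cyclic-order choice of the $v_j$'s makes this routing non-crossing. \Lem{facialcycle} then certifies that the new enveloping cycle remains facial, so $(K')^\odot$ is planar. The main obstacle lies in the wraparound cases \eqref{decomp3}, \eqref{decomp4} (and their looped variants), where $v$'s consecutive block crosses a corner of the enveloping cycle and the gap to be filled spans both the $\vec{a}$-side and the $\vec{b}$-side simultaneously; here a careful application of \Lem{cyclejoin} is required to join the two portions into a single planar enveloping cycle. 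The internal-vertex case also requires care: since $v$ may lie in an interior face of $K^\odot \setminus v$, one invokes \Lem{addcycle} and \Lem{commonface} to attach the new $\alpha'_j$'s into the enveloping cycle at some valid slot $r$, exploiting the flexibility in choosing the planar embedding.
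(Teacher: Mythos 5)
Your non-internal cases and the looped/unlooped dichotomy align closely with the paper's proof, but your shortcut of \emph{prioritizing} an internal vertex $v$ (one not appearing in $\vec{a}$ or $\vec{b}$) and subsuming that case into~\eqref{decomp1} with $m=0$ introduces a genuine gap. For that decomposition to produce a $\vec{K'}\in\P$, the $d$ neighbors of $v$ must, after deleting $v$, lie on a common face of $K\setminus v$ together with the vertices occurring in $\vec{a}$ and $\vec{b}$ (a necessary condition by \Lem{Psmalllk}), and moreover must appear in a compatible cyclic order along that face. Neither holds in general. Take $K$ to be the hexagonal prism on $t_1,\dots,t_6$ (top hexagon) and $b_1,\dots,b_6$ (bottom hexagon) with a hub vertex $v$ adjacent to all of $t_1,\dots,t_6$, and let $\vec{K}=(K,(b_1),\varnothing)$. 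Then $K$ is planar and $\ell+k=1$, so $\vec{K}\in\P$ by \Rem{Psmalllk}. But $K\setminus v$ is the hexagonal prism, which is $3$-connected and has no face containing both $b_1$ and all of $t_1,\dots,t_6$; hence every $\vec{K'}=(K\setminus v,\vec{a'},\varnothing)$ whose output vector consists of $b_1$ and $t_1,\dots,t_6$ fails to lie in $\P$, and no choice of slot $r$ or re-embedding via \Lem{addcycle} or \Lem{commonface} can repair this.

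The fix, and the paper's actual choice, is to always pluck a vertex $v$ that \emph{occurs} in $(a_1,\dots,a_\ell,b_k,\dots,b_1)$: \Cor{consecutive} supplies one with a single consecutive block of occurrences, and being attached to a consecutive arc of the enveloping cycle is exactly what guarantees that the gap opened in that cycle after deleting $v$ is reachable from the outer face through the region where $v$ sat. Internal vertices are plucked only when $\ell+k=0$, where there is no pre-existing enveloping cycle to conflict with, and that case uses a separate, simpler argument (add the new enveloping cycle around an arbitrary $v$ via \Lem{addcycle}, with $v$ itself playing the role of the apex). Incidentally, your observation that $\ell+k\geq 2$ when no internal vertex exists is correct but unneeded, since \Cor{consecutive} only requires $\ell+k\geq 1$.
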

\begin{proof}
The bi-labeled graph $\vec{K'}$ will be constructed from $\vec{K}$ by removing a vertex $v$ from the underlying graph $K$ of $\vec{K}$. Depending on whether or not the vertex $v$ has a loop, we will either be among cases \eqref{decomp1loop}, \eqref{decomp2loop}, \eqref{decomp3loop}, \eqref{decomp4loop}, or among cases \eqref{decomp1}, \eqref{decomp2}, \eqref{decomp3}, \eqref{decomp4}. There is essentially no difference in the proofs of the looped versus loopless cases, and so we will just prove the loopless cases.

We have that there exist $\ell,k \ge 0$ such that $\vec{K} \in \P(\ell,k)$ and thus $\vec{a} = (a_1,\ldots,a_\ell)$, $\vec{b} = (b_1, \ldots, b_k)$. We first consider the case $\ell +k \ge 1$ (the $\ell + k = 0$ case will be easier). By \Cor{consecutive}, we have that there exists $v \in V(K)$ that occurs consecutively in $(a_1,\ldots,a_\ell, b_k, \ldots, b_1)$.

\vspace{.15cm}

\noindent\textbf{Case 1:} \textit{$v$ occurs only among $\vec{a}$}. 

If $k = 0$, then $v$ may occur consecutively by appearing in precisely the entries $a_1, \ldots, a_p$ and $a_q, \ldots, a_k$ for some $q > p +1$. However, if this is the case we may apply \Lem{consecutive} to obtain a $v' \in V(K)$ that occurs consecutively in $(a_1, \ldots, a_\ell)$ and only occurs among $a_{p+1}, \ldots, a_{q-1}$.

So we may assume that there are $1 \le p \le q \le \ell$ such that $a_p, a_{p+1}, \ldots, a_{q}$ are precisely the occurrences of $v$ among $\vec{a}$. Let $m = q-p+1$ (i.e., $m$ is the number of occurences of $v$) and let $v_1, \ldots, v_d$ be the neighbors of $v$ in $K$. Define $K'$ as the graph obtained from $K$ by deleting $v$, let $\vec{a'} = (a_1, \ldots, a_{p-1},v_1,\ldots,v_d,a_{q+1},\ldots,a_\ell)$, and define $\vec{K'} = (K',\vec{a'},\vec{b})$. Note that the order that $v_1, \ldots, v_d$ appear in $\vec{a'}$ matters for whether or not $\vec{K'} \in \P$. But as the indices have been assigned arbitrarily, we have not yet actually specified an order. The necessary ordering will emerge from our proof that $\vec{K'} \in \P$. First we will show that \eqref{decomp1} holds, which will be independent of this ordering.

Recall that $\vec{S}^{m,d} = (S^d,(v^m),(v_1, \ldots,v_d))$, where we are purposely using the names $v$ and $v_1,\ldots,v_d$ to indicate that these correspond to the same vertices from $K$. Let $r = p-1$ and $t = \ell - q$. The bi-labeled graph $\vec{I}^{\otimes r} \otimes \vec{S}^{m,d} \otimes \vec{I}^{\otimes t}$ has vertices $u_1, \ldots, u_{p-1},v,v_1,\ldots,v_d,u_{q+1},\ldots,u_\ell$ and edges $vv_1, \ldots, vv_d$. Its output vector is $(u_1, \ldots, u_{p-1},v^m,u_{q+1}, \ldots, u_\ell)$ and its input vector is $(u_1, \ldots, u_{p-1},v_1,\ldots,v_d,u_{q+1},\ldots,u_\ell)$. Multiplying $\vec{I}^{\otimes r} \otimes \vec{S}^{m,d} \otimes \vec{I}^{\otimes t}$ with $\vec{K'}$ identifies $u_i$ with $a_i$ for $i = 1,\ldots, p-1, q+1, \ldots,\ell$, and identifies $v_i$ from $S^d$ with $v_i$ from $\vec{K'}$. After this identification, the neighborhood of $a_i$ for $i= 1,\ldots, p-1, q+1, \ldots,\ell$ is the same as its neighborhood in $K'$ (unless $a_i$ is among $v_1, \ldots, v_d$) as the $u_i$ were all isolated vertices. However, the vertices $v_1, \ldots, v_d$ each gain $v$ as an additional neighbor. In other words, the underlying graph obtained from this multiplication is $K$. It is straightforward to see that the output and input vectors after multiplication are equal to $\vec{a}$ and $\vec{b}$ respectively, and thus we have proven the equation in~\eqref{decomp1}.

Now we must show that $\vec{K'} \in \P$. Consider the planar graph $K^\odot$, letting $z$ be the vertex adjacent to all of the vertices on the enveloping cycle $\alpha_1,\ldots,\alpha_\ell, \beta_k, \ldots, \beta_1$. We will construct $K'^\odot$ from this graph through operations that preserve planarity. We illustrate this in \Fig{indproofplanar}, though there we show $K'^\circ$ for aesthetic reasons. First, contract the edges $v\alpha_i$ for $i = p,p+1, \ldots,q$ and the edges $\alpha_j \alpha_{j+1}$ for $j = p,p+1,\ldots,q-1$ to obtain a new graph $K''$. Let the vertex resulting from this contraction be called $v$, and note that $N(v) = \{v_1, \ldots, v_d\} \cup \{z,\alpha_{p-1},\alpha_{q+1}\}$, where we let $\alpha_0 = \beta_1$ and $\alpha_{\ell+1} = \beta_k$. Note that $\alpha_{p-1}$ and $\alpha_{q+1}$ may not exist as we have defined them, for instance if $p = 1$, $q = \ell$, and $k = 0$. However, in this case we can subdivide the two edges of the enveloping cycle incident to $\alpha_p$ and $\alpha_q$ that we did not contract so that the new vertex $v$ has two neighbors along this cycle which we refer to as $\alpha_{p-1}$ and $\alpha_{q+1}$. This will not affect the proof of $\vec{K'} \in \P$, we only need to remember to remove these two vertices by unsubdividing at the end. As edge contraction preserves planarity, the resulting graph $K''$ is planar. Note that $K''$ contains the cycle $C = \alpha_1,\ldots,\alpha_{p-1},v,\alpha_{q+1},\ldots, \alpha_\ell,\beta_k, \ldots, \beta_1$, all of whose vertices are adjacent to $z$. Thus by \Lem{facialcycle}, $K'' \setminus z$ has a planar embedding in which $C$ is the boundary of the outer face. Since $C$ bounds a face in this embedding, the vertices $\alpha_{p-1}$ and $\alpha_{q+1}$ appear consecutively in the cyclic ordering of the neighbors of $v$. Thus, by possibly reindexing, the neighbors of $v$ in $K'' \setminus z$ appear in cyclic order $\alpha_{p-1},v_1, \ldots, v_d,\alpha_{q+1}$ in this embedding (this is what determines the order that $v_1,\ldots,v_d$ appear in $\vec{a'}$). Again using \Lem{facialcycle}, we can obtain a planar embedding of $K''$ by adding $z$ to the outer face of our embedding of $K'' \setminus z$ and adding edges between $z$ and each vertex of $C$. In this embedding of $K''$ the neighbors of $v$ appear in cyclic order $\alpha_{p-1},v_1, \ldots, v_d, \alpha_{q+1},z$. Let us now subdivide the edges incident to $v$. For the edge $vv_i$, the subdividing vertex will be called $u_i$. The vertices subdividing $v\alpha_{p-1}$, $v\alpha_{q+1}$, and $vz$ will be called $w_1$, $w_2$, and $y$ respectively. By \Lem{addcycle}, we may add the cycle $C' = w_1,u_1, \ldots, u_d,w_2,y$ to $K''$ and obtain a graph that is still planar. Now let us contract the edges $vy$ and $yz$, referring to the resulting vertex as $z$. The vertices $w_1$ and $w_2$ now each have three neighbors: $\{\alpha_{p-1}, z, u_1\}$, and $\{\alpha_{q+1},z,u_d\}$ respectively (unless $d = 0$ in which case $u_1$ and $u_d$ are replaced by $w_2$ and $w_1$ respectively). Remove the edges $zw_1, zw_2$, and unsubdivide the paths $\alpha_{p-1},w_1,u_1$ and $u_d,w_2,\alpha_{q+1}$ (or the path $\alpha_{p-1},w_1,w_2,\alpha_{q+1}$ if $d = 0$). The resulting planar graph has cycle $C'' = \alpha_1, \ldots, \alpha_{p-1}, u_1, \ldots, u_d, \alpha_{q+1}, \ldots, \alpha_{\ell},\beta_k, \ldots, \beta_1$, all of whose vertices are adjacent to $z$. Moreover, all vertices of $C''$ have degree four, $\alpha_i \sim a_i$ for $i = 1, \ldots, p-1,q+1,\ldots, \ell$, $\beta_i \sim b_i$ for $i = 1, \ldots, k$, and $u_i \sim v_i$ for $i = 1, \ldots, d$. In other words, the resulting graph is $K'^\odot$, and it is planar by construction. Thus we have proven that $\vec{K'} \in \P$.


\vspace{.15cm}

\noindent\textbf{Case 2:} \textit{$v$ occurs only among $\vec{b}$}.

In this case $\vec{K}^*$ falls into Case 1, and thus we can write $\vec{K}^* = (\vec{I}^{\otimes r} \otimes \vec{S}^{m,d} \otimes \vec{I}^{\otimes t}) \circ \vec{K'}$, for some $\vec{K'} \in \P$ and $|V(K')| = |V(K)|-1$. Taking $^*$ of both sides, we obtain $\vec{K} = \vec{K'}^* \circ (\vec{I}^{\otimes r} \otimes \vec{S}^{m,d} \otimes \vec{I}^{\otimes t})^*$, i.e., we have proven that \eqref{decomp2} holds in this case. Moreover, we have that $\vec{K'}^* \in \P$ by \Lem{starclosed}. 

\vspace{.15cm}

\noindent\textbf{Case 3:} \textit{$a_1 = b_1 = v$}.

In this case $v$ occurs in both $\vec{a}$ and $\vec{b}$, so we cannot merely multiply by some $\vec{K'}$ as such a multiplication cannot result in $v$ occuring in both the input and output vectors. However, we can accommodate this with only a slight modification. First we must handle an exceptional subcase.

Suppose that $a_\ell = b_k = v$, but that not every entry of both $\vec{a}$ and $\vec{b}$ is equal to $v$. Then there is either an $i \in [\ell]$ such that $a_i \ne v$ or an $j \in [k]$ such that $b_j \ne v$. In the first case, we can apply \Lem{consecutive} on $(a_1, \ldots, a_\ell,b_k, \ldots, b_1)$ to obtain that there must be some vertex $u \in V(K)$ that occurs consecutively in $(a_1, \ldots, a_\ell, b_k,\ldots, b_1)$ and only occurs among $\vec{a}$. Thus we are back to Case 1. Similarly, if there is $b_j \ne v$ then we can use Case 2.

Now we may assume that there are $m \in [\ell]$ and $r \in [k]$ such that the entries of $\vec{a}$ and $\vec{b}$ equal to $v$ are precisely $a_1, \ldots, a_m$ and $b_1, \ldots, b_r$ respectively. 
Let $t = \ell - m \ge 0$, and let $v_1, \ldots, v_d$ be the neighbors of $v$ in $K$. We will show that~\eqref{decomp3} holds for an appropriate $\vec{K'}$. Let $K'$ be the graph obtained from $K$ by deleting $v$, and let $\vec{a'} = (v_1, \ldots, v_d,a_{m+1}, \ldots, a_\ell)$ and $\vec{b'} = (b_{r+1}, \ldots, b_k)$. Finally, define $\vec{K'} = (K',\vec{a'},\vec{b'})$. Denote the vertex of degree $d$ in $\vec{S}^{m,d}_L$ by $v$ and its neighbors by $v_1, \ldots, v_d$, and let the single vertex of $\vec{M}^{1,r}$ be denoted $w$. The underlying graph of $\vec{M}^{1,r} \otimes \vec{K'}$ is just $K'$ plus the isolated vertex $w$, and its input and output vectors are $(w^r,b_{r+1}, \ldots, b_k)$ and $(w,v_1, \ldots, v_d,a_{m+1}, \ldots, a_\ell)$ respectively. The underlying graph of $\vec{S}^{m,d}_L \otimes \vec{I}^{\otimes t}$ is $S^d$ plus $t$ isolated vertices $u_1, \ldots, u_t$, and its input and output vectors are $(v,v_1, \ldots, v_d, u_1, \ldots, u_t)$ and $(v^m,u_1, \ldots, u_t)$ respectively. Thus the multiplication $(\vec{S}_L^{m,d} \otimes \vec{I}^{\otimes t}) \circ (\vec{M}^{1,r} \otimes \vec{K'})$ identifies $w$ with $v$ (let us call the resulting vertex $v$), identifies $v_i$ from $S^d$ with $v_i$ from $K'$, and identifies the isolated vertex $u_i$ with $a_{m+i}$ (we will call the resulting vertex $a_{m+i}$). Thus the vertices of this product coming from $\vec{K'}$ have the same neighborhoods as they did in $K'$, with the exception of $v_1, \ldots, v_d$ which have all gained $v$ as a neighbor. In other words, the underlying graph of this product is precisely $K$. Moreover, the input and output vectors are $(v^r,b_{r+1},\ldots, b_k) = (b_1, \ldots, b_k)$ and $(v^m,a_{m+1}.\ldots,a_\ell) = (a_1, \ldots, a_\ell)$. Thus we have proven~\eqref{decomp3}.

To prove that $\vec{K'} \in \P$, we will reduce to Case 1 above. Consider $\hat{\vec{K}} = \vec{K} \circ (\vec{M}^{r,0} \otimes \vec{I}^{\otimes k-r}) \in \P$. This multiplication simply removes the occurences of $v$ from the input vector, and thus $\hat{\vec{K}}$ falls into Case 1. Moreover, the $\hat{\vec{K}}'$ constructed for $\hat{\vec{K}}$ in Case 1 is precisely the $\vec{K'}$ constructed here for $\vec{K}$. It follows that $\vec{K'} \in \P$.

\vspace{.15cm}

\noindent\textbf{Case 4:} \textit{$a_k = b_\ell = v$}.

This case is symmetric to Case 3, it is simply the mirror image.

\vspace{.15cm}

\noindent\textbf{The $\ell + k = 0$ case}.

In this case there is no enveloping cycle. However, we can simply choose any vertex $v$ of $K$, letting $v_1, \ldots, v_d$ be its neighbors. Let $K'$ be the graph obtained from $K$ by deleting $v$, and let $\vec{K'} = (K', (v_1, \ldots, v_d),\varnothing)$. It is easy to see that $\vec{K} = \vec{S}^{0,d} \circ \vec{K'}$, i.e., that~\eqref{decomp1} holds. To show that $\vec{K'} \in \P$, fix a planar embedding of $K$. Assuming the neighbors of $v$ appear in cyclic order $v_1, \ldots, v_d$, apply \Lem{addcycle} to $v$ subdividing $vv_i$ with a vertex $\alpha_i$ and creating a cycle $C = \alpha_1, \ldots, \alpha_d$. This graph is planar and it is easy to see that it is precisely $K'^\odot$, where $v$ is the vertex adjacent to every vertex of the enveloping cycle $C$.\end{proof}

\begin{figure}[h!]
\begin{subfigure}{.5\textwidth}
  \centering
  \includegraphics[scale=1.2]{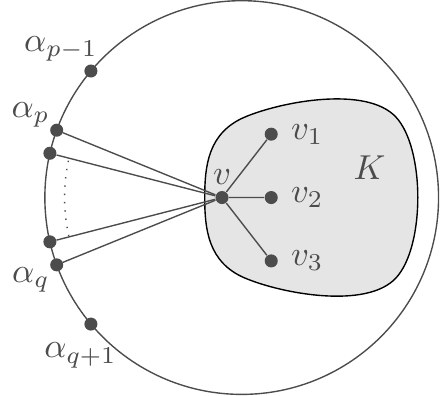}
  \caption{\phantom{a} The vertex $v$ and its neighbors in $K^\circ$. \\ \phantom{a}}
  \label{fig:indproof1}
\end{subfigure}
\begin{subfigure}{.5\textwidth}
  \centering
  \includegraphics[scale=1.2]{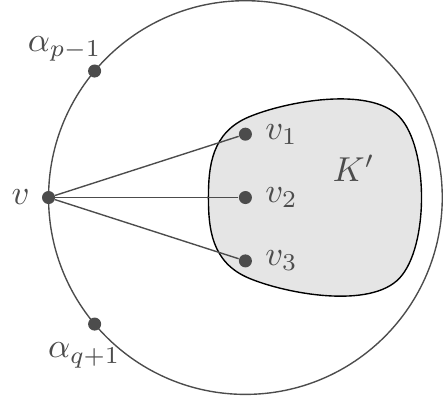}
  \vspace{.053in}
  \caption{\phantom{a} Contract edges $v\alpha_i$ for $i = p, \ldots, q$, and edges $\alpha_j \alpha_{j+1}$ for $j = p, \ldots, q-1$, to form new vertex $v$.}
  \label{fig:indproof2}
\end{subfigure}

\vspace{.4in}

\begin{subfigure}{.5\textwidth}
  \centering
  \includegraphics[scale=1.2]{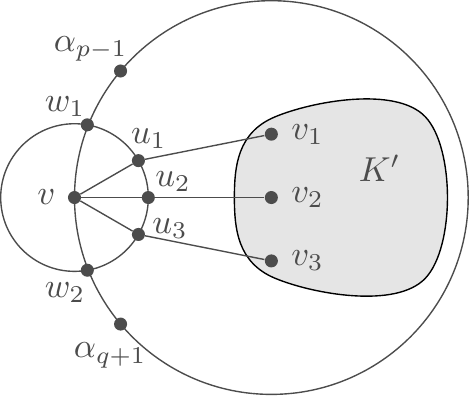}
  \vspace{.055in}
  \caption{\phantom{a} Subdivide edges incident to $v$ and add \\ \phantom{asdf} cycle through new vertices.}
  \label{fig:indproof3}
\end{subfigure}
\begin{subfigure}{.5\textwidth}
  \centering
  \includegraphics[scale=1.2]{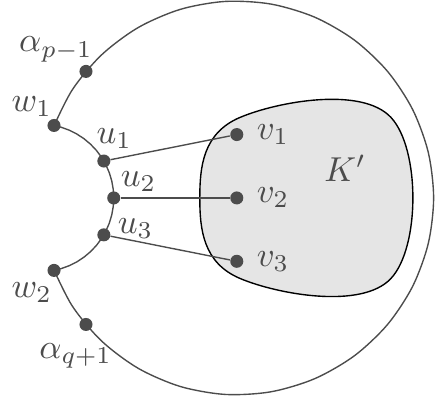}
  \caption{\phantom{a} Remove vertex $v$ and edge $w_1w_2$. \\ \phantom{iasdf}}
  \label{fig:indproof4}
\end{subfigure}

\vspace{.3in}

\begin{center}
\begin{subfigure}{.6\textwidth}
\vspace{-.028in}
  \centering
  \includegraphics[scale=1.2]{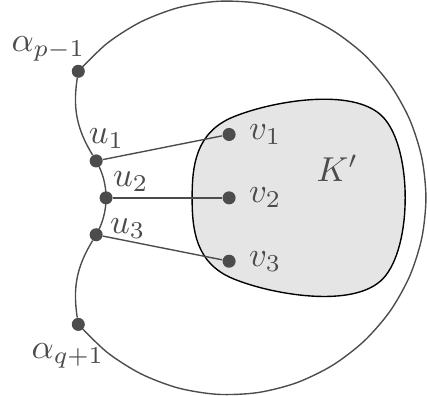}
  \caption{\phantom{a} Unsubdivide $\alpha_{p-1},w_1,u_1$ and $u_3,w_2,\alpha_{q+1}$ to obtain $K'^\circ$.}
  \label{fig:indproof5}
\end{subfigure}
\end{center}
\caption{Illustration of the fact that $\K' \in \P$, from the proof of \Lem{plucking}.}\label{fig:indproofplanar}
\end{figure}

Now we are able to prove the main results of this section:

\begin{theorem}\label{thm:onegraphcat}
$\P = \langle\vec{M}^{1,0}, \vec{M}^{1,2}, \vec{A}\rangle_{\circ,\otimes,*}$.
\end{theorem}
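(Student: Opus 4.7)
The plan is to establish the two inclusions separately. The inclusion $\langle\vec{M}^{1,0}, \vec{M}^{1,2}, \vec{A}\rangle_{\circ,\otimes,*} \subseteq \P$ is exactly the content of \Lem{qinP}, which has already been proven by combining the three closure properties of $\P$ (\Lem{productclosed}, \Lem{tensorclosed}, \Lem{starclosed}) with the observations that the three generators lie in $\P$ (\Ex{MinP} and \Ex{AinP}). Thus the substance of the theorem is the reverse inclusion.

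For $\P \subseteq \langle\vec{M}^{1,0}, \vec{M}^{1,2}, \vec{A}\rangle_{\circ,\otimes,*}$, I would proceed by induction on $|V(K)|$, where $\vec{K} = (K,\vec{a},\vec{b}) \in \P$. The base case handles $|V(K)| = 1$: the single vertex is either unlooped, in which case $\vec{K} = \vec{M}^{\ell,k}$ and \Lem{Mgraphs} applies, or looped, in which case $\vec{K} = \mathring{\vec{M}}^{\ell,k}$ and \Lem{Mloopgraphs} applies. Either way $\vec{K} \in \langle\vec{M}^{1,0}, \vec{M}^{1,2}, \vec{A}\rangle_{\circ,\otimes,*}$.

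For the inductive step, suppose $|V(K)| \ge 2$. Applying \Lem{plucking} produces a bi-labeled graph $\vec{K'} \in \P$ with $|V(K')| = |V(K)| - 1$ and integers $m, d, r, t \ge 0$ for which $\vec{K}$ is expressible as one of the eight listed decompositions. Each of these decompositions builds $\vec{K}$ out of $\vec{K'}$, copies of $\vec{I}$, one of the star-type bi-labeled graphs $\vec{S}^{m,d}, \vec{S}_L^{m,d}, \vec{S}_R^{m,d}$ or their looped variants $\mathring{\vec{S}}^{m,d}, \mathring{\vec{S}}_L^{m,d}, \mathring{\vec{S}}_R^{m,d}$, and possibly a map of the form $\vec{M}^{1,r}$, combined via $\circ$, $\otimes$, and $^*$. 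The inductive hypothesis gives $\vec{K'} \in \langle\vec{M}^{1,0}, \vec{M}^{1,2}, \vec{A}\rangle_{\circ,\otimes,*}$; \Lem{starmaps} places all six star-type maps in this set; and \Lem{Mgraphs} supplies $\vec{M}^{1,r}$ as well as $\vec{I} = \vec{M}^{1,1}$. Since $\langle\vec{M}^{1,0}, \vec{M}^{1,2}, \vec{A}\rangle_{\circ,\otimes,*}$ is by definition closed under composition, tensor product, and transposition, the desired conclusion $\vec{K} \in \langle\vec{M}^{1,0}, \vec{M}^{1,2}, \vec{A}\rangle_{\circ,\otimes,*}$ follows.

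The main obstacle has already been faced and overcome in the delicate planar-graph manipulations of \Lem{plucking} -- in particular, the careful selection of a vertex to ``pluck'' that respects the non-crossing constraint on the input/output tuples coming from \Lem{noncross} and \Cor{consecutive}. Once that lemma is in hand, the present theorem is a short induction that merely assembles the pieces provided by \Lem{Mgraphs}, \Lem{Mloopgraphs}, and \Lem{starmaps}.
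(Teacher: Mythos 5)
Your proposal is correct and follows essentially the same route as the paper's proof: both reduce the forward inclusion to \Lem{qinP}, then induct on $|V(K)|$ with the base case handled by \Lem{Mgraphs} and \Lem{Mloopgraphs} and the inductive step supplied by \Lem{plucking} together with \Lem{starmaps} and closure under $\circ,\otimes,*$.
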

\begin{proof}
We have already shown that $\langle\vec{M}^{1,0}, \vec{M}^{1,2}, \vec{A}\rangle_{\circ,\otimes,*} \subseteq \P$ in \Lem{qinP}, so we will prove the reverse containment. 
We proceed by induction on the number of vertices of the underlying graph of $\vec{K}$.

If $\vec{K} = (K,\vec{a},\vec{b})$ where $|V(K)| = 1$, then $\vec{K} = (K,(v^\ell),(v^k))$ where $v$ is the lone vertex of $K$. Thus $\vec{K} = \vec{M}^{\ell,k}$ or $\vec{K} = \mathring{\vec{M}}^{\ell,k}$ for some $\ell,k \ge 0$ depending on whether $v$ has a loop or not. In either case, $\vec{K} \in \langle\vec{M}^{1,0}, \vec{M}^{1,2}, \vec{A}\rangle_{\circ,\otimes,*}$ by \Lem{Mgraphs} and \Lem{Mloopgraphs}.

Now suppose that $\vec{K} = (K,\vec{a},\vec{b}) \in \P$ where $|V(K)| \ge 2$. By \Lem{plucking}, one of Equations~\eqref{decomp1}--\eqref{decomp4loop} hold for some $\vec{K'} = (K', \vec{a'},\vec{b'}) \in \P$ with $|V(K')| = |V(K)| - 1$. Let us suppose that~\eqref{decomp1} holds. By induction, we have that $\vec{K'} \in \langle\vec{M}^{1,0}, \vec{M}^{1,2}, \vec{A}\rangle_{\circ,\otimes,*}$, and by \Lem{Mgraphs} and \Lem{starmaps}, we have that $\vec{I}, \vec{S}^{m,d} \in \langle\vec{M}^{1,0}, \vec{M}^{1,2}, \vec{A}\rangle_{\circ,\otimes,*}$. It thus follows by definition that
\[\vec{K} = (\vec{I}^{\otimes r} \otimes \vec{S}^{m,d} \otimes \vec{I} ^{\otimes t}) \circ \vec{K'} \in \langle\vec{M}^{1,0}, \vec{M}^{1,2}, \vec{A}\rangle_{\circ,\otimes,*}.\]
The proofs of the cases where~\eqref{decomp1loop}--\eqref{decomp4loop} hold are similar.
\end{proof}

We can now prove our characterization of the intertwiners of the quantum automorphism group of a graph:

\begin{theorem}\label{thm:onecat}
For any graph $G$, we have that $C_q^G = C_\P^G$.
\end{theorem}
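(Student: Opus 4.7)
The plan is essentially a one-line argument that combines two results already in hand: \Thm{reformulation} and \Thm{onegraphcat}. Recall that \Thm{reformulation} reformulated Chassaniol's generating-set description of $C_q^G$ in the language of bi-labeled graphs, yielding
\[C_q^G = \spn\{T^{\vec{K} \to G} : \vec{K} \in \langle \vec{M}^{1,0}, \vec{M}^{1,2}, \vec{A}\rangle_{\circ,\otimes,*}\}.\]
Meanwhile, \Thm{onegraphcat} just established the combinatorial identity $\P = \langle \vec{M}^{1,0}, \vec{M}^{1,2}, \vec{A}\rangle_{\circ,\otimes,*}$. Substituting the latter into the former immediately gives
\[C_q^G = \spn\{T^{\vec{K} \to G} : \vec{K} \in \P\} = C_\P^G,\]
where the final equality is just the definition of $C_\P^G$ introduced at the end of \Sec{closureprops}.

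Since both of the inputs are already proven, there is no real obstacle: the proof is simply a substitution. If I wanted to be slightly more explicit, I might separate the two containments to emphasize where each comes from. The containment $C_q^G \subseteq C_\P^G$ follows from the ``easy'' direction $\langle \vec{M}^{1,0}, \vec{M}^{1,2}, \vec{A}\rangle_{\circ,\otimes,*} \subseteq \P$ proven in \Lem{qinP} (which used only the closure of $\P$ under composition, tensor, and transposition, together with the fact that the three generators lie in $\P$, as shown in \Ex{MinP} and \Ex{AinP}). The reverse containment $C_\P^G \subseteq C_q^G$ is the nontrivial direction, and uses the full strength of the inductive ``plucking'' argument of \Lem{plucking} that drove \Thm{onegraphcat}.

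In summary, the proof I will write consists of stating the two equalities and chaining them, with a brief parenthetical pointer back to \Thm{reformulation} and \Thm{onegraphcat}. No new calculation or construction is needed at this point in the paper; all the combinatorial and algebraic work has been done, and \Thm{onecat} is the harvest.
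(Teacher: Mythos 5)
Your proposal matches the paper's own proof exactly: the paper also deduces \Thm{onecat} immediately from \Thm{reformulation}, \Thm{onegraphcat}, and the definition of $C_\P^G$. Your extra remarks about which direction is easy (\Lem{qinP}) and which is the hard inductive one (\Lem{plucking}) are accurate but not needed for the argument.
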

\begin{proof}
This follows immediately from the definition of $C_\P^G$, and \Thm{reformulation} and \Thm{onegraphcat}.
\end{proof}

\begin{remark}\label{rem:onecat}
We remark that the above is equivalent to $C_q^G(\ell,k) = C_\P^G(\ell,k)$ for all $\ell,k \ge 0$.
\end{remark}

The above two results are the culmination of the work presented in this article up to this point. \Thm{onecat} gives our advertised combinatorial characetrization of the intertwiners of the quantum automorphism group of a graph: they are the (span of the) homomorphism matrices of planar bi-labeled graphs. Thus we are able to describe $C_q^G$ on the level of spanning sets for the vector spaces $C_q^G(\ell,k)$, as opposed to in terms of a generating set for the tensor category as given by Chassaniol's result (see \Thm{chassaniol}). This is analogous to the difference between describing the intertwiners of $S_n^+$ as being generated by $M^{1,0}$ and $M^{1,2}$, and describing them as the span of the maps associated to the non-crossing partitions.

\subsection{Further results}\label{sec:further}

In this section we prove a few more results about $\P$ and $C_q^G$ that we will need later on.

\begin{lemma}\label{lem:schurclosed}
For $\ell + k \le 2$ we have that $\P(\ell,k)$ is closed under Schur product.
\end{lemma}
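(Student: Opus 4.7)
The plan is to reduce the statement to the already-established closure properties of $\P$ under composition, tensor product, and transposition, via the explicit decomposition formulas of \Lem{buildingschur}. I would proceed by a short case analysis on $(\ell, k)$ with $\ell + k \le 2$.

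First, the case $(\ell, k) = (0,0)$ is immediate: no vertex identifications occur in the definition of $\schur$, so $\H_1 \schur \H_2$ is simply the disjoint union of the two underlying graphs, which is planar whenever both summands are (note that $K^\odot = K$ when $(\ell,k) = (0,0)$).

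For the five remaining cases $(1,0)$, $(0,1)$, $(1,1)$, $(0,2)$, and $(2,0)$, I would invoke \Lem{buildingschur}, which expresses $\H_1 \schur \H_2$ as an iterated composition and tensor product of $\H_1$, $\H_2$, possibly their transposes, and certain basic bi-labeled graphs drawn from $\{\vec{I}, \vec{M}^{1,2}, \vec{M}^{2,1}, \vec{M}^{0,2}\}$. By \Ex{MinP} and \Rem{xi}, all of these basic bi-labeled graphs lie in $\P$, and by hypothesis so do $\H_1$ and $\H_2$. Closure of $\P$ under composition, tensor product, and transposition (\Lem{productclosed}, \Lem{tensorclosed}, and \Lem{starclosed}) then immediately forces $\H_1 \schur \H_2 \in \P(\ell, k)$.

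I do not anticipate any real obstacle here: all the structural work has been done in \Lem{buildingschur} and in the closure properties of $\P$. The hypothesis $\ell + k \le 2$ is exactly what guarantees the existence of a suitable decomposition formula in the first place; for larger $\ell + k$ no analogous expression is available, which is precisely what prevents extending the argument beyond the stated range.
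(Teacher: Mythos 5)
Your proposal is correct and follows essentially the same route as the paper's proof, which likewise invokes \Lem{buildingschur} together with the closure of $\P$ under composition, tensor product, and transposition (and the fact that $\vec{M}^{\ell,k} \in \P$). You are slightly more careful in handling the $(0,0)$ case explicitly, since \Lem{buildingschur} does not address it; this is a reasonable addition, as the Schur product there coincides with the tensor product.
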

\begin{proof}
This follows immediately from \Lem{buildingschur}, the fact that $\vec{M}^{\ell,k} \in \P$ for all $\ell,k \ge 0$, and the fact that $\P$ is closed under composition, tensor products, and transposition.
\end{proof}

Combining \Lem{schurclosed} and \Thm{onecat} shows that $C_q^G(\ell,k)$ is closed under entrywise product for $\ell + k \le 2$. However, this was already known~\cite{qperms}.

The next lemma shows that $C_q^G(1,0)$ is spanned by the $G$-homomorphism matrices of \emph{connected} bi-labeled graphs in $\P(1,0)$. We will use $\hom(K,G)$ to denote the number of homomorphisms from $K$ to $G$. Moreover, we will use $\hom((K,a_1,\ldots, a_\ell),(G,u_1, \ldots, u_\ell))$ to denoted the number of homomorphisms $\varphi$ from $K$ to $G$ such that $\varphi(a_i) = u_i$ for all $i$.

\begin{lemma}\label{lem:P01connected}
For any graph $G$,
\[C_q^G(1,0) = \spn\left\{T^{\vec{K} \to G} : \vec{K} = (K,(a),\varnothing) \in \P(1,0), \ K \text{ connected}\right\}.\]
\end{lemma}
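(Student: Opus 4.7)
By \Thm{onecat} and \Rem{onecat}, we already know that $C_q^G(1,0) = C_\P^G(1,0) = \spn\{T^{\vec{K} \to G} : \vec{K} \in \P(1,0)\}$. So the plan is to show that each spanning element $T^{\vec{K} \to G}$ with $\vec{K} = (K,(a),\varnothing) \in \P(1,0)$ is a scalar multiple of $T^{\vec{K}_0 \to G}$ for some bi-labeled graph $\vec{K}_0 = (K_0,(a),\varnothing) \in \P(1,0)$ whose underlying graph $K_0$ is connected.

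The natural choice is to take $K_0$ to be the connected component of $K$ containing $a$, and let $K_1, \ldots, K_r$ be the remaining components. The key observation is that a homomorphism $\varphi\colon K \to G$ is determined independently by its restrictions to each connected component, so for any $u \in V(G)$,
\[
\left(T^{\vec{K} \to G}\right)_u = \hom((K_0,a),(G,u)) \cdot \prod_{i=1}^r \hom(K_i,G) = c \cdot \left(T^{\vec{K}_0 \to G}\right)_u,
\]
where $c := \prod_{i=1}^r \hom(K_i,G)$ is a nonnegative integer not depending on $u$. Hence $T^{\vec{K} \to G} = c \cdot T^{\vec{K}_0 \to G}$, which lies in the span on the right-hand side as soon as $\vec{K}_0 \in \P(1,0)$ (in the case $c = 0$ we simply get the zero vector, which is trivially in the span).

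To check $\vec{K}_0 \in \P(1,0)$, recall from \Rem{Psmalllk} that for $\ell + k \le 1$, membership in $\P(\ell,k)$ is equivalent to planarity of the underlying graph. Since $\vec{K} \in \P(1,0)$ the graph $K$ is planar, and $K_0$ is a subgraph of $K$, so $K_0$ is planar as well. Therefore $\vec{K}_0 \in \P(1,0)$, completing the argument. There is no real obstacle here; the lemma is essentially just the multiplicativity of homomorphism counts over connected components together with the trivial characterization of $\P(1,0)$.
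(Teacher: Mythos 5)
Your proof is correct and follows essentially the same approach as the paper: decompose $K$ into the component containing $a$ and the rest, and use multiplicativity of homomorphism counts over connected components to factor $T^{\vec{K}\to G}$ as a scalar times $T^{\vec{K}_0\to G}$. You are slightly more careful than the paper in explicitly verifying $\vec{K}_0 \in \P(1,0)$ via \Rem{Psmalllk}, which the paper leaves implicit, but the argument is the same.
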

\begin{proof}
Suppose that $\vec{K} = (K,(a),\varnothing) \in \P(1,0)$ such that $K$ is not connected. Let $K'$ be the connected component containing $a$ and let $K''$ be the graph obtained from $K$ by removing $K'$. Also let $\vec{K'} = (K',(a),\varnothing)$. It is straightforward to see that
\[(T^{\vec{K} \to G}) = \hom(K'',G)T^{\vec{K'} \to G}.\]
\end{proof}

\begin{cor}\label{cor:01qorbits}
Let $G$ be a graph. Vertices $u,v \in V(G)$ are in the same orbit of $\qut(G)$ if and only if
\[\hom((K,a),(G,u)) = \hom((K,a),(G,v))\]
for all connected planar graphs $K$ and $a \in V(K)$.
\end{cor}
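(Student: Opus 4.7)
The plan is to chain together three ingredients already established in the paper: the intertwiner characterization of orbits (Remark~\ref{rem:orbitsasintertwiners}), the main theorem~\ref{thm:onecat} identifying $C_q^G$ with $C_\P^G$, and Lemma~\ref{lem:P01connected} which lets us restrict to connected underlying graphs in the $(1,0)$ case. The corollary is essentially a direct unpacking of the entry formula for $T^{\vec{K}\to G}$ when $\vec{K}\in\P(1,0)$.

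First I would recall that by Remark~\ref{rem:orbitsasintertwiners}, vertices $u,v\in V(G)$ lie in the same orbit of $\qut(G)$ if and only if $T_u = T_v$ for every $T\in C_q^G(1,0)$. Applying \Thm{onecat} (specifically $C_q^G(1,0) = C_\P^G(1,0)$, see \Rem{onecat}), this condition is equivalent to $T_u = T_v$ for every $T$ in the span of $\{T^{\vec{K}\to G} : \vec{K}\in\P(1,0)\}$, and since we are comparing two fixed coordinates, it suffices to check this on a spanning set. By \Lem{P01connected}, we may take the spanning set to consist of homomorphism matrices of $\vec{K} = (K,(a),\varnothing)$ with $K$ a connected planar graph and $a\in V(K)$. (Note that by \Rem{Psmalllk}, $(K,(a),\varnothing)\in\P(1,0)$ iff $K$ is planar, so the two uses of ``planar'' match up.)

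Next I would observe the entry-level interpretation: for $\vec{K} = (K,(a),\varnothing)\in\P(1,0)$, \Def{HomMatrix} gives
\[
\left(T^{\vec{K}\to G}\right)_{u} \;=\; \#\{\varphi : K\to G \text{ with } \varphi(a)=u\} \;=\; \hom((K,a),(G,u)).
\]
Therefore the condition $\left(T^{\vec{K}\to G}\right)_{u} = \left(T^{\vec{K}\to G}\right)_{v}$ is exactly $\hom((K,a),(G,u)) = \hom((K,a),(G,v))$. Combining, $u$ and $v$ lie in the same orbit of $\qut(G)$ iff this hom-count equality holds for every connected planar graph $K$ and every $a\in V(K)$, which is the desired statement.

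There is essentially no obstacle — all the nontrivial content is already packaged in \Thm{onecat} and \Lem{P01connected}. The only minor care needed is verifying that passing to a spanning set is legitimate (it is, because we are checking a linear equality of two fixed coordinate functionals) and that the class of planar bi-labeled graphs with one output vertex is exactly parametrised by pairs $(K,a)$ with $K$ a planar graph, which is \Rem{Psmalllk}.
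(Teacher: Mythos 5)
Your proof is correct and follows essentially the same route as the paper: Remark~\ref{rem:orbitsasintertwiners} to translate orbits into equality of $(1,0)$-intertwiner coordinates, Lemma~\ref{lem:P01connected} to pass to homomorphism matrices of connected planar bi-labeled graphs (the paper uses this lemma directly, since it already packages the content of Theorem~\ref{thm:onecat} for the $(1,0)$ case, whereas you invoke them separately), and Remark~\ref{rem:Psmalllk} to identify $\P(1,0)$ with planar $K$. The only minor redundancy is your separate appeal to Theorem~\ref{thm:onecat}; otherwise this matches the paper's argument.
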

\begin{proof}
Recall from \Rem{orbitsasintertwiners} that $u$ and $v$ are in the same orbit of $\qut(G)$ if and only if $T_u = T_v$ for all $T \in C_q^G(1,0)$. By \Lem{P01connected}, this means that $u$ and $v$ are in the same orbit of $\qut(G)$ if and only if $\hom((K,a),(G,u)) = \hom((K,a),(G,v))$ for all $(K,(a),\varnothing) \in \P(1,0)$ such that $K$ is connected. Recalling from \Rem{Psmalllk} that $(K,(a),\varnothing) \in \P(1,0)$ if and only if $K$ is planar yields the corollary statement.
\end{proof}

In \Def{graphpartition} we saw how to associate a partition to any bi-labeled graph. Now we will see how to associate a bi-labeled graph to any partition.

\begin{definition}\label{def:partitiongraph}
Let $\mathbb{P} = \{P_1, \ldots, P_m\}$ be a partition of $[n]$ (allowing empty parts), and suppose $\ell,k \ge 0$ are integers such that $\ell + k = n$. Define $\vec{c}_\mathbb{P} = (c_1, \ldots, c_n)$ to be the tuple such that $c_i = j$ where $i \in P_j$. We further define the bi-labeled graph $\vec{K}_\mathbb{P}^{\ell,k} = (K, \vec{a},\vec{b}) \in \G(\ell,k)$ where $K$ is the edgeless graph on vertex set $[m]$ and $(a_1, \ldots, a_\ell, b_k \ldots, b_1) = (c_1, \ldots, c_n)$.
\end{definition}

\begin{remark}\label{rem:partitioncorrespondence}
It is straightforward to see that $\mathbb{P}_{\vec{c}_\mathbb{P}} = \mathbb{P}$ and $\mathbb{P}_{\vec{K}_\mathbb{P}^{\ell,k}} = \mathbb{P}$ for any partition $\mathbb{P}$. Also note that $(\vec{K}_{\mathbb{P}}^{\ell,k})^\circ$ and $(\vec{K}_{\mathbb{P}}^{\ell,k})^\odot$ do not depend on $\ell,k$.
\end{remark}

%

In \Lem{noncross}, we showed that the partition associated to a bi-labeled graph in $\P$ is non-crossing. Here we will show somewhat of a converse:

\begin{lemma}\label{lem:noncross2}
Let $\mathbb{P} = \{P_1, \ldots, P_m\}$ be a partition of $[n]$ (allowing empty parts), and suppose $\ell,k \ge 0$ are integers such that $\ell + k = n$. Then $\mathbb{P}$ is non-crossing if and only if $\vec{K}_\mathbb{P}^{\ell,k} \in \P$.
\end{lemma}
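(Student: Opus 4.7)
The plan is to handle the two directions separately. The forward direction is immediate: if $\vec{K}_\mathbb{P}^{\ell,k} \in \P$, then \Lem{noncross} says that $\mathbb{P}_{\vec{K}_\mathbb{P}^{\ell,k}}$ is non-crossing, and by \Rem{partitioncorrespondence} this partition coincides with $\mathbb{P}$.

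For the reverse direction, suppose $\mathbb{P}$ is non-crossing; the task is to show $(K_\mathbb{P}^{\ell,k})^\odot$ is planar. I would first observe that by \Rem{partitioncorrespondence} this graph does not depend on $\ell$ or $k$, so the problem reduces to a purely partition-theoretic statement: the graph consisting of the edgeless vertex set $[m]$, an enveloping cycle $C$ on $n$ vertices $x_1, \ldots, x_n$, an apex vertex $z$ adjacent to every $x_i$, and edges $x_i c_i$ attaching each cycle vertex to its part's representative $c_i \in [m]$, is planar whenever $\mathbb{P}$ is non-crossing.

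The proof will proceed by induction on $n$. The base case $n = 0$ is trivial since the graph is edgeless. In the inductive step, empty parts of $\mathbb{P}$ only contribute isolated vertices to $K$, which can be placed anywhere without affecting planarity, so I may assume every part is non-empty. Applying \Cor{consecutive} to $\vec{c}_\mathbb{P}$, there is a part $P_{j_0}$ whose indices $\{i_0, i_0+1, \ldots, i_0+r-1\}$ form a cyclically consecutive block modulo $n$. I will fix a planar wheel embedding of $C \cup \{z\}$ with $z$ in the unbounded outer face; all remaining work happens inside the bounded disk $D$ enclosed by $C$. I then place vertex $j_0$ just inside $D$ near the arc of $C$ through $x_{i_0}, \ldots, x_{i_0+r-1}$ and draw straight edges from $j_0$ to each of these cycle vertices. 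This carves off a ``fan'' subregion, and the complementary subregion of $D$ is again a topological disk whose boundary encloses the remaining $n-r$ cycle vertices. Since $P_{j_0}$ is an interval in the cyclic order, the restriction of $\mathbb{P}$ to the remaining vertices is a non-crossing partition of an $(n-r)$-element set, to which the inductive hypothesis applies inside the smaller disk, completing the embedding.

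The main technical obstacle will be to rigorously justify the geometric reduction: that removing the fan subregion leaves a topological disk and that the restricted partition inherits the non-crossing property. Both are intuitively clear and standard but merit an explicit argument. Should the topological argument feel inconvenient, an algebraic alternative paralleling \Lem{plucking} is available: use the consecutive block to decompose $\vec{K}_\mathbb{P}^{\ell,k}$ as a composition of identity tensors and $\vec{M}^{i,j}$-type building blocks with a smaller $\vec{K}_{\mathbb{P}'}^{\ell',k'}$ (where $\mathbb{P}'$ is obtained from $\mathbb{P}$ by removing one extreme element of $P_{j_0}$ and remains non-crossing), and then to invoke the inductive hypothesis together with the closure of $\P$ under composition, tensor product, and transposition (Lemmas~\ref{lem:productclosed}, \ref{lem:tensorclosed}, and \ref{lem:starclosed}) and $\vec{M}^{i,j} \in \P$ (\Ex{MinP}).
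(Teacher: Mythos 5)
Your forward direction is exactly the paper's: apply \Lem{noncross} together with \Rem{partitioncorrespondence}.

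For the converse, the geometric route you outline has a structural gap that your own caveat understates. The inductive hypothesis you state is that $(K_\mathbb{P}^{\ell,k})^\odot$ is planar, but the subproblem remaining after you carve off the fan is not of that form. The complementary region $D'$ is bounded by the arc of $C$ through the remaining $n-r$ cycle vertices together with the two-edge path $x_{i_0},j_0,x_{i_0+r-1}$; its boundary passes through $j_0$, and there is no apex adjacent to all of it (the apex $z$ lives outside and is not adjacent to $j_0$). So the inductive hypothesis, as stated, simply does not apply to the subproblem. To make the fan-carving induction go through you would need to strengthen the claim to a disk-embedding statement---given a disk with marked boundary points in cyclic order and a non-crossing partition of them, one can embed inside the disk a new vertex for each part joined to the points of that part---and only at the very end wrap the whole thing in the enveloping cycle and attach $z$ (then invoke \Def{PBG}). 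The paper's proof sidesteps all of this by working ``outside-in'': it deletes the consecutively-occurring vertex from $K^\circ$, unsubdivides the resulting degree-two path to obtain $H^\circ$ for the reduced partition $\mathbb{P}'$, invokes the inductive hypothesis abstractly (induction on the number of parts) to get $\vec{H}\in\P$ with facial enveloping cycle, and then re-inserts the deleted vertex by subdividing one enveloping-cycle edge and applying \Lem{commonface}. Your algebraic alternative is the closest match in spirit and is probably the cleanest way to make your argument precise: since the underlying graph is edgeless, the vertex being plucked is isolated (so $d=0$ in the notation of \Lem{plucking} and the star maps degenerate to $\vec{M}$-type maps), so the inductive step reduces to $\vec{M}^{i,j}\in\P$ (\Ex{MinP}) together with the closure of $\P$ under composition, tensor product, and transposition (Lemmas~\ref{lem:productclosed}, \ref{lem:tensorclosed}, \ref{lem:starclosed}); you must, however, verify the asserted decomposition directly from the definition of composition, since \Lem{plucking} itself presupposes membership in $\P$, which is what you are trying to establish.
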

\begin{proof}
For notational simplicity let $\K = (K,\vec{a},\vec{b}) = \vec{K}_\mathbb{P}^{\ell,k}$. Recall that $K$ is the edgeless graph on $m$ vertices. If we have $\K \in \P$, then $\mathbb{P}$ is non-crossing by \Rem{partitioncorrespondence} and \Lem{noncross}. Conversely, suppose that $\mathbb{P}$ is non-crossing and let $\vec{c} = \vec{c}_\mathbb{P} \in [m]^n$. We proceed by induction on $m$.

If $m = 1$, then $\vec{K} = (K,(v^\ell),(v^k))$ where $v$ is the only vertex of $K$. Thus $\vec{K} = \vec{M}^{\ell,k} \in \P$. Now suppose that $m \ge 2$. Since $\mathbb{P}$ is non-crossing, by \Cor{consecutive} we have that there is an $v \in [m]$ that occurs consecutively in $\vec{c} = (a_1, \ldots, a_\ell,b_k, \ldots, b_a)$. Consider $K^\circ$ with enveloping cycle $C = \alpha_1, \ldots, \alpha_\ell, \beta_k, \ldots, \beta_1$, and let $(\gamma_1, \ldots, \gamma_n) = (\alpha_1, \ldots, \alpha_\ell, \beta_k, \ldots, \beta_1)$. The neighbors of $v$ on the enveloping cycle are $\gamma_i, \gamma_{i+1}, \ldots, \gamma_{i+r-1}$ for $r = |P_v|$. After deleting $v$, its neighbors on the enveloping cycle have degree two, thus let $K'$ be the graph obtained from $K^\circ$ by deleting $v$ and unsubdividing the path $\gamma_{i-1}, \gamma_i, \ldots, \gamma_{i+r}$. It is not difficult to see that $K'$ is equal to $H^\circ$ where $\vec{H} = \vec{K}_{\mathbb{P}'}^{\ell',k'}$ and $\mathbb{P}'$ is the partition of $[n] \setminus P_v$ obtained from $\mathbb{P}$ by removing $P_v$. It is straightforward to see that removing a part of a non-crossing partition cannot result in a partition that is not non-crossing. Thus by induction we have that $\vec{H} \in \P$. Consider the planar embedding of $H^\circ$ such that its enveloping cycle is bounding the outer face. Recall from the construction of $K'$ that $\gamma_{i-1}$ and $\gamma_{i+r}$ are adjacent in the enveloping cycle of $H^\circ$. We can subdivide the edge between $\gamma_{i-1}$ and $\gamma_{i+r}$ with vertices $\gamma_{i}, \ldots, \gamma_{i+r-1}$, adding these to our embedding. The vertices $\gamma_{i}, \ldots, \gamma_{i+r-1}$ will thus all be incident to two common faces: the outer and another face we will call $F$. By \Lem{commonface}, we have that we can embed the new vertex $v$ in $F$ and connect it to $\gamma_{i}, \ldots, \gamma_{i+r-1}$ while remaining planar. Thus we have constructed a planar embedding of $K^\circ$ in which its enveloping cycle is bounding the outer face and so $\vec{K} \in \P$.
\end{proof}


\section{Characterization of Quantum Isomorphism}\label{sec:qiso}

In this section we will use the characterization of $C_q^G$ from \Thm{onecat} to prove that two graphs are quantum isomorphic if and only if they admit the same number of homomorphisms from any planar graph. We being by giving a name to the latter condition:

\begin{definition}\label{def:planariso}
We say that two graphs $G$ and $H$ are \emph{planar isomorphic}, denoted $G \cong_\P H$, if $\hom(K,G) = \hom(K,H)$ for all planar graphs $K$.
\end{definition}

In general, if $\mathcal{F}$ is a family of graphs, we will say that $G$ and $H$ are $\mathcal{F}$-isomorphic, denoted $G \cong_\mathcal{F} H$, if $\hom(K,G) \cong \hom(K,H)$ for all $K \in \mathcal{F}$. We remark that there are some well-known relations on graphs that can be phrased in this manner. For example, graphs $G$ and $H$ are \emph{cospectral} (i.e., their adjacency matrices have the same multiset of eigenvalues) if and only if $G \cong_\mathcal{F} H$ for $\mathcal{F}$ being the class of all cycles~\cite{VH03}. As mentioned in the abstract, Lov\'{a}sz~\cite{lovasz1967operations} showed that isomorphism is equivalent to $\mathcal{F}$-isomorphism for $\mathcal{F}$ being the class of all graphs. Also Dell, Grohe, and Rattan~\cite{DGR} proved that for $\mathcal{F}$ being the class of graphs of treewidth at most $k$, the relation $G \cong_{\mathcal{F}} H$ is equivalent to the graphs $G$ and $H$ not being distinguished by the $k$-dimensional Weisfeiler-Leman algorithm.

In the process of proving that quantum isomorphism and planar isomorphism are equivalent, we will show that both of these are equivalent to a third relation based on the intertwiner spaces of the quantum automorphism groups of two graphs:

\begin{definition}
Given graphs $G$ and $H$, we define a \emph{weak isomorphism} of $C_q^G$ and $C_q^H$ to be a bijection $\Phi\colon C_q^G \to G_q^H$ such that
\begin{enumerate}
\item for all $\ell,k \ge 0$, the restriction of $\Phi$ to $C_q^G(\ell,k)$ is a bijective linear map to $C_q^H(\ell,k)$;
\item if $T,T' \in C_q^G$, then $\Phi(T \otimes T') = \Phi(T) \otimes \Phi(T')$;
\item if $T \in C_q(\ell,k)$ and $T' \in C_q^G(k,m)$, then $\Phi(TT') = \Phi(T)\Phi(T')$;
\item if $T \in C_q^G$, then $\Phi(T^*) = \Phi(T)^*$;
\item $\Phi(M^{1,2}) = M^{1,2}$ and $\Phi(M^{1,0}) = M^{1,0}$.\footnote{Here we should specify that the $M^{1,2}$ and $M^{1,0}$ appearing in the argument of $\Phi$ are those living in $C_q^G$ whereas the $M^{1,2}$ and $M^{1,0}$  on the righthand sides are those living in $C _q^H$. However we will see that the existence of a weak isomorphism implies that $|V(G)| = |V(H)|$ and thus these maps can be identified with each other.}
\end{enumerate}
\end{definition}

\begin{remark}
Our definition of weak isomorphism of $C_q^G$ and $C_q^H$ above seems to be closely related to the \emph{monoidal equivalence} of $\qut(G)$ and $\qut(H)$ considered in~\cite{bigalois}. However, the approach taken, and language used, in that work is of a much more category theoretic flavor than ours.
\end{remark}

The main result of this section will be that the following three statements are equivalent:
\begin{enumerate}
\item[(1)] Graphs $G$ and $H$ are quantum isomorphic.
\item[(2)] There exists a weak isomorphism $\Phi \colon C_q^G \to C_q^H$ such that $\Phi(A_G) = A_H$.
\item[(3)] Graphs $G$ and $H$ are planar isomorphic.
\end{enumerate}

Note the importance of the condition $\Phi(A_G) = A_H$. Indeed, for any graph $G$ there is a weak isomorphism between $C_q^G$ and $C_q^{\overline{G}}$ (as they are equal), but $G$ and $\overline{G}$ are not typically quantum isomorphic.

The proof of the equivalence of these statements will follow the structure $(1) \Rightarrow (2) \Rightarrow (3) \Rightarrow (1)$, each implication receiving its own subsection.

\subsection{Quantum isomorphism implies weak isomorphism}

Recall from \Thm{qisoMU} that graphs $G$ and $H$ are quantum isomorphic if and only if there exists a magic unitary $\mathcal{U}$ such that $\mathcal{U}A_G = A_H\mathcal{U}$. We will use this magic unitary to construct the weak isomorphism from $C_q^G$ to $C_q^H$. The proof will make use of the tensor powers $\mathcal{U}^{\otimes \ell}$ of $\mathcal{U}$. Recall that this is a $V(H)^\ell \times V(G)^\ell$ operator-valued matrix whose $i_1\ldots i_\ell,j_1\ldots j_\ell$-entry is $u_{i_1j_1}\ldots u_{i_\ell j_\ell}$. We will use the fact that this matrix is unitary in the sense that $(\mathcal{U}^{\otimes \ell})^*\mathcal{U}^{\otimes \ell} = \mathcal{U}^{\otimes \ell}(\mathcal{U}^{\otimes \ell})^* = I^{\otimes \ell} \otimes \vec{1}$, where $I$ is the usual $n \times n$ identity (for $n = |V(G)| = |V(G)|$) and $\vec{1}$ is the identity in the $C^*$-algebra where the entries of $\mathcal{U}$ live. 
In the following proof we will make extensive use of this unitarity and the operations on operator-valued matrices discussed in \Sec{outline}.

\begin{lemma}\label{lem:qiso2weakiso}
Let $G$ and $H$ be graphs. If $G \cong_{qc} H$ then there exists a weak isomorphism $\Phi\colon C_q^G \to C_q^H$ such that $\Phi(A_G) = A_H$. 
\end{lemma}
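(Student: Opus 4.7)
The plan is to construct $\Phi$ directly from the ``intertwining magic unitary'' supplied by \Thm{qisoMU}: invoke that theorem to obtain a magic unitary $\mathcal{U} = (u_{gh})_{g \in V(G), h \in V(H)}$ satisfying $A_G \mathcal{U} = \mathcal{U} A_H$. Summing the rows and columns of $\mathcal{U}$ forces $|V(G)| = |V(H)| =: n$, so the intertwiner spaces of $\qut(G)$ and $\qut(H)$ live in matrix spaces of matching dimensions and the identifications in property~(5) make sense.

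The technical heart of the argument is to establish the following bridging identity for every planar bi-labeled graph $\K \in \P(\ell,k)$:
\[
\mathcal{U}^{\otimes \ell}\, T^{\K \to H} = T^{\K \to G}\, \mathcal{U}^{\otimes k}.
\]
I would prove this by induction using \Thm{onegraphcat}, which asserts $\P = \langle \vec{M}^{1,0}, \vec{M}^{1,2}, \vec{A}\rangle_{\circ,\otimes,*}$. The base case for $\vec{A}$ is exactly the hypothesis $A_G \mathcal{U} = \mathcal{U} A_H$; the base cases for $\vec{M}^{1,2}$ and $\vec{M}^{1,0}$ reduce entrywise to the magic unitary identities $u_{gh_1}u_{gh_2} = \delta_{h_1 h_2}u_{gh_1}$ and $\sum_h u_{gh} = \vec{1}$. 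The composition and tensor product induction steps are routine chain computations using \Lem{compcorr} and \Lem{tensorcorr}. Transposition is the subtle step: I would take adjoints of the inductive hypothesis, then multiply on the left by $\mathcal{U}^{\otimes k}$ and on the right by $\mathcal{U}^{\otimes \ell}$, collapsing the resulting $\mathcal{U}^{\otimes m}(\mathcal{U}^{\otimes m})^*$ and $(\mathcal{U}^{\otimes m})^*\mathcal{U}^{\otimes m}$ factors to $I^{\otimes m} \otimes \vec{1}$ via the operator-valued unitarity of $\mathcal{U}$.

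With the bridging identity in hand, define $\Phi$ by linearly extending $T^{\K \to G} \mapsto T^{\K \to H}$ for $\K \in \P$; by \Thm{onecat}, such matrices span $C_q^G$ and their images span $C_q^H$. Well-definedness follows from the bridging identity combined with unitarity: if $\sum_i \lambda_i T^{\K_i \to G} = 0$, multiplying on the right by $\mathcal{U}^{\otimes k}$ and applying the bridging identity termwise yields $\mathcal{U}^{\otimes \ell}\bigl(\sum_i \lambda_i T^{\K_i \to H}\bigr) = 0$; multiplying on the left by $(\mathcal{U}^{\otimes \ell})^*$ then forces $\sum_i \lambda_i T^{\K_i \to H} = 0$. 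Properties~(1)--(4) of a weak isomorphism are inherited from the matrix-operation compatibility of bi-labeled graphs (Lemmas \ref{lem:compcorr}, \ref{lem:tensorcorr}, and~\ref{lem:transposecorr}); property~(5) holds because $T^{\vec{M}^{1,2} \to \cdot}$ and $T^{\vec{M}^{1,0} \to \cdot}$ depend only on $n$, so they equal $M^{1,2}$ and $M^{1,0}$ regardless of whether the target is $G$ or $H$. Bijectivity comes for free by running the same construction with the magic unitary $\mathcal{U}^*$, which intertwines $A_H$ and $A_G$ in the reverse direction, producing a two-sided inverse $\Psi\colon C_q^H \to C_q^G$. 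Finally, $\Phi(A_G) = \Phi(T^{\vec{A} \to G}) = T^{\vec{A} \to H} = A_H$. The main obstacle I anticipate is the transposition step of the bridging-identity induction, which requires careful two-sided manipulation exploiting operator-valued unitarity rather than a routine diagram chase.
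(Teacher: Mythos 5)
Your proposal is correct, and it reaches the same endpoint by a genuinely different route. The paper defines $\Phi$ \emph{implicitly} by the intertwining relation $\mathcal{U}^{\otimes \ell}T = \Phi(T)\mathcal{U}^{\otimes k}$, then invokes Chassaniol's algebraic generating set $C_q^G = \langle M^{1,0},M^{1,2},A_G\rangle_{+,\circ,\otimes,*}$ to show that $\Phi$ is total, checking compatibility with each of the four operations at the level of operator-valued matrices; the fact that $\Phi(T^{\K\to G}) = T^{\K \to H}$ for all $\K \in \P$ is deliberately postponed to a separate statement (\Lem{weakiso}). Your version pulls that conclusion into the definition: you \emph{explicitly} set $\Phi(T^{\K\to G}) := T^{\K\to H}$ and prove the bridging identity $\mathcal{U}^{\otimes\ell}T^{\K\to H} = T^{\K\to G}\mathcal{U}^{\otimes k}$ by induction over the bi-labeled graph generators $\vec{M}^{1,0},\vec{M}^{1,2},\vec{A}$ via \Thm{onegraphcat}, then use unitarity for well-definedness. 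This leans on the heavier combinatorial machinery of \Thm{onecat} and \Thm{onegraphcat}, which the paper's proof of this particular lemma avoids (making it logically self-contained relative to Chassaniol), but in exchange you obtain the explicit action of $\Phi$ on the spanning set immediately, which is precisely what the paper ends up needing anyway. Your handling of the transpose step --- take adjoints, sandwich by $\mathcal{U}^{\otimes k}$ and $\mathcal{U}^{\otimes\ell}$, collapse via $(\mathcal{U}^{\otimes m})^*\mathcal{U}^{\otimes m} = I^{\otimes m}\otimes\vec{1}$ --- is essentially the same manipulation the paper uses, just packaged differently; and your two-sided inverse via $\mathcal{U}^*$ is a clean alternative to the paper's separate injectivity/surjectivity arguments. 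The two approaches are equivalent in content; yours trades logical modularity for immediacy of the explicit formula.
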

\begin{proof}
By assumption there exists a magic unitary $\mathcal{U}$ such that $\mathcal{U}A_G = A_H\mathcal{U}$. We will show that the map $\Phi :C_q^G \to C_q^H$ given by $\mathcal{U}^{\otimes \ell}T = \Phi(T)\mathcal{U}^{\otimes k}$ for $T \in C_q^G(\ell,k)$ for all $\ell,k \ge 0$ is a weak isomorphism.

First we must show that this is well-defined, i.e., that $T\mathcal{U}^{\otimes k} = T'\mathcal{U}^{\otimes k}$ implies that $T = T'$. As $\mathcal{U}^{\otimes k}$ is unitary, we have that $T\mathcal{U}^{\otimes k} = T'\mathcal{U}^{\otimes k}$ implies that $T(I^{\otimes k} \otimes \vec{1}) = T\mathcal{U}^{\otimes k}(\mathcal{U}^{\otimes k})^* = T'\mathcal{U}^{\otimes k}(\mathcal{U}^{\otimes k})^* = T'(I^{\otimes k} \otimes \vec{1})$. Of course the entries of $T(I^{\otimes k} \otimes \vec{1})$ are just the entries of $T$ times $\vec{1}$, and similarly for $T'(I^{\otimes k} \otimes \vec{1})$. Thus we have that $T = T'$ as desired.

We still need to show that $\Phi$ is defined on every element of $C_q^G$, since not every $n^\ell \times n^k$ matrix $T$ satisfies $\mathcal{U}^{\otimes \ell}T = T'\mathcal{U}^{\otimes k}$ for some matrix $T'$. By assumption we have that $\mathcal{U}A_G = A_H\mathcal{U}$, and it follows from the properties of magic unitaries that $\mathcal{U}M^{1,0} = M^{1,0}\mathcal{U}^{\otimes 0}$ and $\mathcal{U}M^{1,2} = M^{1,2}\mathcal{U}^{\otimes 2}$. Since $C_q^G = \langle M^{1,0}, M^{1,2}, A_G\rangle_{+, \circ, \otimes, *}$, to show that $\Phi$ is defined everywhere on $C_q^G$, it suffices to show that $\Phi$ commutes with the necessary operations.

Suppose that $\mathcal{U}^{\otimes \ell}T_i = T'_i\mathcal{U}^{\otimes k}$ and $\alpha_i \in \mathbb{C}$ for $i =1, \ldots, m$. Then it is straightforward to see that
\[\mathcal{U}^{\otimes \ell}\left(\sum_{i=1}^m \alpha_i T_i\right) = \left(\sum_{i=1}^m \alpha_i T'_i\right)\mathcal{U}^{\otimes k}.\]

Next suppose that $T_1,T'_1 \in C_n(\ell,k)$, $T_2,T'_2 \in C_n(k,m)$, and $\mathcal{U}^{\otimes \ell}T_1 = T'_1\mathcal{U}^{\otimes k}$ and $\mathcal{U}^{\otimes k}T_2 = T'_2\mathcal{U}^{\otimes m}$. Then $\mathcal{U}^{\otimes \ell}(T_1T_2) = T'_1\mathcal{U}^{\otimes k}T_2 = (T'_1T'_2)\mathcal{U}^{\otimes m}$. Thus $\Phi$ commutes with matrix multiplication.

Suppose that $T_1,T'_1 \in C_n(\ell,k)$, $T_2,T'_2 \in C_n(r,s)$, and $\mathcal{U}^{\otimes \ell}T_1 = T'_1\mathcal{U}^{\otimes k}$ and $\mathcal{U}^{\otimes r}T_2 = T'_2\mathcal{U}^{\otimes s}$. Then
\[\mathcal{U}^{\otimes (\ell + r)}(T_1 \otimes T_2) = (\mathcal{U}^{\otimes \ell}\otimes \mathcal{U}^{\otimes r})(T_1 \otimes T_2) = (\mathcal{U}^{\otimes \ell}T_1) \otimes (\mathcal{U}^{\otimes r}T_2) = (T'_1\mathcal{U}^{\otimes k}) \otimes (T'_2\mathcal{U}^{\otimes s}) = (T'_1 \otimes T'_2)\mathcal{U}^{\otimes (k+s)}.\]

Finally, suppose that $T,T' \in C(\ell,k)$ and $\mathcal{U}^{\otimes \ell}T = T'\mathcal{U}^{\otimes k}$. Then
\[\mathcal{U}^{\otimes k}T^* = \mathcal{U}^{\otimes k}[T^*(\mathcal{U}^{\otimes \ell})^*]\mathcal{U}^{\otimes \ell} = \mathcal{U}^{\otimes k}[\mathcal{U}^{\otimes \ell}T]^*\mathcal{U}^{\otimes \ell} = \mathcal{U}^{\otimes k}[T'\mathcal{U}^{\otimes k}]^*\mathcal{U}^{\otimes \ell} = \mathcal{U}^{\otimes k}(\mathcal{U}^{\otimes k})^*T'^*\mathcal{U}^{\otimes \ell} = T'^*\mathcal{U}^{\otimes \ell}.\]
Thus we have shown that $\Phi$ commutes with all necessary operations which, along with the fact that $\Phi$ is defined on $M^{1,0}, M^{1,2}$, and $A_G$, implies that $\Phi$ is defined everywhere on $C_q^G$. Since $C_q^H = \langle M^{1,0}, M^{1,2}, A_H\rangle_{+, \circ, \otimes, *}$ and we have already seen that $M^{1,0},M^{1,2}$,and $A_H$ are in the image of $\Phi$, it follows that $\Phi$ is surjective. The injectivity of $\Phi$ follows from the unitarity of $\mathcal{U}^{\otimes \ell}$ via an argument identical to the one showing that $\Phi$ is well-defined. This is the last of the properties required of a weak isomorphism and thus we have completed the proof.
\end{proof}

\subsection{Weak isomorphism implies planar isomorphism}

In order to prove that the existence of a weak isomorphism $\Phi \colon C_q^G \to C_q^H$ such that $\Phi(A_G) = A_H$ implies that $G$ and $H$ are planar isomorphic, we will need to prove two things: that any such weak isomorphism maps $T^{\K \to G}$ to $T^{\K \to H}$ for any $\K \in \P$, and that any such $\Phi$ is sum-preserving. We begin with the former:

\begin{lemma}\label{lem:weakiso}
Let $G$ and $H$ be graphs which admit a weak isomorphism $\Phi\colon C_q^G \to C_q^H$ such that $\Phi(A_G) = A_H$. Then $\Phi(T^{\vec{K} \to G}) = T^{\vec{K} \to H}$ for all $\vec{K} \in \P$.
\end{lemma}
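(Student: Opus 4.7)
The plan is a direct structural induction, using Theorem \ref{thm:onegraphcat} which gives $\P = \langle \vec{M}^{1,0}, \vec{M}^{1,2}, \vec{A}\rangle_{\circ,\otimes,*}$. So every $\vec{K} \in \P$ can be built from the three generators $\vec{M}^{1,0}, \vec{M}^{1,2}, \vec{A}$ by finitely many applications of composition, tensor product, and transposition. I will induct on the length of such an expression.

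For the base cases, I would check the three generators directly. By Examples \ref{ex:M} and \ref{ex:A}, we have $T^{\vec{M}^{1,0}\to G} = M^{1,0}$, $T^{\vec{M}^{1,2}\to G} = M^{1,2}$, and $T^{\vec{A}\to G} = A_G$, and analogously for $H$. Condition (5) of weak isomorphism gives $\Phi(M^{1,0}) = M^{1,0}$ and $\Phi(M^{1,2}) = M^{1,2}$, and the hypothesis gives $\Phi(A_G) = A_H$, so $\Phi(T^{\vec{K}\to G}) = T^{\vec{K}\to H}$ in each base case.

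For the inductive step, suppose the identity holds for $\vec{K}_1$ and $\vec{K}_2$. For composition, Lemma \ref{lem:compcorr} gives $T^{\vec{K}_1 \circ \vec{K}_2 \to G} = T^{\vec{K}_1 \to G}\, T^{\vec{K}_2 \to G}$, and condition (3) of weak isomorphism says $\Phi$ preserves matrix products, so
\[\Phi\bigl(T^{\vec{K}_1 \circ \vec{K}_2 \to G}\bigr) = \Phi(T^{\vec{K}_1\to G})\,\Phi(T^{\vec{K}_2\to G}) = T^{\vec{K}_1\to H}\,T^{\vec{K}_2\to H} = T^{\vec{K}_1 \circ \vec{K}_2 \to H}\]
by the induction hypothesis and another application of Lemma \ref{lem:compcorr}. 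The tensor product case is handled identically via Lemma \ref{lem:tensorcorr} and condition (2), and the transpose case via Lemma \ref{lem:transposecorr} and condition (4). This closes the induction.

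There is essentially no obstacle here — the statement is a clean compatibility check. The real content has already been packaged into three places: Theorem \ref{thm:onegraphcat}, which reduces the (infinite) family $\P$ to the three generators; the correspondence lemmas of Section \ref{sec:correspondence}, which translate bi-labeled graph operations into matrix operations; and the axioms of weak isomorphism, which ensure $\Phi$ respects exactly those matrix operations. Note that we do not need condition (1) (linearity) for this lemma, since no linear combinations appear in the expression of $\vec{K} \in \P$ from the generators; linearity will only be invoked later when extending $\Phi$ across all of $C_q^G$ and when deducing equal homomorphism counts from $\Phi(T^{\vec{K}\to G}) = T^{\vec{K}\to H}$.
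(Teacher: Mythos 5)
Your proof is correct and follows essentially the same route as the paper: induction on the length (the paper uses the minimum number of operations) of an expression for $\vec{K}$ from the generators $\vec{M}^{1,0}, \vec{M}^{1,2}, \vec{A}$ provided by Theorem~\ref{thm:onegraphcat}, with the base cases checked directly from condition (5) and $\Phi(A_G) = A_H$, and the inductive step handled by pairing each correspondence lemma (\ref{lem:compcorr}, \ref{lem:tensorcorr}, \ref{lem:transposecorr}) with the matching weak-isomorphism axiom. Your observation that linearity is not needed here is accurate but incidental.
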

\begin{proof}

Since $\Phi(M^{1,0}) = M^{1,0}$, $\Phi(M^{1,2}) = M^{1,2}$, and $\Phi(A_G) = A_H$, we have that $\Phi(T^{\K \to G}) = T^{\K \to H}$ for $\K = \vec{M}^{1,0}, \vec{M}^{1,2}, \vec{A}$. Since by \Thm{onegraphcat} $\P = \langle \vec{M}^{1,0}, \vec{M}^{1,2}, \vec{A}\rangle_{\circ,\otimes,*}$, any $\K \in \P$ can be written as an expression in $\vec{M}^{1,0}, \vec{M}^{1,2}$, and $\vec{A}$ using the operations of composition, tensor product, and transposition. Let $\mathrm{op}(\K)$ denote the minimum number of such operations required to express $\K$. We will induct on $\mathrm{op}(\K)$. Our bases cases are $\vec{M}^{1,0}, \vec{M}^{1,2}$, and $\vec{A}$ where $\mathrm{op}(\K) = 0$ and which we have already dealt with above. Now suppose that $\mathrm{op}(\K) > 0$. Consider an expression for $\K$ using $\mathrm{op}(\K)$ operations and consider the last operation performed which is one of composition, tensor product, or transposition. If the last operation was composition, then $\K = \K' \circ \K''$ where $\mathrm{op}(\K'), \mathrm{op}(\K'') < \mathrm{op}(\K)$. Thus by induction we have that $\Phi(T^{\K' \to G}) = T^{\K' \to H}$ and $\Phi(T^{\K'' \to G}) = T^{\K'' \to H}$. By \Lem{compcorr} and the fact that $\Phi$ commutes with composition, we have that
\begin{align*}
\Phi(T^{\K \to G}) &= \Phi(T^{\K' \circ \K'' \to G}) = \Phi(T^{\K' \to G}T^{\K'' \to G}) = \Phi(T^{\K' \to G})\Phi(T^{\K'' \to G}) \\
&= T^{\K' \to H}T^{\K'' \to H} = T^{\K' \circ \K'' \to H} = T^{\K \to H}.
\end{align*}
The cases where the last operation used to express $\K$ were tensor product or transposition are essentially identical. Thus by induction we have proven the lemma.
\end{proof}

Next we will need to show that a weak isomorphism $\Phi\colon C_q^G \to C_q^H$ is \emph{sum-preserving}, meaning that the sum of the entries of $T$ and of $\Phi(T)$ are the same. We denote the sum of the entries of a matrix $T$ as $\text{sum}(T)$. First we prove the following lemma, where we temporarily revert to using the notation $U$ for the unit map instead of $M^{1,0}$ to avoid too many superscripts.

\begin{lemma}\label{lem:sumfromunit}
Let $n \ge 1$ and $\ell,k \ge 0$ be integers, and let $U\colon \mathbb{C} \to \mathbb{C}^n$ be the unit map. If $T \in C_n(\ell,k)$, then $(U^{\otimes \ell})^*TU^{\otimes k}$ is the $1 \times 1$ matrix with entry $\text{sum}(T)$.
\end{lemma}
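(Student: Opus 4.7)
The proof plan is essentially a direct unfolding of definitions. Recall from \Eq{multunit} that $U$ sends $1 \in \mathbb{C}$ to $\sum_{i=1}^n e_i \in \mathbb{C}^n$, so as a matrix $U$ is simply the $n \times 1$ all-ones column vector. The first step is to observe that the tensor product structure is preserved: $U^{\otimes k}(1) = \sum_{j_1, \ldots, j_k \in [n]} e_{j_1} \otimes \cdots \otimes e_{j_k}$, i.e., $U^{\otimes k}$ is the $n^k \times 1$ all-ones column vector. By taking the conjugate transpose (and noting that $U$ has real entries), $(U^{\otimes \ell})^*$ is the $1 \times n^\ell$ all-ones row vector.

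Now I would just compute the entry of the $1 \times 1$ matrix $(U^{\otimes \ell})^* T U^{\otimes k}$ directly. Writing $\vec{i}$ and $\vec{j}$ for multi-indices in $[n]^\ell$ and $[n]^k$ respectively, I get
\[
(U^{\otimes \ell})^* T U^{\otimes k} = \sum_{\vec{i} \in [n]^\ell} \sum_{\vec{j} \in [n]^k} \left((U^{\otimes \ell})^*\right)_{\vec{i}} T_{\vec{i}, \vec{j}} \left(U^{\otimes k}\right)_{\vec{j}} = \sum_{\vec{i}, \vec{j}} T_{\vec{i}, \vec{j}} = \text{sum}(T),
\]
since every entry of $(U^{\otimes \ell})^*$ and of $U^{\otimes k}$ is $1$.

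There is no serious obstacle here: the lemma is a tautology once one notes that $U$ is the all-ones vector and that tensor powers of all-ones vectors are again all-ones vectors. The only (minor) care needed is to be explicit that $\mathcal{U}^{\otimes 0}$ is the $1 \times 1$ identity and that the edge cases $\ell = 0$ or $k = 0$ are handled uniformly by the convention that an empty tensor product equals the scalar $1$, so that the formula still reduces to summing the single row (or column) of $T$.
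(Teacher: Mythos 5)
Your proof is correct and takes essentially the same approach as the paper: both observe that $U$ is the all-ones column vector, that its tensor powers are again all-ones vectors, and then conclude by direct matrix multiplication. Your version just spells out the entry-by-entry sum a bit more explicitly.
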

\begin{proof}
As $U(1) = \sum_{i=1}^n e_i$, the matrix representation of $U$ is the $n \times 1$ all ones matrix. Thus $U^{\otimes k}$ is the $n^k \times 1$ all ones matrix and $(U^{\otimes \ell})^*$ is the $1 \times n^\ell$ all ones matrix. Simple matrix multiplication shows that $(U^{\otimes \ell})^*TU^{\otimes k}$ is then the $1 \times 1$ matrix whose only entry is $\text{sum}(T)$.
\end{proof}

\begin{lemma}\label{lem:identity}
Let $\Phi\colon C_q^G \to C_q^H$ be a weak isomorphism. Then $\Phi(I^{\otimes r}) = I^{\otimes r}$ for all integer $r \ge 0$. Furthermore, $|V(G)| = |V(H)|$.
\end{lemma}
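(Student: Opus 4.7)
The plan is to exploit two elementary identities inside the tensor category $C_q^G$: namely
\[
I \;=\; M^{1,2}\circ (M^{1,2})^{*} \qquad\text{and}\qquad |V(G)|\cdot(1) \;=\; (M^{1,0})^{*}\circ M^{1,0}.
\]
Both are direct computations: $(M^{1,2})^{*}$ sends $e_i$ to $e_i\otimes e_i$, which $M^{1,2}$ sends back to $e_i$; and $M^{1,0}$ is the all-ones column vector of length $|V(G)|$, so $(M^{1,0})^{*}M^{1,0}$ is the $1\times 1$ matrix with entry $|V(G)|$. Once these identities are in place, everything follows from chasing the defining properties of $\Phi$.

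Before invoking them, I would first pin down how $\Phi$ acts on the ``degree-$(0,0)$ piece''. Observe that $C_q^G(0,0)$ and $C_q^H(0,0)$ are both canonically identified with $\mathbb{C}$ (the space of $1\times 1$ matrices). By property (1) the restriction of $\Phi$ is a linear bijection between these one-dimensional spaces; by property (3), applied to the multiplication $1\cdot 1 = 1$ in this one-dimensional subalgebra, we get $\Phi(1) = \Phi(1)^{2}$, so $\Phi(1)\in\{0,1\}$. Bijectivity excludes $0$, so $\Phi(1)=1$, and linearity then gives $\Phi(c)=c$ for every $c\in\mathbb{C}$.

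From here $\Phi(I)=I$ is immediate: using properties (3), (4) and (5),
\[
\Phi(I) \;=\; \Phi\!\left(M^{1,2}(M^{1,2})^{*}\right)\;=\;\Phi(M^{1,2})\,\Phi(M^{1,2})^{*}\;=\;M^{1,2}(M^{1,2})^{*}\;=\;I.
\]
To upgrade this to $\Phi(I^{\otimes r})=I^{\otimes r}$ for all $r\ge 0$, I would induct on $r$. The base case $r=0$ is the scalar computation above ($I^{\otimes 0}$ is the $1\times 1$ identity, i.e.\ $1\in\mathbb{C}$). The inductive step uses tensor compatibility (property (2)):
\[
\Phi(I^{\otimes r}) \;=\; \Phi(I\otimes I^{\otimes (r-1)}) \;=\; \Phi(I)\otimes\Phi(I^{\otimes (r-1)})\;=\; I\otimes I^{\otimes (r-1)}\;=\;I^{\otimes r}.
\]

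Finally, for $|V(G)|=|V(H)|$, I would apply $\Phi$ to the identity $(M^{1,0})^{*}M^{1,0}=|V(G)|\cdot(1)$ read in $C_q^G(0,0)$. Properties (3)--(5) give
\[
\Phi\!\left((M^{1,0})^{*}M^{1,0}\right)\;=\;\Phi(M^{1,0})^{*}\Phi(M^{1,0})\;=\;(M^{1,0})^{*}M^{1,0}\;=\;|V(H)|\cdot(1),
\]
while the first paragraph shows $\Phi$ is the identity on scalars, so the left-hand side equals $|V(G)|\cdot(1)$. Comparing entries yields $|V(G)|=|V(H)|$. There is no real obstacle in the argument; the only thing one has to notice is that the two identities $I=M^{1,2}(M^{1,2})^{*}$ and $|V(G)|=(M^{1,0})^{*}M^{1,0}$ are precisely what is needed to convert the abstract compatibilities of $\Phi$ into the desired concrete statements.
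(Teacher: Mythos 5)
Your proof is correct. It uses the same two key identities as the paper ($I = M^{1,2}(M^{1,2})^{*}$, or equivalently $I = M^{1,2}M^{2,1}$, and $(M^{1,0})^{*}M^{1,0} = |V(G)|\cdot(1)$), but your treatment of the scalar component is cleaner: you establish $\Phi(1)=1$ directly from idempotency and bijectivity of the restriction $\Phi\colon C_q^G(0,0) \to C_q^H(0,0)$, after which $|V(G)| = |V(H)|$ falls out in a single line by comparing $\Phi\left((M^{1,0})^*M^{1,0}\right)$ computed two ways. The paper instead first shows $\Phi(I_G^{\otimes 0}) = \frac{n'}{n}I_H^{\otimes 0}$ from $U^*U = nI^{\otimes 0}$ and only then invokes idempotency of $I^{\otimes 0}$ to force $\frac{n'}{n}=1$, so it does not explicitly use the injectivity hypothesis of property (1) on the $(0,0)$-piece, relying instead on multiplicativity twice. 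Both arguments are short and valid; yours is arguably the tidier of the two since it isolates the structural fact $\Phi(1)=1$ up front and lets the vertex-count equality be an immediate corollary rather than a by-product of an auxiliary computation.
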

\begin{proof}
By definition, we have that $\Phi(M^{1,2}) = M^{1,2}$ and $\Phi(M^{2,1}) = \Phi((M^{1,2})^*) = \Phi((M^{1,2}))^* = (M^{1,2})^* = M^{2,1}$. Since $M^{1,2}M^{2,1} = I$, it follows that
\[\Phi(I) = \Phi(M^{1,2}M^{2,1}) = \Phi(M^{1,2})\Phi(M^{2,1}) = M^{1,2}M^{2,1} = I.\]
From this and the fact that $\Phi$ commutes with tensor product, we have that $\Phi(I^{\otimes r}) = I^{\otimes r}$ for all $r \ge 1$. For $r = 0$, we will use the fact that the unit map $U \in C_q^G$ satisfies $U^*U = |V(G)|I^{\otimes 0}$ and similarly for the unit map of $C_q^H$. But here is the one place where we must be careful to distinguish these two unit maps. So we use $U_G, I_G \in C_q^G$ and $U_H, I_H \in C_q^H$ to denote the respective unit and identity maps. Letting $n = |V(G)|$ and $n' = |V(H)|$, we have that
\[n\Phi(I_G^{\otimes 0}) = \Phi(nI_G^{\otimes 0}) = \Phi(U_G^*U_G) = \Phi(U_G)^*\Phi(U_G) = U_H^*U_H = n'I_H^{\otimes 0}.\]
Thus $\Phi(I_G^{\otimes 0}) = \frac{n'}{n}I_H^{\otimes 0}$. However, we then have that
\[\Phi(I_G^{\otimes 0}) = \Phi(I_G^{\otimes 0} I_G^{\otimes 0}) = \Phi(I_G^{\otimes 0})\Phi(I_G^{\otimes 0}) = \frac{n'^2}{n^2}I_H^{\otimes 0}.\]
It follows that $\frac{n'}{n} = 1$ and so $\Phi(I_G^{\otimes 0}) = I_H^{\otimes 0}$ and $|V(G)| = |V(H)|$ as desired.
\end{proof}

We remark that the above implies that the restriction of $\Phi$ to $C_q^G(0,0)$ is the identity map.

\begin{lemma}\label{lem:equalsums}
Suppose that $\Phi\colon C_q^G \to C_q^H$ is a weak isomorphism. Then $\text{sum}(\Phi(T)) = \text{sum}(T)$ for all $T \in C_q^G$.
\end{lemma}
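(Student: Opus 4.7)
The plan is to reduce the sum of the entries of an intertwiner to an expression built from the unit map, and then use the fact that $\Phi$ fixes the unit map together with the compatibility of $\Phi$ with composition, tensor product, and transposition.

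First, I would recall \Lem{sumfromunit}, which states that for any $T \in C_n(\ell,k)$, the $1\times 1$ matrix $(U^{\otimes \ell})^* T U^{\otimes k}$ has entry $\text{sum}(T)$, where $U = M^{1,0}$. The key observation is that this is an identity purely expressed using the operations on which $\Phi$ is defined to commute.

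Concretely, for $T \in C_q^G(\ell,k)$, I would apply $\Phi$ to both sides of $(U_G^{\otimes \ell})^* T U_G^{\otimes k} = \text{sum}(T) \cdot I_G^{\otimes 0}$. On the left, using that $\Phi(M^{1,0}) = M^{1,0}$ and that $\Phi$ commutes with tensor product, transposition, and composition, we get
\begin{equation*}
\Phi\bigl((U_G^{\otimes \ell})^* T U_G^{\otimes k}\bigr) = (U_H^{\otimes \ell})^* \Phi(T) U_H^{\otimes k},
\end{equation*}
which by another application of \Lem{sumfromunit} equals $\text{sum}(\Phi(T)) \cdot I_H^{\otimes 0}$. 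On the right, $\Phi$ is linear (it is linear on each $C_q^G(\ell,k)$), so $\Phi(\text{sum}(T) \cdot I_G^{\otimes 0}) = \text{sum}(T) \cdot \Phi(I_G^{\otimes 0})$, and by \Lem{identity} we have $\Phi(I_G^{\otimes 0}) = I_H^{\otimes 0}$. Comparing the two sides of this $1 \times 1$ matrix identity gives $\text{sum}(\Phi(T)) = \text{sum}(T)$.

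There is essentially no obstacle here: the statement is a direct consequence of \Lem{sumfromunit}, \Lem{identity}, and the defining properties of a weak isomorphism. The only point to be mildly careful about is the distinction between $U_G, I_G^{\otimes 0}$ and $U_H, I_H^{\otimes 0}$ before invoking \Lem{identity}, but since $|V(G)| = |V(H)|$ and $\Phi$ sends each of these to its counterpart, no arithmetic factor is introduced.
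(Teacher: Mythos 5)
Your proof is correct and takes essentially the same route as the paper: apply $\Phi$ to the identity $(U^{\otimes \ell})^* T U^{\otimes k} = \text{sum}(T)\,I^{\otimes 0}$, use \Lem{sumfromunit}, \Lem{identity}, and the compatibility of $\Phi$ with composition, tensor product, transposition, and linearity to conclude $\text{sum}(\Phi(T)) = \text{sum}(T)$.
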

\begin{proof}
Let $T \in C_q^G(\ell,k)$ for some $\ell,k$. By \Lem{identity}, $\Phi(I^{\otimes r}) = I^{\otimes r}$ for all $r \in \mathbb{N}$ where $I$ is the $n \times n$ identity and $n = |V(G)| = |V(H)|$. In particular, $\Phi(I^{\otimes 0}) = I^{\otimes 0}$. Let $m = \text{sum}(T)$ and $m' = \text{sum}(\Phi(T))$. Then, by \Lem{sumfromunit} we have that
\[\Phi(mI^{\otimes 0}) = \Phi((U^{\otimes \ell})^*TU^{\otimes k}) = (U^{\otimes \ell})^*\Phi(T)U^{\otimes k} = m' I^{\otimes 0}.\]
As $\Phi$ is linear, this implies that $m = m'$.
\end{proof}

Note that if $\vec{K} = (K,\vec{a},\vec{b}) \in \G(k,\ell)$, then $\text{sum}(T^{\vec{K} \to G}) = \hom(K,G)$. From this observation we obtain the second implication required for our main result:

\begin{lemma}\label{lem:weakiso2piso}
Let $G$ and $H$ be graphs. If there is a weak isomorphism $\Phi\colon C_q^G \to C_q^H$ such that $\Phi(A_G) = A_H$, then $G \cong_\P H$.
\end{lemma}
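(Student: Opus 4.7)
The plan is to combine Lemma \ref{lem:weakiso} with the trivial observation that for a bi-labeled graph with no input or output vertices, the entire homomorphism count from its underlying graph is encoded as the single entry of its $G$-homomorphism matrix. Concretely, given an arbitrary planar graph $K$, I would form the bi-labeled graph $\vec{K} := (K,\varnothing,\varnothing) \in \G(0,0)$. By \Rem{Psmalllk} (which says that for $\ell+k \le 1$, membership in $\P(\ell,k)$ reduces to planarity of the underlying graph), we have $\vec{K} \in \P(0,0)$.

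By \Ex{00}, $T^{\vec{K} \to G}$ is the $1 \times 1$ matrix whose sole entry equals $\hom(K,G)$, and analogously the sole entry of $T^{\vec{K} \to H}$ is $\hom(K,H)$. Since $\Phi$ is a weak isomorphism with $\Phi(A_G)=A_H$, I may invoke \Lem{weakiso} to conclude $\Phi(T^{\vec{K}\to G}) = T^{\vec{K}\to H}$. It only remains to extract equality of scalars from this identity.

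To do this I would apply \Lem{equalsums}, which yields $\mathrm{sum}(T^{\vec{K}\to G}) = \mathrm{sum}(\Phi(T^{\vec{K}\to G})) = \mathrm{sum}(T^{\vec{K}\to H})$. Since both matrices are $1\times 1$, their sums \emph{are} their entries, so $\hom(K,G) = \hom(K,H)$. Alternatively, one could note that \Lem{identity} combined with linearity of $\Phi$ on $C_q^G(0,0)$ forces $\Phi$ to act as the identity on the one-dimensional space spanned by $I^{\otimes 0}$, so the scalars coincide directly. Either route finishes the proof, and since $K$ was an arbitrary planar graph we obtain $G \cong_\P H$.

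There is no real obstacle here: the lemma is essentially a corollary of \Lem{weakiso}, with all of the serious work (establishing the characterization $\P = \langle \vec{M}^{1,0}, \vec{M}^{1,2}, \vec{A}\rangle_{\circ,\otimes,*}$ in \Thm{onegraphcat}, proving the correspondence between bi-labeled graph operations and matrix operations, and showing that weak isomorphisms preserve homomorphism matrices of planar bi-labeled graphs) having already been done upstream. The $(0,0)$-case is exactly the slot where ``input/output'' decorations vanish and the raw homomorphism-count equality falls out.
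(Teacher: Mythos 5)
Your proof is correct and follows the same route as the paper: take $\vec{K} = (K,\varnothing,\varnothing) \in \P(0,0)$ for an arbitrary planar graph $K$, apply \Lem{weakiso} to get $\Phi(T^{\vec{K}\to G}) = T^{\vec{K}\to H}$, and extract $\hom(K,G) = \hom(K,H)$ via \Lem{equalsums}. Your alternative route through \Lem{identity} is also sound (and is essentially what the paper observes in the remark following that lemma).
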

\begin{proof}
Let $K$ be a planar graph. Then $\vec{K} = (K,\varnothing,\varnothing) \in \P(0,0)$. Let $T = T^{\vec{K} \to G}$ and $T' = T^{\vec{K} \to H}$. By \Thm{onecat}, we have that $T \in C_q^G$ and $T' \in C_q^H$. By \Lem{weakiso} we have that $\Phi(T) = T'$, and by \Lem{equalsums} we have that $\hom(K,G) = \text{sum}(T) = \text{sum}(T') = \hom(K,H)$.
\end{proof}

\subsection{Planar isomorphism implies quantum isomorphism}

The proof that planar isomorphism implies quantum isomorphism requires the most work of the three implications comprising our main result. To prove it we will need several lemmas that establish some properties of planar isomorphism. We first simply remark that if $G$ and $H$ are planar isomorphic, then they must have the same number of vertices since $|V(G)| = \hom(K_1,G) = \hom(K_1,H) = |V(H)|$. Only slightly more work is required to prove that it suffices to check homomorphisms from \emph{connected} planar graphs to determine planar isomorphism:

\begin{lemma}\label{lem:connectedinputs}
Graphs $G$ and $H$ are planar isomorphic if and only if
\[\hom(K,G) = \hom(K,H) \text{ for all connected planar graphs } K.\]
\end{lemma}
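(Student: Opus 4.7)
The forward direction is immediate from the definition: if $G \cong_\P H$ then in particular homomorphism counts agree on the subclass of connected planar graphs. So the content lies in the backward implication, and the plan is to reduce an arbitrary planar graph to its connected components.

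The first observation I would use is that planarity is preserved under taking connected components, and conversely a (multi)graph is planar iff each of its connected components is planar. Thus any planar graph $K$ can be written as a disjoint union $K = K_1 \cup \cdots \cup K_m$ where each $K_i$ is a connected planar graph.

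The second, and main, ingredient is that homomorphism counts are multiplicative over disjoint unions:
\[
\hom(K_1 \cup \cdots \cup K_m, \, G) \;=\; \prod_{i=1}^m \hom(K_i, G),
\]
since a homomorphism from $K$ to $G$ is determined by (and equivalent to) independently choosing a homomorphism $K_i \to G$ for each component $K_i$. This is a standard and elementary fact that I would state and use without a detailed calculation.

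Combining these, given the hypothesis that $\hom(K_i, G) = \hom(K_i, H)$ for every connected planar $K_i$, I obtain
\[
\hom(K, G) \;=\; \prod_{i=1}^m \hom(K_i, G) \;=\; \prod_{i=1}^m \hom(K_i, H) \;=\; \hom(K, H),
\]
for an arbitrary planar $K$, which is exactly $G \cong_\P H$. There is no real obstacle here — the lemma is purely a bookkeeping reduction that isolates connectedness as the essential feature, so that later arguments (using, e.g., \Lem{P01connected} to bring in the quantum automorphism group machinery) only need to deal with connected planar graphs.
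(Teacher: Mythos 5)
Your proof is correct and follows essentially the same route as the paper's: decompose an arbitrary planar graph into its connected components and use the multiplicativity of homomorphism counts over disjoint unions. Nothing to add.
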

\begin{proof}
The forward direction is obvious. Suppose that $\hom(K,G) = \hom(K,H)$ for all connected planar graphs $K$. Let $K'$ be an arbitrary planar graph with connected components $K_1, \ldots, K_m$. Then we have that $\hom(K_i,G) = \hom(K_i,H)$ for all $i = 1, \ldots, m$ and it is easy to see that $\hom(K',G) = \prod_{i=1}^m \hom(K_i,G)$ and similarly for $H$. Thus
\[
\hom(K',G) = \prod_{i=1}^m \hom(K_i,G) = \prod_{i=1}^m \hom(K_i,G) = \hom(K',H).
\]
\end{proof}

Of course the same proof works to show that $G \cong_\mathcal{F} H$ is determined by the connected graphs in the family $\mathcal{F}$. Next we show that both quantum and planar isomorphism preserve the number of connected components.

\begin{lemma}\label{lem:connectpreserve}
If $G \cong_\P H$ then $G$ and $H$ have the same number of connected components. The same is true for $G \cong_{qc} H$.
\end{lemma}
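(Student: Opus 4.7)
My plan is to establish the $\cong_{qc}$ case as a corollary of the $\cong_\P$ case. Indeed, Lemmas~\ref{lem:qiso2weakiso} and~\ref{lem:weakiso2piso}, already proven in this section, together show that $G \cong_{qc} H$ implies $G \cong_\P H$. Hence it suffices to treat $\cong_\P$.

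For the $\cong_\P$ case, I plan to invoke the Dell--Grohe--Rattan theorem recalled in the introduction. The crucial observation is that every graph of treewidth at most $2$ is planar: such graphs are precisely those with no $K_4$ as a minor, and since both $K_{3,3}$ and $K_5$ contain $K_4$ as a minor, Wagner's theorem gives planarity. Therefore $G \cong_\P H$ implies $\hom(K, G) = \hom(K, H)$ for every graph $K$ of treewidth at most $2$, and by Dell--Grohe--Rattan this is equivalent to $G$ and $H$ being indistinguishable by the $2$-dimensional Weisfeiler--Leman (2-WL) algorithm.

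The substantive step is then to show that $2$-WL indistinguishability preserves the number of connected components. The plan is to prove by induction on $k$ that the stable $2$-WL color of a pair $(u, v)$ determines the entry $(A_G^k)_{uv}$ of the $k$-th power of the adjacency matrix: the refinement rule encodes, for each pair, the multiset of color-pairs $(c_t(u, w), c_t(w, v))$ as $w$ ranges over $V(G)$, and combined with the walk recurrence $(A_G^{k+1})_{uv} = \sum_w (A_G^k)_{uw} A_{wv}$ this gives the induction step. As a consequence, the stable color of $(u, v)$ determines whether $u$ and $v$ lie in the same connected component of $G$ (detected by $(A_G^m)_{uv} > 0$ for some $m \leq |V(G)|$), and in particular the color of a diagonal pair $(u, u)$ encodes the size $|V(G_u)|$ of the component containing $u$. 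The $2$-WL indistinguishability then forces the multiset $\{|V(G_u)| : u \in V(G)\}$ to coincide between $G$ and $H$; since each value $s$ appears with multiplicity $s \cdot n_s$ in this multiset, where $n_s$ is the number of components of size $s$, dividing by $s$ recovers the multiset of component sizes, and in particular the total number of connected components.

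The main obstacle is the $2$-WL preservation argument itself: while it is a standard Weisfeiler--Leman inductive computation, it must be spelled out explicitly here, since the algorithm is only briefly alluded to in the introduction and its invariance properties have not been developed in the paper.
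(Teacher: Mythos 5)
Your proposal is correct and follows essentially the same route as the paper for the $\cong_\P$ case: invoke Dell--Grohe--Rattan after observing that treewidth-$2$ graphs are planar. Two small deviations are worth noting. First, for $\cong_{qc}$ the paper directly cites a result from~\cite{qperms} that quantum isomorphic graphs are $2$-WL indistinguishable, whereas you derive $G \cong_{qc} H \Rightarrow G \cong_\P H$ from \Lem{qiso2weakiso} and \Lem{weakiso2piso}; your route is self-contained within the section but depends on lemmas the paper chose not to invoke here, perhaps to avoid the appearance of circularity before \Thm{main} is assembled. Second, the paper dismisses ``$2$-WL indistinguishable $\Rightarrow$ same number of components'' as well-known, while you supply the proof (stable color of $(u,v)$ determines $(A^k)_{uv}$ by the walk recurrence and the stability of the refinement, hence determines reachability; the diagonal colors then encode component sizes, and the multiset $\{|V(G_u)|\}_u$ recovers the component-size multiset upon dividing multiplicities by the size). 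This argument is correct -- one should note that the diagonal pairs form their own color classes from the initial coloring onward, so matching global color histograms forces matching diagonal histograms -- and is a reasonable expansion of the paper's appeal to folklore rather than a genuinely different approach.
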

\begin{proof}
In~\cite{DGR}, they showed that graphs $G$ and $H$ are not distinguished by the 2-dimensional Weisfeiler-Leman algorithm if and only if $\hom(K,G) = \hom(K,H)$ for all graphs $K$ with treewidth at most $2$. The graphs with treewidth at most 2 are exactly those not having $K_4$ as a minor. Thus any such graph is planar. Therefore if $G \cong_\P H$, then $G$ and $H$ are not distinguished by 2-dimensional Weisfeiler-Leman. Similarly, it was shown in~\cite{qperms} that quantum isomorphic graphs are not distinguished by 2-dimensional Weisfeiler-Leman. Finally, it is well-known that if $G$ and $H$ are not distinguished by 2-dimensional Weisfeiler-Leman, then they have the same number of connected components.
\end{proof}

The next lemma shows that graphs are planar isomorphic if and only if their (full) complements are. Recall from \Rem{complements} that the same holds for quantum isomorphic graphs.

\begin{lemma}\label{lem:complements}
For graphs $G$ and $H$, we have that $G \cong_\P H$ if and only if $\overline{G} \cong_\P \overline{H}$ if and only if $\overline{\overline{G}} \cong_\P \overline{\overline{H}}$.
\end{lemma}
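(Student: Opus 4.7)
The plan is to prove both biconditionals via a common template: for any planar graph $K$, I will establish an identity of the form $\hom(K, \overline G) = \sum_i c_i\, \hom(K_i, G)$ and $\hom(K, \overline{\overline G}) = \sum_i d_i\, \hom(K'_i, G)$, where the coefficients and the planar multigraphs $K_i, K'_i$ depend only on $K$. Since both $\overline{\cdot}$ and $\overline{\overline{\cdot}}$ are involutions on graphs, such identities immediately yield all four implications.

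The full-complement case is immediate. The identity $A_{\overline{\overline G}}(x,y) = 1 - A_G(x,y)$ holds uniformly in $x,y$, so
\[
\hom(K, \overline{\overline G}) = \sum_{\varphi \colon V(K) \to V(G)} \prod_{e \in E(K)} \bigl(1 - A_G(\varphi(u_e), \varphi(v_e))\bigr) = \sum_{S \subseteq E(K)} (-1)^{|S|}\, \hom\bigl((V(K), S), G\bigr),
\]
and each $(V(K), S)$ is a subgraph of $K$, hence planar.

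For the single-complement case I start from the entrywise identity
\[
A_{\overline G}(x,y) = (1 - \delta_{x,y})(1 - A_G(x,y)) + \delta_{x,y}\, A_G(x,y),
\]
which holds for all $x, y$ because $\overline G$ and $G$ share the same loops. Loop edges of $K$ contribute $A_G(\varphi(u_e), \varphi(u_e))$ directly. For each non-loop edge I use the rearrangement $(1-\delta_e)(1 - A_{G,e}) + \delta_e A_{G,e} = 1 - \delta_e - A_{G,e} + 2\delta_e A_{G,e}$ and expand the product over non-loop edges into a sum indexed by assignments $f \colon E_{\mathrm{nl}}(K) \to \{1,\, -\delta,\, -A,\, 2\delta A\}$. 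Each such $f$ determines a multigraph $K_f$ obtained from $K$ by deleting the edges labelled $1$, contracting those labelled $\delta$ or $\delta A$, and placing a loop at the image of each contracted $\delta A$-edge (all original loops of $K$ are retained); the inner sum over $\varphi$ then evaluates to $\hom(K_f, G)$. Since deletion, contraction and loop-addition all preserve planarity of multigraphs (the last by Remark~\ref{rem:planarmultigraphs}), every $K_f$ is planar.

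The chief technical point is the coefficient bookkeeping in the single-complement expansion: one must verify the factor of $2$ on the $\delta A$ term and combine the loop conditions arising from $\delta A$-contractions with the preexisting loops of $K$. Multiple edges possibly produced by contraction affect neither planarity nor homomorphism counts into a simple graph. With the two identities in place, each converse follows by applying the corresponding forward implication to the already-complemented pair and using that $(\overline{\,\overline{G}\,}_{\text{(single)}})_{\text{(single)}} = G$ and $\overline{\overline{\overline{\overline{G}}}} = G$.
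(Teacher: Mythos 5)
Your proof is correct and reproduces the paper's argument in closed form: the paper's four-term recursion on a single edge $e$ (with coefficients $+1, -1, -1, +2$ attached to $\hom(K\setminus e,\cdot)$, $\hom(K,\cdot)$, $\hom(K/e,\cdot)$, $\hom(K',\cdot)$) is precisely your entrywise identity $A_{\overline G} = 1 - \delta - A_G + 2\delta A_G$ applied to one edge, and your sum over labelings $f$ is that recursion unrolled across all edges. You are somewhat more explicit than the paper about the multigraphs produced by contraction and why they can be replaced by their simplifications, but the combinatorial content is identical.
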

\begin{proof}
We prove the first equivalence first. Suppose that $G \cong_\P H$ and let $K$ be a planar graph. If $K$ is empty then it is clear that $\hom(K,\overline{G}) = \hom(K,\overline{H})$ since both numbers are equal to $n^{|V(K)|}$ where $n = |V(G)| = |V(H)|$. Otherwise let $e \in E(K)$ be an edge between vertices $u$ and $v$. Consider the homomorphisms from $K\setminus e$ to an arbitrary graph $X$. Partition these homomorphisms as follows:
\begin{align*}
S_1 &= \{\varphi\colon K\setminus e \to X \ | \ \varphi(u) = \varphi(v) \text{ and this vertex has a loop}\}; \\
S_2 &= \{\varphi\colon K\setminus e \to X \ | \ \varphi(u) = \varphi(v) \text{ and this vertex does not have a loop}\}; \\
S_3 &= \{\varphi\colon K\setminus e \to X \ | \ \varphi(u) \sim \varphi(v) \ \& \ \varphi(u) \ne \varphi(v)\}; \\
S_4 &= \{\varphi\colon K\setminus e \to X \ | \ \varphi(u) \ \& \ \varphi(v) \text{ are distinct non-adjacent vertices}\}.
\end{align*}
Let $K'$ be the graph obtained from $K/e$ by adding a loop to the vertex $u$ and $v$ become after contracting $e$. Then we have that
\begin{align*}
&\hom(K,X) = |S_1| + |S_3|, \\
&\hom(K/e) = |S_1|+|S_2|, \\
&\hom(K',X) = |S_1|.
\end{align*}
Since $\hom(K\setminus e,X) = |S_1| + |S_2| + |S_3| + |S_4|$, it follows that
\begin{equation}\label{eq:planarcomb}
|S_1| + |S_4| = \hom(K\setminus e,X) - \hom(K,X) - \hom(K/e,X) + 2\hom(K',X).
\end{equation}
Now $|S_1| + |S_4|$ is the number of functions $\psi\colon V(K) \to V(X)$ that preserve adjacency for all edges of $K$ except $e$ and that maps $u$ and $v$ either to the same vertex which has a loop or to two distinct non-adjacent vertices. In other words, it counts $\psi\colon V(K) \to V(X)$ that act as a homomorphism from $K$ to $X$ on $E(K) \setminus \{e\}$ an as a homomorphism from $K$ to $\overline{X}$ on $e$. Since all of $K, K\setminus e, K/e$, and $K'$ are planar, it follows from \Eq{planarcomb} that $|S_1| + |S_4|$ is determined by the number of homomorphisms from planar graphs to $X$. By recursively applying this argument to the remaining edges in $E(K)$, it follows that the number of homomorphisms from $K$ to $\overline{X}$ is determined by the number of homomorphisms from planar graphs to $X$. Thus if $G \cong_\P H$, then $\overline{G} \cong_\P \overline{H}$, and the converse holds since taking complements is an involution.

The second equivalence is proven similarly using the fact that $|S_2| + |S_4| = \hom(K\setminus e,X) - \hom(K,X)$.

\end{proof}

\begin{remark}
The above proof works for $\cong_\mathcal{F}$ for any class of graphs $\mathcal{F}$ closed under minors and adding loops. The equivalence of $G \cong_\mathcal{F} H$ and $\overline{\overline{G}} \cong_\mathcal{F} \overline{\overline{H}}$ holds for any minor-closed class $\mathcal{F}$ (even if it is not closed under adding loops), as does the equivalence of $G \cong_\mathcal{F} H$ and $\overline{G} \cong_\mathcal{F} \overline{H}$ if restricted to loopless graphs $G$ and $H$.
\end{remark}

Since at least one of a graph or its complement is connected, \Lem{connectpreserve} and \Lem{complements} allow us to reduce the proof of the equivalence of $G \cong_{qc} H$ with $G \cong_\P H$ to the case where both $G$ and $H$ are connected.

\begin{lemma}\label{lem:01201}
Let $G$ and $H$ be graphs such that $G \cong_\P H$, and suppose $\vec{K}_i = (K_i,(a_i),\varnothing) \in \P(1,0)$ for $i = 1, \ldots, r$. Let $T_i = T^{\vec{K}_i \to G}$ and $T'_i = T^{\vec{K}_i \to H}$. If $T = \sum_{i=1}^r \alpha_iT_i$ and $T' = \sum_{i=1}^r \alpha_i T'_i$ for some $\alpha_i \in \mathbb{C}$,
then $T$ and $T'$ have the same multiset of entries.
\end{lemma}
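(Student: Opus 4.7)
My plan is to reduce ``same multiset of entries'' to ``same power sums,'' expand each power sum into a linear combination of homomorphism counts, and then invoke $G \cong_\P H$ term by term. Write $p_s(T) := \sum_{u \in V(G)} T_u^s$ and $p_s(T') := \sum_{v \in V(H)} (T'_v)^s$ for each $s \geq 0$. Since the elementary symmetric polynomials of $n$ complex numbers are determined by their first $n$ power sums (Newton's identities), it will suffice to show $p_s(T) = p_s(T')$ for every $s \geq 1$; the case $s = 0$ amounts to $|V(G)| = |V(H)|$, which follows immediately from $G \cong_\P H$ applied to the one-vertex graph.

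The key calculation is a multinomial expansion. Writing $T_u^s = \left(\sum_{i=1}^r \alpha_i T_{i,u}\right)^s$ and swapping summation order gives
\[p_s(T) \;=\; \sum_{(i_1,\ldots,i_s) \in [r]^s} \alpha_{i_1}\cdots\alpha_{i_s} \sum_{u \in V(G)} T_{i_1,u}\cdots T_{i_s,u},\]
and analogously for $p_s(T')$. For a fixed $\mathbf{i} = (i_1,\ldots,i_s)$, the inner sum counts tuples of homomorphisms $\varphi_j \colon K_{i_j} \to G$ with a common image $\varphi_1(a_{i_1}) = \cdots = \varphi_s(a_{i_s})$, which is precisely $\hom(K^*_\mathbf{i}, G)$, where $K^*_\mathbf{i}$ is the graph obtained from $K_{i_1} \cup \cdots \cup K_{i_s}$ by identifying the vertices $a_{i_1}, \ldots, a_{i_s}$ into a single vertex.

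The one step that actually uses the structure of $\P$, and which I expect to be the main point to verify, is the planarity of each $K^*_\mathbf{i}$; once this is in hand, $G \cong_\P H$ gives $\hom(K^*_\mathbf{i}, G) = \hom(K^*_\mathbf{i}, H)$ for every $\mathbf{i}$, and summing the expansion with the same complex coefficients $\alpha_{i_1} \cdots \alpha_{i_s}$ yields $p_s(T) = p_s(T')$. Fortunately the closure lemmas already do this work: by \Lem{tensorclosed} we have $\vec{K}_{i_1} \otimes \cdots \otimes \vec{K}_{i_s} \in \P(s, 0)$, and since $\vec{M}^{0,s} \in \P(0, s)$ (by \Ex{MinP} and \Lem{starclosed}), \Lem{productclosed} gives $\vec{M}^{0,s} \circ (\vec{K}_{i_1} \otimes \cdots \otimes \vec{K}_{i_s}) \in \P(0, 0)$. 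A direct inspection of the definitions of tensor product and composition shows that the underlying graph of this element of $\P(0,0)$ is exactly $K^*_\mathbf{i}$, so planarity is automatic, and Newton's identities finish the argument.
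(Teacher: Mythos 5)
Your proof is correct and takes essentially the same approach as the paper: both reduce the claim to agreement of power sums, interpret each power sum as a linear combination of homomorphism counts from the graph obtained by identifying the roots of $K_{i_1},\ldots,K_{i_s}$, use closure of $\P$ to establish planarity of that glued graph, and then finish with Newton's identities. The only cosmetic difference is that the paper phrases the planarity step via Schur powers and \Lem{schurclosed}, whereas you realize the same bi-labeled graph directly as $\vec{M}^{0,s}\circ(\vec{K}_{i_1}\otimes\cdots\otimes\vec{K}_{i_s})$; these coincide, since \Lem{schurclosed} is itself derived from exactly this kind of identity in \Lem{buildingschur}.
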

\begin{proof}
As each $K_i$ is planar, we have that $\text{sum}(T_i) = \hom(K_i,G) = \hom(K_i,H) = \text{sum}(T'_i)$, and thus $\text{sum}(T) = \text{sum}(T')$. Now consider $k^\text{th}$ schur powers $T^{\schur k}$ and $T'^{\schur k}$. Letting $T_{i_1\ldots i_k} = T_{i_1} \schur \ldots \schur T_{i_k}$, we have that
\[
T^{\schur k} = \sum_{i_1,\ldots,i_k = 1}^r \left(\prod_{t=1}^k \alpha_{i_t}\right) T_{i_1 \ldots i_k},
\]
and similarly for $T'^{\schur k}$. Let $\K_{i_1 \ldots i_k} = \K_{i_1} \schur \ldots \schur \K_{i_k}$ and let $K_{i_1 \ldots i_k}$ be the underlying graph of this bi-labeled graph. By \Lem{schurclosed} we have that $\K_{i_1 \ldots i_k} \in \P(1,0)$ and thus $K_{i_1 \ldots i_k}$ is planar. We also have that $T_{i_1 \ldots i_k} = T^{\K_{i_1 \ldots i_k} \to G}$ and $T'_{i_1 \ldots i_k} = T^{\K_{i_1 \ldots i_k} \to H}$. Since $K_{i_1 \ldots i_k}$ is planar, we have that $\text{sum}(T_{i_1 \ldots i_k}) = \hom(K_{i_1 \ldots i_k},G) = \hom(K_{i_1 \ldots i_k},H) = \text{sum}(T'_{i_1 \ldots i_k})$ and thus $\text{sum}(T^{\schur k}) = \text{sum}(T'^{\schur k})$. If $x_1, \ldots, x_{|V(G)|}$ and $y_1, \ldots, y_{|V(H)|}$ are the entries of $T$ and $T'$ respectively, then the latter equation is equivalent to
\[\sum_{i=1}^{|V(G)|} x_i^k = \sum_{i=1}^{|V(H)|} y_i^k.\]
By Newton's relations this implies that $T$ and $T'$ must have the same multiset of \emph{nonzero} entries. But since $G \cong_\P H$, we have that $|V(G)| = |V(H)|$ and thus $T$ and $T'$ must also have the same number of zero entries.
\end{proof}

The last lemma we will need shows that there is a correspondence between the orbits of the quantum automorphism groups of planar isomorphic graphs.

\begin{lemma}\label{lem:qorbits}
Let $G$ and $H$ be graphs such that $G \cong_\P H$. Suppose that $T_1,\ldots,T_r \in C_q^G(1,0)$ are the characteristic vectors of the orbits of $\qut(G)$ and that $T_i = \sum_{\ell=1}^{m_i} \alpha_{i,\ell} T^{\vec{K}_{i,\ell} \to G}$ for some $\K_{i,\ell} \in \P(1,0)$ for $\ell \in [m_i]$, for all $i \in [r]$. Then $T'_i = \sum_{\ell=1}^{m_i} \alpha_{i,\ell} T^{\vec{K}_{i,\ell} \to H} \in C_q^H(1,0)$ for $i=1,\ldots,r$ are the characteristic vectors of the orbits of $\qut(H)$.
\end{lemma}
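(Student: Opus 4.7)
The plan is to show that the $T'_i$ are exactly the characteristic vectors of the orbits of $\qut(H)$, in three steps: (i) each $T'_i$ is a nonzero $\{0,1\}$-vector with the same number of ones as $T_i$; (ii) the $T'_i$ partition $V(H)$ and each class is a union of orbits of $\qut(H)$; (iii) there are exactly $r$ orbits of $\qut(H)$, so each class is a single orbit. Step (i) is immediate from \Lem{01201}: since $T_i$ is a $\{0,1\}$-vector with $k_i$ ones, $T'_i$ has the same multiset of entries and is therefore nonzero with $k_i$ ones.

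For (ii), I need $T'_i \schur T'_j = 0$ for $i \ne j$ and $\sum_i T'_i = \mathbf{1}_H$. Expanding bilinearly and using \Lem{schurcorr} gives $T_i \schur T_j = \sum_{\ell,m} \alpha_{i,\ell}\alpha_{j,m} T^{\K_{i,\ell} \schur \K_{j,m} \to G}$; by \Lem{schurclosed} each $\K_{i,\ell} \schur \K_{j,m}$ lies in $\P(1,0)$, so this expression satisfies the hypotheses of \Lem{01201}, which transports the identity $T_i \schur T_j = 0$ over to $T'_i \schur T'_j = 0$. Similarly, applying \Lem{01201} to the zero combination $\sum_i T_i - T^{(K_1,(v),\varnothing) \to G}$ (which vanishes on the $G$-side, since $T^{(K_1,(v),\varnothing) \to X} = \mathbf{1}_X$ for the one-vertex graph $K_1$) yields $\sum_i T'_i = \mathbf{1}_H$. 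Combined with (i), the $T'_i$ are the characteristic vectors of a partition $\Pi$ of $V(H)$ into $r$ nonempty classes; since each $T'_i \in C_q^H(1,0)$ by \Thm{onecat}, it is constant on the orbits of $\qut(H)$ by \Rem{orbitsasintertwiners}, so $\Pi$ is coarser than the orbit partition.

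Step (iii) reduces to showing $\dim C_q^H(1,0) \le r$. For this I would define a linear map $\Psi\colon C_q^G(1,0) \to C_q^H(1,0)$ by $\Psi(T^{\K \to G}) = T^{\K \to H}$ for $\K \in \P(1,0)$; well-definedness and injectivity both follow from \Lem{01201}, whose conclusion (equal multisets of entries) is symmetric in $G$ and $H$, so a zero combination on either side forces the combination on the other side to have all entries zero. Surjectivity follows from \Thm{onecat}. Hence $\dim C_q^H(1,0) = \dim C_q^G(1,0) = r$, so the partition $\Pi$ cannot be strictly coarser than the orbit partition, and the $T'_i$ are precisely the orbit characteristic vectors of $\qut(H)$. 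The main conceptual point is the transfer of Schur-orthogonality: without the closure property of \Lem{schurclosed}, the identity $T_i \schur T_j = 0$ could not be rewritten as a linear combination in $\P(1,0)$ and fed into \Lem{01201} to yield its $H$-side counterpart.
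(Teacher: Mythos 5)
Your proof is correct and shares the paper's central mechanism: transfer the Schur-orthogonality and the sum-to-all-ones relations from the $G$-side to the $H$-side by writing them as linear combinations of $T^{\K\to G}$ with $\K\in\P(1,0)$ (via \Lem{schurclosed}), then invoke \Lem{01201} to conclude the $T'_i$ are pairwise disjoint $01$-vectors summing to $\mathbf 1$, each a union of orbits of $\qut(H)$. Where you diverge is the final counting step. The paper argues by symmetry: it shows $\qut(H)$ has at least $r$ orbits and then swaps the roles of $G$ and $H$ to conclude equality. You instead construct an explicit linear isomorphism $\Psi\colon C_q^G(1,0)\to C_q^H(1,0)$ (well-defined and injective by \Lem{01201}, surjective by \Thm{onecat}), conclude $\dim C_q^H(1,0)=r$, and then count classes against classes. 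Both routes are sound and of comparable length; yours makes explicit a structural fact (the linear isomorphism of $C_q^G(1,0)$ and $C_q^H(1,0)$) which the paper leaves implicit, at the small cost of also verifying surjectivity. One minor stylistic note: your $K_1$ device for $\sum_i T'_i=\mathbf 1_H$ is a valid detour, but it is unnecessary --- \Lem{01201} applies directly to $\sum_i T_i$ and $\sum_i T'_i$ (they are combinations of the same $T^{\K_{i,\ell}\to\cdot}$ with the same coefficients), and the all-ones vector is the unique $01$-vector whose multiset of entries is all ones, which is exactly how the paper argues this point.
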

\begin{proof}
By \Lem{01201}, we have that $T_i$ and $T'_i$ have the same multiset of entries. By definition, $T_i$ is a 01-vector and thus $T'_i$ is a 01-vector with the same number of 1's as $T_i$. We will first show that $T'_i \schur T'_j = \delta_{ij} T'_i$. For $i=j$ this holds since $T'_i$ is a 01-vector. We have that 
\begin{align*}
T_i \schur T_j &= \sum_{\ell,k} \alpha_{i,\ell} \alpha_{j,k} T^{\vec{K}_{i,\ell} \schur \vec{K}_{j,k} \to G} \\
T'_i \schur T'_j &= \sum_{\ell,k} \alpha_{i,\ell} \alpha_{j,k} T^{\vec{K}_{i,\ell} \schur \vec{K}_{j,k} \to H}.
\end{align*}
Since $\vec{K}_{i,\ell} \schur \vec{K}_{j,k} \in \P(1,0)$ by \Lem{schurclosed}, we have that $T'_i \schur T'_j$ has the same multiset of entries as $T_i \schur T_j$ by \Lem{01201}. For $i \ne j$, this implies that $T'_i \schur T'_j$ is the zero vector. Thus we have shown that $T'_i \schur T'_j = \delta_{ij} T'_i$. Furthermore, $\sum_i T'_i$ must have the same multiset of entries as $\sum_i T_i$ which is the all ones vector, and thus the former must also be the all ones vector. 

By \Lem{orbitsasintertwiners}, $C_q^H(1,0)$ is the span of the characteristic vectors of the orbits of $\qut(H)$. Thus the above implies that each $T'_i$ is the sum of one or more characteristic vectors of orbits of $\qut(H)$. Moreover, each characteristic vector of the orbits of $\qut(H)$ appears in exactly one of these sums (since $\sum_i T'_i$ is the all ones vector). It follows that $\qut(H)$ has at least as many orbits as $\qut(G)$, but by symmetry they must then have the same number of orbits. It follows that the $T'_i$ are the characteristic vectors of the orbits of $\qut(H)$, and we are done.
\end{proof}

We now prove the main result of this section.

\begin{theorem}\label{thm:main}
Let $G$ and $H$ be graphs. Then the following are equivalent:
\begin{enumerate}
\item $G \cong_{qc} H$, i.e., there is a perfect quantum strategy for the $(G,H)$-isomorphism game;
\item there exists a weak isomorphism $\Phi\colon C_q^G \to C_q^H$ such that $\Phi(A_G) = A_H$;
\item $G \cong_\P H$, i.e., $G$ and $H$ admit the same number of homomorphisms from any planar graph.
\end{enumerate}
\end{theorem}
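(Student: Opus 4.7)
The implications $(1) \Rightarrow (2)$ and $(2) \Rightarrow (3)$ are already established by Lemmas \ref{lem:qiso2weakiso} and \ref{lem:weakiso2piso}, so the plan concentrates on the reverse direction $(3) \Rightarrow (1)$, which is the substantive one. I would first reduce to the case where both $G$ and $H$ are connected: by Lemma \ref{lem:connectpreserve}, $G \cong_\P H$ forces $G$ and $H$ to have the same number of components; if both are disconnected, pass to $\overline{G}$ and $\overline{H}$, which are then connected, and use Lemma \ref{lem:complements} together with Remark \ref{rem:complements} to ensure both the hypothesis $\cong_\P$ and the target conclusion $\cong_{qc}$ are preserved under this replacement.

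With both graphs connected, the strategy is to produce an orbit of $\qut(G \cup H)$ that meets both $V(G)$ and $V(H)$, which by Theorem \ref{thm:qisoorbits} gives $G \cong_{qc} H$. The bridge is the auxiliary comparison of $X = G \cup H$ with $Y = G \cup G$: for any connected planar graph $K$, we have $\hom(K,X) = \hom(K,G) + \hom(K,H) = 2\hom(K,G) = \hom(K,Y)$ using $G \cong_\P H$, and by Lemma \ref{lem:connectedinputs} this upgrades to $X \cong_\P Y$. I can therefore invoke Lemma \ref{lem:qorbits} with $Y$ and $X$ playing the roles of $G$ and $H$ to obtain a correspondence of orbit characteristic vectors.

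The core step is to transfer a ``crossing'' orbit from $Y$ to $X$. In $Y = G \cup G$, the copy-swap automorphism lies in $\aut(Y)$, hence is visible to $\qut(Y)$, so the orbit $O$ of any vertex $g$ of the first copy also contains its twin $g'$ in the second copy. Using Lemma \ref{lem:P01connected}, I express the characteristic vector $T_O \in C_q^Y(1,0)$ as $T_O = \sum_\ell \alpha_\ell T^{\vec{K}_\ell \to Y}$ for some planar $\vec{K}_\ell = (K_\ell,(a_\ell),\varnothing) \in \P(1,0)$ with each $K_\ell$ \emph{connected}, and define $T'_O := \sum_\ell \alpha_\ell T^{\vec{K}_\ell \to X}$; by Lemma \ref{lem:qorbits}, $T'_O$ is the characteristic vector of an orbit $O'$ of $\qut(X)$.

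Because each $K_\ell$ is connected, the restriction of $T^{\vec{K}_\ell \to Y}$ to either copy of $V(G)$ inside $V(Y)$ equals $T^{\vec{K}_\ell \to G}$, while the restrictions of $T^{\vec{K}_\ell \to X}$ to $V(G)$ and $V(H)$ are $T^{\vec{K}_\ell \to G}$ and $T^{\vec{K}_\ell \to H}$ respectively. Hence $T'_O|_{V(G)} = T_O|_{\text{first copy}}$, which is nonzero since $g \in O$, so $O' \cap V(G) \ne \varnothing$; and $T'_O|_{V(H)} = \sum_\ell \alpha_\ell T^{\vec{K}_\ell \to H}$ has, by Lemma \ref{lem:01201} applied to the nonzero vector $T_O|_{\text{second copy}} = \sum_\ell \alpha_\ell T^{\vec{K}_\ell \to G}$, the same multiset of entries as that nonzero vector; in particular $T'_O|_{V(H)} \ne 0$, so $O' \cap V(H) \ne \varnothing$. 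Thus $O'$ crosses the two halves of $V(X)$, and Theorem \ref{thm:qisoorbits} yields $G \cong_{qc} H$. The main obstacle is ensuring that the chosen expansion of $T_O$ uses only \emph{connected} planar bi-labeled graphs, so that the restriction identities above hold cleanly and Lemma \ref{lem:01201} can transport the nonvanishing of supports from $G$ to $H$; this is exactly what Lemma \ref{lem:P01connected} provides.
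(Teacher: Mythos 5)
Your reduction to connected graphs and the implications $(1)\Rightarrow(2)\Rightarrow(3)$ match the paper exactly, but your argument for $(3)\Rightarrow(1)$ takes a genuinely different route, and it is correct. The paper works directly with the orbit characteristic vectors $R\in C_q^G(1,0)$ and $R'\in C_q^H(1,0)$ produced by \Lem{qorbits}, picks vertices $u\in V(G)$, $u'\in V(H)$ in the corresponding orbits, and then shows they lie in a common orbit of $\qut(G\cup H)$ by a Schur-product computation: for each connected planar $(K,a)$ it writes $R\schur T^{\vec K\to G}=\alpha R$, $R'\schur T^{\vec K\to H}=\alpha' R'$, and uses \Lem{01201} twice to force $\alpha=\alpha'$, then invokes \Cor{01qorbits}. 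You instead route through the auxiliary graph $Y=G\cup G$: you observe $X=G\cup H\cong_\P Y$ via \Lem{connectedinputs}, pull the trivially ``crossing'' orbit of $\qut(Y)$ supplied by the copy-swap automorphism, express its characteristic vector with connected planar bi-labeled graphs via \Lem{P01connected}, and push it through \Lem{qorbits} to an orbit of $\qut(X)$. The nonvanishing of the two restrictions then falls out almost for free: the restriction to $V(G)$ is literally the same vector as on the first copy, while for $V(H)$ a single application of \Lem{01201} transports the nonzero multiset of entries. The two proofs rest on the same engine (\Lem{qorbits} and \Lem{01201}), but your version replaces the paper's pointwise Schur-product bookkeeping with a symmetry argument that produces the crossing orbit automatically; this is arguably cleaner conceptually, at the modest cost of introducing the auxiliary graph $Y$ and verifying $X\cong_\P Y$. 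Both are correct and of comparable length.
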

\begin{proof}
\Lem{qiso2weakiso} and \Lem{weakiso2piso} have already shown that $(1) \Rightarrow (2) \Rightarrow (3)$, so it only remains to prove that $(3)$ implies $(1)$. By \Lem{connectpreserve}, \Lem{complements}, and the fact that either a graph or its complement is connected, we may assume that both $G$ and $H$ are connected. Suppose that $R = \sum_{\ell=1}^{m} \alpha_\ell T^{\vec{K_\ell} \to G} \in C_q^G(1,0)$ is the characteristic vector of an orbit of $\qut(G)$. By \Lem{P01connected}, we may assume that $\vec{K_\ell} = (K_\ell,(a_\ell),\varnothing)$ where $K_\ell$ is connected for all $\ell$. By \Lem{qorbits}, we have that $R' = \sum_{\ell=1}^{m} \alpha_\ell T^{\vec{K_\ell} \to H} \in C_q^H(1,0)$ is the characteristic vector of an orbit of $\qut(H)$. 

Suppose that $u \in V(G)$ and $u' \in V(H)$ are such that $R_u = 1 = R'_{u'}$, i.e., $u$ is in the orbit of $\qut(G)$ whose characteristic vector is $R$ and $u'$ is in the orbit of $\qut(H)$ whose characteristic vector is $R'$. Let $X = G \cup H$ be the disjoint union of $G$ and $H$. We will show that
\[\hom((K,a),(X,u)) =\hom((K,a),(X,u'))\]
for all connected planar graphs $K$ and $a \in V(K)$, which will prove that $u$ and $u'$ are in the same orbit of $\qut(X)$ by \Cor{01qorbits}.

Let $K$ be a connected planar graph, $a \in V(K)$, and $\vec{K} = (K,(a),\varnothing) \in \P(1,0)$. Define $T = T^{\vec{K} \to G}$ and $T' = T^{\vec{K} \to H}$. Note that since $K$ is connected, if $\varphi$ is a homomorphism from $K$ to $X$ such that $\varphi(a) = u$, then the image of $\varphi$ is completely contained in $V(G)$. Similarly, if $\varphi(a) = u'$, then its image is completely contained in $V(H)$. Thus we have that $\hom((K,a),(X,u)) = \hom((K,a),(G,u)) = T_u$ and $\hom((K,a),(X,u')) = \hom((K,a),(H,u')) = T'_{u'}$. To determine $T_u$ and $T'_{u'}$ we consider
\begin{align*}
R \schur T &= \sum_{\ell=1}^{m} \alpha_\ell T^{\vec{K_\ell}\schur\vec{K} \to G} \\
R' \schur T' &= \sum_{\ell=1}^{m} \alpha_\ell T^{\vec{K_\ell}\schur\vec{K} \to H}
\end{align*}
Note that since $T \in C_q^G(1,0)$, it is constant on the orbits of $\qut(G)$, and similarly for $T'$. Thus $R \schur T = \alpha R$ and $R' \schur T' = \alpha'R'$ where $\alpha = T_u$ and $\alpha' = T'_{u'}$
By \Lem{schurclosed}, we have that $\K_\ell \schur \K \in \P(1,0)$ for all $\ell$, and thus we may apply \Lem{01201} to conclude that both $\text{sum}(R) = \text{sum}(R')$ and $\text{sum}(\alpha R) = \text{sum}(\alpha' R')$. Since $\text{sum}(R) \ne 0$, we have that $\alpha = \alpha'$, i.e., $T_u = T'_{u'}$. This implies that $\hom((K,a),(X,u)) =\hom((K,a),(X,u'))$ for all connected planar graphs $K$ and $a \in V(K)$, and thus $u$ and $u'$ are in the same orbit of $\qut(X)$. By \Thm{qisoorbits} this implies that $G \cong_{qc} H$ and we are done.
\end{proof}

It was shown in~\cite{qiso1} that the problem of determining whether two given graphs $G$ and $H$ are quantum isomorphic is undecidable, thus we have the following:

\begin{cor}\label{cor:undecidable}
The problem of determining whether for two input graphs $G$ and $H$ there exists a planar graph $K$ such that $\hom(K,G) \ne \hom(K,H)$ is undecidable.
\end{cor}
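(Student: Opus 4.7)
The plan is to derive this as an almost immediate consequence of \Thm{main} together with the undecidability of quantum isomorphism established in~\cite{qiso1}. First I would observe that, by \Def{planariso}, the condition ``$\hom(K,G) = \hom(K,H)$ for every planar graph $K$'' is precisely $G \cong_\P H$; thus the existence of a planar $K$ witnessing $\hom(K,G) \ne \hom(K,H)$ is equivalent to $G \not\cong_\P H$. By \Thm{main}, $G \cong_\P H$ is equivalent to $G \cong_{qc} H$, so the decision problem in the corollary statement is precisely the complement of the quantum isomorphism decision problem.

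To finish, I would invoke the fact that a decision problem is decidable if and only if its complement is. Since~\cite{qiso1} shows that deciding $G \cong_{qc} H$ is undecidable, the complementary problem of deciding $G \not\cong_{qc} H$ is also undecidable, and therefore so is the equivalent problem of deciding the existence of a separating planar graph. There is no real obstacle here beyond correctly invoking the prior results; the only subtlety worth mentioning is the phrasing, namely that an algorithm for the planar-separation problem would immediately yield an algorithm for non-quantum-isomorphism and hence, by complementation, for quantum isomorphism itself, contradicting~\cite{qiso1}.
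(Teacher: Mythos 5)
Your argument is correct and is essentially the paper's own (the paper simply states the corollary as an immediate consequence of Theorem \ref{thm:main} and the undecidability of quantum isomorphism from~\cite{qiso1}). Your explicit remark that undecidability passes to complements is the only detail the paper leaves implicit, and it is of course sound.
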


Given quantum isomorphic graphs $G$ and $H$, and a planar graph $K$, if $K$ has a homomorphism to $G$, then \Thm{main} tells us that $K$ must have a homomorphism to $H$. However, it does not tell us how we might construct the latter homomorphism from the former.

\vspace{.1in}

\noindent \textbf{Question:} Given graphs $G$ and $H$, a quantum permutation matrix $\mathcal{U}$ such that $\mathcal{U}A_G = A_H\mathcal{U}$, and a homomorphism $\varphi \colon K \to G$ for a planar graph $K$, is there some (natural) way of constructing a homomorphism $\varphi' \colon K \to H$? If we are given all homomorphisms from $K$ to $G$, can we construct all (or even one) homomorphisms from $K$ to $H$?

\section{Applications, Consequences, and Further Directions}\label{sec:discuss}

In this section we will introduce a new class of quantum groups inspired by our characterization of $C_q^G$, and discuss some further implications of our main results.

\subsection{Graph-theoretic quantum groups}\label{sec:graphqgroups}

In \Sec{closureprops} we showed that $C_\P^G$ is a tensor category with duals for any graph $G$. By Woronowicz' version of Tannaka-Krein duality given in \Thm{tannakakrein}, this already implies that $C_\P^G$ is the space of intertwiners for \emph{some} quantum group (and \Thm{onecat} shows that this quantum group is $\qut(G)$). Showing that $C_\P^G$ is a tensor category with duals only required us to prove that $\P$ is closed under the bi-labeled graph operations of composition, tensor product, and transposition, and that it contains $\vec{I}$ and $\vec{M}^{2,0}$ (the bi-labeled graphs whose homomorphism matrices are $I$ and $\xi$). This inspires the following definition:

\begin{definition}
Let $\mathcal{F} \subseteq \G$ be a family of bi-labeled graphs and let $\mathcal{F}(\ell,k) = \mathcal{F} \cap \G(\ell,k)$. We say that $\mathcal{F}$ is a \emph{graph category} if the following hold:
\begin{itemize}
\item $\vec{I},\vec{M}^{2,0} \in \mathcal{F}$;
\item $\vec{K} \in \mathcal{F}(\ell,k)$, $\vec{K}' \in \mathcal{F}(k,m)$ $\Rightarrow$ $\vec{K} \circ \vec{K}' \in \mathcal{F}(\ell,m)$;
\item $\K,\K' \in \mathcal{F}$ $\Rightarrow$ $\K \otimes \K' \in \mathcal{F}$;
\item $\K \in \mathcal{F}$ $\Rightarrow$ $\K^* \in \mathcal{F}$.
\end{itemize}
\end{definition}

For any graph category $\mathcal{F}$ and graph $G$, we define $C_\mathcal{F}^G(\ell,k) := \spn \{T^{\K \to G} : \K \in \mathcal{F}(\ell,k)\}$ and $C_\mathcal{F}^G := \cup_{\ell,k = 0}^\infty C_\mathcal{F}^G(\ell,k)$. From the correspondence between homomorphism matrices and bi-labeled graphs, we immediately have the following:

\begin{theorem}\label{thm:graphcats}
Let $\mathcal{F}$ be a graph category and $G$ a graph with $n$ vertices. Then $C_\mathcal{F}^G$ is a tensor category with duals. It follows that there exists a compact matrix quantum group $Q \subseteq O_n^+$ such that $C_Q = C_\mathcal{F}^G$.
\end{theorem}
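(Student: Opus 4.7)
The plan is to verify that $C_\mathcal{F}^G$ satisfies each of the six defining properties of a tensor category with duals (listed in \Sec{qautogroups}), after which the desired compact matrix quantum group $Q \subseteq O_n^+$ will be produced immediately by Woronowicz' Tannaka-Krein theorem (\Thm{tannakakrein}). The argument is a direct generalization of \Thm{Ptensorcat}, which handled the special case $\mathcal{F} = \P$, and essentially all of the technical work has already been done elsewhere in the paper.

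First I would observe that each $C_\mathcal{F}^G(\ell,k)$ is a vector space by definition, as it is the span of a set of matrices in $C_n(\ell,k)$. For closure under composition, tensor product, and conjugate transposition, the strategy is to reduce each operation on elements of $C_\mathcal{F}^G$ to the corresponding operation on bi-labeled graphs. Specifically, writing typical elements of $C_\mathcal{F}^G$ as linear combinations $\sum_i \alpha_i T^{\K_i \to G}$ and $\sum_j \beta_j T^{\K'_j \to G}$ with $\K_i, \K'_j \in \mathcal{F}$, bilinearity of matrix product together with \Lem{compcorr} yields that their product is a linear combination of terms $T^{\K_i \circ \K'_j \to G}$, each of which lies in $C_\mathcal{F}^G$ because $\mathcal{F}$ is closed under composition. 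The tensor product case is identical using \Lem{tensorcorr}, and the transposition case uses \Lem{transposecorr} together with the observation that homomorphism matrices are real-valued, so conjugate transpose coincides with transpose.

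For the two remaining conditions, I would invoke \Ex{M} and \Rem{Mnotation}, which give $T^{\vec{I} \to G} = T^{\vec{M}^{1,1} \to G} = I$ and $T^{\vec{M}^{2,0} \to G} = M^{2,0} = \xi$. Since $\vec{I}, \vec{M}^{2,0} \in \mathcal{F}$ by the definition of a graph category, both $I \in C_\mathcal{F}^G(1,1)$ and $\xi \in C_\mathcal{F}^G(2,0)$ belong to $C_\mathcal{F}^G$ as required.

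With $C_\mathcal{F}^G \subseteq C_n$ shown to be a tensor category with duals, \Thm{tannakakrein} produces a compact matrix quantum group $Q \subseteq O_n^+$ satisfying $C_Q = C_\mathcal{F}^G$, completing the proof. There is no real obstacle here: the theorem is essentially a bookkeeping consequence of the correspondence between bi-labeled graph operations and matrix operations already established in \Sec{correspondence}, and its importance is conceptual rather than technical. It shows that the axioms of a graph category have been chosen precisely so that each pair $(\mathcal{F}, G)$ gives rise to a graph-theoretic quantum group, which is the starting point for the rest of \Sec{graphqgroups}.
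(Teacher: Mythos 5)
Your proposal is correct and takes essentially the same route as the paper, which states the theorem with no written proof beyond the remark that it follows immediately from the correspondence lemmas of \Sec{correspondence} (this is also how \Thm{Ptensorcat} is handled). You have simply spelled out the "immediate" argument: bilinearity plus \Lem{compcorr}, \Lem{tensorcorr}, \Lem{transposecorr}, the closure axioms of a graph category, the membership of $\vec{I}$ and $\vec{M}^{2,0}$ in $\mathcal{F}$, and \Thm{tannakakrein}.
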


The above theorem inspires the definition of a new class of compact quantum groups: We say that a compact matrix quantum group $Q \subseteq O_n^+$ is a \emph{graph-theoretic quantum group} if $C_Q = C_\mathcal{F}^G$ for some graph category $\mathcal{F}$ and graph $G$. This definition is in the spirit of the definition of easy quantum groups (recall that these are the quantum groups whose intertwiner spaces are spanned by maps indexed by partitions from a partition category). Note however that a partition category gives rise to a quantum group for every positive integer $n$, e.g.~the non-crossing partitions and non-crossing pairings give rise to $S_n^+$ and $O_n^+$, but a graph category gives rise to a quantum group for every graph $G$, e.g.~the graph category $\P$ gives rise to $\qut(G)$ for any $G$. This is one reason to expect that graph-theoretic quantum groups will be a significantly richer and more diverse class of quantum groups than easy quantum groups. The other reason is that we expect graph categories themselves to be more complex and numerous than partition categories. In support of this expectation we will now explain how to view partition categories and special cases of graph categories.

\paragraph{Partitions as bi-labeled graphs.} Recall from \Def{graphpartition} that we can associate a partition $\mathbb{P}_\K$ to any bi-labeled graph $\K = (K,\vec{a},\vec{b}) \in \G(\ell,k)$. For the purposes of that section, it was notationally convenient for $\mathbb{P}_\K$ to be a partition of $[\ell + k]$. Let us now consider $\mathbb{P}_\K$ as a partition of the totally ordered set $1_L < \ldots < \ell_L < k_U < \ldots < 1_U$. Thus $i_L, j_L$ are in the same part of $\mathbb{P}_\K$ if $a_i = a_j$ (and similarly for $i_U, j_U$ and $b_i,b_j$), and $i_L,j_U$ are in the same part if $a_i = b_j$. Also recall that $\mathbb{P}_\K$ has an empty part for each vertex of $K$ not appearing in either $\vec{a}$ or $\vec{b}$. Note that changing the underlying set of the partition does not change the validity of any statements such as \Lem{noncross}, since we have merely renamed the set elements. However, the new names do allow us to keep track of the values of $\ell$ and $k$, which is information that was loss before.

It is clear that the above is not a one-to-one correspondence between bi-labeled graphs from $\G(\ell,k)$ and partitions from $\mathbb{P}(\ell,k)$, since there is no dependence on the edges of $K$. However, if we restrict to edgeless bi-labeled graphs, then this correspondence is one-to-one. This is straightforward to see, since the inverse of this correspondence is essentially the construction given in \Def{partitiongraph}, though here we consider partitions of $\{1_L, \ldots, \ell_L, 1_U, \ldots, k_U\}$ instead of $[\ell + k]$. Thus a partition $\mathbb{P} = \{P_1, \ldots, P_r\} \in \mathbb{P}(\ell,k)$ is mapped to an (edgeless) bi-labeled graph $\K = (K,\vec{a},\vec{b}) \in \G(\ell,k)$ where $V(K) = [r]$, $a_i = s$ such that $i_L \in P_s$, and $b_j = t$ such that $j_U \in P_t$. Not only is this correspondence a bijection, it is routine to check that it commutes with the operations of composition, tensor product, and transposition of bi-labeled graphs and partitions respectively, i.e., that $\mathbb{P}_\K \circ \mathbb{P}_{\K'} = \mathbb{P}_{\K \circ \K'}$, $\mathbb{P}_\K \otimes \mathbb{P}_{\K'} = \mathbb{P}_{\K \otimes \K'}$, and $\mathbb{P}_\K^* = \mathbb{P}_{\K^*}$. Furthermore, the partitions $\{\{1_L,1_U\}\}$ and $\{\{1_L,2_L\}\}$ correspond to the bi-labeled graphs $\vec{I}$ and $\vec{M}^{2,0}$. Thus any partition category is isomorphic to a graph category containing only edgeless bi-labeled graphs, and vice versa. 

Now consider an edgeless bi-labeled graph $\K = (K,\vec{a},\vec{b}) \in \G(\ell,k)$, and let $G$ be any graph on $n$ vertices. Then, letting $r$ be the number of vertices of $K$ not appearing in $\vec{a}$ or $\vec{b}$,
\[(T^{\K \to G})_{u_1\ldots u_\ell, v_1\ldots v_k} = \begin{cases} n^r & \text{if } a_i = a_j \Rightarrow u_i = u_j \ \& \ b_i = b_j \Rightarrow v_i = v_j \ \& \ a_i = b_j \Rightarrow u_i = v_j \\ 0 & \text{o.w.}
\end{cases}\]
But this is precisely the map $T_\mathbb{P_\K}$ defined in \Sec{qautogroups}. Therefore, not only can partition categories be viewed as graph categories of edgeless bi-labeled graphs, but the corresponding tensor categories with duals, and thus corresponding quantum groups, are the same. Summarizing, we have the following:

\begin{theorem}
The construction $\K \mapsto \mathbb{P}_\K$ is a bijection between edgeless bi-labeled graphs and partitions that commutes with composition, tensor product, and transposition, and maps $\vec{I}$, $\vec{M}^{2,0}$ to $\{\{1_L,1_U\}\}$, $\{\{1_L,2_L\}\}$ respectively. Thus this map induces a bijection between graph categories of edgeless bi-labeled graphs and partition categories. Moreover, $T^{\K \to G} = T_{\mathbb{P}_\K}$\footnote{Here we should specify that this is the $T_{\mathbb{P}_\K}$ for $n = |V(G)|$.} for any edgeless bi-labeled graph $\K$ and graph $G$, and thus the graph-theoretic quantum groups arising from graph categories of edgeless bi-labeled graphs are precisely the easy quantum groups.
\end{theorem}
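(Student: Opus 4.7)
The plan is to verify each claim in the theorem in turn; most of the work has already been outlined in the paragraphs preceding the statement, so the main task is to organize these observations and fill in the verifications that were asserted as ``routine''.

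First I would establish the bijection. Going from $\K$ to $\mathbb{P}_\K$ is already defined in \Def{graphpartition} (reading the partition on the totally ordered set $\{1_L,\ldots,\ell_L,k_U,\ldots,1_U\}$ rather than $[\ell+k]$), and the inverse $\mathbb{P} \mapsto \K_{\mathbb{P}}^{\ell,k}$ is exactly \Def{partitiongraph} adapted to the same ordered set. To see these are mutually inverse, note that restricted to edgeless bi-labeled graphs every vertex of the underlying graph is either labeled by its occurrences in $\vec{a},\vec{b}$ or is an isolated unlabeled vertex, which corresponds to an empty part of $\mathbb{P}_\K$; so $\K$ is recoverable from $\mathbb{P}_\K$ up to isomorphism of bi-labeled graphs (see \Rem{isoclasses}). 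This step is entirely bookkeeping.

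Next I would check that the bijection commutes with the three tensor-category operations. Transposition is immediate since $\K^* = (K,\vec{b},\vec{a})$ just swaps upper and lower points, which is exactly the definition of $\mathbb{P}^*$. Tensor product is equally direct: $\K_1 \otimes \K_2$ takes the disjoint union of underlying graphs and concatenates input/output tuples, matching the horizontal-side-by-side definition of $\mathbb{P} \otimes \mathbb{P}'$. The composition case is the only one that needs genuine care: in $\K_1 \circ \K_2$ we identify the $i^\text{th}$ input vertex of $\K_1$ with the $i^\text{th}$ output vertex of $\K_2$ and pass to the resulting connected components, which is precisely the rule for forming equivalence classes when composing partitions by joining $\{1_U,\ldots,k_U\}$ of $\K_1$ with $\{1_L,\ldots,\ell'_L\}$ of $\K_2$. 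The empty parts match too, because a connected component of the ``middle'' that contains no vertex appearing in the outer inputs/outputs corresponds to a closed block of the partition composition, which by our convention is retained as an empty part (matching the footnote convention adopted in \Sec{qautogroups}). Finally, the identifications $\vec{I}\leftrightarrow\{\{1_L,1_U\}\}$ and $\vec{M}^{2,0}\leftrightarrow\{\{1_L,2_L\}\}$ follow by inspecting the definitions.

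Because a graph category and a partition category are each defined by closure under composition, tensor product, transposition, together with containment of the two distinguished elements above, the bijection restricts to a bijection between graph categories consisting of edgeless bi-labeled graphs and partition categories. For the homomorphism matrix identity, fix an edgeless $\K=(K,\vec{a},\vec{b})$ and a graph $G$ on $n$ vertices. A homomorphism $\varphi\colon K\to G$ is just any function $V(K)\to V(G)$ since $K$ has no edges, so $(T^{\K\to G})_{\vec u,\vec v}$ equals $n^r$ (with $r$ the number of unlabeled vertices of $K$) if $\vec u,\vec v$ are consistent with $\mathbb{P}_\K$ on the labeled vertices, and $0$ otherwise. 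Comparing with the definition of $T_{\mathbb{P}_\K}$ in \Sec{qautogroups}, using that the number of empty parts of $\mathbb{P}_\K$ is exactly $r$, gives $T^{\K\to G}=T_{\mathbb{P}_\K}$. Combining this with the category bijection, the tensor categories with duals $C_\mathcal{F}^G$ (for $\mathcal{F}$ a graph category of edgeless bi-labeled graphs) coincide with the tensor categories spanned by $T_\mathbb{P}$ for $\mathbb{P}$ ranging over the corresponding partition category, which by the Tannaka--Krein correspondence of \Thm{tannakakrein} identifies the resulting graph-theoretic quantum groups with easy quantum groups. I do not foresee a real obstacle here; the only nonroutine point is the careful handling of empty parts in both the composition rule and in the prefactor $n^{e(\mathbb{P})}$, and this has already been set up in \Sec{qautogroups} precisely so that the identification works on the nose.
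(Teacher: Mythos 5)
Your proposal is correct and follows the same route the paper takes: the paper does not give a formal proof of this theorem, proving it instead in the three paragraphs immediately preceding the statement, and your write-up is essentially a reorganization of that discussion with the ``routine'' verifications filled in. The only places worth double-checking are the ones you already flagged — the handling of empty parts (pre-existing unlabeled vertices of $K_1$, $K_2$, and closed blocks created by the contraction all correctly line up with the empty-part conventions of \Sec{qautogroups}) and the fact that the bijection is on isomorphism classes of bi-labeled graphs per \Rem{isoclasses} — and your treatment of both is sound.
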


As one would expect that graph categories are a much richer class than graph categories consisting only of edgeless bi-labeled graphs, the final statement in the above theorem strongly supports our expectation that graph-theoretic quantum groups are a significantly more diverse class than easy quantum groups. However, like easy quantum groups, graph-theoretic quantum groups still have an underlying combinatorial structure that can be exploited for their study.

As an illustrative example, we will show how to recover Banica and Speicher's characterization of the intertwiners of $S_n^+$~\cite{noncross}. Recall that $S_n^+$ is defined as the compact matrix quantum group with fundamental representation $\mathcal{U}$ with the only constraint being that $\mathcal{U}$ is a quantum permutation matrix, and $\qut(G)$ is defined by adding the condition that $\mathcal{U}A_G = A_G\mathcal{U}$. However, if $G = \overline{K_n}$ is the empty graph then this additional constraint is trivial, and thus $\qut(\overline{K_n}) = S_n^+$. By \Thm{onecat}, we know that the intertwiner spaces of $S_n^+$ are spanned by the maps $T^{\K \to \overline{K_n}}$ for $\K \in \P$. However, as $\overline{K_n}$ has no edges, this map is zero unless $\K$ is an edgeless bi-labeled graph. Thus the intertwiner spaces of $S_n^+$ are spanned by the maps $T^{\K \to \overline{K_n}}$ for $\K \in \P$ such that $\K$ is edgeless. Finally, by \Lem{noncross} and \Lem{noncross2}, we have that if $\K$ is edgeless, then $\K \in \P$ if and only if $\mathbb{P}_\K$ is a non-crossing partition. Thus we obtain Banica and Speicher's result   showing that the intertwiners of $S_n^+$ are spanned by the maps $T_\mathbb{P}$ where $\mathbb{P}$ is a non-crossing partition.

\paragraph{The category of all bi-labeled graphs.} In this work we have mainly focused on the graph category $\P$, whose corresponding family of quantum groups is $\qut(G)$ for graphs $G$. Of course, the class $\G$ of all bi-labeled graphs is also a graph category, and thus corresponds to a family of quantum groups. We will see that it in fact corresponds to the family of classical automorphism groups of graphs $\aut(G)$. To do this we will show that any bi-labeled graph can be generated by the bi-labeled graphs $\vec{M}^{1,0},\vec{M}^{1,2}, \vec{S}$, and $\vec{A}$. We just provide a sketch of the proof since it is relatively straightforward in comparison to \Thm{onegraphcat}.

\begin{theorem}\label{thm:allgraphcat}
$\G = \langle \vec{M}^{1,0},\vec{M}^{1,2}, \vec{S}, \vec{A}\rangle_{\circ, \otimes, *}$
\end{theorem}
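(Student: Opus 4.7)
The plan is to mirror the proof of \Thm{onegraphcat}, using the swap map $\vec{S}$ to supply enough flexibility that the planarity-preserving arguments of that proof can be dropped. The containment $\langle \vec{M}^{1,0},\vec{M}^{1,2},\vec{S},\vec{A}\rangle_{\circ,\otimes,*} \subseteq \G$ is immediate since $\G$ is trivially closed under composition, tensor product, and transposition, so all the work lies in the reverse containment.

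For that, I would first observe that our generating set contains every ``wire-permuting'' bi-labeled graph. For $k \geq 0$ and $\sigma \in S_k$, let $\vec{\pi}_\sigma = (\overline{K_k}, (v_1,\ldots, v_k), (v_{\sigma(1)}, \ldots, v_{\sigma(k)}))$, where $\overline{K_k}$ is the edgeless graph on $k$ distinct vertices. Since $S_k$ is generated by adjacent transpositions and the transposition $(i, i{+}1)$ corresponds to $\vec{I}^{\otimes (i-1)} \otimes \vec{S} \otimes \vec{I}^{\otimes (k-i-1)}$, it follows that $\vec{\pi}_\sigma \in \langle \vec{I}, \vec{S}\rangle_{\circ,\otimes}$. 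A direct unpacking of \Def{comp} then shows that for any $\vec{K} = (K, \vec{a}, \vec{b}) \in \G(\ell, k)$, pre- and post-composition by appropriate $\vec{\pi}$'s permutes the entries of $\vec{a}$ and $\vec{b}$ arbitrarily while leaving the underlying graph $K$ unchanged.

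The proof then proceeds by induction on $|V(K)|$, exactly as in \Thm{onegraphcat}. The base case $|V(K)| = 1$ gives $\vec{K} = \vec{M}^{\ell,k}$ or $\mathring{\vec{M}}^{\ell,k}$, handled by \Lem{Mgraphs} and \Lem{Mloopgraphs}. For the inductive step with $|V(K)| \geq 2$, pick any vertex $v \in V(K)$ and let $m$, $r$ denote the numbers of occurrences of $v$ in $\vec{a}$ and $\vec{b}$ respectively. Using the permutation machinery above, reduce to the case in which the first $m$ entries of $\vec{a}$ and the first $r$ entries of $\vec{b}$ are exactly those occurrences. One may then invoke the algebraic identity from Case 3 of \Lem{plucking} (which subsumes the other cases when $m = 0$ or $r = 0$),
\[\vec{K} = (\vec{S}_L^{m,d} \otimes \vec{I}^{\otimes \ell-m}) \circ (\vec{M}^{1,r} \otimes \vec{K'}),\]
with $\vec{S}_L^{m,d}$ replaced by $\mathring{\vec{S}}_L^{m,d}$ when $v$ has a loop; here $d$ is the degree of $v$ and $\vec{K'} \in \G$ has underlying graph on $V(K) \setminus \{v\}$. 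Crucially, the planarity-preservation arguments of \Lem{plucking} are not needed: we only require $\vec{K'} \in \G$, which is automatic. The induction hypothesis gives $\vec{K'} \in \langle \vec{M}^{1,0},\vec{M}^{1,2},\vec{S},\vec{A}\rangle_{\circ,\otimes,*}$, while the remaining building blocks lie in $\langle\vec{M}^{1,0},\vec{M}^{1,2},\vec{A}\rangle_{\circ,\otimes,*}$ by \Lem{Mgraphs} and \Lem{starmaps}, completing the step.

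What remains is really just bookkeeping: verifying that $\vec{\pi}_\sigma \circ \vec{K} \circ \vec{\pi}_\tau$ produces the claimed bi-labeled graph (the convention of $\sigma$ versus $\sigma^{-1}$ is immaterial since both range over $S_\ell$), and confirming the decomposition formula remains valid in the degenerate cases $m = 0$ or $r = 0$. Beyond that, I anticipate no genuine difficulty, which is consistent with the author's remark that only a sketch is required.
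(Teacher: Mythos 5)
Your proof is correct, but it takes a genuinely different route from the paper's sketch. The paper's argument is ``bottom-up'' constructive: start from the edgeless bi-labeled graph $\vec{I}^{\otimes n}$ (where $n = |V(K)|$), use the auxiliary bi-labeled graph $\hat{\vec{A}} = (K_2, (a,b),(a,b))$ together with $\vec{S}$ to add edges between arbitrary pairs of vertices one at a time, then use $\vec{M}^{1,2}$, $\vec{M}^{1,0}$, their transposes, and $\vec{S}$ to adjust the multiplicities and positions of the input/output labels. Your argument instead does ``top-down'' vertex plucking by induction on $|V(K)|$, exactly mirroring the structure of \Thm{onegraphcat}, with the key observation that the permutation graphs $\vec{\pi}_\sigma \in \langle \vec{I}, \vec{S}\rangle_{\circ,\otimes}$ let you move the occurrences of an arbitrarily chosen vertex $v$ to the front of both $\vec{a}$ and $\vec{b}$, after which the purely combinatorial decomposition identity from Case~3 of \Lem{plucking} (valid also when $m=0$ or $r=0$, as you correctly note) applies without any planarity side conditions. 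Each approach has its merits: the paper's is perhaps shorter as a sketch and requires no appeal to \Lem{plucking}, while yours makes vivid the structural point that the \emph{only} role of the swap map is to dispense with the non-crossing/planarity constraints on the ordering of labels, which is conceptually cleaner and makes the parallel with the planar case and with the classical versus quantum partition category dichotomy more transparent.
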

\begin{proof}
Obviously, we have that $\langle \vec{M}^{1,0},\vec{M}^{1,2}, \vec{S}, \vec{A}\rangle_{\circ, \otimes, *} \subseteq \G$. For the other containment, we will describe how to construct  an arbitrary bi-labeled graph $\K = (K,\vec{a},\vec{b}) \in \G$. We begin with the bi-labeled graph $\K' = (K',\vec{a}',\vec{b}')$ where $K'$ is an empty graph on the same vertex set as $K$, and $\vec{a}' = \vec{b}' = (v_1, \ldots, v_n)$ is any ordering of $V(K)$ (i.e., each vertex of $K$ appears precisely once). Note that $\K'$ is isomorphic to $\vec{I}^{\otimes n}$, and thus $\K' \in \langle \vec{M}^{1,0},\vec{M}^{1,2}, \vec{S}, \vec{A}\rangle_{\circ, \otimes, *}$. Let $\hat{\vec{A}} = (K_2,(a,b),(a,b))$ where $a$ and $b$ are the two vertices of the complete graph $K_2$. It is easy to see that $\hat{\vec{A}} = (\vec{I} \otimes \vec{M}^{1,2}) \circ (\vec{I} \otimes \vec{A} \otimes \vec{I}) \circ (\vec{M}^{2,1} \otimes \vec{I})$. By multiplying by $\vec{I}^{\otimes r} \otimes \hat{\vec{A}} \otimes \vec{I}^{n-r-2}$ we can add an edge between $v_{r+1}$ and $v_{r+2}$ for $r = 0, \ldots, n-2$. Moreover, we can permute the elements of $\vec{a}'$ or $\vec{b}'$ arbitrarily through compositions with $\vec{I}^{\otimes r} \otimes \vec{S} \otimes \vec{I}^{n-r-2}$. Combining these two, we can add edges between arbitrary pairs of vertices, thus constructing the graph $K$. Following this, we can ensure that each vertex of $K$ appears in the input and output vectors the correct number of times by using $\vec{M}^{1,2}$, $\vec{M}^{1,0}$, and their transposes. Finally, using $\vec{S}$ again we can permute the input and output vectors so that they are equal to $\vec{b}$ and $\vec{a}$ respectively. Thus we have constructed $\K$ and we are done.
\end{proof}

Recall from \Thm{chassaniol}, that Chassaniol showed that the intertwiners of the classical automorphism group of $G$ are given by $C^G = \langle M^{1,0},M^{1,2},A_G, S\rangle_{+,\circ, \otimes, *}$. Thus, by the above theorem we have the following:

\begin{theorem}
For any graph $G$,
\[C^G = \spn\{T^{\K \to G} : \K \in \langle \vec{M}^{1,0},\vec{M}^{1,2}, \vec{S}, \vec{A}\rangle_{\circ, \otimes, *}\} = \spn\{T^{\K \to G} : \K \in \G\}.\]
\end{theorem}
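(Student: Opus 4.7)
The plan is to derive this theorem by combining Chassaniol's characterization of $C^G$ (\Thm{chassaniol}) with the correspondence between matrix operations and bi-labeled graph operations established in \Sec{correspondence}, and then applying the just-proven \Thm{allgraphcat}. Structurally, the argument mirrors the derivation of \Thm{reformulation} (which led to \Thm{onecat}), but with $\vec{S}$ added to the generating set and the planarity restriction dropped on both sides.

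First I would invoke Chassaniol's theorem together with \Rem{chassaniol} to write
\[C^G = \spn\left\{T : T \in \langle M^{1,0},M^{1,2},A_G,S\rangle_{\circ,\otimes,*}\right\}.\]
Then, from \Ex{M}, \Ex{A}, and \Ex{swap} I record that $M^{1,0} = T^{\vec{M}^{1,0}\to G}$, $M^{1,2} = T^{\vec{M}^{1,2}\to G}$, $A_G = T^{\vec{A}\to G}$, and $S = T^{\vec{S}\to G}$. Invoking \Lem{compcorr}, \Lem{tensorcorr}, and \Lem{transposecorr}, the matrix operations of composition, tensor product, and conjugate transposition correspond precisely to composition, tensor product, and transposition of bi-labeled graphs. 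A routine induction on the number of operations used to express an element of $\langle M^{1,0},M^{1,2},A_G,S\rangle_{\circ,\otimes,*}$ then shows that any such matrix equals $T^{\vec{K}\to G}$ for some $\vec{K}\in\langle\vec{M}^{1,0},\vec{M}^{1,2},\vec{A},\vec{S}\rangle_{\circ,\otimes,*}$, and conversely. This yields the first equality
\[C^G = \spn\left\{T^{\vec{K}\to G} : \vec{K}\in\langle\vec{M}^{1,0},\vec{M}^{1,2},\vec{S},\vec{A}\rangle_{\circ,\otimes,*}\right\}.\]

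For the second equality, I would simply apply \Thm{allgraphcat}, which asserts that $\G = \langle\vec{M}^{1,0},\vec{M}^{1,2},\vec{S},\vec{A}\rangle_{\circ,\otimes,*}$. Since the two spans are then indexed over the same set of bi-labeled graphs, they are equal as vector spaces, giving the final equality $C^G = \spn\{T^{\vec{K}\to G} : \vec{K}\in\G\}$.

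There is no substantive obstacle in this argument, since both key ingredients have already been established: Chassaniol's theorem supplies the algebraic description of $C^G$ on one side, and \Thm{allgraphcat} supplies the combinatorial description of $\G$ on the other. The only work is the bookkeeping translation between matrices and bi-labeled graphs, which is immediate from the correspondence lemmas of \Sec{correspondence}. Conceptually the statement should be viewed as the classical analogue of \Thm{onecat}: dropping the quantum (planarity) restriction on the graph side corresponds exactly to adjoining the swap map on the algebraic side, which by \Ex{swap} is precisely the homomorphism matrix of the bi-labeled graph $\vec{S}$ that obstructs planarity of $\K^\odot$ in a prototypical way (recall \Ex{swapgraph}).
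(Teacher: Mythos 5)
Your proposal is correct and matches the paper's own (very terse) derivation: the paper simply cites Chassaniol's result $C^G = \langle M^{1,0},M^{1,2},A_G,S\rangle_{+,\circ,\otimes,*}$ together with \Thm{allgraphcat}, with the correspondence lemmas of \Sec{correspondence} and \Rem{chassaniol} doing exactly the bookkeeping you spell out. Your write-up is somewhat more explicit about the induction step and the translation $S = T^{\vec{S}\to G}$, but it is the same argument.
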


\paragraph{Discovering new graph categories.} Though we have just introduced the notion of graph categories and graph-theoretic quantum groups, it is clear that one of the end goals should be to give a complete classification, analogous to the recent classification of easy quantum groups~\cite{Raum2016}. The difficulty of establishing such a classification is quite unclear at the moment, perhaps it is even impossible. As a hint towards its difficulty, it is even unclear what graph categories can arise as $\langle \vec{M}^{1,0},\vec{M}^{1,2}, \vec{K}\rangle_{\circ, \otimes, *}$ or $\langle \vec{M}^{1,0},\vec{M}^{1,2}, \vec{S}, \vec{K}\rangle_{\circ, \otimes, *}$ for some bi-labeled graph $\K$, or even if only finitely many arise in this way (this is not even clear for  $\K = \vec{A}^r$). For now, it will likely be enlightening to find, and give good descriptions of, new graph categories. Perhaps the most natural place to start is to consider a partition category as a graph category, add the bi-labeled graph $\vec{A}$, and determine what graph category is generated. In terms of the corresponding compact matrix quantum groups, this corresponds to adding the requirement that the fundamental representation commutes with the adjacency matrix of a graph, i.e., $\mathcal{U}A_G = A_G \mathcal{U}$. In this work we have done this for the class of non-crossing partitions and the class of all partitions in \Thm{onegraphcat} and \Thm{allgraphcat} respectively.

Somewhat related to the search for new graph categories is the question of what compact matrix quantum groups arise as graph-theoretic quantum groups. Is it possible that any CMQG $Q \subseteq O_n^+$ can be obtained in this way (up to isomorphism)? This seems unlikely, but perhaps should not be ruled out immediately. In the classical case there is Frucht's Theorem~\cite{frucht}: that any finite group is isomorphic to the automorphism group of some graph. Finite groups are exactly the subgroups of the symmetric groups $S_n$, and thus a potential quantum analog of this could be that any quantum subgroup of $S_n^+$ is isomorphic to the quantum automorphism group of some graph. If this is the case then our \Thm{onecat} characterizes the intertwiners of any quantum subgroup of $S_n^+$\footnote{We thank Moritz Weber for pointing this out.}.

\subsection{Multiedges, directed edges, and colored edges}\label{sec:colorededges}

In this work we mainly restricted our attention to undirected graphs and bi-labeled graphs without any multiple edges. One reason for this is that most of the previous work on quantum automorphism groups of graphs (including our own) focuses on the case of undirected simple graphs. But the main reason is merely to simplify the presentation of our results. In fact, allowing multiple and/or directed edges does not really make a fundamental difference, and the lemmas and theorems we have proven can be rather straightforwardly adapted to the more general case. Here we will discuss some of the minor changes one would need to make to accommodate these more general settings.

If we allow multiple edges in our bi-labeled graphs, then we should no longer perform the simplification step when constructing the composition or Schur products of two bi-labeled graphs. Note that, as mentioned in \Rem{bilabeledmultigraphs}, if we are considering $\qut(G)$ where $G$ has no multiple edges, then multiple edges in a bi-labeled graph does not change the corresponding $G$-homomorphism matrix. Note that if $G$ has multiple edges, then the $uv$-entry of $T^{\vec{A} \to G}$ is the number of edges between $u$ and $v$, which fits the usual definition of the adjacency matrix of a multigraph. The proofs of the correspondence between bi-labeled graph operations and matrix operations go through essentially unchanged when multiple edges are allowed. Furthermore, the proof that $\P$ (now allowing multiple edges) is closed under composition, tensor product, and transposition does not change. The main difference in proving $\langle \vec{M}^{1,0},\vec{M}^{1,2}, \vec{A}\rangle_{\circ, \otimes, *} = \P$ in the multiedge case is that one must allow for more general versions of the bi-labeled graphs $\vec{S}^{m,d}, \vec{S}_R^{m,d}, \vec{S}_L^{m,d}$, etc. from \Lem{starmaps}. In particular, now one must specify a multiplicity for each edge in the underlying graphs. But this mostly just makes the presentation more cumbersome, rather than actually making the proof more difficult.

As far as we are aware, there is no established definition of the quantum automorphism group of a multigraph. One possibility is to define it as for simple graphs, by taking the quantum symmetric group and adding the condition that $\mathcal{U}A_G = A_G\mathcal{U}$, where $A_G$ is the adjacency matrix of the multigraph $G$ as mentioned above. This definition yields a quantum group whose intertwiners are the $G$-homomorphism matrices of the bi-labeled graphs from $\P$ where multiple edges are allowed. However, such a quantum automorphism group only considers action on the vertices, not the edges. In analogy, in the classical case one may define automorphisms of multigraphs as permutations of the vertices preserving multiplicities of edges between pairs, or they may also include a permutation on edges (that must agree with the vertex permutation). Characterizing the intertwiners of the latter type of automorphism group (classical or quantum) may be possible by also allowing edges to appear in the input/output vectors of bi-labeled graphs. But we do not speculate further about how precisely to do this.

In the case of directed graphs, the main difference is that we would need to consider directed bi-labeled graphs. The class $\P$ would be defined the same except edges would be replaced by directed edges (also known as \emph{arcs}), and we would permit edges going in both directions between two vertices. The definition of the bi-labeled graph $\vec{A}$ would be changed so that its underlying graph is now the directed graph on two vertices with a single arc between them (either from input to output or vice versa depending on your convention). Again, very little changes in the proof of \Thm{onegraphcat} and thus \Thm{onecat}. As with multigraphs, the main difference is that the bi-labeled graphs from \Lem{starmaps} would need to allow for the edges of the underlying graph to have different directions, thus increasing the number of possibilities and making the presentation a bit more cumbersome. Otherwise, everything goes through essentially the same as in the undirected case, including the characterization of quantum isomorphism in terms of counting homomorphisms from planar graphs (though now it will be directed planar graphs). The definition of quantum isomorphism for directed graphs is precisely what one would expect.

Finally, we may want to allow our graphs to have colored/labeled arcs, typically labeled by positive integers. In this case, for any ordered pair $(u,v)$ of vertices, we would allow there to be at most one arc from $u$ to $v$. Note that, if we are only interested in (quantum) group actions on vertices, there is no need for multiple arcs in this setting, since we can replace multiple arcs with a single arc labeled according to the multiset of labels occurring on the multiple arcs. We may also allow for colored vertices, but this is functionally the same as having colored loops. In this setting the underlying graphs of our bi-labeled graphs would also have colored arcs, and the entries of the homomorphism matrices would count homomorphisms that map arcs of color $i$ to arcs of color $i$. The class $\P$ would now be defined as before, but edges replaced by colored arcs. The bi-labeled graph $\vec{A}$ would need to be replaced by a set of bi-labeled (directed) graphs $\vec{A}_i$ for each arc color $i$. Here $\vec{A}_i$ is the same as the directed version of $\vec{A}$ we described above, but the arc in the underlying graph is colored $i$. It may be convenient to allow our bi-labeled graphs to have arcs colored by any positive integer (thus there would be an infinite number of $\vec{A}_i$'s). This removes a parameter (the number of arc colors) from the description of any arc-colored graph category, and allows us to deal simultaneously with the intertwiners of $\qut(G)$ for $G$ with any number of colors on its arcs. If an arc-colored bi-labeled graph contains an arc of a color not appearing among the arcs of $G$, then its corresponding $G$-homomorphism matrix will be zero.

The quantum automorphism group of an arc-colored graph $G$ can be defined similarly to the quantum automorphism group of a graph. The difference is that the condition $\mathcal{U}A_G = A_G\mathcal{U}$ now becomes $\mathcal{U}A_i = A_i\mathcal{U}$ for all $i$, where $A_i$ is the adjacency matrix of the directed graph consisting of the arcs of color $i$. This condition is saying that we require each $A_i$ to be a $(1,1)$-intertwiner of our quantum automorphism group. Here we begin to see that allowing colored arcs is not just a generalization that allows us to consider a larger class of objects, it is useful even when studying $\qut(G)$ for simple graphs $G$. By this we are referring to the fact that for uncolored graphs $G$, the $(1,1)$-intertwiners form a \emph{coherent algebra}, as proven in~\cite{qperms}. This simply means that the $(1,1)$-intertwiners form a unital matrix algebra containing the all ones matrix that is additionally closed under entrywise product and conjugate transposition. It follows that there is a unique linear basis of $C_q^G(1,1)$ consisting of $01$-matrices $A_1, \ldots, A_d$. Considering $A_i$ as the adjacency matrix of a directed graph $G_i$, we see by Tannaka-Krein duality (\Thm{tannakakrein}) that $\qut(G)$ is also the quantum automorphism group of the arc-colored directed graph $\hat{G}$ formed by coloring the arcs of $G_i$ by color $i$ for $i = 1,\ldots,d$. The arc sets of the $G_i$ are the orbitals of $\qut(G)$, defined in \Def{qorbits}. These form a \emph{coherent configuration}, as show in~\cite{qperms} (in fact coherent configurations are in one-to-one correspondence with coherent algebras). The point here is that the quantum automorphism group of a graph \emph{is} the quantum automorphism group of a coherent configuration, though the converse is not true in general.

This is useful for several reasons. First, there appear to be remarkably fewer coherent configurations than graphs. For instance, on 15 vertices there are approximately $3\cdot 10^{19}$ isomorphism classes of graphs, but only about $10^4$ coherent configurations~\cite{nonschurian}. Thus if one wants to classify all quantum automorphism groups of graphs up to $n$ vertices, a great deal of time can be saved by simply considering quantum automorphism groups of coherent configurations. Another reason this is useful is that it provides a more efficient description of the intertwiners of $\qut(G)$. Here we are referring to the fact that the intertwiner $A_i$ is likely not the $G$-homomorphism matrix of a single planar bi-labeled graph, but rather a linear combination of several such $G$-homomorphism matrices. However, in the arc-colored graph formalism, the intertwiner $A_i$ is simply the $\hat{G}$-homomorphism matrix of the bi-labeled graph $\vec{A}_i$. One obstacle to applying this in practice is that determining the orbitals of $\qut(G)$ (i.e., determining the $A_i$) is undecidable in general~\cite{qperms}. However, it is useful even to consider a coarse-graining of the orbitals, i.e., a partition of $V(G) \times V(G)$ in which each part is a union of orbitals of $\qut(G)$. One way to do this in polynomial time is to use the Weisfeiler-Leman algorithm to find the coarsest coherent configuration ``compatible" with the graph $G$. 

\subsection{Quantum isomorphism and the Four Color Theorem}

The Four Color Theorem is perhaps the most well-known theorem in graph theory. Originally proven by Appel and Haken~\cite{4color}, it states that the vertices of any planar graph (without loops) can be colored with four colors so that no two adjacent vertices are colored the same. It has many equivalent reformulations, and here we will add one more to the list.

A trivial restatement of the Four Color Theorem is that any loopless planar graph $G$ has a homomorphism to $K_4$, since an $n$-coloring is precisely a homomorphism to $K_n$. We will show that the four color theorem is equivalent to a statement which {\it{a priori}} appears to be weaker. The statement will have the form ``every planar graph has a homomorphism to $H$", where $H$ is a specific graph which is in general easier to have a homomorphism to than $K_4$. By this we mean that \emph{any} graph (not only planar graphs) that has a homomorphism to $K_4$ has a homomorphism to $H$, but there are graphs which have homomorphisms to $H$ but not to $K_4$. As an example, if we could prove this for $H = K_5$, then we would have a ``human" proof of the Four Color Theorem, as proving the five color theorem is straightforward.

The graph $H$ for which we prove this reformulation is the Cayley graph for the symmetric group $S_4$ whose connection set is the set of all elements of order two (these are the transpositions and the ``double transpositions": the permutations $(ab)(cd)$ for $\{a,b,c,d\} = \{1,2,3,4\}$). This means that $V(H) = S_4$, and two vertices $u,v \in S_4$ are adjacent if $uv^{-1}$ is in the connection set. This graph is quantum isomorphic to a graph $G$ which is the Cayley graph for $S_4$ with connection set consisting of all double transpositions and all 4-cycles. Though they were not described as Cayley graphs, it is easy to verify that these are precisely the two graphs on 24 vertices shown to be quantum isomorphic in~\cite{qiso1} via a reduction from the Mermin magic square game.

It follows from our characterization of quantum isomorphism (\Thm{main}) that a planar graph $K$ has a homomorphism to $G$ (i.e., $\hom(K,G) > 0$) if and only if it has a homomorphism to $H$ (i.e., $\hom(K,H) > 0$). On the other hand, the graph $G$ is \emph{homomorphically equivalent} to $K_4$, i.e, there are homomorphisms in both directions between $G$ and $K_4$\footnote{A homomorphism from $K_4$ to $G$ is given by mapping the four vertices of the former to the identity and the three double transpositions. A 4-coloring of $G$ is given by the cosets of the subgroup generated by $\{(123), (23)\}$.}. By composition of homomorphisms, it follows that \emph{any} graph has a homomorphism to $G$ if and only if it has a homomorphism to $K_4$. Combining this with the previous equivalence, we see that any \emph{planar} graph has a homomorphism to $K_4$ if and only if it has a homomorphism to $H$. We formally state this as a theorem below:

\begin{theorem}\label{thm:4color}
A planar graph has a 4-coloring if and only if it has a homomorphism to the Cayley graph on the symmetric group $S_4$ whose connection set consists of all elements of order two.
\end{theorem}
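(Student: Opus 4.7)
The plan is to exhibit a pair of 24-vertex graphs that are quantum isomorphic but not isomorphic, apply \Thm{main} to them, and then use a homomorphic equivalence to replace one of them by $K_4$. Let $H$ be the Cayley graph of $S_4$ with connection set the nine elements of order two (six transpositions and three double transpositions), and let $G$ be the Cayley graph of $S_4$ with connection set the three double transpositions together with the six 4-cycles. Both connection sets are inverse-closed, so both graphs are well-defined undirected simple graphs on 24 vertices. As noted just before the theorem statement, these are (after suitable identification) precisely the pair of graphs shown to be quantum isomorphic in~\cite{qiso1} via the reduction from the Mermin magic square game, so $G \cong_{qc} H$.

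Given $G \cong_{qc} H$, \Thm{main} immediately yields $\hom(K,G) = \hom(K,H)$ for every planar graph $K$, and in particular $\hom(K,G) > 0 \iff \hom(K,H) > 0$. The remaining task is to establish that $G$ is homomorphically equivalent to $K_4$, which will then let us replace $G$ by $K_4$ in the preceding equivalence for \emph{every} graph $K$, planar or not. For the direction $K_4 \to G$, the set $\{e,(12)(34),(13)(24),(14)(23)\}$ forms a 4-clique in $G$, since any two of its distinct elements are related by the third double transposition, which lies in the connection set. For the direction $G \to K_4$, I would use the map $\chi\colon S_4 \to \{1,2,3,4\}$ defined by $\chi(g)=g(4)$. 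Every element of the connection set of $G$ is a derangement of $\{1,2,3,4\}$, since neither a double transposition nor a 4-cycle has a fixed point; consequently, for any edge $\{u,v\}$ of $G$, the permutation relating $u$ to $v$ moves $4$, so $\chi(u)\ne\chi(v)$ and $\chi$ is a proper 4-coloring.

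Chaining the two equivalences then gives, for every planar graph $K$,
\[\hom(K,K_4) > 0 \iff \hom(K,G) > 0 \iff \hom(K,H) > 0,\]
which is exactly the stated theorem, since 4-colorings of $K$ are in bijection with homomorphisms $K \to K_4$. The only ingredient that goes beyond elementary group theory is the quantum isomorphism $G \cong_{qc} H$, and that has already been supplied by~\cite{qiso1}; all other steps amount to direct verifications. I therefore do not expect any real obstacle, beyond recording carefully that the two Cayley graphs defined above really do coincide with the pair of graphs from~\cite{qiso1}.
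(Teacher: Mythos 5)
Your proof is correct and follows essentially the same route as the paper: the same two Cayley graphs on $S_4$, the quantum isomorphism supplied by~\cite{qiso1}, an application of \Thm{main} to planar $K$, and the homomorphic equivalence of $G$ with $K_4$. Your explicit $4$-coloring $\chi(g)=g(4)$ of $G$ is the same one the paper records via cosets of the point-stabilizer $\langle(123),(23)\rangle \cong S_3$, so nothing substantive differs.
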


Of course the above theorem is only interesting if the graph $H$ does not itself have a 4-coloring. This is indeed the case: the graph $H$ has chromatic number 5 (and fractional chromatic number $24/5$). Furthermore, $H$ does have a homomorphism from $K_4$. Thus any graph with a homomorphism to $K_4$ has a homomorphism to $H$, but the converse does not hold, e.g., $H$ has a homomorphism to itself but no homomorphism to $K_4$ (and there are infinitely many examples even up to homomorphic equivalence).

Thus \Thm{4color} achieves our goal: the Four Color Theorem is equivalent to every planar graph having a homomorphism to $H$, and it is in general easier to have a homomorphism to $H$ than to have a 4-coloring. It is not clear if this is of any practical use, as proving that all planar graphs have a homomorphism to the 24-vertex graph $H$ does not \emph{seem} any easier than proving they have a 4-coloring. 

By considering the requirements $H$ must satisfy in order to obtain such a reformulation of the Four Color Theorem, we obtain the following lemma:
\begin{lemma}\label{lem:4color}
Suppose that $G$ and $H$ are quantum isomorphic graphs such that $G$ is homomorphically equivalent to $K_4$. Then any planar graph $K$ has a 4-coloring if and only if it has a homomorphism to $H$.
\end{lemma}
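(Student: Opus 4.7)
The plan is to chain together three simple ingredients: the main characterization of quantum isomorphism (\Thm{main}), the functoriality of graph homomorphisms under composition, and the translation between ``having a 4-coloring'' and ``admitting a homomorphism to $K_4$.'' Since \Thm{main} is already proven by this point, and homomorphic equivalence is exactly the hypothesis placed on $G$, the argument reduces to two short implications with no real combinatorial content.

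For the forward direction, suppose $K$ is a planar graph with a 4-coloring. Then $\hom(K, K_4) \ge 1$. By hypothesis there is a homomorphism $\psi \colon K_4 \to G$, so composing any 4-coloring $c \colon K \to K_4$ with $\psi$ yields $\psi \circ c \colon K \to G$, showing $\hom(K, G) \ge 1$. Now apply \Thm{main}: since $G \cong_{qc} H$, we have $\hom(K, G) = \hom(K, H)$, and therefore $\hom(K, H) \ge 1$, i.e., $K$ admits a homomorphism to $H$.

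For the reverse direction, assume $K$ admits a homomorphism to $H$, so $\hom(K, H) \ge 1$. Again by \Thm{main}, $\hom(K, G) = \hom(K, H) \ge 1$, so there is some homomorphism $\varphi \colon K \to G$. By the homomorphic equivalence of $G$ with $K_4$, fix a homomorphism $\chi \colon G \to K_4$; then $\chi \circ \varphi$ is a homomorphism $K \to K_4$, which is precisely a 4-coloring of $K$.

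The only ``hard'' step is invoking \Thm{main}, and that work is already behind us; the rest is formal. The one thing worth flagging is that planarity of $K$ is used in an essential way, since \Thm{main} only equates homomorphism counts to $G$ and $H$ for planar source graphs, and non-planar quantum-isomorphic pairs can easily have different homomorphism counts from non-planar inputs. So the lemma genuinely relies on restricting attention to planar $K$, which is exactly what is needed for the Four Color Theorem reformulation that follows.
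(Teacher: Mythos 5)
Your proof is correct and follows the same argument as the paper, which establishes the equivalence by chaining \Thm{main} (to equate having a homomorphism to $G$ with having one to $H$, for planar $K$) with the homomorphic equivalence of $G$ and $K_4$ (to translate ``homomorphism to $G$'' into ``4-coloring''). The paper presents this reasoning in the discussion preceding Theorem~\ref{thm:4color} and then states the lemma without a separate proof; your write-up simply makes that argument explicit.
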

As in the case of \Thm{4color}, a graph $H$ fitting the hypotheses of \Lem{4color} only provides an interesting reformulation of the Four Color Theorem if it does not have a 4-coloring (it will always have a homomorphism \emph{from} $K_4$ since $G$ does and $K_4$ is planar). To find examples of such graphs $H$ (other than the one presented above), one can make use of a reduction from linear system games presented in~\cite{qiso1}. Combining this reduction with a construction of Arkhipov~\cite{arkhipov} for building binary linear system games from graphs, one can construct a pair of (non-isomorphic) quantum isomorphic graphs from any connected non-planar graph. The example above comes from applying this construction to the complete bipartite graph $K_{3,3}$. If one uses a connected non-planar graph $K$ that is cubic (regular of degree three), then the resulting quantum isomorphic graphs $G_K$ and $H_K$ will always have a homomorphism from $K_4$. Moreover, one of them (say $H_K$) is guaranteed to not have a 4-coloring. Thus if $K$ is chosen so that $G_K$ has a 4-coloring, then we can apply \Lem{4color} to obtain a nontrivial reformulation of the Four Color Theorem. Unfortunately, it is not immediately clear what properties one needs to require of the graph $K$ in order to ensure that $G_K$ is 4-colorable. However, we expect that there should be many such $K$ that result in $G_K$ being 4-colorable. Other than $K = K_{3,3}$ which we described above, we have verified by computer that the graph $G_K$ obtained when $K$ is the Wagner graph is 4-colorable. In this case, the graph $H_K$ does not have a homomorphism to or from the graph $H_{K_{3,3}}$ described above, and thus we obtain another reformulation of the Four Color Theorem that is not implied by, nor implies, the reformulation given by \Thm{4color}. It seems possible that infinitely many such reformulations could be found that are all mutually incomparable.

\paragraph{Acknowledgements.} \textcolor{white}{a} \\
\textbf{LM:} Supported by the Villum Fonden via the QMATH Centre of Excellence (Grant No. 10059).

\noindent \textbf{DR:} I would like to acknowledge the helpful discussions with Arthur Chassaniol concerning intertwiners of quantum automorphism groups which took place at the 'Cohomology of quantum groups and quantum automorphism groups of finite graphs' Workshop at Saarland University in October of 2018. The research leading to these results has received funding from the European Union's Horizon 2020 research and innovation program under the Marie Sklodowska-Curie grant agreement no. 713683 (COFUNDfellowsDTU).

\bibliographystyle{plainurl}
\bibliography{Qplanar}
%

\end{document}